\documentclass[11pt]{article}
\usepackage{xifthen}
\usepackage[colorlinks=true,
            citecolor=black,
            linkcolor=black,
            urlcolor=black]{hyperref}
\usepackage{fullpage}
\usepackage{latexsym,amsmath,amsfonts,amssymb,stmaryrd,amsthm}
\usepackage{url}
\usepackage{proof}
\usepackage[square,authoryear,semicolon]{natbib}
\usepackage{cleveref}
\usepackage[all]{xy}
\usepackage{manfnt}

\theoremstyle{definition}
\newtheorem{definition}{Definition}

\theoremstyle{remark}
\newtheorem{remark}[definition]{Remark}
\newtheorem{corollary}[definition]{Corollary}

\theoremstyle{plain}
\newtheorem{lemma}[definition]{Lemma}
\newtheorem{theorem}[definition]{Theorem}
\crefname{remark}{Remark}{Remarks} % always capitalize
\crefname{lemma}{Lemma}{Lemmas} % always capitalize
\crefname{theorem}{Theorem}{Theorems} % always capitalize
\crefname{definition}{Definition}{Definitions} % always capitalize
\crefname{section}{Section}{Sections} % always capitalize
\crefname{appendix}{Appendix}{Appendices} % always capitalize

\newcommand{\G}{\ensuremath{\Gamma}}
\newcommand{\e}{\ensuremath{\varepsilon}}
\newcommand{\eb}{\overline{\e}}
\newcommand{\eq}{\ensuremath{\mathbin{\doteq}}}
\newcommand{\fresh}{\ \#\ }
\newcommand{\fd}[1]{\ensuremath{\mathsf{FD}(#1)}}

\newcommand{\oft}[2]{#1\mathbin{:}#2}
\newcommand{\J}{\ensuremath{\mathcal{J}}}
\newcommand{\cube}{\ensuremath{\text{\mancube}}}

\makeatletter
\def\rightharpoonupfill@{\arrowfill@\relbar\relbar\rightharpoonup}
\newcommand{\overrightharpoonup}{%
\mathpalette{\overarrow@\rightharpoonupfill@}}
\makeatother

% Terms

%\newcommand{\subst}[3]{\ensuremath{[#2 / #3]#1}} % don't be barbaric
%\newcommand{\dsubst}[3]{\ensuremath{\langle{#2}/{#3}\rangle}{#1}} % ditto
\newcommand{\subst}[3]{\ensuremath{#1 [#2 / #3]}}
\newcommand{\dsubst}[3]{\ensuremath{#1 \langle{#2}/{#3}\rangle}}

\newcommand{\arr}[2]{\ensuremath{#1 \to #2}}
\newcommand{\picl}[3]{\ensuremath{({#1}{:}{#2}) \to #3}}
\newcommand{\lam}[2]{\ensuremath{\lambda{#1}.{#2}}}
\newcommand{\app}[2]{\ensuremath{\mathsf{app}({#1},{#2})}}

\newcommand{\sigmacl}[3]{\ensuremath{({#1}{:}{#2}) \times #3}}
\newcommand{\pair}[2]{\ensuremath{\langle #1,#2\rangle}}
\newcommand{\fst}[1]{\ensuremath{\mathsf{fst}(#1)}}
\newcommand{\snd}[1]{\ensuremath{\mathsf{snd}(#1)}}

\newcommand{\Id}[3]{\ensuremath{\mathsf{Id}_{#1}(#2,#3)}}
\newcommand{\dlam}[2]{\ensuremath{\langle #1 \rangle #2}}
\newcommand{\dapp}[2]{\ensuremath{#1 @ #2}}

\newcommand{\bool}{\ensuremath{\mathsf{bool}}}
\newcommand{\notb}[1]{\ensuremath{\mathsf{not}_{#1}}}
\newcommand{\notf}[1]{\ensuremath{\mathsf{not}(#1)}}
\newcommand{\notel}[2]{\ensuremath{\mathsf{notel}_{#1}(#2)}}
\newcommand{\true}{\ensuremath{\mathsf{true}}}
\newcommand{\false}{\ensuremath{\mathsf{false}}}
\newcommand{\ifsym}{\ensuremath{\mathsf{if}}}
\newcommand{\ifb}[4]{\ensuremath{\ifsym_{#1}(#2;#3,#4)}}
\newcommand{\ifbdots}[2]{\ensuremath{\ifsym_{#1}(#2;\dots)}}

\newcommand{\C}{\ensuremath{\mathbb{S}^1}}
\newcommand{\base}{\ensuremath{\mathsf{base}}}
\newcommand{\lp}[1]{\ensuremath{\mathsf{loop}_{#1}}}
\newcommand{\Celimsym}{\ensuremath{\mathbb{S}^1\mathsf{\text{-}elim}}}
\newcommand{\Celim}[4]{\ensuremath{\Celimsym_{#1}({#2};{#3},{#4})}}

\newcommand{\coesym}{\ensuremath{\mathsf{coe}}}
\newcommand{\coe}[4]{\ensuremath{\mathsf{coe}_{#1}^{#2 \rightsquigarrow #3}(#4)}}
\newcommand{\coegeneric}[1]{\coe{#1}{r}{r'}{M}}

\newcommand{\etc}[1]{\ensuremath{\overrightharpoonup{#1}}}
\newcommand{\hcomsym}{\ensuremath{\mathsf{hcom}}}
\newcommand{\hcom}[6]{\ensuremath{%
\hcomsym_{#2}^{#1}(#3\rightsquigarrow #4,#5;#6)}}
\newcommand{\hcomgeneric}[2]{\hcom{#1}{#2}{r}{r'}{M}{\etc{y.N^\e_i}}}
\newcommand{\comsym}{\ensuremath{\mathsf{com}}}
\newcommand{\com}[6]{\ensuremath{%
\comsym_{#2}^{#1}(#3\rightsquigarrow #4,#5;#6)}}
\newcommand{\comgeneric}[2]{\com{#1}{#2}{r}{r'}{M}{\etc{y.N^\e_i}}}

% Operational semantics

\newcommand{\steps}{\ensuremath{\longmapsto}}
\newcommand{\evals}{\ensuremath{\Downarrow}}
\newcommand{\isval}[1]{\ensuremath{#1\ \mathsf{val}}}

% Meaning explanations

\newcommand{\wftm}[2][\Psi]{\ensuremath{{#2}\ \mathsf{tm}\ [#1]}}
\newcommand{\wfval}[2][\Psi]{\ensuremath{#2\ \mathsf{val}\ [#1]}}

\newcommand{\msubsts}[3]{\ensuremath{{#2} : {#1} \to {#3}}}
\newcommand{\psitd}{\msubsts{\Psi'}{\psi}{\Psi}}
\newcommand{\td}[2]{\ensuremath{{#1}{#2}}}
\newcommand{\id}[1][\Psi]{\ensuremath{\mathsf{id}_{#1}}}

% Judgments

\newcommand{\veqper}[3][\Psi]
  {\ensuremath{{#2} \approx^{#1} {#3}}}
\newcommand{\veqperfour}[5][\Psi]
  {\ensuremath{{#2} \approx^{#1} {#3} \approx^{#1} {#4} \approx^{#1} {#5}}}
\newcommand{\eqper}[3][\Psi]
  {\ensuremath{{#2} \sim^{#1} {#3}}}
\newcommand{\eqperfour}[5][\Psi]
  {\ensuremath{{#2} \sim^{#1} {#3} \sim^{#1} {#4} \sim^{#1} {#5}}}

\newcommand{\vinper}[4][\Psi]
  {\ensuremath{{#2} \approx^{#1}_{#4} {#3}}}
\newcommand{\vinperfour}[6][\Psi]
  {\ensuremath{{#2} \approx^{#1}_{#6} {#3}
   \approx^{#1}_{#6} {#4} \approx^{#1}_{#6} {#5}}}

\newcommand{\inper}[4][\Psi]
  {\ensuremath{{#2} \sim^{#1}_{#4} {#3}}}

\newcommand{\inperfour}[6][\Psi]
  {\ensuremath{{#2}\sim^{#1}_{#6} {#3} \sim^{#1}_{#6} {#4} \sim^{#1}_{#6} {#5}}}

\newcommand{\judgctx}[3][\Psi]
  {\ensuremath{{#2} \gg {#3}\ [#1]}}
\newcommand{\judg}[2][\Psi]
  {\ensuremath{{#2}\ [#1]}}
\newcommand{\judgres}[3][\Psi]
  {\ensuremath{{#3}\ [#1 \mid #2]}}

\newcommand{\cpretype}[2][\Psi]
  {\ensuremath{{#2}\ \mathsf{pretype}\ [#1]}}
\newcommand{\cpretyperes}[3][\Psi]
  {\ensuremath{{#3}\ \mathsf{pretype}\ [#1 \mid #2]}}
\newcommand{\ceqpretype}[3][\Psi]
  {\ensuremath{{#2}\eq {#3} \ \mathsf{pretype}\ [#1]}}
\newcommand{\ceqpretyperes}[4][\Psi]
  {\ensuremath{{#3}\eq {#4} \ \mathsf{pretype}\ [#1 \mid #2]}}
\newcommand{\pretype}[3][\Psi]
  {\ensuremath{{#2} \gg {#3} \ \mathsf{pretype}\ [#1]}}
\newcommand{\pretyperes}[4][\Psi]
  {\ensuremath{{#3} \gg {#4} \ \mathsf{pretype}\ [#1 \mid #2]}}
\newcommand{\eqpretype}[4][\Psi]
  {\ensuremath{{#2} \gg {#3} \eq {#4}\ \mathsf{pretype}\ [#1]}}
\newcommand{\eqpretyperes}[5][\Psi]
  {\ensuremath{{#3} \gg {#4} \eq {#5}\ \mathsf{pretype}\ [#1 \mid #2]}}

\newcommand{\wfctx}[2][\Psi]
  {\ensuremath{{#2}\ \mathsf{ctx}\ [#1]}}
\newcommand{\wfctxres}[3][\Psi]
  {\ensuremath{{#3}\ \mathsf{ctx}\ [#1 \mid #2]}}

\newcommand{\cwftype}[2][\Psi]
  {\ensuremath{{#2}\ \mathsf{type}\ [#1]}}
\newcommand{\cwftyperes}[3][\Psi]
  {\ensuremath{{#3}\ \mathsf{type}\ [#1 \mid #2]}}
\newcommand{\wftype}[3][\Psi]
  {\ensuremath{{#2} \gg {#3}\ \mathsf{type}\ [#1]}}
\newcommand{\ceqtype}[3][\Psi]
  {\ensuremath{{#2} \eq {#3}\ \mathsf{type}\ [#1]}}
\newcommand{\ceqtyperes}[4][\Psi]
  {\ensuremath{{#3} \eq {#4}\ \mathsf{type}\ [#1 \mid #2]}}
\newcommand{\eqtype}[4][\Psi]
  {\ensuremath{{#2} \gg {#3} \eq {#4}\ \mathsf{type}\ [#1]}}

\newcommand{\coftype}[3][\Psi]
  {\ensuremath{{#2} \in {#3}\ [#1]}}
\newcommand{\coftyperes}[4][\Psi]
  {\ensuremath{{#3} \in {#4}\ [#1 \mid #2]}}
\newcommand{\oftype}[4][\Psi]
  {\ensuremath{{#2} \gg {#3} \in {#4}\ [#1]}}
\newcommand{\eqtm}[5][\Psi]
  {\ensuremath{{#2} \gg {#3} \eq {#4} \in {#5}\ [#1]}}
\newcommand{\eqtmres}[6][\Psi]
  {\ensuremath{{#3} \gg {#4} \eq {#5} \in {#6}\ [#1 \mid #2]}}
\newcommand{\ceqtm}[4][\Psi]
  {\ensuremath{{#2} \eq {#3} \in {#4}\ [#1]}}
\newcommand{\ceqtmres}[5][\Psi]
  {\ensuremath{{#3} \eq {#4} \in {#5}\ [#1 \mid #2]}}
\newcommand{\ceqtmtab}[4][\Psi]
  {\ensuremath{{#2} \eq\ & {#3} \in {#4}\ [#1]}}

% Appendix

\newcommand{\sbool}{\ensuremath{\mathsf{sbool}}}
\newcommand{\VPER}{\ensuremath{\mathsf{VPER}}}
\newcommand{\vperleq}{\ensuremath{\sqsubseteq}}
\newcommand{\CTS}{\ensuremath{\mathsf{CTS}}}
\newcommand{\ctsleq}{\ensuremath{\sqsubseteq}}
\newcommand{\relcts}[2]{\ensuremath{{#1} \models \left({#2}\right)}}

\newcommand{\ia}[2]{\ensuremath{\mathsf{ia}_{#1}(#2)}}
\newcommand{\iain}[2]{\ensuremath{\mathsf{iain}_{#1}(#2)}}
\newcommand{\iaout}[2]{\ensuremath{\mathsf{iaout}_{#1}(#2)}}
\newcommand{\Iso}[3][\Psi]{\ensuremath{\mathsf{Iso}_{#2}({#3})\ [#1]}}
\newcommand{\Isoeq}[4][\Psi]
  {\ensuremath{\mathsf{Iso}_{#2}({#3}) \eq \mathsf{Iso}_{#2}({#4})\ [#1]}}

\title{Computational Higher Type Theory II:\\Dependent Cubical Realizability}

\author{Carlo Angiuli\thanks{\texttt{cangiuli@cs.cmu.edu}}\\Carnegie Mellon University
  \and Robert Harper\thanks{\texttt{rwh@cs.cmu.edu}}\\Carnegie Mellon University}

\date{July, 2016}

\begin{document}

\maketitle{}

\begin{abstract}
  This is the second in a series of papers extending Martin-L\"{o}f's
  \emph{meaning explanation} of dependent type theory to account for
  higher-dimensional types.  We build on the cubical realizability framework for
  simple types developed in Part I, and extend it to a meaning explanation of
  dependent higher-dimensional type theory.  This extension requires
  generalizing the computational Kan condition given in Part I, and considering
  the action of type families on paths.  We define \emph{identification types},
  which classify identifications (paths) in a type, and dependent function and
  pair types.  The main result is a \emph{canonicity theorem}, which states
  that a closed term of boolean type evaluates to either true or false.  This
  result establishes the first computational interpretation of higher dependent
  type theory by giving a deterministic operational semantics for its programs,
  including operations that realize the Kan condition.
\end{abstract}

\section{Dependent Types}
\label{sec:dep}

In Part I of this series~\citep{ahw2016cubical}
we introduced \emph{abstract cubical
  realizability} to provide a ``meaning explanation'' of
higher-dimensional simple type theory in the style of
\citet{cmcp} and \citet{constableetalnuprl}.
In Part II we extend cubical realizability to
higher-dimensional \emph{dependent type theory}, which considers
type-indexed families of types such as the \emph{cubical
  identification type}~\citep{licata2014cubical,cohen2016cubical} and dependent
generalizations of the function and product types.  The construction
provides the first deterministic computational meaning explanation for
higher-dimensional dependent type theory:
\begin{theorem}[Canonicity]
If $\oftype[\cdot]{\cdot}{M}{\bool}$ then either
$M \evals \true$ or $M \evals \false$.
\end{theorem}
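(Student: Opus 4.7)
The plan is to adapt the meaning-explanation recipe of Part I to the dependent cubical setting: interpret types as cubical PERs (\VPER{}s) on closed values, lift this to a cubical type system (\CTS) interpretation of open judgments, and then show that every syntactically derivable typing judgment is semantically realized. Canonicity follows by reading off the interpretation of $\bool$ at the empty dimension context.

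The first step is to pin down the semantic interpretation of $\bool$. From the operational semantics of Part I, the closed-value relation for $\bool$ identifies $\true$ only with $\true$ and $\false$ only with $\false$, and the closed-term relation identifies $M$ with $M'$ exactly when they evaluate to the same value. Hence once we know $\coftype[\cdot]{M}{\bool}$ semantically, the characterization of $\bool$ forces $M \evals \true$ or $M \evals \false$.

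The bulk of the work is the fundamental theorem: every derivable $\oftype{\Gamma}{M}{A}$ is semantically realized, proved by induction on derivations. This requires a congruence/compatibility lemma for each connective, namely $\bool$, $\C$, the dependent function, dependent pair, and identification types, stating that its formation, introduction, and elimination forms, and, crucially, its Kan operations $\coesym$ and $\hcomsym$, send semantically related inputs to semantically related outputs while respecting dimension substitutions and face equations. A key subtlety specific to the dependent setting is that for a type family $B$ over $A$ one must check not merely that $B$ is pointwise a pretype but that it respects the PER of $A$, so that $\ceqtm[\cdot]{N}{N'}{A}$ entails $\ceqpretype[\cdot]{\subst{B}{N}{x}}{\subst{B}{N'}{x}}$; this uniformity is what the identification type turns into a proper path-respecting action.

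I expect the main obstacle to lie in the Kan operations, especially $\coegeneric{y.A}$ across a type family. Since $\coesym$ and $\hcomsym$ are defined by recursion on the structure of $A$, and in the identification-type and $\C$ cases they build composites using further Kan operations, verifying that the results always land in the right \VPER{}, cohere under dimension substitution, and satisfy the required boundary equations when $r = r'$ or when one of the tube faces is active, is the heart of the generalization of Part I's simply-typed construction. Once these closure properties are established, the fundamental theorem closes, and applying it to the closed $M$ of type $\bool$ gives canonicity.
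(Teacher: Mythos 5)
The statement's hypothesis $\oftype[\cdot]{\cdot}{M}{\bool}$ is the paper's \emph{semantic} judgment (\cref{def:eqtm} instantiated at empty $\Gamma$ and $\Psi = \cdot$), not a derivation in the formal proof theory of \cref{sec:proof-theory}. The ``fundamental theorem'' step you propose --- showing every syntactically derivable judgment is semantically realized --- is the subject of the soundness claims in \cref{sec:types,sec:proof-theory}, but it is not part of the canonicity argument itself: the canonicity theorem already takes the semantic hypothesis, and its proof is a direct unfolding of definitions. Unwinding $\oftype[\cdot]{\cdot}{M}{\bool}$ gives $\coftype[\cdot]{M}{\bool}$, hence $M \evals M_0$ with $\vinper[\cdot]{M_0}{M_0}{\bool}$, and you then read off what $M_0$ can be.

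The more serious gap is your characterization of $\bool$. In Part~II, $\bool$ is a \emph{higher inductive type}: the relation $\vinper{-}{-}{\bool}$ has a third clause relating pairs $\hcomgeneric{\etc{x_i}}{\bool}$ whose extents $\etc{x_i}$ are dimension names with $r \neq r'$, and these $\hcomsym$ terms \emph{are} canonical forms of $\bool$. So it is false that the value relation for $\bool$ identifies only $\true$ with $\true$ and $\false$ with $\false$; that was Part~I's boolean (and the strict $\sbool$ of \cref{appendix}), and the paper makes $\bool$ higher-inductive precisely to stress-test canonicity. The observation that rescues the argument is dimension-theoretic, and you do not make it: at $\Psi = \cdot$ there are no free dimension names, and since any $\hcomsym$ value needs at least one dimension-name extent, no such canonical form can be well-formed at the empty dimension context. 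Hence $M_0$ must be $\true$ or $\false$. Without this step your argument does not close, because at nonempty $\Psi$ the PER for $\bool$ genuinely contains $\hcomsym$ values.
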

\noindent The proof of the canonicity theorem is straightforward, once
the extended realizability interpretation has been obtained.  All of
the effort is in formulating the meaning explanation and showing that
it has the requisite properties.

The development follows along the lines of Part I, but is extended to
account for type-indexed families of types.  The main effect of
dependency is to induce lines between the instance types of the
family.  Briefly, if $B$ is a family of types indexed by the type $A$,
then points in $A$ determine instances of $B$, and lines in $A$
determine lines between those instances.  The framework of cubical
realizability is defined so as to account for lines (and higher cells)
between types, and so dependency itself presents no further
complications.  However, accounting for the cubical identification
type, whose $n$-cells classify $n+1$-cells in a type, requires a
modest generalization of the Kan composition operation to admit
additional constraints.  Whereas in the simply typed case it suffices
to consider lines (in any dimension) constrained by lines, here it is
necessary to admit more general forms of composition problem.  This is
managed by extending the Kan composition operation itself to allow for
additional ``tube faces'', and enriching the notion of dimension
context to allow for equational constraints among dimension variables.
This extension is sufficient to account for the cubical identification
type, and closure under dependent forms of function and product types.
We have also clarified the definition of a cubical type system, and what it
means for a cubical type system to have certain type formers; this
reorganization has no material effect on our development.

The purpose of a meaning explanation is to provide a computational
semantics for typehood and membership judgments (more precisely, for
exact equality of types and exact equality of members of a type at all
finite dimensions).  The explanation is based on the acceptability of
basic predicative comprehension principles used to define the meanings
of types.  These principles seem scarcely deniable; principles of the
same or greater strength would be required to validate the sensibility
of a formal type theory defined by a collection of rules.  The meaning
explanation provides a concept of \emph{truth} in type theory grounded
in computation, following the principles elucidated by Martin-L\"{o}f
and Constable in the one-dimensional case.  From this point of view
the role of proof theory is to provide a window on the truth.  There
are few restrictions on the form of a proof theory other than that it
derive only true judgments.  For example, a proof theory for
computational higher type theory need not admit meta-theoretic
properties, such as cut elimination or decidability, that are
requisite for formal type theories.  However, the rules of formal
cubical type theory given by~\citet{licata2014cubical} and
\citet{cohen2016cubical} are valid; we give explicit proofs for some
representative cases.

%%% Local Variables:
%%% mode: latex
%%% TeX-master: "dep"
%%% End:

\subsection*{Acknowledgements}

This paper is an extension of abstract cubical realizability as
introduced by~\citet{ahw2016cubical}, whose primary antecedents are
two-dimensional type theory~\citep{lh2dtt}, the uniform Kan cubical
model of homotopy type theory~\citep{bch} and the formal cubical type
theories~\citep{cohen2016cubical,licata2014cubical}.  We are greatly
indebted to Marc Bezem, Evan Cavallo, Kuen-Bang Hou (Favonia), Simon
Huber, Dan Licata, Ed Morehouse, and Todd Wilson for their
contributions and advice.  The authors gratefully acknowledge the
support of the Air Force Office of Scientific Research through MURI
grant FA9550-15-1-0053.  Any opinions, findings and conclusions or
recommendations expressed in this material are those of the authors
and do not necessarily reflect the views of the AFOSR.

\newpage
\section{Programming language}
\label{sec:opsem}

The programming language itself has two sorts, dimensions and terms,
and binders for both sorts.  Terms are an ordinary untyped lambda calculus
with constructors; dimensions are either dimension
constants ($0$ or $1$) or one of countably many dimension names
($x,y,\dots$) behaving like nominal constants~\citep{pittsnominal}.
Dimension terms
occur at specific positions in some terms; for example, $\lp{r}$ is a term
for any dimension term $r$.  The operational semantics is
defined on terms that are closed with respect to term variables but may
contain free dimension names.

Dimension names represent generic elements of an abstract interval whose end
points are notated $0$ and $1$. While one may sensibly substitute any dimension
term for a dimension name, terms are \emph{not} to be understood solely in terms
of their dimensionally-closed instances (namely, their end points). Rather, a
term's dependence on dimension names is to be understood generically;
geometrically, one might imagine additional unnamed points in the interior of
the abstract interval.

The language features two terms that are specific to higher type
theory.  The first, called \emph{coercion}, has the form
$\coegeneric{x.A}$, where $x.A$ is a type line, $r$ is the
\emph{starting} dimension and $r'$ is the \emph{ending} dimension.
Coercion transports a term $M$ from $\dsubst{A}{r}{x}$ to
$\dsubst{A}{r'}{x}$ using the type line $x.A$ as a guide.
Coercion from $r$ to itself has no effect, up to exact
equality.  Coercion from $0$ to $1$ or vice versa is transport, which
applies one direction of the equivalence induced by the type line.
Coercion from $0$ or $1$ to a dimension name $y$ creates a $y$-line in 
$\dsubst{A}{y}{x}$, and coercion from $y$ to $0$ or $1$ 
yields a line between one end point of the input $y$-line and
the transport of the opposite end point.
Finally, coercion from one dimension name to another reorients the
line from one dimension to another.

The second, called \emph{homogeneous Kan composition}, has the form
$\hcomgeneric{\etc{r_i}}{A}$, where $r_1,\dots,r_n$ are the \emph{extents}, $r$
is the \emph{starting} dimension, and $r'$ is the \emph{ending} dimension.
The term $M$ is called the \emph{cap}, and the terms $N^0_i$ and $N^1_i$
form the \emph{tube in the $r_i$ extent}.%
\footnote{If $r_i=x$ and $x$ occurs in $N^\e_i$, then the tube sides are actually
$\dsubst{N^\e_i}{\e}{x}$ for $\e=0,1$, representing that $x=\e$ on the $N^\e$
side of the composition problem. In the present description, we assume that $x$
does not occur in $N^\e$; the precise typing rules for Kan composition are given
in \cref{def:kan}.}
This composition is well-typed when the starting side of each $N^\e_i$ coincides
(up to exact equality) with the $r_i=\e$ side of the cap. When all the $r_i$ are
dimension names $x_i$, the composition results in an ($x_1,\dots,x_n$)-cube,
called the \emph{composite}, whose $\dsubst{-}{\e}{x_i}$ sides coincide with the
ending sides of each $N^\e_i$. The composite is easily visualized when $r=0$,
$r'=1$, and there is one extent $x$:

\[
\renewcommand{\objectstyle}{\scriptstyle}
\renewcommand{\labelstyle}{\textstyle}
\xymatrix@=0.75em{
  {} \ar[d] \ar[r] & x \\ y
}
\qquad
\xymatrix@C=8em@R=3em{%\ar @{} [dr] |{=}
  \bullet \ar[d]_{N^0} \ar[r]^{M} &
  \bullet \ar[d]^{N^1} \\
  \dsubst{N^0}{1}{y} \ar@{-->}[r]_{\hcomsym^x_A(0\rightsquigarrow 1)} &
  \dsubst{N^1}{1}{y}}
\]
The case of $r=1$ and $r'=0$ is symmetric, swapping the roles of the cap and the
composite.

When the starting dimension is $r=0$ (or, analogously, $r=1$) and the ending
dimension is $r'=y$, where $y$ does not occur in $M$, the Kan composition yields
the interior of the $x,y$-square depicted above, called the \emph{filler}. One
may think of this composition as sweeping out that square by sliding the cap from
$y=0$ to any point in the $y$ dimension, much in the manner of opening a window
shade. The filler is simultaneously an $x$-line identifying the two tube sides
with each other, and a $y$-line identifying the cap with the composite.  

When $r=y$ and $r'$ is $0$ or $1$, the composition may be visualized as closing
a window shade, starting in the ``middle'' and heading towards the roll at one
end or the other.
When both $r$ and $r'$ are dimension names, the result is harder to visualize,
and is best understood formally, as is also the case where $r=0$ and $r'=y$ but
$y$ does occur in $M$.

Finally, there are two cases in which the composition scenario trivializes.
When $r=r'=0$ or $r=r'=1$, the composition is the cap itself, intuitively
because the window shade does not move from its starting position at the cap.
When an extent $r_i$ is $0$ (or $1$), rather than a dimension name, the
composition is simply $\dsubst{N^0_i}{r'}{y}$ (or $\dsubst{N^1_i}{r'}{y}$),
because the composition has no extent beyond that end point. These two cases are
important because they ensure, respectively, that the $y$ and $x$ end points of
the $x,y$-filler are as depicted above.

\subsection{Terms}

\[\begin{aligned}
M &:=
% types
\picl{a}{A}{B} \mid
\sigmacl{a}{A}{B} \mid
\Id{x.A}{M}{N} \mid
\bool \mid
\notb{r} \mid
\C \\&\mid
% terms
\lam{a}{M} \mid
\app{M}{N} \mid
\pair{M}{N} \mid
\fst{M} \mid
\snd{M} \mid
\dlam{x}{M} \mid
\dapp{M}{r} \\&\mid
\true \mid
\false \mid
\ifb{a.A}{M}{N_1}{N_2} \mid
\notel{r}{M} \\&\mid
\base \mid
\lp{r} \mid
\Celim{a.A}{M}{N_1}{x.N_2} \\&\mid
\coegeneric{x.A} \mid
\hcomgeneric{\etc{r_i}}{A}
\end{aligned}\]

We use capital letters like $M$, $N$, and $A$ to denote terms, $r$,
$r'$, $r_i$ to denote dimension terms, $x$ to denote dimension names, $\e$
to denote dimension constants ($0$ or $1$), and $\eb$ to denote the
opposite dimension constant of $\e$.  We write $x.-$ for dimension
binders, $a.-$ for term binders, and $\fd{M}$ for the set of dimension
names free in $M$. (Additionally, in $\picl{a}{A}{B}$ and $\sigmacl{a}{A}{B}$,
$a$ is bound in $B$.)  Dimension substitution $\dsubst{M}{r}{x}$ and term
substitution $\subst{M}{N}{a}$ are defined in the usual way.

The superscript argument of $\hcomsym$ is a list of $n\geq 1$ dimension terms
$\etc{r_i} = r_1,\dots,r_n$; it then takes $2n$ term arguments with one
dimension binder each, $\etc{y.N^\e_i} = y.N^0_1,y.N^1_1,\dots,y.N^0_n,y.N^1_n$.
We use the $\etc{-}$ notation to abbreviate a list of the appropriate length,
or to abbreviate applying some term formers to each term in that list.

We employ two abbreviations in the operational semantics below:

\[\begin{aligned}
\notf{M} &:=
\ifb{\_.\bool}{M}{\false}{\true}
\\
\comgeneric{\etc{r_i}}{y.A} &:=
\hcom{\etc{r_i}}{\dsubst{A}{r'}{y}}{r}{r'}
  {\coe{y.A}{r}{r'}{M}}
  {\etc{y.\coe{y.A}{y}{r'}{N^\e_i}}}
\end{aligned}\]

\subsection{Operational semantics}

The following describes a deterministic weak head reduction evaluation
strategy for closed terms in the form of a transition system with two
judgments:
\begin{enumerate}
\item $\isval{E}$, stating that $E$ is a \emph{value}, or
  \emph{canonical form}.
\item $E\steps E'$, stating that $E$ takes \emph{one step of
    evaluation} to $E'$.
\end{enumerate}
These judgments are defined so that if $\isval{E}$, then
$E\not\steps$, but the converse need not be the case.  As usual, we
write $E\steps^* E'$ to mean that $E$ transitions to $E'$ in zero
or more steps.  We say $E$ evaluates to $V$, written $E \evals V$, when
$E\steps^* V$ and $\isval{V}$.

Most of the evaluation rules are standard, and evaluate only principal arguments
of elimination forms. The principal arguments of $\hcomsym$ and $\coesym$ are
their type subscripts, whose head constructors determine how those terms
evaluate.

Determinacy is a strong condition that implies that a term has at most
one value.
\begin{lemma}[Determinacy]
  If $M\steps M_1$ and $M\steps M_2$, then $M_1 = M_2$.
\end{lemma}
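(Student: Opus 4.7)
The plan is to proceed by rule induction on the derivation of $M \steps M_1$, and in each case to inspect which rules can possibly apply to derive $M \steps M_2$. Because the reduction rules have not yet been displayed in the excerpt, I will argue schematically, relying on two structural features that the authors emphasize: (a) the reduction strategy is weak head and deterministic, evaluating only principal arguments; (b) for the higher-dimensional operations $\coe{x.A}{r}{r'}{M}$ and $\hcomgeneric{\etc{r_i}}{A}$, the principal argument is the \emph{type subscript} $A$, whose head constructor gates the appropriate reduction.

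First I would dispatch the purely term-level elimination forms: $\app{-}{-}$, $\fst{-}$, $\snd{-}$, $\dapp{-}{r}$, $\ifbdots{-}{-}$, and $\Celimsym$. For each such form, there are exactly two kinds of rules: a structural rule that steps the principal argument, and one or more $\beta$-style rules that fire when the principal argument is a value. Since values do not step, these two families of rules are mutually exclusive on any given term, and within the structural family the induction hypothesis applied to the principal subterm gives uniqueness. Within the $\beta$ family, uniqueness follows from the fact that each value constructor has at most one matching redex (e.g.\ $\app{V}{N}$ fires only when $V = \lam{a}{M'}$, $\Celim{a.A}{V}{-}{x.-}$ dispatches on whether $V$ is $\base$ or $\lp{r}$, etc.).

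Next I would handle $\coesym$ and $\hcomsym$. Here the principal argument is the type line or type. The rules either step that type, or, when the type is a value, fire a reduction that depends on the head constructor of the type (one rule per type former, plus the degenerate cases $r = r'$ and, for $\hcomsym$, each extent $r_i \in \{0,1\}$). Since dimension equalities are decidable and syntactically visible, and since a value type has exactly one head constructor, at most one rule applies. Again, structural and firing rules are disjoint because the type value does not itself step, so the induction hypothesis on the type's reduction handles the structural case. I would also remark that the abbreviations $\notf{M}$ and $\comsym$ are not separate reduction forms but expand to previously-treated constructs, so no new cases arise.

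The main obstacle I anticipate is bookkeeping rather than mathematical depth: verifying that the side-conditions on $\coesym$ and $\hcomsym$ reductions (e.g.\ the ordering of the degenerate dimension checks $r = r'$ versus $r_i = \e$, and the requirement that the type subscript be a value) are mutually exclusive, so that no term admits two firing rules at once. Once these side-conditions are inspected one by one from the formal rule list in \cref{sec:opsem}, the induction goes through routinely, and determinacy follows.
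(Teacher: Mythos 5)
The paper states this lemma without a proof, treating determinacy as an evident property of the operational semantics, so there is no official argument to compare against; your rule-induction strategy is exactly what a careful proof would look like. The structure is right: partition each term former's rules into a structural search rule (on the principal argument) and a family of firing rules gated on the principal argument being a value, use the fact that values do not step to separate the two families, apply the induction hypothesis for the structural case, and check mutual exclusivity of side conditions for the firing case.

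Since you argued schematically, a few concrete points from \cref{sec:opsem} are worth flagging, because they are exactly where the bookkeeping could fail if one were careless. First, for $\hcomsym$ at $\bool$ and at $\C$ the rule that fires when an extent is a constant is written so that the premise $\etc{r_i} = x_1,\dots,x_{i-1},\e,r_{i+1},\dots,r_n$ forces the earlier extents to all be dimension \emph{names}; this implicitly selects the least $i$ with a constant extent, so when several extents are constants only one instance of the rule applies, and the other two cases (all extents names with $r=r'$, all names with $r\neq r'$) are disjoint from it and from each other. Second, the $\hcomsym$ rule whose type subscript is $\notb{w}$ must be read with $w$ ranging over dimension \emph{names} only: if $w$ were permitted to be a constant $\e$, then $\notb{\e}\steps\bool$ and the generic structural rule $\hcomgeneric{\etc{r_i}}{A}\steps\hcomgeneric{\etc{r_i}}{A'}$ would also apply, yielding two successors. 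The paper's convention (only $x$-like metavariables for names, $r$ for arbitrary dimension terms) is slightly bent here by writing $w$, so this is a case where the side-condition discipline you anticipated is indeed essential. Third, the $\coesym$ rules for $x.\notb{x}$ case-split on the starting and ending dimensions ($\e\to\eb$, $\e\to\e$, $0\to x$, $1\to x$, $x\to r$ with a structural rule on $M$ versus a firing rule when $M$ is $\notel{x}{\cdot}$); these are pairwise exclusive because the pairs $(r,r')$ they require are syntactically distinct and because $\notel{x}{M}$ is a value. With those observations in hand your induction closes, and your argument is a valid filling-in of a proof the authors omitted.
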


Stability states that evaluation does not introduce any new dimension names.
\begin{lemma}[Stability]
  If $M\steps M'$, then $\fd{M'}\subseteq\fd{M}$.
\end{lemma}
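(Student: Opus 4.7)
The plan is to proceed by induction on the derivation of $M \steps M'$, case-analyzing the final rule used. The invariant to maintain is the standard fact about dimension substitution, namely that $\fd{\dsubst{N}{r}{x}} \subseteq (\fd{N} \setminus \{x\}) \cup \fd{r}$, and its iterated/simultaneous variants; this should be proved once as a preliminary lemma by induction on $N$.

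For the congruence rules that evaluate a principal argument (e.g.\ $\app{M}{N} \steps \app{M'}{N}$ whenever $M \steps M'$, and analogously for the principal positions of $\fst{-}$, $\snd{-}$, $\dapp{-}{r}$, $\ifb{a.A}{-}{N_1}{N_2}$, $\Celimsym$, and the type subscripts of $\coesym$ and $\hcomsym$), the inductive hypothesis applied to the principal premise gives the result immediately, since the free dimension names of a compound term are the union of those of its subterms. For the basic $\beta$-style rules (application to a $\lambda$, projection of a pair, $\ifsym$ on $\true$/$\false$, $\Celimsym$ on $\base$), the reduct is a plain term substitution $\subst{M}{N}{a}$ whose free dimension names are $\fd{M} \cup \fd{N}$, and these are already present in the redex.

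The substantive cases are the reductions of $\coegeneric{x.A}$ and $\hcomgeneric{\etc{r_i}}{A}$, which branch on the head constructor of the type and on the dimension arguments. Each such rule produces a term built from the components of the redex by dimension substitutions of the form $\dsubst{-}{r}{x}$, $\dsubst{-}{r'}{x}$, or $\dsubst{-}{\e}{x}$, and by (re)application of $\coesym$, $\hcomsym$, or $\comsym$ (the latter desugaring to $\hcomsym$ and $\coesym$ as given in the excerpt). For instance, the degenerate cases $r = r'$ collapse to the cap $M$, and the extent-is-constant case collapses to $\dsubst{N^\e_i}{r'}{y}$; in each case the free dimension names are a subset of those of the redex, by the preliminary substitution lemma. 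For the more elaborate cases such as $\coesym$ at a $\picl{a}{A}{B}$, $\sigmacl{a}{A}{B}$, or $\Id{x.A}{M}{N}$, and $\hcomsym$ at these types or at $\bool$ and $\C$, one writes out the reduct, observes that it mentions only the syntactic pieces of the redex together with fresh bound dimension variables (which do not appear free), and concludes by the same substitution bound.

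The expected main obstacle is simply the sheer number of reduction rules for $\coesym$ and $\hcomsym$, one per type former crossed with the various cases on the dimension arguments; but each case is routine, as no rule ever substitutes a dimension name that is not already free in the redex, nor introduces a free dimension name out of thin air. Freshness of bound names in the reducts (e.g.\ the bound $y$ in the $\comsym$ desugaring) is handled by the standard convention that bound names are chosen fresh and so never appear in $\fd{-}$.
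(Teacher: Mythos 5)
The paper states this lemma without proof (it is treated as routine), so there is no official argument to compare against. Your approach — induction on the step derivation, a preliminary lemma bounding $\fd{\dsubst{N}{r}{x}}$ by $(\fd{N}\setminus\{x\})\cup\fd{r}$, and a case split into congruence rules, ordinary $\beta$-style rules, and the $\coesym$/$\hcomsym$ reductions — is exactly what one should do, and it is correct.

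A few points where care is rewarded, all of which your plan handles but which deserve explicit mention in a write-up. First, the $\comsym$ abbreviation must be expanded before computing $\fd$, since $\comsym$ is not a term former; once expanded, the auxiliary bound names (e.g.\ the $z$ in the $\hcomsym$-at-$\sigmacl{a}{A}{B}$ rule) are indeed bound and do not contribute to $\fd$, as you note. Second, in rules such as $\coe{x.\notb{x}}{0}{x}{M} \steps \notel{x}{\notf{M}}$, the same letter $x$ is used both for the bound variable of the type line (where it is not free) and for the dimension term $r'$ (where it is free); reading these carefully, the reduct's free names are $\{x\}\cup\fd{M}$, still a subset of the redex's. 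Third, in the $\Id$ and dependent-pair cases, the reduct re-exposes subterms like $A$, $P_0$, $P_1$, or $B$ outside the binders that restricted them in the redex (e.g.\ $x.A$ or $\Id{x.A}{P_0}{P_1}$), but each such occurrence is in fact re-captured by a binder in the reduct ($\dlam{x}{-}$, the tube binder $y.$, or the $\comsym$ type-line binder), so no name escapes. Your observation that no rule substitutes or invents a dimension name absent from the redex is the right invariant and, once the substitution lemma is in hand, the check for each of the many cases is indeed mechanical.
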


\paragraph{Types}

\[
\infer
  {\notb{\e} \steps \bool}
  {}
\]
\[
\infer
  {\isval{\picl{a}{A}{B}}}
  {}
\qquad
\infer
  {\isval{\sigmacl{a}{A}{B}}}
  {}
\qquad
\infer
  {\isval{\Id{a.A}{M}{N}}}
  {}
\qquad
\infer
  {\isval{\bool}}
  {}
\qquad
\infer
  {\isval{\notb{x}}}
  {}
\qquad
\infer
  {\isval{\C}}
  {}
\]

\paragraph{Hcom/coe}

\[
\infer
  {\coe{x.A}{r}{r'}{M} \steps \coe{x.A'}{r}{r'}{M}}
  {A\steps A'}
\]

\[
\infer
  {\hcomgeneric{\etc{r_i}}{A} \steps \hcomgeneric{\etc{r_i}}{A'}}
  {A\steps A'}
\]

\paragraph{Dependent function types}

\[
\infer
  {\app{M}{N} \steps \app{M'}{N}}
  {M \steps M'}
\qquad
\infer
  {\app{\lam{a}{M}}{N} \steps \subst{M}{N}{a}}
  {}
\qquad
\infer
  {\isval{\lam{a}{M}}}
  {}
\]

\[
\infer
  {\hcomgeneric{\etc{r_i}}{\picl{a}{A}{B}} \steps
   \lam{a}{\hcom{\etc{r_i}}{B}{r}{r'}{\app{M}{a}}{\etc{y.\app{N^\e_i}{a}}}}}
  {}
\]

\[
\infer
  {\coe{x.\picl{a}{A}{B}}{r}{r'}{M} \steps
   \lam{a}{\coe{x.\subst{B}{\coe{x.A}{r'}{x}{a}}{a}}
   {r}{r'}{\app{M}{\coe{x.A}{r'}{r}{a}}}}}
  {}
\]

\paragraph{Dependent pair types}

\[
\infer
  {\fst{M} \steps \fst{M'}}
  {M \steps M'}
\qquad
\infer
  {\snd{M} \steps \snd{M'}}
  {M \steps M'}
\qquad
\infer
  {\isval{\pair{M}{N}}}
  {}
\]

\[
\infer
  {\fst{\pair{M}{N}} \steps M}
  {}
\qquad
\infer
  {\snd{\pair{M}{N}} \steps N}
  {}
\]

\[
\infer
  {\begin{array}{c}
   \hcomgeneric{\etc{r_i}}{\sigmacl{a}{A}{B}} \\
   \steps \\
   \pair{\hcom{\etc{r_i}}{A}{r}{r'}{\fst{M}}{\etc{y.\fst{N^\e_i}}}}
        {\com{\etc{r_i}}{z.\subst{B}{F}{a}}
          {r}{r'}{\snd{M}}{\etc{y.\snd{N^\e_i}}}}
   \end{array}}
  {F = \hcom{\etc{r_i}}{A}{r}{z}{\fst{M}}{\etc{y.\fst{N^\e_i}}}}
\]~

\[
\infer
  {\coe{x.\sigmacl{a}{A}{B}}{r}{r'}{M} \steps
   \pair{\coe{x.A}{r}{r'}{\fst{M}}}
        {\coe{x.\subst{B}{\coe{x.A}{r}{x}{\fst{M}}}{a}}{r}{r'}{\snd{M}}}}
  {}
\]

\paragraph{Identification types}

\[
\infer
  {\dapp{M}{r} \steps \dapp{M'}{r}}
  {M \steps M'}
\qquad
\infer
  {\dapp{(\dlam{x}{M})}{r} \steps \dsubst{M}{r}{x}}
  {}
\qquad
\infer
  {\isval{\dlam{x}{M}}}
  {}
\]

\[
\infer
  {\hcomgeneric{\etc{r_i}}{\Id{x.A}{P_0}{P_1}} \steps
   \dlam{x}{\hcom{\etc{r_i},x}{A}{r}{r'}
     {\dapp{M}{x}}
     {\etc{y.\dapp{N^\e_i}{x}},\_.P_0,\_.P_1}}}
  {}
\]

\[
\infer
  {\coe{y.\Id{x.A}{P_0}{P_1}}{r}{r'}{M} \steps
   \dlam{x}{\com{x}{y.A}{r}{r'}{\dapp{M}{x}}{y.P_0,y.P_1}}}
  {}
\]

\paragraph{Booleans}

\[
\infer
  {\hcomgeneric{\etc{r_i}}{\bool} \steps \dsubst{N^\e_i}{r'}{y}}
  {\etc{r_i} = x_1,\dots,x_{i-1},\e,r_{i+1},\dots,r_n}
\qquad
\infer
  {\hcomgeneric{x_1,\dots,x_n}{\bool} \steps M}
  {r = r'}
\]

\[
\infer
  {\isval{\true}}
  {}
\qquad
\infer
  {\isval{\false}}
  {}
\qquad
\infer
  {\isval{\hcomgeneric{x_1,\dots,x_n}{\bool}}}
  {r \neq r'}
\]

\[
\infer
  {\ifb{a.A}{M}{T}{F} \steps \ifb{a.A}{M'}{T}{F}}
  {M \steps M'}
\quad
\infer
  {\ifb{a.A}{\true}{T}{F} \steps T}
  {}
\quad
\infer
  {\ifb{a.A}{\false}{T}{F} \steps F}
  {}
\]

\[
\infer
  {\begin{array}{c}
   \ifb{a.A}{\hcomgeneric{x_1,\dots,x_n}{\bool}}{T}{F} \\
   \steps \\
   \com{x_1,\dots,x_n}{z.\subst{A}{H}{a}}{r}{r'}{\ifb{a.A}{M}{T}{F}}
    {\etc{y.\ifb{a.A}{N^\e_i}{T}{F}}}
   \end{array}}
  {r \neq r' &
   H = \hcom{x_1,\dots,x_n}{\bool}{r}{z}{M}{\etc{y.N^\e_i}}}
\]~

\[
\infer
  {\coe{x.\bool}{r}{r'}{M} \steps M}
  {}
\]

\paragraph{Circle}

\[
\infer
  {\hcomgeneric{\etc{r_i}}{\C} \steps \dsubst{N^\e_i}{r'}{y}}
  {\etc{r_i} = x_1,\dots,x_{i-1},\e,r_{i+1},\dots,r_n}
\qquad
\infer
  {\hcomgeneric{x_1,\dots,x_n}{\C} \steps M}
  {r = r'}
\]

\[
\infer
  {\lp{\e} \steps \base}
  {}
\qquad
\infer
  {\isval{\base}}
  {}
\qquad
\infer
  {\isval{\lp{x}}}
  {}
\qquad
\infer
  {\isval{\hcomgeneric{x_1,\dots,x_n}{\C}}}
  {r \neq r'}
\]

\[
\infer
  {\Celim{a.A}{M}{P}{x.L} \steps \Celim{a.A}{M'}{P}{x.L}}
  {M \steps M'}
\]

\[
\infer
  {\Celim{a.A}{\base}{P}{x.L} \steps P}
  {}
\qquad
\infer
  {\Celim{a.A}{\lp{w}}{P}{x.L} \steps \dsubst{L}{w}{x}}
  {}
\]

\[
\infer
  {\begin{array}{c}
   \Celim{a.A}{\hcomgeneric{x_1,\dots,x_n}{\C}}{P}{x.L} \\
   \steps \\
   \com{x_1,\dots,x_n}{z.\subst{A}{F}{a}}{r}{r'}{\Celim{a.A}{M}{P}{x.L}}
    {\etc{y.\Celim{a.A}{N^\e_i}{P}{x.L}}}
   \end{array}}
  {r \neq r' &
   F = \hcom{x_1,\dots,x_n}{\C}{r}{z}{M}{\etc{y.N^\e_i}}}
\]~

\[
\infer
  {\coe{x.\C}{r}{r'}{M} \steps M}
  {}
\]

\paragraph{Not}

\[
\infer
  {\isval{\notel{x}{M}}}
  {}
\qquad
\infer
  {\notel{0}{M} \steps \notf{M}}
  {}
\qquad
\infer
  {\notel{1}{M} \steps M}
  {}
\]

\[
\infer
  {\coe{x.\notb{x}}{\e}{\eb}{M} \steps \notf{M}}
  {}
\qquad
\infer
  {\coe{x.\notb{x}}{\e}{\e}{M} \steps M}
  {}
\]

\[
\infer
  {\coe{x.\notb{x}}{0}{x}{M} \steps \notel{x}{\notf{M}}}
  {}
\qquad
\infer
  {\coe{x.\notb{x}}{1}{x}{M} \steps \notel{x}{M}}
  {}
\]

\[
\infer
  {\coe{x.\notb{x}}{x}{r}{M} \steps \coe{x.\notb{x}}{x}{r}{M'}}
  {M\steps M'}
\qquad
\infer
  {\coe{x.\notb{x}}{x}{r}{\notel{x}{M}} \steps \notel{r}{M}}
  {}
\qquad
\infer
  {\coe{x.\notb{y}}{r}{r'}{M} \steps M}
  {x \neq y}
\]

\[
\infer
  {\hcomgeneric{\etc{r_i}}{\notb{w}}
   \steps
   \notel{w}{\hcom{\etc{r_i}}{\bool}{r}{r'}{\coe{x.\notb{x}}{w}{1}{M}}{\etc{y.\coe{x.\notb{x}}{w}{1}{N^\e_i}}}}}
  {}
\]~

%%% Local Variables:
%%% mode: latex
%%% TeX-master: "heo"
%%% End:

\newpage
\section{Meaning explanations}
\label{sec:meanings}

\begin{definition}\label{def:wftm}
We say $\wftm{M}$ when $M$ is a term with no free term variables, and
$\fd{M}\subseteq\Psi$.
\end{definition}

\begin{remark}\label{def:wfval}
We write $\wfval{M}$ when $\wftm{M}$ and $\isval{M}$. Being a value does not
depend on the choice of $\Psi$, so whenever $\wfval{M}$ and
$\fd{M}\subseteq\Psi'$, we also have $\wfval[\Psi']{M}$.
\end{remark}

\begin{definition}
A total dimension substitution $\psitd$ assigns to each dimension name in
$\Psi$ either $0$, $1$, or a dimension name in $\Psi'$. It follows that if
$\wftm{M}$ then $\wftm[\Psi']{\td{M}{\psi}}$.
\end{definition}

In this paper, we define the judgments of higher type theory as arising from two
families of partial equivalence relations on values: $\veqper[-]--$, which will
determine when two (pre)types are equal, and $\vinper[-]---$, which will
determine when two elements of a (pre)type are equal. We call such a pair of
relations a cubical type system. (We employ PERs as a convenient method of
describing sets equipped with an equivalence relation; elements of the
corresponding set are the values that are related to themselves.)

\begin{definition}\label{def:veqper}\label{def:vinper}
A \emph{cubical type system} consists of
\begin{enumerate}
\item
For every $\Psi$, a symmetric and transitive relation $\veqper--$ over values
$\wfval{A}$, and

\item
For every $\veqper{A}{B}$, symmetric and transitive relations $\vinper--{A}$ and
$\vinper--{B}$ over values $\wfval{M}$, such that $\vinper{M}{N}{A}$ if and only
if $\vinper{M}{N}{B}$.
\end{enumerate}
\end{definition}

\begin{remark}\label{def:eqper}\label{def:inper}
We write $\eqper{A}{B}$ when
$A\evals A_0$, $B\evals B_0$, and $\veqper{A_0}{B_0}$, and
we write $\inper{M}{N}{A}$ when 
$M\evals M_0$, $N\evals N_0$, $A\evals A_0$, and $\vinper{M_0}{N_0}{A_0}$.
\end{remark}

\subsection{Closed judgments}

We proceed by defining what it means for our core judgments to hold in any
cubical type system. (In \cref{sec:types} we use these judgments to describe
desirable properties of cubical type systems, for example, being closed under
certain type formers.)

The presuppositions of a judgment are the facts that must be true before one can
even sensibly state that judgment. For example, in \cref{def:ceqtm} below, we
presuppose that $A$ is a pretype when defining what it means for $M$ and $N$ to
be equal elements of $A$; if $A$ is not a pretype, then the PERs considered in
that definition may not even be defined.

Approximately, a term $A$ is a pretype at $\Psi$ when
$\eqper[\Psi']{\td{A}{\psi}}{\td{A}{\psi}}$ for every $\psitd$.
A term $M$ is an element of a pretype $A$ at $\Psi$ when 
$\inper[\Psi']{\td{M}{\psi}}{\td{M}{\psi}}{\td{A}{\psi}}$ for every $\psitd$.
We also demand that pretypes and their elements have \emph{coherent aspects}, a
technical condition implying that dimension substitutions can be taken
simultaneously or sequentially, before or after evaluating a term, without
affecting the outcome, up to PER equality.
(In our postfix notation for dimension substitutions, $\td{A}{\psi_1\psi_2}$
means $\td{(\td{A}{\psi_1})}{\psi_2}$.)

\begin{definition}\label{def:ceqpretype}
We say $\ceqpretype{A}{B}$, presupposing $\wftm{A}$ and $\wftm{B}$, when
for any $\msubsts{\Psi_1}{\psi_1}{\Psi}$ and
$\msubsts{\Psi_2}{\psi_2}{\Psi_1}$,
\begin{enumerate}
\item
$\td{A}{\psi_1}\evals A_1$, 
$\td{A_1}{\psi_2}\evals A_2$, 
$\td{A}{\psi_1\psi_2}\evals A_{12}$, 
\item
$\td{B}{\psi_1}\evals B_1$, 
$\td{B_1}{\psi_2}\evals B_2$, 
$\td{B}{\psi_1\psi_2}\evals B_{12}$, and
\item
$\veqperfour[\Psi_2]{A_2}{A_{12}}{B_2}{B_{12}}$.
\end{enumerate}
\end{definition}

\begin{definition}\label{def:ceqtm}
We say $\ceqtm{M}{N}{A}$, presupposing $\ceqpretype{A}{A}$,
$\wftm{M}$, and $\wftm{N}$, when for any
$\msubsts{\Psi_1}{\psi_1}{\Psi}$ and $\msubsts{\Psi_2}{\psi_2}{\Psi_1}$,
\begin{enumerate}
\item
$\td{M}{\psi_1}\evals M_1$, 
$\td{M_1}{\psi_2}\evals M_2$, 
$\td{M}{\psi_1\psi_2}\evals M_{12}$, 
\item
$\td{N}{\psi_1}\evals N_1$, 
$\td{N_1}{\psi_2}\evals N_2$, 
$\td{N}{\psi_1\psi_2}\evals N_{12}$, and
\item
$\vinperfour[\Psi_2]{M_2}{M_{12}}{N_2}{N_{12}}{A_{12}}$,
where $\td{A}{\psi_1\psi_2}\evals A_{12}$.
\end{enumerate}
\end{definition}

\begin{remark}\label{def:cpretype}\label{def:coftype}
We write $\cpretype{A}$ when $\ceqpretype{A}{A}$, and
we write $\coftype{M}{A}$ when $\ceqtm{M}{M}{A}$.
(We will similarly abbreviate further judgments without further comment.)
\end{remark}

\begin{remark}
The judgments $\ceqpretype{A}{B}$ and $\ceqtm{M}{N}{A}$ are symmetric and
transitive. Therefore, if $\ceqpretype{A}{B}$ then $\cpretype{A}$ and
$\cpretype{B}$, and if $\ceqtm{M}{N}{A}$ then $\coftype{M}{A}$ and
$\coftype{N}{A}$.
\end{remark}

If no terms in our programming language contained dimension subterms, then we
would have $M=\td{M}{\psi}$ for all $M$. The above meaning explanations would
therefore collapse into:
$\cpretype{A}$ whenever $\eqper[\Psi']{A}{A}$ for all $\Psi'$,
and $\ceqtm{M}{N}{A}$ whenever $\inper[\Psi']{M}{N}{A}$ for all $\Psi'$.
Disregarding $\Psi'$, these are precisely the ordinary meaning explanations for
computational type theory.

In order to accurately capture higher-dimensional structures, we restrict our
attention to pretypes satisfying the additional conditions of being
\emph{cubical} (ensuring their PERs are functorially indexed by the cube
category) and \emph{Kan} (ensuring they validate the $\hcomsym$ and $\coesym$
rules). This Kan condition is most easily expressed using judgments augmented by
\emph{dimension context restrictions}.

\begin{definition}\label{def:satisfies}
For any $\Psi$ and set of unoriented equations $\Xi = (r_1=r_1',\dots,r_n=r_n')$
in $\Psi$ (that is, $\fd{\etc{r_i},\etc{r_i'}}\subseteq\Psi$), we say that
$\psitd$ \emph{satisfies} $\Xi$ if $\td{r_i}{\psi} = \td{r_i'}{\psi}$ for each
$i\in [1,n]$.
\end{definition}

\begin{definition}\label{def:ceqpretyperes}
We say $\ceqpretyperes{\Xi}{A}{B}$, presupposing $\wftm{A}$, $\wftm{B}$, and
$\Xi$ is a set of equations in $\Psi$, when for any $\psitd$ satisfying $\Xi$,
$\ceqpretype[\Psi']{\td{A}{\psi}}{\td{B}{\psi}}$.
\end{definition}

\begin{definition}\label{def:ceqtmres}
We say $\ceqtmres{\Xi}{M}{N}{A}$, presupposing $\cpretyperes{\Xi}{A}$,
$\wftm{M}$, $\wftm{N}$, and $\Xi$ is a set of equations in $\Psi$, when for any
$\psitd$ satisfying $\Xi$,
$\ceqtm[\Psi']{\td{M}{\psi}}{\td{N}{\psi}}{\td{A}{\psi}}$.
\end{definition}

Notice that $\ceqpretype{A}{B}$ if and only if for all $\psitd$,
$\ceqpretype[\Psi']{\td{A}{\psi}}{\td{B}{\psi}}$. Thus if $\ceqpretype{A}{B}$,
then $\ceqpretyperes{\Xi}{A}{B}$, and the converse is true when $\Xi$ is
satisfied by all $\psitd$ (for example, when $\Xi$ is empty).

\begin{definition}\label{def:cubical}
We say $\cpretype{A}$ is \emph{cubical} if for any $\psitd$ and
$\vinper[\Psi']{M}{N}{A_0}$ (where $\td{A}{\psi}\evals A_0$),
$\ceqtm[\Psi']{M}{N}{\td{A}{\psi}}$.
\end{definition}

\begin{definition}\label{def:kan}
We say $A,B$ are \emph{equally Kan}, presupposing $\ceqpretype{A}{B}$, if the
following five conditions hold:
\begin{enumerate}
\item % hcom exists
For any $\psitd$, if
\begin{enumerate}
\item $\ceqtm[\Psi']{M}{O}{\td{A}{\psi}}$,
\item $\ceqtmres[\Psi',y]{r_i=\e,r_j=\e'}{N^\e_i}{N^{\e'}_j}{\td{A}{\psi}}$
for any $i\in [1,n]$, $j\in [1,n]$, $\e=0,1$, and $\e'=0,1$,
\item $\ceqtmres[\Psi',y]{r_i=\e}{N^\e_i}{P^\e_i}{\td{A}{\psi}}$
for any $i\in [1,n]$ and $\e=0,1$, and
\item $\ceqtmres[\Psi']{r_i=\e}{\dsubst{N^\e_i}{r}{y}}{M}{\td{A}{\psi}}$
for any $i\in [1,n]$ and $\e=0,1$,
\end{enumerate}
then $\ceqtm[\Psi']{\hcomgeneric{\etc{r_i}}{\td{A}{\psi}}}%
{\hcom{\etc{r_i}}{\td{B}{\psi}}{r}{r'}{O}{\etc{y.P^\e_i}}}{\td{A}{\psi}}$.

\item % hcom when r = r'
For any $\psitd$, if
\begin{enumerate}
\item $\coftype[\Psi']{M}{\td{A}{\psi}}$,
\item $\ceqtmres[\Psi',y]{r_i=\e,r_j=\e'}{N^\e_i}{N^{\e'}_j}{\td{A}{\psi}}$
for any $i\in [1,n]$, $j\in [1,n]$, $\e=0,1$, and $\e'=0,1$, and
\item $\ceqtmres[\Psi']{r_i=\e}{\dsubst{N^\e_i}{r}{y}}{M}{\td{A}{\psi}}$
for any $i\in [1,n]$ and $\e=0,1$,
\end{enumerate}
then
$\ceqtm[\Psi']{\hcom{\etc{r_i}}{\td{A}{\psi}}{r}{r}{M}{\etc{y.N^\e_i}}}{M}{\td{A}{\psi}}$.

\item % hcom when r_i = \e
For any $\psitd$, under the same conditions as above,
if $r_i = \e$ for some $i$ then
\[
\ceqtm[\Psi']{\hcomgeneric{\etc{r_i}}{\td{A}{\psi}}}{\dsubst{N^\e_i}{r'}{y}}{\td{A}{\psi}}.
\]

\item % coe
For any $\msubsts{(\Psi',x)}{\psi}{\Psi}$, if
$\ceqtm[\Psi']{M}{N}{\dsubst{\td{A}{\psi}}{r}{x}}$, then
$\ceqtm[\Psi']{{\coe{x.\td{A}{\psi}}{r}{r'}{M}}}{{\coe{x.\td{B}{\psi}}{r}{r'}{N}}}{\dsubst{\td{A}{\psi}}{r'}{x}}$.

\item % coe when r = r'
For any $\msubsts{(\Psi',x)}{\psi}{\Psi}$, if
$\coftype[\Psi']{M}{\dsubst{\td{A}{\psi}}{r}{x}}$, then
$\ceqtm[\Psi']{{\coe{x.\td{A}{\psi}}{r}{r}{M}}}{M}{\dsubst{\td{A}{\psi}}{r}{x}}$.
\end{enumerate}
\end{definition}

\begin{definition}\label{def:ceqtype}
We say $\ceqtype{A}{B}$, presupposing $\ceqpretype{A}{B}$, when $A$ and $B$ are
cubical and equally Kan.
\end{definition}

The judgment $\ceqtype{A}{B}$ is symmetric and transitive because the Kan
conditions are.

\subsection{Open judgments}

We extend these judgments to open terms by functionality, that is, an open
pretype (resp., element of a pretype) is an open term that sends equal elements
of the pretypes in the context to equal closed pretypes (resp., elements). The
open judgments are defined simultaneously, stratified by the length of the
context. (We assume the variables $a_1,\dots,a_n$ in a context are distinct.)

\begin{definition}\label{def:wfctx}
We say $\wfctx{(\oft{a_1}{A_1},\dots,\oft{a_n}{A_n})}$ when
\begin{gather*}
\cpretype{A_1}, \\
\pretype{\oft{a_1}{A_1}}{A_2}, \dots \\ \text{and}~
\pretype{\oft{a_1}{A_1},\dots,\oft{a_{n-1}}{A_{n-1}}}{A_n}.
\end{gather*}
\end{definition}

\begin{definition}\label{def:eqpretype}
We say $\eqpretype{\oft{a_1}{A_1},\dots,\oft{a_n}{A_n}}{B}{B'}$,
presupposing \\
$\wfctx{(\oft{a_1}{A_1},\dots,\oft{a_n}{A_n})}$, when for any $\psitd$ and any
\begin{gather*}
\ceqtm[\Psi']{N_1}{N_1'}{\td{A_1}{\psi}}, \\
\ceqtm[\Psi']{N_2}{N_2'}{\subst{\td{A_2}{\psi}}{N_1}{a_1}}, \dots\\\text{and}~
\ceqtm[\Psi']{N_n}{N_n'}
{\subst{\td{A_n}{\psi}}{N_1,\dots,N_{n-1}}{a_1,\dots,a_n}},
\end{gather*}
we have
$\ceqpretype[\Psi']
{\subst{\td{B}{\psi}}{N_1,\dots,N_n}{a_1,\dots,a_n}}
{\subst{\td{B'}{\psi}}{N_1',\dots,N_n'}{a_1,\dots,a_n}}$.
\end{definition}

\begin{definition}\label{def:eqtm}
We say $\eqtm{\oft{a_1}{A_1},\dots,\oft{a_n}{A_n}}{M}{M'}{B}$,
presupposing \\
$\pretype{\oft{a_1}{A_1},\dots,\oft{a_n}{A_n}}{B}$,
when for any $\psitd$ and any
\begin{gather*}
\ceqtm[\Psi']{N_1}{N_1'}{\td{A_1}{\psi}}, \\
\ceqtm[\Psi']{N_2}{N_2'}{\subst{\td{A_2}{\psi}}{N_1}{a_1}}, \dots\\\text{and}~
\ceqtm[\Psi']{N_n}{N_n'}
{\subst{\td{A_n}{\psi}}{N_1,\dots,N_{n-1}}{a_1,\dots,a_n}},
\end{gather*}
we have
$\ceqtm[\Psi']
{\subst{\td{M}{\psi}}{N_1,\dots,N_n}{a_1,\dots,a_n}}
{\subst{\td{M'}{\psi}}{N_1',\dots,N_n'}{a_1,\dots,a_n}}
{\subst{\td{B}{\psi}}{N_1,\dots,N_n}{a_1,\dots,a_n}}$.
\end{definition}

One should read $[\Psi]$ as extending across the entire judgment, as it
specifies the starting dimension at which to consider not only $B$ and $M$ but
$\G$ as well. 
The open judgments, like the closed judgments, are symmetric and transitive.
In particular, if $\eqpretype{\G}{B}{B'}$ then $\pretype{\G}{B}$.
As a result, the earlier hypotheses of each definition ensure that later
hypotheses are sensible; for example,
$\wfctx{(\oft{a_1}{A_1},\dots,\oft{a_n}{A_n})}$ and
$\coftype[\Psi']{N_1}{\td{A_1}{\psi}}$ ensure that
$\cpretype[\Psi']{\subst{\td{A_2}{\psi}}{N_1}{a_1}}$.

Finally, we extend the notions of context restriction and of being a type to
open pretypes and elements in a straightforward fashion. (\Cref{def:wfctxres}
requires the open judgments to be closed under dimension substitution, which we
prove in \cref{lem:td-judgments}.)

\begin{definition}
\label{def:wfctxres}\label{def:eqpretyperes}\label{def:eqtmres}~
\begin{enumerate}
\item
We say $\wfctxres{\Xi}{\G}$, presupposing $\Xi$ is a set of equations in $\Psi$,
when for any $\psitd$ satisfying $\Xi$, $\wfctx[\Psi']{\td{\G}{\psi}}$.

\item
We say $\eqpretyperes{\Xi}{\G}{B}{B'}$, presupposing $\wfctxres{\Xi}{\G}$ and
$\Xi$ is a set of equations in $\Psi$, when for any $\psitd$ satisfying $\Xi$,
$\eqpretype[\Psi']{\td{\G}{\psi}}{\td{B}{\psi}}{\td{B'}{\psi}}$.

\item
We say $\eqtmres{\Xi}{\G}{M}{M'}{B}$, presupposing $\wfctxres{\Xi}{\G}$,
$\pretyperes{\Xi}{\G}{B}$, and $\Xi$ is a set of equations in $\Psi$, when for
any $\psitd$ satisfying $\Xi$,
$\eqtm[\Psi']{\td{\G}{\psi}}{\td{M}{\psi}}{\td{M'}{\psi}}{\td{B}{\psi}}$.
\end{enumerate}
\end{definition}

\begin{definition}\label{def:eqtype}
We say $\eqtype{\oft{a_1}{A_1},\dots,\oft{a_n}{A_n}}{B}{B'}$,
presupposing \\
$\eqpretype{\oft{a_1}{A_1},\dots,\oft{a_n}{A_n}}{B}{B'}$,
when for any $\psitd$ and any
\begin{gather*}
\ceqtm[\Psi']{N_1}{N_1'}{\td{A_1}{\psi}}, \\
\ceqtm[\Psi']{N_2}{N_2'}{\subst{\td{A_2}{\psi}}{N_1}{a_1}}, \dots\\\text{and}~
\ceqtm[\Psi']{N_n}{N_n'}
{\subst{\td{A_n}{\psi}}{N_1,\dots,N_{n-1}}{a_1,\dots,a_n}},
\end{gather*}
we have
$\ceqtype[\Psi']
{\subst{\td{B}{\psi}}{N_1,\dots,N_n}{a_1,\dots,a_n}}
{\subst{\td{B'}{\psi}}{N_1',\dots,N_n'}{a_1,\dots,a_n}}$.
\end{definition}

\subsection{Basic lemmas}
\label{ssec:lemmas}

We prove some basic results about our core judgments before proceeding.

\begin{lemma}[Head expansion]\label{lem:expansion}
If $\ceqtm{M'}{N}{A}$ and for all $\psitd$,
$\td{M}{\psi} \steps^* \td{M'}{\psi}$, then $\ceqtm{M}{N}{A}$.
\end{lemma}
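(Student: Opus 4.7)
The plan is to directly unfold \cref{def:ceqtm} for the conclusion $\ceqtm{M}{N}{A}$ and transport each piece of data supplied by the hypothesis $\ceqtm{M'}{N}{A}$ across the reductions $\td{M}{\psi} \steps^* \td{M'}{\psi}$, using Determinacy.

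Fix arbitrary $\msubsts{\Psi_1}{\psi_1}{\Psi}$ and $\msubsts{\Psi_2}{\psi_2}{\Psi_1}$. By $\ceqtm{M'}{N}{A}$, there are values $M'_1,M'_2,M'_{12}$ with $\td{M'}{\psi_1}\evals M'_1$, $\td{M'_1}{\psi_2}\evals M'_2$, and $\td{M'}{\psi_1\psi_2}\evals M'_{12}$, along with the corresponding values $N_1,N_2,N_{12}$ for $N$ and $A_{12}$ with $\td{A}{\psi_1\psi_2}\evals A_{12}$, satisfying $\vinperfour[\Psi_2]{M'_2}{M'_{12}}{N_2}{N_{12}}{A_{12}}$.

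Instantiating the hypothesis at $\psi_1$ and at $\psi_1\psi_2$ gives $\td{M}{\psi_1} \steps^* \td{M'}{\psi_1}$ and $\td{M}{\psi_1\psi_2} \steps^* \td{M'}{\psi_1\psi_2}$. Composing with the evaluation sequences above yields $\td{M}{\psi_1} \steps^* M'_1$ and $\td{M}{\psi_1\psi_2} \steps^* M'_{12}$, so since both targets are values we have $\td{M}{\psi_1}\evals M'_1$ and $\td{M}{\psi_1\psi_2}\evals M'_{12}$. Setting $M_1 \eqdef M'_1$, $M_{12} \eqdef M'_{12}$, and $M_2 \eqdef M'_2$, the equation $\td{M_1}{\psi_2} = \td{M'_1}{\psi_2} \evals M'_2$ furnishes the remaining evaluation. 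The required PER condition $\vinperfour[\Psi_2]{M_2}{M_{12}}{N_2}{N_{12}}{A_{12}}$ is then exactly the one already delivered by $\ceqtm{M'}{N}{A}$, completing the proof.

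This is a routine unfolding with no conceptual obstacle; the only small point of care is to invoke the step-reduction hypothesis separately at both $\psi_1$ and at the composite $\psi_1\psi_2$, since \cref{def:ceqtm} inspects the evaluation of $M$ under both substitutions independently (rather than only at a single substitution), and to use Determinacy to identify the values reached by $M$ with those reached by $M'$.
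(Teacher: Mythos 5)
Your proof is correct and follows essentially the same route as the paper: after composing $\td{M}{\psi} \steps^* \td{M'}{\psi}$ with $\td{M'}{\psi} \evals (\cdot)$ under both $\psi_1$ and $\psi_1\psi_2$, determinism lets you take $M_1 = M'_1$, $M_2 = M'_2$, $M_{12} = M'_{12}$, at which point the required PER relation is literally the one given by $\ceqtm{M'}{N}{A}$. The paper phrases the last step slightly differently, reducing to $\vinper[\Psi_2]{M'_{12}}{M_{12}}{A_{12}}$ and $\vinper[\Psi_2]{M'_2}{M_2}{A_{12}}$ and discharging them by reflexivity, but the substance is identical.
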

\begin{proof}
For any $\msubsts{\Psi_1}{\psi_1}{\Psi}$ and $\msubsts{\Psi_2}{\psi_2}{\Psi_1}$,
we know $\vinperfour[\Psi_2]{M'_2}{M'_{12}}{N_2}{N_{12}}{A_{12}}$
where $\td{A}{\psi_1\psi_2} \evals A_{12}$. Therefore it suffices to show
$\vinper[\Psi_2]{M'_{12}}{M_{12}}{A_{12}}$ and
$\vinper[\Psi_2]{M'_2}{M_2}{A_{12}}$.
The former is true because
$\td{M}{\psi_1\psi_2} \steps^* \td{M'}{\psi_1\psi_2} \evals M'_{12}$ and
$\vinper[\Psi_2]{M'_{12}}{M'_{12}}{A_{12}}$.
The latter is true because
$\td{M}{\psi_1} \steps^* \td{M'}{\psi_1} \evals M'_1$,
$\td{M'_1}{\psi_2} \evals M'_2$, and
$\vinper[\Psi_2]{M'_2}{M'_2}{A_{12}}$.
\end{proof}

A special case of this lemma is that if $\coftype{M'}{A}$ then $\coftype{M}{A}$.

\begin{lemma}\label{lem:coftype-ceqtm}
If $\coftype{M}{A}$, $\coftype{N}{A}$, and for all $\psitd$,
$\inper[\Psi']{\td{M}{\psi}}{\td{N}{\psi}}{\td{A}{\psi}}$, then
$\ceqtm{M}{N}{A}$.
\end{lemma}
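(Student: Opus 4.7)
The plan is to unfold the definition of $\ceqtm{M}{N}{A}$ given in \cref{def:ceqtm} and verify each of its three clauses for arbitrary $\msubsts{\Psi_1}{\psi_1}{\Psi}$ and $\msubsts{\Psi_2}{\psi_2}{\Psi_1}$. The two evaluation clauses (the existence of values $M_1,M_2,M_{12}$ and $N_1,N_2,N_{12}$) come immediately from the assumed $\coftype{M}{A}$ and $\coftype{N}{A}$; similarly $\td{A}{\psi_1\psi_2}\evals A_{12}$ comes from $\cpretype{A}$, which is a consequence of $\coftype{M}{A}$. All the substance lies in clause (3), the four-way relation $\vinperfour[\Psi_2]{M_2}{M_{12}}{N_2}{N_{12}}{A_{12}}$.

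Next I will observe that this four-way relation amounts to three pairwise PER relations at $A_{12}$: $\vinper[\Psi_2]{M_2}{M_{12}}{A_{12}}$, $\vinper[\Psi_2]{M_{12}}{N_2}{A_{12}}$, and $\vinper[\Psi_2]{N_2}{N_{12}}{A_{12}}$. The outer two are already present in the four-way relations $\vinperfour[\Psi_2]{M_2}{M_{12}}{M_2}{M_{12}}{A_{12}}$ and $\vinperfour[\Psi_2]{N_2}{N_{12}}{N_2}{N_{12}}{A_{12}}$ delivered by $\coftype{M}{A}$ and $\coftype{N}{A}$ respectively. So only the middle link $\vinper[\Psi_2]{M_{12}}{N_2}{A_{12}}$ needs to be established.

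For that link, I instantiate the additional hypothesis at $\psi \eqdef \psi_1\psi_2\colon\Psi\to\Psi_2$. This yields $\inper[\Psi_2]{\td{M}{\psi_1\psi_2}}{\td{N}{\psi_1\psi_2}}{\td{A}{\psi_1\psi_2}}$, which by \cref{def:inper} and determinacy of evaluation unfolds to $\vinper[\Psi_2]{M_{12}}{N_{12}}{A_{12}}$. Composing with $\vinper[\Psi_2]{N_{12}}{N_2}{A_{12}}$ (obtained by symmetry from the piece of the $N$ four-way relation noted above) and using transitivity of the PER $\vinper[\Psi_2]{-}{-}{A_{12}}$ delivers $\vinper[\Psi_2]{M_{12}}{N_2}{A_{12}}$, completing the chain.

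There is no serious obstacle: the lemma is essentially a PER bookkeeping exercise. The only point worth flagging is the conceptual one that the closed equality judgment $\ceqtm{M}{N}{A}$ relates not just $\td{M}{\psi_1\psi_2}$ to $\td{N}{\psi_1\psi_2}$ pointwise, but also their ``two-step'' evaluations $M_2$ and $N_2$ to the ``one-step'' evaluations $M_{12}$ and $N_{12}$; this is exactly why the hypothesis must be supplemented by $\coftype{M}{A}$ and $\coftype{N}{A}$ (coherence of aspects for $M$ and $N$ individually) rather than being derivable from the pointwise PER relation alone.
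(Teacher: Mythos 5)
Your proof is correct and matches the paper's argument: both reduce the full four-way relation to the single middle link, obtain it from the hypothesis instantiated at $\psi = \psi_1\psi_2$, and splice it together with the coherence data supplied by $\coftype{M}{A}$ and $\coftype{N}{A}$ via transitivity. You are just a bit more explicit about the PER bookkeeping than the paper, which compresses the chaining into ``it suffices to show $\inper[\Psi_2]{\td{M}{\psi_1\psi_2}}{\td{N}{\psi_1\psi_2}}{\td{A}{\psi_1\psi_2}}$''.
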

\begin{proof}
For all $\msubsts{\Psi_1}{\psi_1}{\Psi}$ and $\msubsts{\Psi_2}{\psi_2}{\Psi_1}$,
by $\coftype{M}{A}$ we have $\td{M}{\psi_1}\evals M_1$ and
$\inper[\Psi_2]{\td{M_1}{\psi_2}}{\td{M}{\psi_1\psi_2}}{\td{A}{\psi_1\psi_2}}$,
and by $\coftype{N}{A}$ we have $\td{N}{\psi_1}\evals N_1$ and
$\inper[\Psi_2]{\td{N_1}{\psi_2}}{\td{N}{\psi_1\psi_2}}{\td{A}{\psi_1\psi_2}}$.
Therefore it suffices to show
$\inper[\Psi_2]{\td{M}{\psi_1\psi_2}}{\td{N}{\psi_1\psi_2}}{\td{A}{\psi_1\psi_2}}$,
which
follows from our assumption at $\psi = \psi_1\psi_2$.
\end{proof}

\begin{lemma}\label{lem:td-judgments}
For any $\psitd$,
\begin{enumerate}
\item if $\ceqtm{M}{N}{A}$ then
$\ceqtm[\Psi']{\td{M}{\psi}}{\td{N}{\psi}}{\td{A}{\psi}}$;
\item if $\ceqtype{A}{B}$ then $\ceqtype[\Psi']{\td{A}{\psi}}{\td{B}{\psi}}$;
\item if $\wfctx{\G}$ then $\wfctx[\Psi']{\td{\G}{\psi}}$;
\item if $\eqpretype{\G}{A}{B}$ then 
$\eqpretype[\Psi']{\td{\G}{\psi}}{\td{A}{\psi}}{\td{B}{\psi}}$;
\item if $\eqtm{\G}{M}{N}{A}$ then
$\eqtm[\Psi']{\td{\G}{\psi}}{\td{M}{\psi}}{\td{N}{\psi}}{\td{A}{\psi}}$; and
\item if $\eqtype{\G}{A}{B}$ then
$\eqtype[\Psi']{\td{\G}{\psi}}{\td{A}{\psi}}{\td{B}{\psi}}$.
\end{enumerate}
\end{lemma}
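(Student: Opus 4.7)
The plan is to exploit the fact that each of these six judgments is defined via outermost universal quantification over dimension substitutions; closure under $\psitd$ then follows by reindexing along $\psi$. The key observation is that whenever $\msubsts{\Psi_1'}{\psi_1'}{\Psi'}$ and $\msubsts{\Psi_2'}{\psi_2'}{\Psi_1'}$, the composites $\psi\psi_1' : \Psi_1' \to \Psi$ and $\psi\psi_1'\psi_2' : \Psi_2' \to \Psi$ are again total dimension substitutions, and $\td{(\td{X}{\psi})}{\psi_1'} = \td{X}{\psi\psi_1'}$ holds syntactically for any term $X$.

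For (1), given $\ceqtm{M}{N}{A}$ and $\psitd$, I verify \cref{def:ceqtm} for $\td{M}{\psi}$, $\td{N}{\psi}$, $\td{A}{\psi}$ against arbitrary $\psi_1'$, $\psi_2'$ by instantiating the original hypothesis at $\psi_1 := \psi\psi_1'$ and $\psi_2 := \psi_2'$; the required evaluations and PER-coherence conditions then match immediately via the syntactic equation above. Part (2) decomposes into the pretype equality portion, handled by the same reindexing applied to \cref{def:ceqpretype}, together with the cubical condition (\cref{def:cubical}) and the five equally-Kan clauses (\cref{def:kan}); each of the latter is stated with an outermost ``for any $\psitd$'', so closure under reindexing follows simply by precomposing with $\psi$.

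For (3)--(6), I proceed by simultaneous induction on the length $n$ of $\G$, with (1)--(2) supplying the $n = 0$ base cases. The case of (3) unfolds to a family of pretype judgments over prefixes of $\G$, each discharged by (4) at a shorter context, or by (2) for the first component. For (4)--(6), the body of each definition quantifies over a dimension substitution $\psi_0 : \Psi_0 \to \Psi'$ together with element substituends $N_i$; given such data, I invoke the original hypothesis at $\psi\psi_0 : \Psi_0 \to \Psi$ with the same $N_i$, using $\td{(\td{-}{\psi})}{\psi_0} = \td{-}{\psi\psi_0}$ to convert the typing premises framed in terms of $\td{\G}{\psi}$ into premises about $\G$ directly. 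The main challenge is purely bureaucratic: (3)--(6) are interdependent via presuppositions---for instance, $\eqtm{\G}{M}{M'}{B}$ presupposes $\wfctx{\G}$ and $\pretype{\G}{B}$, both of which must already be transferred to $\td{\G}{\psi}$ before the conclusion of (5) is even sensible---so I must order the simultaneous induction so that every presupposition is available when needed. No deep mathematical obstacle arises; every step is an instance of the universal reindexing argument.
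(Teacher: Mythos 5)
Your proposal is correct and takes essentially the same route as the paper: for (1) and (2), closure under $\psi$ follows because the closed judgments are defined by outermost universal quantification over dimension substitutions (reindex by precomposing with $\psi$); for (3)--(6), simultaneous induction on the length of $\G$, with (1)--(2) handling the base case and the inductive steps managed by the presupposition ordering. Your explicit instantiation $\psi_1 := \psi\psi_1'$, $\psi_2 := \psi_2'$ together with the syntactic identity $\td{(\td{X}{\psi})}{\psi_1'} = \td{X}{\psi\psi_1'}$ is exactly what the paper's terse ``its definition quantifies over all dimension substitutions'' is pointing at.
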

\begin{proof}
We have already observed that if $\cpretype{A}$ then
$\cpretype[\Psi']{\td{A}{\psi}}$. Exact equality is closed under dimension
substitution for the same reason: its definition quantifies over all dimension
substitutions. Similarly, equal types are equally Kan cubical pretypes, and both
of these conditions are closed under dimension substitution.

Propositions (3), (4), and (5) are proven simultaneously by induction on the
length of $\G$. If $\G=\cdot$, then (3) is trivial, and (4) and (5) follow
because the closed judgments are closed under dimension substitution. The
inductive steps for all three use all three inductive hypotheses. Proposition
(6) follows similarly.
\end{proof}

The open judgments satisfy the \emph{structural rules} of type theory, like
hypothesis and weakening.

\begin{lemma}[Hypothesis]\label{lem:hypothesis}
If $\wfctx{(\G,\oft{a_i}{A_i},\G')}$ then
$\oftype{\G,\oft{a_i}{A_i},\G'}{a_i}{A_i}$.
\end{lemma}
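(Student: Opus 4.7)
The plan is to unfold the definition of $\oftype$ (which, by the earlier remark, means $\eqtm{\G,\oft{a_i}{A_i},\G'}{a_i}{a_i}{A_i}$) and verify that the required conclusion is literally one of the hypotheses of that definition. Let $\G = (\oft{a_1}{A_1}, \dots, \oft{a_{i-1}}{A_{i-1}})$ and $\G' = (\oft{a_{i+1}}{A_{i+1}}, \dots, \oft{a_n}{A_n})$, so the full context has length $n$. Fix any $\psitd$ together with terms $N_1,N_1',\dots,N_n,N_n'$ satisfying the hypotheses of \cref{def:eqtm} applied to the full context. Our goal reduces to showing
\[
\ceqtm[\Psi']
{\subst{\td{a_i}{\psi}}{N_1,\dots,N_n}{a_1,\dots,a_n}}
{\subst{\td{a_i}{\psi}}{N_1',\dots,N_n'}{a_1,\dots,a_n}}
{\subst{\td{A_i}{\psi}}{N_1,\dots,N_n}{a_1,\dots,a_n}}.
\]

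Next I would simplify both sides using two trivial facts about the syntax. First, dimension substitution fixes term variables, so $\td{a_i}{\psi} = a_i$; and then term substitution gives $\subst{a_i}{N_1,\dots,N_n}{a_1,\dots,a_n} = N_i$ and similarly $N_i'$. Second, by \cref{def:wfctx} applied to $\wfctx{(\G,\oft{a_i}{A_i},\G')}$, the type $A_i$ is well-formed in the context $\G$ of length $i-1$, so (as a syntactic term) it has no free occurrences of $a_i, a_{i+1}, \dots, a_n$. Hence $\subst{\td{A_i}{\psi}}{N_1,\dots,N_n}{a_1,\dots,a_n} = \subst{\td{A_i}{\psi}}{N_1,\dots,N_{i-1}}{a_1,\dots,a_{i-1}}$. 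After these simplifications, the required judgment is precisely $\ceqtm[\Psi']{N_i}{N_i'}{\subst{\td{A_i}{\psi}}{N_1,\dots,N_{i-1}}{a_1,\dots,a_{i-1}}}$, which is the $i$-th hypothesis provided by \cref{def:eqtm}.

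The only other thing to check is the presupposition $\pretype{\G,\oft{a_i}{A_i},\G'}{A_i}$, which requires $A_i$ to be a pretype under substitutions for the \emph{full} context, not just $\G$. This follows by the same observation: since $A_i$ mentions only $a_1, \dots, a_{i-1}$, substituting additional $N_j$ for $a_j$ with $j \geq i$ has no effect, so $\pretype{\G}{A_i}$ (which holds by \cref{def:wfctx}) implies $\pretype{\G,\oft{a_i}{A_i},\G'}{A_i}$. There is no real obstacle here; the entire proof is a matter of carefully unpacking the functorial form of the open judgments and observing that the universally quantified hypothesis on $N_i, N_i'$ is exactly the needed conclusion.
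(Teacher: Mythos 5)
Your proof is correct and follows essentially the same route as the paper's: unfold \cref{def:eqtm}, observe that the substitution instance of the variable $a_i$ is exactly $N_i$, and note that the required conclusion is the $i$-th hypothesis provided by that definition. The paper states this in two sentences; you additionally spell out the syntactic simplifications and discharge the presupposition $\pretype{\G,\oft{a_i}{A_i},\G'}{A_i}$ explicitly via the weakening-style observation that $A_i$ cannot depend on $a_j$ for $j \geq i$ (the paper uses the same observation in its proof of \cref{lem:weakening}); this extra care is sound and consistent with the paper's conventions.
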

\begin{proof}
We must show for any $\psitd$ and equal elements $N_1,N_1',\dots,N_n,N_n'$ of
the pretypes in $(\td{\G}{\psi},\oft{a_i}{\td{A_i}{\psi}},\td{\G'}{\psi})$, that
$\ceqtm[\Psi']{N_i}{N_i'}{\td{A_i}{\psi}}$. But this is exactly our assumption
about $N_i,N_i'$.
\end{proof}

\begin{lemma}[Weakening]\label{lem:weakening}~
\begin{enumerate}
\item If $\eqpretype{\G,\G'}{B}{B'}$ and $\pretype{\G}{A}$, then
$\eqpretype{\G,\oft{a}{A},\G'}{B}{B'}$.
\item If $\eqtm{\G,\G'}{M}{M'}{B}$ and $\pretype{\G}{A}$, then
$\eqtm{\G,\oft{a}{A},\G'}{M}{M'}{B}$.
\end{enumerate}
\end{lemma}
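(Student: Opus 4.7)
The plan is to prove both parts simultaneously by induction on the length of $\G'$, since the well-formedness of the extended context $\G, \oft{a}{A}, \G'$ required by the conclusions' presuppositions itself follows from part 1 of the lemma applied at shorter lengths. Throughout, the key observation is that the fresh variable $a$ does not occur in $B$, $B'$, $M$, $M'$, or in any of the types of $\G'$, so any substitution of the form $[N/a]$ acts as the identity on these terms.

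For the base case $\G' = \cdot$, well-formedness of $\G, \oft{a}{A}$ is immediate from $\wfctx{\G}$ and $\pretype{\G}{A}$ via \cref{def:wfctx}. Unfolding \cref{def:eqpretype}, we take an arbitrary $\psitd$, equal elements $(N_1, N_1', \dots, N_n, N_n')$ for the pretypes of $\td{\G}{\psi}$, and an equal element $(N, N')$ for $\subst{\td{A}{\psi}}{N_1,\dots,N_n}{a_1,\dots,a_n}$. Since $a$ is not free in $B$ or $B'$, the outer substitutions reduce to $\subst{\td{B}{\psi}}{N_1,\dots,N_n}{a_1,\dots,a_n}$ and similarly for $B'$, and the required PER equality follows directly by instantiating $\eqpretype{\G}{B}{B'}$ at $\psi$ and $(N_i, N_i')$. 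Part 2 proceeds identically with $M, M'$ in place of $B, B'$.

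For the inductive step $\G' = \G'', \oft{b}{C}$, the presupposed $\wfctx{\G, \G'', \oft{b}{C}}$ contains the hypothesis $\pretype{\G, \G''}{C}$; applying the inductive hypothesis for part 1 to the shorter extension $\G''$ yields $\pretype{\G, \oft{a}{A}, \G''}{C}$, which together with the IH-weakened well-formedness of $\G, \oft{a}{A}, \G''$ gives $\wfctx{\G, \oft{a}{A}, \G'', \oft{b}{C}}$. To finish, one unfolds the conclusion, receives an equal-elements tuple for the weakened context, and observes that because neither $C$ nor the types of $\G''$ nor $B, B'$ (respectively $M, M'$) contain $a$ free, deleting the entry for $a$ from the tuple yields data satisfying the hypotheses of $\eqpretype{\G, \G'', \oft{b}{C}}{B}{B'}$ (respectively $\eqtm{\G, \G'', \oft{b}{C}}{M}{M'}{B}$), from which the desired equality is immediate.

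The only real delicacy in the argument is the free-variable observation: by \cref{def:wfctx} each $A_i$ in a context uses only the earlier-bound term variables, and a judgment such as $\eqpretype{\G,\G'}{B}{B'}$ is tacitly understood to concern terms whose free term variables lie in $\G, \G'$. Since $a$ is chosen to be distinct from all variables in $\G, \G'$, it is absent from every term in sight, making the extra substitution vacuous. No new machinery is required beyond unfolding the definitions and carefully tracking these substitutions.
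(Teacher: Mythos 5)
Your core argument matches the paper's exactly: unfold the definition of the weakened judgment, note that $a$ is fresh for $\G'$, $B$, $B'$ (since the presupposed well-formedness of the unweakened context and judgment force these terms to be closed under substitution for the variables of $\G,\G'$, and $a$ is distinct from those), and drop the redundant entry to recover precisely the hypotheses of the original judgment $\eqpretype{\G,\G'}{B}{B'}$. The genuine addition is your explicit induction on the length of $\G'$ to discharge the presupposition $\wfctx{(\G,\oft{a}{A},\G')}$. The paper's proof tacitly assumes this well-formedness in order to even state the conclusion but does not say how it is established; as you correctly observe, establishing it for each later entry $\oft{b}{C}$ of $\G'$ is itself an application of part~1 at a shorter extension $\G''$, so some induction is unavoidable. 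Your proof is therefore a more complete rendering of the same argument rather than a different route, and the extra care is warranted given that the open judgments of \cref{def:eqpretype} and \cref{def:eqtm} are defined by stratification on context length.
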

\begin{proof}
For the first part, we must show for any $\psitd$ and equal elements
\begin{gather*}
\ceqtm[\Psi']{N_1}{N_1'}{\td{A_1}{\psi}}, \\
\ceqtm[\Psi']{N_2}{N_2'}{\subst{\td{A_2}{\psi}}{N_1}{a_1}}, \dots \\
\ceqtm[\Psi']{N}{N'}{\subst{\td{A}{\psi}}{N_1,\dots}{a_1,\dots}},
\dots\\\text{and}~
\ceqtm[\Psi']{N_n}{N_n'}
{\subst{\td{A_n}{\psi}}{N_1,\dots,N,\dots,N_{n-1}}{a_1,\dots,a,\dots,a_n}},
\end{gather*}
that the corresponding instances of $B,B'$ are equal closed pretypes.
By $\eqpretype{\G,\G'}{B}{B'}$ we know that $a\fresh \G',B,B'$---since the
contained pretypes become closed when substituting for $a_1,\dots,a_n$.
It also gives us
$\ceqpretype[\Psi']
{\subst{\td{B}{\psi}}{N_1,\dots}{a_1,\dots}}
{\subst{\td{B'}{\psi}}{N_1',\dots}{a_1,\dots}}$
which are the desired instances of $B,B'$ because $a\fresh B,B'$.
The second part follows similarly.
\end{proof}

The definition of equal pretypes was chosen to ensure that equal pretypes have
equal elements.

\begin{lemma}\label{lem:ceqpretype-ceqtm}
If $\ceqpretype{A}{B}$ and $\ceqtm{M}{N}{A}$ then $\ceqtm{M}{N}{B}$.
\end{lemma}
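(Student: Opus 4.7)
The plan is to unfold the two definitions and notice that everything falls out from the compatibility clause built into \cref{def:veqper} (the fact that $\vinper{-}{-}{A}$ and $\vinper{-}{-}{B}$ coincide whenever $\veqper{A}{B}$).

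First, I verify the presupposition of the goal: $\ceqtm{M}{N}{B}$ requires $\ceqpretype{B}{B}$, and this follows from $\ceqpretype{A}{B}$ together with symmetry and transitivity of $\ceqpretype{-}{-}$ (noted just after \cref{def:coftype}). I also get $\wftm{M}$ and $\wftm{N}$ from $\ceqtm{M}{N}{A}$.

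Next, fix arbitrary $\msubsts{\Psi_1}{\psi_1}{\Psi}$ and $\msubsts{\Psi_2}{\psi_2}{\Psi_1}$. The evaluations $\td{M}{\psi_1}\evals M_1$, $\td{M_1}{\psi_2}\evals M_2$, $\td{M}{\psi_1\psi_2}\evals M_{12}$ (and similarly for $N$) are furnished directly by $\ceqtm{M}{N}{A}$, as is the PER fact $\vinperfour[\Psi_2]{M_2}{M_{12}}{N_2}{N_{12}}{A_{12}}$, where $\td{A}{\psi_1\psi_2}\evals A_{12}$. Independently, $\ceqpretype{A}{B}$ applied to the same $\psi_1,\psi_2$ produces $\td{B}{\psi_1\psi_2}\evals B_{12}$ together with $\veqper[\Psi_2]{A_{12}}{B_{12}}$.

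Finally, I invoke the second clause of \cref{def:veqper}: since $\veqper[\Psi_2]{A_{12}}{B_{12}}$, the relations $\vinper[\Psi_2]{-}{-}{A_{12}}$ and $\vinper[\Psi_2]{-}{-}{B_{12}}$ agree as PERs on values. Transporting along this equality converts the four-way relatedness at $A_{12}$ into $\vinperfour[\Psi_2]{M_2}{M_{12}}{N_2}{N_{12}}{B_{12}}$, which is exactly the remaining condition of \cref{def:ceqtm} for $\ceqtm{M}{N}{B}$. There is no real obstacle here; the lemma is essentially a restatement of the compatibility axiom for membership PERs at equal type values, lifted through the quantification over $\psi_1,\psi_2$ used by the coherence-aspect formulation of the closed judgments.
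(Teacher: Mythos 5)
Your proof is correct and takes essentially the same route as the paper: fix $\psi_1,\psi_2$, extract $\vinperfour[\Psi_2]{M_2}{M_{12}}{N_2}{N_{12}}{A_{12}}$ from $\ceqtm{M}{N}{A}$ and $\veqper[\Psi_2]{A_{12}}{B_{12}}$ from $\ceqpretype{A}{B}$, then transport along the compatibility clause of \cref{def:veqper}. The paper's version is just more terse; your explicit check of the presupposition $\cpretype{B}$ is a reasonable addition.
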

\begin{proof}
For any $\msubsts{\Psi_1}{\psi_1}{\Psi}$ and $\msubsts{\Psi_2}{\psi_2}{\Psi_1}$,
by the first hypothesis we have that
$\td{A}{\psi_1\psi_2}\evals A_{12}$, $\td{B}{\psi_1\psi_2}\evals B_{12}$, and
$\eqper[\Psi_2]{A_{12}}{B_{12}}$;
by the second hypothesis, we have that
$\vinperfour[\Psi_2]{M_2}{M_{12}}{N_2}{N_{12}}{A_{12}}$.
But this implies
$\vinperfour[\Psi_2]{M_2}{M_{12}}{N_2}{N_{12}}{B_{12}}$.
\end{proof}

\begin{lemma}\label{lem:eqpretype-eqtm}
If $\eqpretype{\G}{A}{B}$ and $\eqtm{\G}{M}{N}{A}$ then $\eqtm{\G}{M}{N}{B}$.
\end{lemma}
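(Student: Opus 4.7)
The plan is to reduce the open statement directly to its closed analog, \cref{lem:ceqpretype-ceqtm}, by unfolding the definition of the open judgments and pushing the substitutions inside.

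Concretely, to prove $\eqtm{\G}{M}{N}{B}$, I first need to verify the presupposition $\pretype{\G}{B}$; this is immediate from $\eqpretype{\G}{A}{B}$, since equality of open pretypes is reflexive on each side (as noted in the excerpt after \cref{def:eqtm}). Then, fixing an arbitrary $\psitd$ and equal elements
\begin{gather*}
\ceqtm[\Psi']{N_1}{N_1'}{\td{A_1}{\psi}}, \\
\ceqtm[\Psi']{N_2}{N_2'}{\subst{\td{A_2}{\psi}}{N_1}{a_1}}, \dots\\\text{and}~
\ceqtm[\Psi']{N_n}{N_n'}
{\subst{\td{A_n}{\psi}}{N_1,\dots,N_{n-1}}{a_1,\dots,a_n}},
\end{gather*}
I need to establish that the corresponding substitution instances of $M,N$ are equal elements of the substitution instance of $B$.

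From the hypothesis $\eqtm{\G}{M}{N}{A}$ at the same $\psi$ and choices of $N_i,N_i'$, I obtain the closed judgment $\ceqtm[\Psi']{\sigma(M)}{\sigma'(N)}{\sigma(A)}$, where $\sigma$ (resp.\ $\sigma'$) denotes substitution of $N_1,\dots,N_n$ (resp.\ $N_1',\dots,N_n'$) along $\psi$. From $\eqpretype{\G}{A}{B}$ at the same $\psi$ and $N_i,N_i'$, I obtain $\ceqpretype[\Psi']{\sigma(A)}{\sigma'(B)}$; by symmetry and transitivity of $\ceqpretype$, this also gives $\ceqpretype[\Psi']{\sigma(A)}{\sigma(B)}$. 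Applying \cref{lem:ceqpretype-ceqtm} to this equality together with the earlier $\ceqtm[\Psi']{\sigma(M)}{\sigma'(N)}{\sigma(A)}$ yields $\ceqtm[\Psi']{\sigma(M)}{\sigma'(N)}{\sigma(B)}$, which is exactly the instance required by \cref{def:eqtm}.

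There is no real obstacle: the work is entirely bookkeeping about unfolding definitions and invoking the closed lemma pointwise. The only subtlety worth noting is that $\eqpretype{\G}{A}{B}$ is evaluated on \emph{equal} tuples $N_i,N_i'$ rather than on reflexive ones, so strictly speaking it gives $\ceqpretype[\Psi']{\sigma(A)}{\sigma'(B)}$; rewriting this in the form needed to apply \cref{lem:ceqpretype-ceqtm} uses symmetry and transitivity of $\ceqpretype$, which were already observed to hold for the closed judgments.
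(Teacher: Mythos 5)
Your plan follows the paper's exactly: unfold both open judgments at an arbitrary $\psi$ and tuple of equal elements, then conclude by \cref{lem:ceqpretype-ceqtm}. The one step that does not go through as written is the claim that from $\ceqpretype[\Psi']{\sigma(A)}{\sigma'(B)}$, ``symmetry and transitivity of $\ceqpretype$'' give $\ceqpretype[\Psi']{\sigma(A)}{\sigma(B)}$. Those PER axioms cannot by themselves swap $\sigma'(B)$ for $\sigma(B)$: they relate only terms already linked by the relation, and $\sigma(B)$ has not yet been connected to anything. What is missing is one additional application of an open judgment, and either of two choices works: apply $\eqpretype{\G}{A}{B}$ at the reflexive tuple $N_1,\dots,N_n$ (valid since $\ceqtm[\Psi']{N_i}{N_i'}{\cdot}$ gives $\coftype[\Psi']{N_i}{\cdot}$ by the PER property of $\ceqtm$), which yields $\ceqpretype[\Psi']{\sigma(A)}{\sigma(B)}$ directly---this is what the paper tacitly does, as it only ever instantiates $B$ with the unprimed $N_i$; or apply $\pretype{\G}{B}$ at $(N_i,N_i')$ to obtain $\ceqpretype[\Psi']{\sigma(B)}{\sigma'(B)}$ and then chain with your $\ceqpretype[\Psi']{\sigma(A)}{\sigma'(B)}$. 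Either repair is one line; the rest of the argument is sound and matches the paper.
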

\begin{proof}
If $\G = (\oft{a_1}{A_1},\dots,\oft{a_n}{A_n})$ then $\eqtm{\G}{M}{N}{A}$ means
that for any $\psitd$ and equal elements $N_1,N_1',\dots,N_n,N_n'$ of the
pretypes in $\td{\G}{\psi}$, the corresponding instances of $M$ and $N$ are
equal in $\subst{\td{A}{\psi}}{N_1,\dots,N_n}{a_1,\dots,a_n}$. But
$\eqpretype{\G}{A}{B}$ implies this pretype is equal to
$\subst{\td{B}{\psi}}{N_1,\dots,N_n}{a_1,\dots,a_n}$, so the result follows by
\cref{lem:ceqpretype-ceqtm}.
\end{proof}

The context-restricted judgments have many of the same properties as the
ordinary judgments.

\begin{lemma}\label{lem:td-judgres}
For any $\psitd$, if $\judgres{\Xi}{\J}$ then
$\judgres[\Psi']{\td{\Xi}{\psi}}{\td{\J}{\psi}}$, where $\J$ is any
dimension-context-restricted judgment.
\end{lemma}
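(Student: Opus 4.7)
The plan is to unfold the definitions of \cref{def:satisfies} and the context-restricted judgments (\cref{def:ceqpretyperes,def:ceqtmres,def:wfctxres,def:eqpretyperes,def:eqtmres}) and reduce the claim to a single compositionality observation about dimension substitutions.

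First, I fix $\psitd$ and suppose $\judgres{\Xi}{\J}$ for some context-restricted judgment form $\J$. To show $\judgres[\Psi']{\td{\Xi}{\psi}}{\td{\J}{\psi}}$, I take an arbitrary $\msubsts{\Psi''}{\psi'}{\Psi'}$ satisfying $\td{\Xi}{\psi}$, and I need to derive $\judg[\Psi'']{\td{(\td{\J}{\psi})}{\psi'}}$, which by associativity of dimension substitution equals $\judg[\Psi'']{\td{\J}{\psi\psi'}}$. Thus it suffices to invoke the hypothesis $\judgres{\Xi}{\J}$ at the composite substitution $\msubsts{\Psi''}{\psi\psi'}{\Psi}$.

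The one content-bearing step is to verify that $\psi\psi'$ satisfies $\Xi$. If $\Xi = (r_1=r_1',\dots,r_n=r_n')$, then $\td{\Xi}{\psi} = (\td{r_1}{\psi}=\td{r_1'}{\psi},\dots,\td{r_n}{\psi}=\td{r_n'}{\psi})$, and $\psi'$ satisfying this means $\td{(\td{r_i}{\psi})}{\psi'} = \td{(\td{r_i'}{\psi})}{\psi'}$ for each $i$. By associativity of substitution, this is exactly $\td{r_i}{\psi\psi'} = \td{r_i'}{\psi\psi'}$, so $\psi\psi'$ satisfies $\Xi$ as required. Applying the hypothesis yields $\judg[\Psi'']{\td{\J}{\psi\psi'}}$, which is what was needed.

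There is no real obstacle here; the only subtlety is that $\J$ ranges over all the context-restricted forms $\ceqpretyperes{\Xi}{A}{B}$, $\ceqtmres{\Xi}{M}{N}{A}$, $\wfctxres{\Xi}{\G}$, $\eqpretyperes{\Xi}{\G}{B}{B'}$, and $\eqtmres{\Xi}{\G}{M}{M'}{B}$, but the argument above is uniform in the form of $\J$: each such judgment is defined by universal quantification over substitutions satisfying $\Xi$, so the proof proceeds identically in each case.
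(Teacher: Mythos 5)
Your proof is correct and follows essentially the same route as the paper's: reduce to the composite substitution $\psi\psi'$ and observe that $\psi'$ satisfying $\td{\Xi}{\psi}$ is equivalent to $\psi\psi'$ satisfying $\Xi$, by associativity of dimension substitution. Your closing remark that the argument is uniform across the various context-restricted judgment forms is a sound observation, implicit in the paper's statement that $\J$ is ``any dimension-context-restricted judgment.''
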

\begin{proof}
We are given that for any $\psitd$ satisfying $\Xi$,
$\judg[\Psi']{\td{\J}{\psi}}$; and want to show that for any $\psitd$ and any
$\msubsts{\Psi''}{\psi'}{\Psi'}$ satisfying $\td{\Xi}{\psi}$, that
$\judg[\Psi'']{\td{\J}{\psi\psi'}}$. It suffices to show that if
$\psi'$ satisfies $\td{\Xi}{\psi}$, then $\psi\psi'$ satisfies $\Xi$.
But these are both true if and only if for each equation $r_i=r_i'$ in $\Xi$,
$\td{r_i}{\psi\psi'} = \td{r_i'}{\psi\psi'}$.
\end{proof}

\begin{lemma}\label{lem:judg-judgres}
If $\judg{\J}$ then $\judgres{\Xi}{\J}$.
\end{lemma}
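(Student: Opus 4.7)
The plan is a direct appeal to \cref{lem:td-judgments}. Unfolding \cref{def:ceqpretyperes,def:ceqtmres,def:wfctxres,def:eqpretyperes,def:eqtmres}, the judgment $\judgres{\Xi}{\J}$ asserts precisely that $\judg[\Psi']{\td{\J}{\psi}}$ holds for every $\psitd$ that satisfies $\Xi$. But from $\judg{\J}$, \cref{lem:td-judgments} already gives $\judg[\Psi']{\td{\J}{\psi}}$ for \emph{every} $\psitd$ whatsoever, with no constraint on $\psi$. Specializing to those $\psi$ that happen to satisfy $\Xi$ yields the conclusion immediately. The statement is uniform in the shape of $\J$, but since each variant of $\judg{\J}$ listed in \cref{lem:td-judgments} is closed under dimension substitution, the argument goes through identically in every case.

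The only point requiring attention is that the presuppositions of $\judgres{\Xi}{\J}$ be in force. These are themselves restricted judgments of the same form --- for instance, $\ceqtmres{\Xi}{M}{N}{A}$ presupposes $\cpretyperes{\Xi}{A}$, whose unrestricted counterpart $\cpretype{A}$ is itself a presupposition of $\ceqtm{M}{N}{A}$; likewise $\eqtmres{\Xi}{\G}{M}{M'}{B}$ presupposes $\wfctxres{\Xi}{\G}$ and $\pretyperes{\Xi}{\G}{B}$, which by the same reasoning follow from $\wfctx{\G}$ and $\pretype{\G}{B}$. So the presuppositions of $\judgres{\Xi}{\J}$ are available whenever those of $\judg{\J}$ are, and the same one-line argument discharges each. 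No genuine obstacle is anticipated; the lemma is essentially a corollary of closure under dimension substitution together with the observation that the restricted judgments are defined by quantifying over a subset of the substitutions considered by their unrestricted counterparts.
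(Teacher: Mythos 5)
Your argument is exactly the paper's proof: apply \cref{lem:td-judgments} to get $\judg[\Psi']{\td{\J}{\psi}}$ for \emph{all} $\psitd$, then observe that the restricted judgment only quantifies over the $\psi$ satisfying $\Xi$, a subset. The extra paragraph on presuppositions is a sound (if unstated-in-the-paper) observation, but the core argument is identical.
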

\begin{proof}
By \cref{lem:td-judgments}, we know that $\judg{\J}$ implies that for any
$\psitd$, $\judg[\Psi']{\td{\J}{\psi}}$. Therefore, for any $\psi$ satisfying
$\Xi$, $\judg[\Psi']{\td{\J}{\psi}}$.
\end{proof}

\begin{remark}\label{rem:judgres-dsubst}
Although we define $\judgres{\Xi}{\J}$ for general $\Xi$, \cref{def:kan} only
uses $18$ distinct $\Xi$ (given that we consider them modulo permutation and
duplication). Moreover, this class of $\Xi$ are closed under dimension
substitution. They fall into three categories:

\begin{enumerate}
%0=0,0=0
%0=0,1=1
%1=1,1=1
\item Three $\Xi$ are satisfied by \emph{all} $\psi$, for example $(0=0,1=1)$.
In these cases, $\judgres{\Xi}{\J}$ if and only if $\judg[\Psi']{\td{\J}{\psi}}$
for all $\psi$, which by \cref{lem:td-judgments} holds if and only if
$\judg{\J}$.

%0=0,0=1
%0=1,0=1
%0=1,0=x
%0=1,1=1
%0=1,1=x
%0=x,1=x
\item Six are satisfied by \emph{no} $\psi$, for example $(0=x,1=x)$.
In these cases, $\judgres{\Xi}{\J}$ always.

\item The remaining nine can be reduced to a substitution instance of $\J$,
because any $\psi$ satisfying $\Xi$ can be factored through a one- or
two-variable dimension substitution.
\begin{enumerate}
%0=0,0=x
%0=0,1=x
%0=x,0=x
%1=1,0=x
%1=1,1=x
%1=x,1=x
\item For $(\e=x,\e'=\e')$ and $(\e=x)$, $\judgres[\Psi,x]{\Xi}{\J}$ if and only
if $\judg{\dsubst{\J}{\e}{x}}$.

%0=x,0=y
%0=x,1=y
%1=x,1=y
\item For $(\e=x,\e'=y)$, $\judgres[\Psi,x,y]{\Xi}{\J}$ if and only if
$\judg{\dsubst{\dsubst{\J}{\e}{x}}{\e'}{y}}$.
\end{enumerate}
\end{enumerate}

Therefore one can think of the context-restricted judgments as merely a
notational device for avoiding the above case-split when expressing the Kan
condition.
\end{remark}

In \cref{def:kan} we defined closed Kan pretypes as ones with closed $\hcomsym$
and $\coesym$ elements; in \cref{def:eqtype} we defined open types as open
pretypes whose instances are all cubical and Kan. Because dimension and term
substitutions commute with $\hcomsym$ and $\coesym$, it follows that open types
have open $\hcomsym$ elements. (Open versions of the four other Kan conditions
also hold for open types, but we do not state them here.)

\begin{lemma}\label{lem:kan-open}
If $\eqtype{\G}{A}{B}$ then if
\begin{enumerate}
\item $\eqtm{\G}{M}{O}{A}$,
\item $\eqtmres[\Psi,y]{r_i=\e,r_j=\e'}{\G}{N^\e_i}{N^{\e'}_j}{A}$
for any $i\in [1,n]$, $j\in [1,n]$, $\e=0,1$, and $\e'=0,1$,
\item $\eqtmres[\Psi,y]{r_i=\e}{\G}{N^\e_i}{P^\e_i}{A}$
for any $i\in [1,n]$ and $\e=0,1$, and
\item $\eqtmres{r_i=\e}{\G}{\dsubst{N^\e_i}{r}{y}}{M}{A}$
for any $i\in [1,n]$ and $\e=0,1$,
\end{enumerate}
then $\eqtm{\G}{\hcomgeneric{\etc{r_i}}{A}}%
{\hcom{\etc{r_i}}{B}{r}{r'}{O}{\etc{y.P^\e_i}}}{A}$.
\end{lemma}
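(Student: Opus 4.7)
The plan is to reduce this open Kan condition to the closed Kan condition of \cref{def:kan}(1) by unfolding definitions and using the fact that both dimension substitution and term substitution commute with $\hcomsym$. Specifically, by \cref{def:eqtm}, it suffices to show that for any $\psitd$ and any sequence of equal elements $\ceqtm[\Psi']{N_1}{N_1'}{\td{A_1}{\psi}}, \dots, \ceqtm[\Psi']{N_n}{N_n'}{\subst{\td{A_n}{\psi}}{\vec{N}}{\vec{a}}}$ of the pretypes in $\td{\G}{\psi}$, the corresponding substitution instances of the two $\hcomsym$ terms are equal elements of $\subst{\td{A}{\psi}}{\vec{N}}{\vec{a}}$. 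Write $\sigma(-) \eqdef \subst{\td{(-)}{\psi}}{\vec{N}}{\vec{a}}$ and $\sigma'(-) \eqdef \subst{\td{(-)}{\psi}}{\vec{N}'}{\vec{a}}$ for the two substitutions. Because $\hcomsym$ is a term former whose arguments are the constituents in question, pushing these substitutions through yields
\[
\sigma(\hcomgeneric{\etc{r_i}}{A}) \;=\; \hcom{\etc{\td{r_i}{\psi}}}{\sigma(A)}{\td{r}{\psi}}{\td{r'}{\psi}}{\sigma(M)}{\etc{y.\sigma(N^\e_i)}},
\]
and similarly for the $\sigma'$-instance on the right, involving $\sigma'(B)$, $\sigma'(O)$, and $\sigma'(P^\e_i)$.

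Next, I would produce the hypotheses of the closed Kan condition (clause 1 of \cref{def:kan}) at these substituted arguments. From $\eqtype{\G}{A}{B}$ and \cref{def:eqtype,def:eqpretype}, we have $\ceqtype[\Psi']{\sigma(A)}{\sigma'(B)}$, hence $\sigma(A)$ and $\sigma'(B)$ are equally Kan cubical pretypes. Applying hypothesis (1) of this lemma at $\psi$ and $\vec{N},\vec{N}'$ gives $\ceqtm[\Psi']{\sigma(M)}{\sigma'(O)}{\sigma(A)}$. For the context-restricted hypotheses (2)--(4), we use \cref{lem:td-judgres}: applying $\psi$ to the restricted judgment produces the restricted judgment at the substituted equations, and then applying the substituted judgment to $\vec{N},\vec{N}'$ (which satisfy the restricted context $\td{\G}{\psi}$ under the substituted equations, since $\wfctxres{\Xi}{\G}$ follows from $\wfctx{\G}$ by \cref{lem:judg-judgres}) gives the required closed restricted equalities among $\sigma(N^\e_i)$, $\sigma(P^\e_i)$, $\sigma'(P^\e_i)$, and $\sigma(M)$.

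With these closed hypotheses in hand, clause (1) of \cref{def:kan} applied to $\sigma(A)$ and $\sigma'(B)$ directly yields
\[
\ceqtm[\Psi']{\sigma(\hcomgeneric{\etc{r_i}}{A})}{\sigma'(\hcom{\etc{r_i}}{B}{r}{r'}{O}{\etc{y.P^\e_i}})}{\sigma(A)},
\]
which is exactly what is needed. The main obstacle is purely bookkeeping: matching each closed hypothesis of the Kan condition to the correct substitution instance of an open hypothesis, and verifying that the context restrictions $r_i=\e$ (and their pairs) in hypotheses (2)--(4) survive the dimension substitution $\psi$ correctly, which is guaranteed by \cref{lem:td-judgres}. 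No genuinely new Kan-composition reasoning is required; the content is entirely in observing that $\hcomsym$ is syntactically stable under the substitutions that define the open judgments.
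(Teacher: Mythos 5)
Your proposal takes essentially the same route as the paper: unfold \cref{def:eqtm}, push the term and dimension substitutions through $\hcomsym$, and invoke the first closed Kan condition of the substituted instances of $A$ and $B$, with \cref{lem:td-judgres} handling the dimension-restricted hypotheses. The one place you have compressed past the real content is the sentence ``applying the substituted judgment to $\vec{N},\vec{N}'$ \dots gives the required closed restricted equalities.'' Formally, one cannot plug $\vec{N},\vec{N}'$ directly into a \emph{restricted} open judgment at $[\Psi',y\mid\td{r_i}{\psi}{=}\e]$: you must first unfold the restriction to a universal over $\msubsts{\Psi''}{\psi'}{(\Psi',y)}$ satisfying $\td{r_i}{\psi}{=}\e$, then instantiate the resulting open judgment at $\Psi''$ with $\td{N_j}{\psi'},\td{N_j'}{\psi'}$ (which are equal elements by \cref{lem:td-judgments}), and finally commute the dimension substitution $\psi'$ past the term substitution to recognize the result as the $\psi'$-instance of the closed restricted judgment you want. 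The paper's proof spells out exactly this chain for hypothesis (3); your parenthetical appeal to $\wfctxres{\Xi}{\G}$ is beside the point here---what is actually needed is $\cref{lem:td-judgments}$ applied to the element equalities $\ceqtm[\Psi']{N_j}{N_j'}{\cdots}$, not well-formedness of the restricted context.
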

\begin{proof}
Let $\G = (\oft{a_1}{A_1},\dots,\oft{a_n}{A_n})$.
We need to show that for any $\psitd$ and
$\ceqtm[\Psi']{N_1}{N_1'}{\td{A_1}{\psi}}$, \dots, we have
\begin{small}\[
\def\sub#1{\subst{\td{#1}{\psi}}{N_1,\dots}{a_1,\dots}}
\ceqtm[\Psi']
{\hcom{\etc{\td{r_i}{\psi}}}{\sub{A}}{\td{r}{\psi}}{\td{r'}{\psi}}{\sub{M}}{\etc{y.\sub{N^\e_i}}}}
{\dots}{\sub{A}}.
\]\end{small}%
We prove this using the first Kan condition of the corresponding closed
instances of $A,B$; the only difficulty is showing that the open
context-restricted hypotheses of this lemma imply the necessary closed
context-restricted equalities.

Consider $\eqtmres[\Psi,y]{r_i=\e}{\G}{N^\e_i}{P^\e_i}{A}$. By
\cref{lem:td-judgres} we know
$\eqtmres[\Psi',y]{\td{r_i}{\psi}=\e}{\td{\G}{\psi}}
{\td{N^\e_i}{\psi}}{\td{P^\e_i}{\psi}}{\td{A}{\psi}}$.
Therefore, for any $\msubsts{\Psi''}{\psi'}{(\Psi',y)}$ satisfying
$\td{r_i}{\psi}=\e$,
$\eqtm[\Psi'']{\td{\G}{\psi\psi'}}
{\td{N^\e_i}{\psi\psi'}}{\td{P^\e_i}{\psi\psi'}}{\td{A}{\psi\psi'}}$.
By \cref{lem:td-judgments} we know that
$\ceqtm[\Psi'']{\td{N_1}{\psi'}}{\td{N_1'}{\psi'}}{\td{A_1}{\psi\psi'}}$, \dots,
so we get
\[
\ceqtm[\Psi'']
{\subst{\td{N^\e_i}{\psi\psi'}}{\td{N_1}{\psi'},\dots}{a_1,\dots}}
{\subst{\td{P^\e_i}{\psi\psi'}}{\td{N_1'}{\psi'},\dots}{a_1,\dots}}
{\subst{\td{A}{\psi\psi'}}{\td{N_1}{\psi'},\dots}{a_1,\dots}}
\]
or, by commuting substitutions,
\[
\ceqtm[\Psi'']
{\td{\subst{\td{N^\e_i}{\psi}}{N_1,\dots}{a_1,\dots}}{\psi'}}
{\td{\subst{\td{P^\e_i}{\psi}}{N_1',\dots}{a_1,\dots}}{\psi'}}
{\td{\subst{\td{A}{\psi}}{N_1,\dots}{a_1,\dots}}{\psi'}}.
\]
Since this holds for all $\psi'$ satisfying $\td{r_i}{\psi} = \e$, it implies
\[
\ceqtmres[\Psi',y]{\td{r_i}{\psi} = \e}
{\subst{\td{N^\e_i}{\psi}}{N_1,\dots}{a_1,\dots}}
{\subst{\td{P^\e_i}{\psi}}{N_1',\dots}{a_1,\dots}}
{\subst{\td{A}{\psi}}{N_1,\dots}{a_1,\dots}}
\]
which is exactly what we needed. The other hypotheses similarly follow.
\end{proof}

Finally, while the Kan conditions only directly define \emph{homogeneous}
composition, in the sense that the type $A$ must be degenerate in the bound
direction of the tubes, we can combine homogeneous composition and coercion to
obtain \emph{heterogeneous} composition, written $\comsym$ and defined:
\[
\comgeneric{\etc{r_i}}{y.A} :=
\hcom{\etc{r_i}}{\dsubst{A}{r'}{y}}{r}{r'}
  {\coe{y.A}{r}{r'}{M}}
  {\etc{y.\coe{y.A}{y}{r'}{N^\e_i}}}
\]
which satisfies the following properties.

\begin{theorem}\label{thm:com}
If $\ceqtype{A}{B}$, then:
\begin{enumerate}
\item % com exists
For any $\msubsts{(\Psi',y)}{\psi}{\Psi}$, if
\begin{enumerate}
\item $\ceqtm[\Psi']{M}{O}{\dsubst{\td{A}{\psi}}{r}{y}}$,
\item $\ceqtmres[\Psi',y]{r_i=\e,r_j=\e'}{N^\e_i}{N^{\e'}_j}{\td{A}{\psi}}$
for any $i\in [1,n]$, $j\in [1,n]$, $\e=0,1$, and $\e'=0,1$,
\item $\ceqtmres[\Psi',y]{r_i=\e}{N^\e_i}{P^\e_i}{\td{A}{\psi}}$
for any $i\in [1,n]$ and $\e=0,1$, and
\item $\ceqtmres[\Psi']{r_i=\e}{\dsubst{N^\e_i}{r}{y}}{M}
{\dsubst{\td{A}{\psi}}{r}{y}}$
for any $i\in [1,n]$ and $\e=0,1$,
\end{enumerate}
then $\ceqtm[\Psi']{\comgeneric{\etc{r_i}}{y.\td{A}{\psi}}}%
{\com{\etc{r_i}}{y.\td{B}{\psi}}{r}{r'}{O}{\etc{y.P^\e_i}}}
{\dsubst{\td{A}{\psi}}{r'}{y}}$.

\item % com when r = r'
For any $\msubsts{(\Psi',y)}{\psi}{\Psi}$, if
\begin{enumerate}
\item $\coftype[\Psi']{M}{\dsubst{\td{A}{\psi}}{r}{y}}$,
\item $\ceqtmres[\Psi',y]{r_i=\e,r_j=\e'}{N^\e_i}{N^{\e'}_j}{\td{A}{\psi}}$
for any $i\in [1,n]$, $j\in [1,n]$, $\e=0,1$, and $\e'=0,1$,
\item $\ceqtmres[\Psi']{r_i=\e}{\dsubst{N^\e_i}{r}{y}}{M}
{\dsubst{\td{A}{\psi}}{r}{y}}$
for any $i\in [1,n]$ and $\e=0,1$,
\end{enumerate}
then
$\ceqtm[\Psi']{\com{\etc{r_i}}{y.\td{A}{\psi}}{r}{r}{M}{\etc{y.N^\e_i}}}{M}
{\dsubst{\td{A}{\psi}}{r}{y}}$.

\item % com when r_i = \e
For any $\msubsts{(\Psi',y)}{\psi}{\Psi}$,
under the same conditions as above, if $r_i = \e$ for some $i$ then
\[
\ceqtm[\Psi']{\comgeneric{\etc{r_i}}{\td{A}{\psi}}}{\dsubst{N^\e_i}{r'}{y}}{\dsubst{\td{A}{\psi}}{r'}{y}}.
\]
\end{enumerate}
\end{theorem}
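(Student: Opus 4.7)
The plan is to unfold the definition of $\comsym$ in each part and reduce to the five Kan conditions of \cref{def:kan} applied to $A$, $B$, and their dimension substitutions. Throughout, I would use \cref{lem:td-judgments} to transport the hypothesis $\ceqtype{A}{B}$ under $\psi$ and under further substitutions such as $\dsubst{-}{r}{y}$ and $\dsubst{-}{r'}{y}$; note in particular that $\ceqtype[\Psi']{\dsubst{\td{A}{\psi}}{r'}{y}}{\dsubst{\td{B}{\psi}}{r'}{y}}$ so that the ambient $\hcomsym$ in the definition lives in a type-equal pretype pair.

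For part (1), I would verify the four hypotheses of the first Kan condition for $\hcomsym$ applied to $\dsubst{\td{A}{\psi}}{r'}{y}$ and $\dsubst{\td{B}{\psi}}{r'}{y}$, with cap $\coe{y.\td{A}{\psi}}{r}{r'}{M}$ versus $\coe{y.\td{B}{\psi}}{r}{r'}{O}$ and tube sides $y.\coe{y.\td{A}{\psi}}{y}{r'}{N^\e_i}$ versus $y.\coe{y.\td{B}{\psi}}{y}{r'}{P^\e_i}$. Cap-equality follows from Kan condition~(4) of $\coesym$ applied to hypothesis~(a); tube self- and cross-adjacency follow from Kan~(4) for $\coesym$ applied inside the context restrictions to hypotheses~(b) and~(c), using \cref{lem:td-judgres} to transport those restricted equalities through the coercion. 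The only cap-tube matching one must check is, under $r_i = \e$, that $\dsubst{(\coe{y.\td{A}{\psi}}{y}{r'}{N^\e_i})}{r}{y} = \coe{y.\td{A}{\psi}}{r}{r'}{\dsubst{N^\e_i}{r}{y}}$ is equal to $\coe{y.\td{A}{\psi}}{r}{r'}{M}$; this is exactly Kan~(4) of $\coesym$ applied to hypothesis~(d) inside the restriction $r_i = \e$.

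For parts (2) and (3), both reductions are two-step chains. In part (2), setting $r' = r$ makes the outer $\hcomsym$ a degenerate composition whose value, by Kan~(2) for $\hcomsym$, is the cap $\coe{y.\td{A}{\psi}}{r}{r}{M}$; applying Kan~(5) for $\coesym$ then collapses this to $M$. In part (3), supposing $r_i = \e$, Kan~(3) for $\hcomsym$ collapses the outer $\hcomsym$ to $\dsubst{(y.\coe{y.\td{A}{\psi}}{y}{r'}{N^\e_i})}{r'}{y} = \coe{y.\td{A}{\psi}}{r'}{r'}{\dsubst{N^\e_i}{r'}{y}}$, and then Kan~(5) for $\coesym$ collapses this to $\dsubst{N^\e_i}{r'}{y}$. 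In both cases, one must first confirm that the hypotheses necessary to invoke Kan conditions of $\hcomsym$ (which are the same four conditions verified in part~(1)) are available; for part~(2), hypothesis~(a) is weakened to $\coftype[\Psi']{M}{\dsubst{\td{A}{\psi}}{r}{y}}$ and tube symmetry is automatic, which propagates cleanly.

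The main obstacle, such as it is, is purely bookkeeping: ensuring that every application of Kan~(4) or Kan~(5) of $\coesym$ is carried out in the appropriate dimension context (sometimes $\Psi'$, sometimes $(\Psi',y)$) and that context restrictions are preserved when transporting hypotheses about $N^\e_i$ through coercion along $y.\td{A}{\psi}$. The argument that $\psi$ and the various restrictions interact correctly uses \cref{lem:td-judgres} exactly as in the proof of \cref{lem:kan-open}, and no genuinely new analysis is required beyond assembling the pieces of \cref{def:kan}.
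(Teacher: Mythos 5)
Your proposal matches the paper's own proof essentially step for step: unfold $\comsym$, use Kan condition~(4) for $\coesym$ to transport the cap, tube, and adjacency hypotheses to $\dsubst{\td{A}{\psi}}{r'}{y}$ (handling the context restrictions via \cref{lem:td-judgres}), then apply the first/second/third Kan conditions of the ambient $\hcomsym$ and finish parts (2) and (3) with Kan condition~(5) for $\coesym$. The argument is correct and takes the same route as the paper.
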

\begin{proof}
By the fourth Kan condition, the conditions in (1) above imply that for all
$i,j,\e,\e'$,
\begin{enumerate}
\item $\ceqtm[\Psi']
{\coe{y.\td{A}{\psi}}{r}{r'}{M}}
{\coe{y.\td{B}{\psi}}{r}{r'}{O}}
{\dsubst{\td{A}{\psi}}{r'}{y}}$,
\item $\ceqtmres[\Psi',y]{r_i=\e,r_j=\e'}
{\coe{y.\td{A}{\psi}}{y}{r'}{N^\e_i}}
{\coe{y.\td{A}{\psi}}{y}{r'}{N^{\e'}_j}}
{\dsubst{\td{A}{\psi}}{r'}{y}}$,
\item $\ceqtmres[\Psi',y]{r_i=\e}
{\coe{y.\td{A}{\psi}}{y}{r'}{N^\e_i}}
{\coe{y.\td{B}{\psi}}{y}{r'}{P^\e_i}}
{\dsubst{\td{A}{\psi}}{r'}{y}}$, and
\item $\ceqtmres[\Psi']{r_i=\e}
{\dsubst{(\coe{y.\td{A}{\psi}}{y}{r'}{N^\e_i})}{r}{y}}
{\coe{y.\td{A}{\psi}}{r}{r'}{M}}
{\dsubst{\td{A}{\psi}}{r'}{y}}$.
\end{enumerate}
To prove this for the context-restricted judgments above, we use the fact that
$\ceqtm[\Psi'']{\td{N^\e_i}{\psi'}}{\td{N^{\e'}_j}{\psi'}}{\td{A}{\psi\psi'}}$
for all $\msubsts{\Psi''}{\psi'}{(\Psi',y)}$ satisfying $(r_i=\e,r_j=\e')$, and
so
\[
\ceqtm[\Psi'']
{\coe{y.\td{A}{\psi\psi'}}{\td{y}{\psi'}}{\td{r'}{\psi'}}{\td{N^\e_i}{\psi'}}}
{\coe{y.\td{A}{\psi\psi'}}{\td{y}{\psi'}}{\td{r'}{\psi'}}{\td{N^{\e'}_j}{\psi'}}}
{\dsubst{\td{A}{\psi}}{r'}{y}}
\]
which implies the tube adjacency condition.
The first Kan condition of $A,B$ therefore gives us the first condition above.

For the second condition above, the second Kan condition of $A$ gives us 
\[
\ceqtm[\Psi']{\com{\etc{r_i}}{y.\td{A}{\psi}}{r}{r}{M}{\etc{y.N^\e_i}}}
{\coe{y.\td{A}{\psi}}{r}{r}{M}}
{\dsubst{\td{A}{\psi}}{r}{y}}
\]
and by the fifth Kan condition,
$\ceqtm[\Psi']{\coe{y.\td{A}{\psi}}{r}{r}{M}}{M}{\dsubst{\td{A}{\psi}}{r}{y}}$.

For the third condition, the third Kan condition of $A$ gives us
\[
\ceqtm[\Psi']{\comgeneric{\etc{r_i}}{\td{A}{\psi}}}
{\dsubst{(\coe{y.\td{A}{\psi}}{y}{r'}{N^\e_i})}{r'}{y}}
{\dsubst{\td{A}{\psi}}{r'}{y}}
\]
and again by the fifth Kan condition,
$\ceqtm[\Psi']
{\dsubst{(\coe{y.\td{A}{\psi}}{y}{r'}{N^\e_i})}{r'}{y}}
{\dsubst{N^\e_i}{r'}{y}}{\dsubst{\td{A}{\psi}}{r'}{y}}$.
\end{proof}

An open version of this theorem also holds for open types.

%%% Local Variables:
%%% mode: latex
%%% TeX-master: "heo"
%%% End:

\newpage
\section{Types}
\label{sec:types}

In \cref{sec:meanings} we explained how a cubical type system gives rise to the
judgments of higher type theory. However, our interest is in type systems with
(higher) inductive types, dependent functions and pairs, identification
types, and so forth. In this section we will explain what it means for a cubical
type system to have certain type formers, just as in category theory one
explains what it means for a category to have, say, finite products.

For each type former, we will then prove that any cubical type system with that
property validates the expected typing rules. For example, any cubical type
system with dependent pairs will validate the usual formation, introduction,
elimination, computation, and eta rules for that pretype, and moreover, the
dependent pair pretypes will be cubical and Kan.

Finally, we will want to exhibit a cubical type system with all our desired type
formers, and use it as a model for the rules in \cref{sec:proof-theory}. A
straightforward method is to produce the smallest cubical type system closed
under the type formers, by means of a fixed point construction
\citep{allen1987types,harper1992typesys}. This is possible because the meanings
of dependent function, pair, and identification types are parasitic on the
meanings of their constituent types.

Note, however, that \emph{any} such cubical type system suffices for our
purposes; none of our theorems hold only in the least such. It should therefore
be possible to extend our results with additional type formers (such as type
universes, or more higher inductive types) without needing to reprove
everything.

%%%%%%%%%%%%%%%%%%%%%%%%%%%%%%%%%%%%%%%%%%%%%%%%%%%%%%%%%%%%%%%%%%%%%%%%%%%%%%%%
\subsection{Booleans}

We consider $\bool$ as a higher inductive type, meaning that we freely add Kan
composites as higher cells, rather than specifying that all its higher cells are
exactly $\true$ or $\false$. We do this to demonstrate the robustness of our
canonicity theorem and our treatment of $\notb{x}$; in \cref{appendix}, we
define a type of ``strict booleans'' which may be more useful in practice.

A cubical type system \emph{has booleans} if $\veqper{\bool}{\bool}$ for all
$\Psi$, and $\vinper[-]{-}{-}{\bool}$ is the least relation such that:
\begin{enumerate}
\item $\vinper{\true}{\true}{\bool}$,
\item $\vinper{\false}{\false}{\bool}$, and
\item $\vinper{\hcomgeneric{\etc{x_i}}{\bool}}{\hcom{\etc{x_i}}{\bool}{r}{r'}{O}{\etc{y.P^\e_i}}}{\bool}$ whenever $r\neq r'$,
\begin{enumerate}
\item $\ceqtm{M}{O}{\bool}$,
\item $\ceqtmres[\Psi,y]{x_i=\e,x_j=\e'}{N^\e_i}{N^{\e'}_j}{\bool}$
for all $i,j,\e,\e'$,
\item $\ceqtmres[\Psi,y]{x_i=\e}{N^\e_i}{P^\e_i}{\bool}$
for all $i,\e$, and
\item $\ceqtmres{x_i=\e}{\dsubst{N^\e_i}{r}{y}}{M}{\bool}$
for all $i,\e$.
\end{enumerate}
\end{enumerate}

Note that this relation is symmetric because (a) and (c) imply that (b) and (d)
also hold for $P^\e_i$ and $O$.
In this definition, each $\vinper{-}{-}{\bool}$ refers to all other
$\vinper[\Psi']{-}{-}{\bool}$, because each equality judgment does.
In the remainder of this subsection, we prove theorems about cubical type
systems that have booleans.

\paragraph{Pretype}
$\cpretype{\bool}$.

For all $\psi_1,\psi_2$,
$\td{\bool}{\psi_1}\evals\bool$,
$\td{\bool}{\psi_2}\evals\bool$,
$\td{\bool}{\psi_1\psi_2}\evals\bool$,
and $\veqper[\Psi_2]{\bool}{\bool}$.

\paragraph{Introduction}
$\coftype{\true}{\bool}$ and $\coftype{\false}{\bool}$.

For all $\msubsts{\Psi_1}{\psi_1}{\Psi}$
and $\msubsts{\Psi_2}{\psi_2}{\Psi_1}$,
$\td{\true}{\psi_1}\evals\true$,
$\td{\true}{\psi_2}\evals\true$,
$\td{\true}{\psi_1\psi_2}\evals\true$,
and $\vinper[\Psi_2]{\true}{\true}{\bool}$.
The $\false$ case is analogous.

\paragraph{Kan}
$\cpretype{\bool}$ is Kan.

We prove only the unary version of the first Kan condition, to lessen the
notational burden; the binary version follows easily. Show that for any $\Psi'$,
if
\begin{enumerate}
\item $\coftype[\Psi']{M}{\bool}$,
\item $\ceqtmres[\Psi',y]{r_i=\e,r_j=\e'}{N^\e_i}{N^{\e'}_j}{\bool}$
for all $i,j,\e,\e'$, and
\item $\ceqtmres[\Psi']{r_i=\e}{\dsubst{N^\e_i}{r}{y}}{M}{\bool}$
for all $i,\e$,
\end{enumerate}
then $\coftype[\Psi']{\hcomgeneric{\etc{r_i}}{\bool}}{\bool}$.
That is, for any $\msubsts{\Psi_1}{\psi_1}{\Psi'}$ and
$\msubsts{\Psi_2}{\psi_2}{\Psi_1}$, $\td{\hcomsym}{\psi_1}\evals H_1$ and
$\inper[\Psi_2]{\td{H_1}{\psi_2}}{\td{\hcomsym}{\psi_1\psi_2}}{\bool}$. We
proceed by case-analyzing $r$, $r'$, and $\etc{r_i}$ under $\psi_1$ and
$\psi_1\psi_2$ to determine how $\td{\hcomsym}{\psi_1}$ and
$\td{\hcomsym}{\psi_1\psi_2}$ step.

\begin{enumerate}
\item $\td{r_i}{\psi_1} = \e$
(where $\td{r_k}{\psi_1}$ is a dimension name for all $k < i$), and
$\td{r_j}{\psi_1\psi_2} = \e'$ (where this is again the smallest such $j$).

Then $\td{\hcomsym}{\psi_1} \steps
\dsubst{\td{N^\e_i}{\psi_1}}{\td{r'}{\psi_1}}{y} =
\td{\dsubst{N^\e_i}{r'}{y}}{\psi_1}$ and
$\td{\hcomsym}{\psi_1\psi_2} \steps
\dsubst{\td{N^{\e'}_j}{\psi_1\psi_2}}{\td{r'}{\psi_1\psi_2}}{y} =
\td{\dsubst{N^{\e'}_j}{r'}{y}}{\psi_1\psi_2}$.
We know $\coftyperes[\Psi',y]{r_i=\e}{N^\e_i}{\bool}$ and
$\td{\dsubst{r_i}{r'}{y}}{\psi_1}=\e$, so
$\coftype[\Psi_1]{\td{\dsubst{N^\e_i}{r'}{y}}{\psi_1}}{\bool}$ and thus
$\td{\dsubst{N^\e_i}{r'}{y}}{\psi_1} \evals H_1$ and
$\inper[\Psi_2]{\td{H_1}{\psi_2}}{\td{\dsubst{N^\e_i}{r'}{y}}{\psi_1\psi_2}}{\bool}$.

We also know that
$\ceqtmres[\Psi',y]{r_i=\e,r_j=\e'}{N^\e_i}{N^{\e'}_j}{\bool}$,
$\td{\dsubst{r_i}{r'}{y}}{\psi_1\psi_2}=\e$, and
$\td{\dsubst{r_j}{r'}{y}}{\psi_1\psi_2}=\e'$, so
$\ceqtm[\Psi_2]{\td{\dsubst{N^\e_i}{r'}{y}}{\psi_1\psi_2}}
{\td{\dsubst{N^{\e'}_j}{r'}{y}}{\psi_1\psi_2}}{\bool}$, and thus
$\inper[\Psi_2]{\td{\dsubst{N^\e_i}{r'}{y}}{\psi_1\psi_2}}
{\td{\dsubst{N^{\e'}_j}{r'}{y}}{\psi_1\psi_2}}{\bool}$.
The result follows by transitivity.

\item All $\td{r_i}{\psi_1}$ are dimension names,
$\td{r}{\psi_1} = \td{r'}{\psi_1}$, and
$\td{r_i}{\psi_1\psi_2} = \e$ (where this is the smallest $i$).

Then $\td{\hcomsym}{\psi_1} \steps \td{M}{\psi_1}$ and
$\td{\hcomsym}{\psi_1\psi_2} \steps
\dsubst{\td{N^\e_i}{\psi_1\psi_2}}{\td{r'}{\psi_1\psi_2}}{y} =
\dsubst{\td{N^\e_i}{\psi_1\psi_2}}{\td{r}{\psi_1\psi_2}}{y} =
\td{\dsubst{N^\e_i}{r}{y}}{\psi_1\psi_2}$.
We know $\coftype[\Psi']{M}{\bool}$ so $\td{M}{\psi_1} \evals H_1$ and
$\inper[\Psi_2]{\td{H_1}{\psi_2}}{\td{M}{\psi_1\psi_2}}{\bool}$.
We also know that
$\ceqtmres[\Psi']{r_i=\e}{\dsubst{N^\e_i}{r}{y}}{M}{\bool}$ and
$\td{r_i}{\psi_1\psi_2}=\e$, so
$\ceqtm[\Psi_2]{\td{\dsubst{N^\e_i}{r}{y}}{\psi_1\psi_2}}
{\td{M}{\psi_1\psi_2}}{\bool}$, and thus
$\inper[\Psi_2]{\td{\dsubst{N^\e_i}{r}{y}}{\psi_1\psi_2}}
{\td{M}{\psi_1\psi_2}}{\bool}$.
The result again follows by transitivity.

\item All $\td{r_i}{\psi_1}$ are dimension names,
$\td{r}{\psi_1} = \td{r'}{\psi_1}$, and
all $\td{r_i}{\psi_1\psi_2}$ are dimension names.

Then $\td{\hcomsym}{\psi_1} \steps \td{M}{\psi_1}$ and
$\td{\hcomsym}{\psi_1\psi_2} \steps \td{M}{\psi_1\psi_2}$
(since $\td{r}{\psi_1\psi_2} = \td{r'}{\psi_1\psi_2}$).
By $\coftype[\Psi']{M}{\bool}$, $\td{M}{\psi_1} \evals H_1$ and
$\inper[\Psi_2]{\td{H_1}{\psi_2}}{\td{M}{\psi_1\psi_2}}{\bool}$.

\item All $\td{r_i}{\psi_1}$ are dimension names,
$\td{r}{\psi_1} \neq \td{r'}{\psi_1}$, and
$\td{r_i}{\psi_1\psi_2} = \e$ (where this is the smallest $i$).

Then $\isval{\td{\hcomsym}{\psi_1}}$ and
$\td{\hcomsym}{\psi_1\psi_2} \steps
\dsubst{\td{N^\e_i}{\psi_1\psi_2}}{\td{r'}{\psi_1\psi_2}}{y} =
\td{\dsubst{N^\e_i}{r'}{y}}{\psi_1\psi_2}$.
In this case $H_1 = \td{\hcomsym}{\psi_1}$, so $\td{H_1}{\psi_2} =
\td{\hcomsym}{\psi_1\psi_2}$ and we must show
$\inper[\Psi_2]{\td{\hcomsym}{\psi_1\psi_2}}
{\td{\hcomsym}{\psi_1\psi_2}}{\bool}$.
We know $\coftyperes[\Psi',y]{r_i=\e}{N^\e_i}{\bool}$ and
$\td{\dsubst{r_i}{r'}{y}}{\psi_1\psi_2}=\e$, so
$\coftype[\Psi_2]{\td{\dsubst{N^\e_i}{r'}{y}}{\psi_1\psi_2}}{\bool}$ and thus
$\inper[\Psi_2]{\td{\dsubst{N^\e_i}{r'}{y}}{\psi_1\psi_2}}
{\td{\dsubst{N^\e_i}{r'}{y}}{\psi_1\psi_2}}{\bool}$.

\item All $\td{r_i}{\psi_1}$ are dimension names,
$\td{r}{\psi_1} \neq \td{r'}{\psi_1}$,
all $\td{r_i}{\psi_1\psi_2}$ are dimension names, and
$\td{r}{\psi_1\psi_2} = \td{r'}{\psi_1\psi_2}$.

Then $\isval{\td{\hcomsym}{\psi_1}}$ and
$\td{\hcomsym}{\psi_1\psi_2} \steps \td{M}{\psi_1\psi_2}$.
By $\coftype[\Psi']{M}{\bool}$,
$\inper[\Psi_2]{\td{M}{\psi_1\psi_2}}{\td{M}{\psi_1\psi_2}}{\bool}$.

\item All $\td{r_i}{\psi_1}$ are dimension names,
$\td{r}{\psi_1} \neq \td{r'}{\psi_1}$,
all $\td{r_i}{\psi_1\psi_2}$ are dimension names, and
$\td{r}{\psi_1\psi_2} \neq \td{r'}{\psi_1\psi_2}$.

Then $\isval{\td{\hcomsym}{\psi_1}}$,
$\isval{\td{\hcomsym}{\psi_1\psi_2}}$, and
by \cref{lem:td-judgments,lem:td-judgres} we know
$\coftype[\Psi_2]{\td{M}{\psi_1\psi_2}}{\bool}$,
$\ceqtmres[\Psi_2,y]{\td{r_i}{\psi_1\psi_2}=\e,\td{r_j}{\psi_1\psi_2}=\e'}
{\td{N^\e_i}{\psi_1\psi_2}}{\td{N^{\e'}_j}{\psi_1\psi_2}}{\bool}$
for all $i,j,\e,\e'$, and
$\ceqtmres[\Psi_2]{\td{r_i}{\psi_1\psi_2}=\e}
{\dsubst{\td{N^\e_i}{\psi_1\psi_2}}{\td{r}{\psi_1\psi_2}}{y}}
{\td{M}{\psi_1\psi_2}}{\bool}$
for all $i,\e$.
Then because $\td{r_i}{\psi_1\psi_2}$ are all dimension names and
$\td{r}{\psi_1\psi_2} \neq \td{r'}{\psi_1\psi_2}$, the cubical type system
having booleans directly implies
$\vinper[\Psi_2]{\td{\hcomsym}{\psi_1\psi_2}}{\td{\hcomsym}{\psi_1\psi_2}}{\bool}$.
\end{enumerate}

The second Kan condition requires that for any $\Psi'$, if
\begin{enumerate}
\item $\coftype[\Psi']{M}{\bool}$,
\item $\ceqtmres[\Psi',y]{r_i=\e,r_j=\e'}{N^\e_i}{N^{\e'}_j}{\bool}$
for all $i,j,\e,\e'$, and
\item $\ceqtmres[\Psi']{r_i=\e}{\dsubst{N^\e_i}{r}{y}}{M}{\bool}$
for all $i,\e$,
\end{enumerate}
then
$\ceqtm[\Psi']{\hcom{\etc{r_i}}{\bool}{r}{r}{M}{\etc{y.N^\e_i}}}{M}{\bool}$.
That is, if $r = r'$ then $\hcomsym$ is equal to its cap $M$.

By the first Kan condition, we already know that both sides are
$\coftype[\Psi']{-}{\bool}$. Thus by \cref{lem:coftype-ceqtm} it suffices to
show that for any $\msubsts{\Psi''}{\psi}{\Psi'}$,
$\inper[\Psi'']{\td{\hcomsym}{\psi}}{\td{M}{\psi}}{\bool}$.
There are two cases:

\begin{enumerate}
\item $\td{r_i}{\psi} = \e$ (where this is the smallest such $i$).

Then $\td{\hcomsym}{\psi} \steps
\dsubst{\td{N^\e_i}{\psi}}{\td{r}{\psi}}{y} =
\td{\dsubst{N^\e_i}{r}{y}}{\psi}$.
We know $\ceqtmres[\Psi']{r_i=\e}{\dsubst{N^\e_i}{r}{y}}{M}{\bool}$ and
$\td{r_i}{\psi}=\e$, so
$\ceqtm[\Psi'']{\td{\dsubst{N^\e_i}{r}{y}}{\psi}}
{\td{M}{\psi}}{\bool}$, and thus
$\inper[\Psi'']{\td{\dsubst{N^\e_i}{r}{y}}{\psi}}
{\td{M}{\psi}}{\bool}$.

\item All $\td{r_i}{\psi}$ are dimension names.

Then $\td{\hcomsym}{\psi} \steps \td{M}{\psi}$.
By $\coftype[\Psi']{M}{\bool}$,
$\inper[\Psi'']{\td{M}{\psi}}{\td{M}{\psi}}{\bool}$.
\end{enumerate}

The third Kan condition requires that for any $\Psi'$,
if $r_i = \e$ for some $i$,
\begin{enumerate}
\item $\coftype[\Psi']{M}{\bool}$,
\item $\ceqtmres[\Psi',y]{r_i=\e,r_j=\e'}{N^\e_i}{N^{\e'}_j}{\bool}$
for all $i,j,\e,\e'$, and
\item $\ceqtmres[\Psi']{r_i=\e}{\dsubst{N^\e_i}{r}{y}}{M}{\bool}$
for all $i,\e$,
\end{enumerate}
then
$\ceqtm[\Psi']{\hcomgeneric{\etc{r_i}}{\bool}}{\dsubst{N^\e_i}{r'}{y}}{\bool}$.

By the first Kan condition, we know that
$\coftype[\Psi']{\hcomgeneric{\etc{r_i}}{\bool}}{\bool}$.
By hypothesis, $\coftyperes[\Psi',y]{r_i=\e}{N^\e_i}{\bool}$, but $r_i = \e$ so
in fact $\coftype[\Psi',y]{N^\e_i}{\bool}$.
Thus \cref{lem:coftype-ceqtm} applies, and it suffices to show that for any
$\msubsts{\Psi''}{\psi}{\Psi'}$,
$\inper[\Psi'']{\td{\hcomsym}{\psi}}{\td{\dsubst{N^\e_i}{r'}{y}}{\psi}}{\bool}$.
Let $j$ be the smallest index such that $\td{r_j}{\psi} = \e'$. We know
$\ceqtmres[\Psi',y]{r_i=\e,r_j=\e'}{N^\e_i}{N^{\e'}_j}{\bool}$,
$\td{\dsubst{r_i}{r'}{y}}{\psi}=\e$, and $\td{\dsubst{r_j}{r'}{y}}{\psi}=\e'$,
so $\ceqtm[\Psi'']{\td{\dsubst{N^\e_i}{r'}{y}}{\psi}}
{\td{\dsubst{N^{\e'}_j}{r'}{y}}{\psi}}{\bool}$, and thus
$\inper[\Psi'']{\td{\dsubst{N^\e_i}{r'}{y}}{\psi}}
{\td{\dsubst{N^{\e'}_j}{r'}{y}}{\psi}}{\bool}$.

The fourth Kan condition requires that for any $\Psi'$, if
$\ceqtm[\Psi']{M}{N}{\bool}$, then
$\ceqtm[\Psi']{{\coe{x.\bool}{r}{r'}{M}}}{{\coe{x.\bool}{r}{r'}{N}}}{\bool}$.
But $\coe{x.\bool}{\td{r}{\psi}}{\td{r'}{\psi}}{\td{M}{\psi}} \steps
\td{M}{\psi}$ for all $\psi$,
so by \cref{lem:expansion},
$\coe{x.\bool}{r}{r'}{M} \eq M \eq N \eq \coe{x.\bool}{r}{r'}{N}$.

The fifth Kan condition requires that for any $\Psi'$, if
$\coftype[\Psi']{M}{\bool}$, then
$\ceqtm[\Psi']{{\coe{x.\bool}{r}{r}{M}}}{M}{\bool}$.
This again follows immediately by \cref{lem:expansion}.

\paragraph{Cubical}
For any $\Psi'$ and $\vinper[\Psi']{M}{N}{\bool}$, $\ceqtm[\Psi']{M}{N}{\bool}$.

There are three cases in which $\vinper[\Psi']{M}{N}{\bool}$.
For $\true$ and $\false$, this follows from the introduction rules already
proven. For $\vinper[\Psi']{\hcomgeneric{\etc{x_i}}{\bool}}
{\hcom{\etc{x_i}}{\bool}{r}{r'}{O}{\etc{y.P^\e_i}}}{\bool}$,
this follows by the first Kan condition of $\bool$, again already proven.

\paragraph{Elimination}
If $\ceqtm{M}{M'}{\bool}$,
$\eqtype{\oft{a}{\bool}}{A}{A'}$,
$\ceqtm{T}{T'}{\subst{A}{\true}{a}}$, and
$\ceqtm{F}{F'}{\subst{A}{\false}{a}}$, then
$\ceqtm{\ifb{a.A}{M}{T}{F}}{\ifb{a.A'}{M'}{T'}{F'}}{\subst{A}{M}{a}}$.

The elimination rule is complicated to prove, because it requires that
$\ifsym$ applied to a value $\hcomsym$ in $\bool$ has coherent aspects, even
though that $\hcomsym$'s aspects may be a different $\hcomsym$ or a different
term altogether. In \cref{lem:if-ceqtm} we show the elimination rule holds on
any elements of $\bool$ \emph{if} the elimination rule holds on those elements'
aspects. In \cref{lem:if-vinper} we use this to show that the elimination rule
holds on \emph{all} values in $\bool$.

\begin{lemma}\label{lem:if-ceqtm}
If
\begin{enumerate}
\item $\ceqtm{M}{M'}{\bool}$ such that for any $\msubsts{\Psi_1}{\psi_1}{\Psi}$
and $\msubsts{\Psi_2}{\psi_2}{\Psi_1}$, the elimination rule holds for any
pair of the aspects $M_2,M_{12},M_2',M_{12}'$ of $M,M'$,
\item $\eqtype{\oft{a}{\bool}}{A}{A'}$,
\item $\ceqtm{T}{T'}{\subst{A}{\true}{a}}$, and
\item $\ceqtm{F}{F'}{\subst{A}{\false}{a}}$,
\end{enumerate}
then $\ceqtm{\ifb{a.A}{M}{T}{F}}{\ifb{a.A'}{M'}{T'}{F'}}{\subst{A}{M}{a}}$.
\end{lemma}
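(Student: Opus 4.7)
The plan is to unfold $\ceqtm{I}{I'}{\subst{A}{M}{a}}$ with $I := \ifb{a.A}{M}{T}{F}$ and $I' := \ifb{a.A'}{M'}{T'}{F'}$, and for each pair $\msubsts{\Psi_1}{\psi_1}{\Psi}$ and $\msubsts{\Psi_2}{\psi_2}{\Psi_1}$ verify the six required aspect evaluations and the four-way PER relation in the value $C_{12}$ of $\td{\subst{A}{M}{a}}{\psi_1\psi_2}$. Since $\ifsym$ evaluates its scrutinee first, $\td{I}{\psi_1} \steps^* \tilde I_1 := \ifb{a.\td{A}{\psi_1}}{M_1}{\td{T}{\psi_1}}{\td{F}{\psi_1}}$, where $M_1$ is the value of $\td{M}{\psi_1}$; analogously, $\td{I}{\psi_1\psi_2} \steps^* \tilde I_{12}$, $\td{I'}{\psi_1} \steps^* \tilde I_1'$, and $\td{I'}{\psi_1\psi_2} \steps^* \tilde I_{12}'$.

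First I would use the hypothesis to produce all six aspect evaluations. Invoking it at $(\psi_1, \id)$ on the pair $(M_1, M_1')$ yields $\ceqtm[\Psi_1]{\tilde I_1}{\tilde I_1'}{\subst{\td{A}{\psi_1}}{M_1}{a}}$; unfolding at the identity substitutions gives values $\tilde I_1 \evals I_1$ and $\tilde I_1' \evals I_1'$, whence $\td{I}{\psi_1} \evals I_1$ and $\td{I'}{\psi_1} \evals I_1'$ by determinacy. Invoking the hypothesis at $(\psi_1\psi_2, \id)$ on $(M_{12}, M_{12}')$ produces $\td{I}{\psi_1\psi_2} \evals I_{12}$ and $\td{I'}{\psi_1\psi_2} \evals I_{12}'$ analogously, and the $(\id, \psi_2)$-aspect of the first unfolding supplies the remaining $\td{I_1}{\psi_2} \evals I_2$ and $\td{I_1'}{\psi_2} \evals I_2'$.

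Next I would establish the four-way PER relation. Applying the hypothesis at $(\psi_1, \psi_2)$ to the pairs $(M_2, M_{12})$, $(M_{12}, M_2')$, and $(M_2', M_{12}')$ gives three $\ceqtm[\Psi_2]{-}{-}{-}$ judgments between the corresponding $\ifsym$-on-value terms, each at a pretype of the form $\subst{\td{A}{\psi_1\psi_2}}{V}{a}$ whose index $V$ depends on the pair. To transport these into a common pretype, I would use cubicality of $\bool$ to promote $\vinperfour[\Psi_2]{M_2}{M_{12}}{M_2'}{M_{12}'}{\bool}$ (a consequence of $\ceqtm{M}{M'}{\bool}$) into pairwise $\ceqtm[\Psi_2]{-}{-}{\bool}$ judgments, and then instantiate $\eqtype{\oft{a}{\bool}}{A}{A'}$ (transported by $\psi_1\psi_2$ via \cref{lem:td-judgments}) to conclude that all four pretypes $\subst{\td{A}{\psi_1\psi_2}}{V}{a}$ are equal, each in particular equal to $\td{\subst{A}{M}{a}}{\psi_1\psi_2}$ and hence sharing the value $C_{12}$. \Cref{lem:ceqpretype-ceqtm} then realigns the three judgments into this common pretype; since their underlying values are precisely the aspects $I_2, I_{12}, I_2', I_{12}'$ computed earlier, symmetry and transitivity of the PER yield the desired four-way relation.

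The hard part is purely bookkeeping: invoking the hypothesis at the right substitution pairs, tracking the head-reductions of $\td{I}{\psi}$ to its $\ifsym$-on-value reduct, and reconciling the several instances of $\subst{A}{-}{a}$ using cubicality of $\bool$ and $\eqtype{\oft{a}{\bool}}{A}{A'}$. No fresh computational analysis of $\ifsym$ is required, since the hypothesis directly supplies the elimination equation for each pair of aspect values; the substantive case analysis on $\true$, $\false$, and the $\hcomsym$ value form is deferred to \cref{lem:if-vinper}.
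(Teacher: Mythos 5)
Your proposal tracks the paper's proof closely: the paper proves the unary statement $\coftype{\ifb{a.A}{M}{T}{F}}{\subst{A}{M}{a}}$ and notes the binary form follows by repeating the argument for $M', T', F'$, while you spell out the binary case directly. The key moves are the same: invoke the hypothesis on aspect pairs, then reconcile the various $\subst{\td{A}{\psi_1\psi_2}}{V}{a}$ pretypes via cubicality of $\bool$, $\eqtype{\oft{a}{\bool}}{A}{A'}$, and \cref{lem:ceqpretype-ceqtm}.

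There is, however, one step that would fail as stated. You claim the values of the $\ifsym$-on-value reducts, e.g.\ $\ifb{a.\td{A}{\psi_1\psi_2}}{M_2}{\td{T}{\psi_1\psi_2}}{\td{F}{\psi_1\psi_2}}$, are \emph{precisely} the aspects $I_2, I_{12}, I_2', I_{12}'$. This is true for $I_{12}$ and $I_{12}'$, since $\td{I}{\psi_1\psi_2}$ head-reduces to $\ifb{a.\td{A}{\psi_1\psi_2}}{M_{12}}{\cdots}$, so they share a value. It is not true for $I_2$ and $I_2'$: by definition $I_2$ is the value of $\td{I_1}{\psi_2}$, where $I_1$ is the already-evaluated value of $\td{I}{\psi_1}$; whereas $\ifb{a.\td{A}{\psi_1\psi_2}}{M_2}{\cdots}$ is a reduct of $\td{(\ifb{a.\td{A}{\psi_1}}{M_1}{\cdots})}{\psi_2}$, a syntactically different term which in general evaluates to a different value. (Concretely, if $M_1 = \true$ then $I_2$ is the value of $\td{T_1}{\psi_2}$ while the reduct evaluates to $T_{12}$; these are merely $\inpersym$-related, not equal.) Consequently, transitivity among the three $\ceqtm[\Psi_2]$ judgments alone does not reach the desired $\vinperfour[\Psi_2]{I_2}{I_{12}}{I_2'}{I_{12}'}{C_{12}}$. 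The missing links are the $\inper[\Psi_2]$ connections between $\td{I_1}{\psi_2}$ and $\ifb{a.\td{A}{\psi_1\psi_2}}{M_2}{\cdots}$ (and their primed analogues), which come from the coherent-aspects condition of $\coftype[\Psi_1]{\ifb{a.\td{A}{\psi_1}}{M_1}{\cdots}}{\subst{\td{A}{\psi_1}}{M_1}{a}}$ at $(\id[\Psi_1],\psi_2)$. You already invoke this instance of the hypothesis to obtain the \emph{evaluations}, but you also need the PER relation it gives. The paper's first $\inper$ display in the proof records exactly this link, and it is what closes the transitivity chain.
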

\begin{proof}
We carefully work through the unary version of the proof, i.e., that
$\coftype{\ifb{a.A}{M}{T}{F}}{\subst{A}{M}{a}}$.
(The full proof follows by repeating the argument for $M',T',F'$.)
We show that for all $\msubsts{\Psi_1}{\psi_1}{\Psi}$ and
$\msubsts{\Psi_2}{\psi_2}{\Psi_1}$,
$\td{\ifsym}{\psi_1} \evals I_1$ and
$\inper[\Psi_2]{\td{I_1}{\psi_2}}{\td{\ifsym}{\psi_1\psi_2}}
{\td{\subst{A}{M}{a}}{\psi_1\psi_2}}$.

By $\coftype{M}{\bool}$, we know
$\td{M}{\psi_1}\evals M_1$,
$\td{M_1}{\psi_2}\evals M_2$,
$\td{M}{\psi_1\psi_2}\evals M_{12}$,
$\vinper[\Psi_1]{M_1}{M_1}{\bool}$, and
$\vinper[\Psi_2]{M_{12}}{M_2}{\bool}$.
Since $\ifsym$ evaluates its first argument,
$\ifb{a.\td{A}{\psi_1}}{\td{M}{\psi_1}}{\td{T}{\psi_1}}{\td{F}{\psi_1}}
\steps^* \ifb{a.\td{A}{\psi_1}}{M_1}{\td{T}{\psi_1}}{\td{F}{\psi_1}}$,
and by assumption,
$\coftype[\Psi_1]
{\ifb{a.\td{A}{\psi_1}}{M_1}{\td{T}{\psi_1}}{\td{F}{\psi_1}}}
{\subst{\td{A}{\psi_1}}{M_1}{a}}$.
This implies
\[
\inper[\Psi_2]
{\ifb{a.\td{A}{\psi_1\psi_2}}{\td{M_1}{\psi_2}}
  {\td{T}{\psi_1\psi_2}}{\td{F}{\psi_1\psi_2}} \steps^*
 \ifb{a.\td{A}{\psi_1\psi_2}}{M_2}{\td{T}{\psi_1\psi_2}}{\td{F}{\psi_1\psi_2}}}
{\td{I_1}{\psi_2}}
{\td{\subst{\td{A}{\psi_1}}{M_1}{a}}{\psi_2}}.
\]
On the other hand,
$\td{\ifsym}{\psi_1\psi_2} \steps^*
\ifb{a.\td{A}{\psi_1\psi_2}}{M_{12}}
  {\td{T}{\psi_1\psi_2}}{\td{F}{\psi_1\psi_2}}$
so
$\ceqtm[\Psi_2]
{\ifb{a.\td{A}{\psi_1\psi_2}}{M_{12}}
  {\td{T}{\psi_1\psi_2}}{\td{F}{\psi_1\psi_2}}}
{\ifb{a.\td{A}{\psi_1\psi_2}}{M_2}
  {\td{T}{\psi_1\psi_2}}{\td{F}{\psi_1\psi_2}}}
{\subst{\td{A}{\psi_1\psi_2}}{M_{12}}{a}}$
and these terms are
$\inper[\Psi_2]{-}{-}{\subst{\td{A}{\psi_1\psi_2}}{M_{12}}{a}}$ also.
Thus the result follows by transitivity once we show
$\eqper[\Psi_2]
{\subst{\td{A}{\psi_1\psi_2}}{\td{M_1}{\psi_2}}{a}}
{\subst{\td{A}{\psi_1\psi_2}}{M_{12}}{a}}$ and
$\eqper[\Psi_2]
{\subst{\td{A}{\psi_1\psi_2}}{M_{12}}{a}}
{\subst{\td{A}{\psi_1\psi_2}}{\td{M}{\psi_1\psi_2}}{a}}$.

The former holds because
$\ceqtm[\Psi_2]{\td{M_1}{\psi_2}}{M_{12}}{\bool}$,
which follows from \cref{lem:coftype-ceqtm} because both sides are
$\coftype[\Psi_1]{-}{\bool}$ (since $\bool$ is cubical) and
$\inper[\Psi']{\td{M_1}{\psi_2\psi}}{\td{M_{12}}{\psi}}{\bool}$ for all
$\msubsts{\Psi'}{\psi}{\Psi_2}$ (since each side is
$\inper[\Psi']{-}{\td{M}{\psi_1\psi_2\psi}}{\bool}$).
The latter holds similarly.
\end{proof}

We prove \cref{lem:if-ceqtm} first because we will appeal to it in the proof of
\cref{lem:if-vinper}, using the induction hypotheses to satisfy
the assumption about aspects. Note that in the statement of
\cref{lem:if-vinper}, $\cwftype{\subst{A}{V}{a}}$ because $\bool$ is cubical and
thus $\coftype{V}{\bool}$.

\begin{lemma}\label{lem:if-vinper}
If $\vinper{V}{V'}{\bool}$, then for any
$\eqtype{\oft{a}{\bool}}{A}{A'}$,
$\ceqtm{T}{T'}{\subst{A}{\true}{a}}$, and
$\ceqtm{F}{F'}{\subst{A}{\false}{a}}$,
$\ceqtm{\ifb{a.A}{V}{T}{F}}{\ifb{a.A'}{V'}{T'}{F'}}{\subst{A}{V}{a}}$.
\end{lemma}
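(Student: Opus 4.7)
The plan is to induct on the derivation of $\vinper{V}{V'}{\bool}$, which is defined as the least relation closed under clauses for $\true$, $\false$, and paired $\hcomsym$-values. The two constant clauses are handled by head expansion: $\ifb{a.A}{\true}{T}{F}$ head-reduces uniformly under every dimension substitution to $T$, so two applications of \cref{lem:expansion} (one on each side, glued by symmetry and transitivity of $\ceqtm{-}{-}{-}$) reduce the goal to the hypothesis $\ceqtm{T}{T'}{\subst{A}{\true}{a}}$, and symmetrically for $\false$.

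For the generating hcom clause, I have $V = \hcomgeneric{\etc{x_i}}{\bool}$ and $V' = \hcom{\etc{x_i}}{\bool}{r}{r'}{O}{\etc{y.P^\e_i}}$ with $r \neq r'$, together with the generator premises $\ceqtm{M}{O}{\bool}$, tube adjacency, compatibility, and cap-tube coherence. My plan is to introduce the heterogeneous composite
\[
C := \com{\etc{x_i}}{z.\subst{A}{H}{a}}{r}{r'}{\ifb{a.A}{M}{T}{F}}{\etc{y.\ifb{a.A}{N^\e_i}{T}{F}}}
\]
where $H := \hcom{\etc{x_i}}{\bool}{r}{z}{M}{\etc{y.N^\e_i}}$, and its primed analogue $C'$, and argue the chain $\ifb{a.A}{V}{T}{F} \eq C \eq C' \eq \ifb{a.A'}{V'}{T'}{F'}$ in $\subst{A}{V}{a}$. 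Note that $\dsubst{\subst{A}{H}{a}}{r'}{z} = \subst{A}{V}{a}$ since $\dsubst{H}{r'}{z} = V$ operationally, so the composite lives in the intended type.

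The middle step $\ceqtm{C}{C'}{\subst{A}{V}{a}}$ follows from \cref{thm:com} applied to the type line $z.\subst{A}{H}{a}$. This line is equally Kan on both sides because $\eqtype{\oft{a}{\bool}}{A}{A'}$ applied to $\ceqtm[\Psi,z]{H}{H'}{\bool}$ (which holds by the first Kan condition of $\bool$ on the generator premises) yields the required type equality at every instance. The cap equality $\ceqtm{\ifb{a.A}{M}{T}{F}}{\ifb{a.A'}{O}{T'}{F'}}{\subst{A}{M}{a}}$ and its tube counterparts are then obtained by invoking \cref{lem:if-ceqtm} on $\ceqtm{M}{O}{\bool}$ and the tube hypotheses; its side condition, that the eliminator acts coherently on aspects of these terms, is supplied by the induction hypothesis, since every such aspect pair is $\vinper{-}{-}{\bool}$-related via a strictly smaller derivation than that of $V, V'$.

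The outer steps $\ceqtm{\ifb{a.A}{V}{T}{F}}{C}{\subst{A}{V}{a}}$ and its primed analogue are established by \cref{lem:coftype-ceqtm}: under every $\psitd$, I case-split on whether $\td{V}{\psi}$ remains a value $\hcomsym$, collapses to a tube side (some $\td{x_i}{\psi}=\e$), or collapses to the cap ($\td{r}{\psi}=\td{r'}{\psi}$), and check that both $\td{\ifb{a.A}{V}{T}{F}}{\psi}$ and $\td{C}{\psi}$ evaluate to $\vinper$-related values, using the Kan rules of $\subst{A}{H}{a}$ to trivialize $\td{C}{\psi}$ in the latter two cases. The principal obstacle is managing the many coherence conditions and restriction reindexings (via \cref{lem:td-judgres}) required by \cref{thm:com} and \cref{lem:if-ceqtm}, and carefully justifying the well-foundedness of the inductive argument at auxiliary dimension contexts, which follows from the least-fixed-point construction of $\vinper{-}{-}{\bool}$.
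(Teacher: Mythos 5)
Your overall architecture — induction on the derivation of $\vinper{V}{V'}{\bool}$, head expansion for $\true$/$\false$, and routing the $\hcomsym$ case through the composite $C := \com{\etc{x_i}}{z.\subst{A}{H}{a}}{r}{r'}{\ifb{a.A}{M}{T}{F}}{\etc{y.\ifb{a.A}{N^\e_i}{T}{F}}}$, whose well-typedness comes from \cref{thm:com} and whose premises are discharged by \cref{lem:if-ceqtm} together with the induction hypothesis — is the same strategy as the paper's, though you package it as a cleaner chain $\ifb{a.A}{V}{T}{F} \eq C \eq C' \eq \ifb{a.A'}{V'}{T'}{F'}$ rather than the paper's in-line case analysis of $\psi_1$ and $\psi_1\psi_2$.

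However, there is a genuine gap in the outer step. You invoke \cref{lem:coftype-ceqtm} to conclude $\ceqtm{\ifb{a.A}{V}{T}{F}}{C}{\subst{A}{V}{a}}$, but that lemma presupposes $\coftype{\ifb{a.A}{V}{T}{F}}{\subst{A}{V}{a}}$ as well as $\coftype{C}{\subst{A}{V}{a}}$. The latter you get from \cref{thm:com}; the former is precisely the hard part of the lemma you are proving, and the one-level case split on $\psitd$ you describe establishes only the $\inper$ family $\inper[\Psi']{\td{\ifsym}{\psi}}{\td{C}{\psi}}{\td{\subst{A}{V}{a}}{\psi}}$ — it does not, by itself, deliver coherence of aspects of the left-hand side (the two-substitution condition $\vinper[\Psi_2]{\td{I_1}{\psi_2}}{I_{12}}{A_{12}}$ with $\td{\ifsym}{\psi_1}\evals I_1$). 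The paper establishes this directly by working under a pair of substitutions $\psi_1,\psi_2$ and tracing each reduction. Your plan can be salvaged: either carry out the same two-level case analysis to get $\coftype$ of the $\ifsym$ term (at which point your decomposition buys little over the paper's), or replace \cref{lem:coftype-ceqtm} with a strengthened version requiring $\coftype$ of only one side plus cubicality of the ambient type — this variant is provable (and is in the spirit of \cref{lem:coherent-expansion}, which this paper only states in its appendix) but you would need to state and prove it, since the cited lemma as written does not suffice.
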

\begin{proof}
By induction on $\vinper{V}{V'}{\bool}$.

\begin{enumerate}
\item $\vinper{\true}{\true}{\bool}$.

For all $\psi$, $\ifb{a.\td{A}{\psi}}{\true}{\td{T}{\psi}}{\td{F}{\psi}} \steps
\td{T}{\psi}$ and $\ifb{a.\td{A'}{\psi}}{\true}{\td{T'}{\psi}}{\td{F'}{\psi}}
\steps \td{T'}{\psi}$. Thus by \cref{lem:expansion} on both sides, the result
follows from $\ceqtm{T}{T'}{\subst{A}{\true}{a}}$, which we have assumed.

\item $\vinper{\false}{\false}{\bool}$.

This follows by \cref{lem:expansion} and $\ceqtm{F}{F'}{\subst{A}{\false}{a}}$.

\item
$\vinper{\hcomgeneric{\etc{x_i}}{\bool}}
{\hcom{\etc{x_i}}{\bool}{r}{r'}{O}{\etc{y.P^\e_i}}}{\bool}$,
such that $r\neq r'$,
$\ceqtm{M}{O}{\bool}$,
$\ceqtmres[\Psi,y]{x_i=\e,x_j=\e'}{N^\e_i}{N^{\e'}_j}{\bool}$
for all $i,j,\e,\e'$,
$\ceqtmres[\Psi,y]{x_i=\e}{N^\e_i}{P^\e_i}{\bool}$
for all $i,\e$, and
$\ceqtmres{x_i=\e}{\dsubst{N^\e_i}{r}{y}}{M}{\bool}$
for all $i,\e$.

We focus on the unary case, showing that for all
$\msubsts{\Psi_1}{\psi_1}{\Psi}$ and
$\msubsts{\Psi_2}{\psi_2}{\Psi_1}$,
$\td{\ifsym}{\psi_1} \evals I_1$ and
$\inper[\Psi_2]{\td{I_1}{\psi_2}}{\td{\ifsym}{\psi_1\psi_2}}
{\td{\subst{A}{V}{a}}{\psi_1\psi_2}}$.
We case-analyze $r$, $r'$, and $\etc{x_i}$ under $\psi_1$ and $\psi_1\psi_2$:

\begin{enumerate}
\item $\td{x_i}{\psi_1} = \e$ (where this is the smallest such $i$) and
$\td{x_j}{\psi_1\psi_2} = \e'$ (where this is the smallest such $j$).

\[\begin{aligned}
\td{\ifsym}{\psi_1} &\steps
\ifb{a.\td{A}{\psi_1}}{\td{\dsubst{N^\e_i}{r'}{y}}{\psi_1}}
{\td{T}{\psi_1}}{\td{F}{\psi_1}} \\
\td{\ifsym}{\psi_1\psi_2} &\steps
\ifb{a.\td{A}{\psi_1\psi_2}}{\td{\dsubst{N^{\e'}_j}{r'}{y}}{\psi_1\psi_2}}
{\td{T}{\psi_1\psi_2}}{\td{F}{\psi_1\psi_2}}
\end{aligned}\]
The induction hypothesis tells us this lemma holds for all aspects of
$\td{\dsubst{N^\e_i}{r'}{y}}{\psi_1}$; therefore \cref{lem:if-ceqtm} applies,
yielding
$\coftype[\Psi_1]{\ifb{a.\td{A}{\psi_1}}{\td{\dsubst{N^\e_i}{r'}{y}}{\psi_1}}
{\td{T}{\psi_1}}{\td{F}{\psi_1}}}
{\td{\subst{A}{\dsubst{N^\e_i}{r'}{y}}{a}}{\psi_1}}$, and thus
$\inper[\Psi_2]{\td{I_1}{\psi_2}}{\ifb{a.\td{A}{\psi_1\psi_2}}
{\td{\dsubst{N^\e_i}{r'}{y}}{\psi_1\psi_2}}
{\td{T}{\psi_1\psi_2}}{\td{F}{\psi_1\psi_2}}}
{\td{\subst{A}{\dsubst{N^\e_i}{r'}{y}}{a}}{\psi_1\psi_2}}$.
But by applying the induction hypothesis and \cref{lem:if-ceqtm} to
$\ceqtm[\Psi_2]
{\td{\dsubst{N^\e_i}{r'}{y}}{\psi_1\psi_2}}
{\td{\dsubst{N^{\e'}_j}{r'}{y}}{\psi_1\psi_2}}
{\bool}$ we have
\[\begin{aligned}
&\inper[\Psi_2]{\ifb{a.\td{A}{\psi_1\psi_2}}
  {\td{\dsubst{N^\e_i}{r'}{y}}{\psi_1\psi_2}}
  {\td{T}{\psi_1\psi_2}}{\td{F}{\psi_1\psi_2}}}
{}{\td{\subst{A}{\dsubst{N^\e_i}{r'}{y}}{a}}{\psi_1\psi_2}} \\
&{\ifb{a.\td{A}{\psi_1\psi_2}}
  {\td{\dsubst{N^{\e'}_j}{r'}{y}}{\psi_1\psi_2}}
  {\td{T}{\psi_1\psi_2}}{\td{F}{\psi_1\psi_2}}}.
\end{aligned}\]
It remains only to show that
$\eqper[\Psi_2]{\td{\subst{A}{\dsubst{N^\e_i}{r'}{y}}{a}}{\psi_1\psi_2}}
{\td{\subst{A}{V}{a}}{\psi_1\psi_2}}$,
which follows from
$\ceqtm[\Psi_2]{\td{\dsubst{N^\e_i}{r'}{y}}{\psi_1\psi_2}}
{\td{V}{\psi_1\psi_2}}{\bool}$ by the third Kan condition of $\bool$.

\item All $\td{x_i}{\psi_1}$ are dimension names,
$\td{r}{\psi_1} = \td{r'}{\psi_1}$, and
$\td{x_i}{\psi_1\psi_2} = \e$ (where this is the smallest such $i$).

This is similar to the previous case. We have
$\td{\ifsym}{\psi_1} \steps \ifbdots{a.\td{A}{\psi_1}}{\td{M}{\psi_1}}$
and $\td{\ifsym}{\psi_1\psi_2} \steps
\ifbdots{a.\td{A}{\psi_1\psi_2}}{\td{\dsubst{N^\e_i}{r'}{y}}{\psi_1\psi_2}}$.
By the induction hypothesis and \cref{lem:if-ceqtm} on
$\coftype{M}{\bool}$, we have
$\coftype[\Psi_1]{\ifbdots{a.\td{A}{\psi_1}}{\td{M}{\psi_1}}}
{\td{\subst{A}{M}{a}}{\psi_1}}$ and therefore
$\inper[\Psi_2]{\td{I_1}{\psi_2}}
{\ifbdots{a.\td{A}{\psi_1\psi_2}}{\td{M}{\psi_1\psi_2}}}
{\td{\subst{A}{M}{a}}{\psi_1\psi_2}}$.
On the other hand, by
$\ceqtm[\Psi_2]{\td{M}{\psi_1\psi_2}}
{\td{\dsubst{N^\e_i}{r}{y}}{\psi_1\psi_2}}{\bool}$ we have
$\inper[\Psi_2]
{\ifbdots{a.\td{A}{\psi_1\psi_2}}{\td{M}{\psi_1\psi_2}}}
{\ifbdots{a.\td{A}{\psi_1\psi_2}}{\td{\dsubst{N^\e_i}{r}{y}}{\psi_1\psi_2}}}
{\td{\subst{A}{M}{a}}{\psi_1\psi_2}}$.
The result follows from
$\ceqtm[\Psi_2]{\td{M}{\psi_1\psi_2}}{\td{V}{\psi_1\psi_2}}{\bool}$ by the
second Kan condition of $\bool$.

\item All $\td{x_i}{\psi_1}$ are dimension names,
$\td{r}{\psi_1} = \td{r'}{\psi_1}$, and
all $\td{x_i}{\psi_1\psi_2}$ are dimension names.

We have
$\td{\ifsym}{\psi_1} \steps \ifbdots{a.\td{A}{\psi_1}}{\td{M}{\psi_1}}$
and $\td{\ifsym}{\psi_1\psi_2} \steps
\ifbdots{a.\td{A}{\psi_1\psi_2}}{\td{M}{\psi_1\psi_2}}$.
By the induction hypothesis and \cref{lem:if-ceqtm} on
$\coftype{M}{\bool}$, we have
$\coftype[\Psi_1]{\ifbdots{a.\td{A}{\psi_1}}{\td{M}{\psi_1}}}
{\td{\subst{A}{M}{a}}{\psi_1}}$ and therefore
$\inper[\Psi_2]{\td{I_1}{\psi_2}}
{\ifbdots{a.\td{A}{\psi_1\psi_2}}{\td{M}{\psi_1\psi_2}}}
{\td{\subst{A}{M}{a}}{\psi_1\psi_2}}$.
The result follows from
$\ceqtm[\Psi_2]{\td{M}{\psi_1\psi_2}}{\td{V}{\psi_1\psi_2}}{\bool}$ by the
second Kan condition of $\bool$.

\item All $\td{x_i}{\psi_1}$ are dimension names,
$\td{r}{\psi_1} \neq \td{r'}{\psi_1}$, and
$\td{x_i}{\psi_1\psi_2} = \e$ (where this is the smallest such $i$).

\[\begin{aligned}
\td{\ifsym}{\psi_1} &\steps
\com{\etc{\td{x_i}{\psi_1}}}{z.\subst{\td{A}{\psi_1}}{H}{a}}
{\td{r}{\psi_1}}{\td{r'}{\psi_1}}
{\ifb{a.\td{A}{\psi_1}}{\td{M}{\psi_1}}{\td{T}{\psi_1}}{\td{F}{\psi_1}}}
{\etc{y.\ifb{a.\td{A}{\psi_1}}{\td{N^\e_i}{\psi_1}}
  {\td{T}{\psi_1}}{\td{F}{\psi_1}}}} \\
\td{\ifsym}{\psi_1\psi_2} &\steps
\ifb{a.\td{A}{\psi_1\psi_2}}{\td{\dsubst{N^\e_i}{r'}{y}}{\psi_1\psi_2}}
{\td{T}{\psi_1\psi_2}}{\td{F}{\psi_1\psi_2}}
\end{aligned}\]
where $H :=
\hcom{\etc{\td{x_i}{\psi_1}}}{\bool}{\td{r}{\psi_1}}{z}{\td{M}{\psi_1}}
{\etc{y.\td{N^\e_i}{\psi_1}}}$.
We start by showing
\[
\coftype[\Psi_1]{\comsym}
{\td{\subst{A}{\hcom{\etc{x_i}}{\bool}{r}{r'}{M}{\etc{y.N^\e_i}}}{a}}{\psi_1}}.
\]
Since
$\dsubst{\subst{\td{A}{\psi_1}}{H}{a}}{\td{r'}{\psi_1}}{z} =
\td{\subst{A}{\hcom{\etc{x_i}}{\bool}{r}{r'}{M}{\etc{y.N^\e_i}}}{a}}{\psi_1} =
\td{\subst{A}{V}{a}}{\psi_1}$
(by the definition of $H$, and $z\fresh A$),
this follows from \cref{thm:com} so long as:
\begin{enumerate}
\item $\coftype[\Psi_1]
{\ifbdots{a.\td{A}{\psi_1}}{\td{M}{\psi_1}}}
{\subst{\td{A}{\psi_1}}{\dsubst{H}{\td{r}{\psi_1}}{z}}{a}}$,
\item $\ceqtmres[\Psi_1,y]{\td{x_i}{\psi_1}=\e,\td{x_j}{\psi_1}=\e'}
{\ifbdots{a.\td{A}{\psi_1}}{\td{N^\e_i}{\psi_1}}}
{\ifbdots{a.\td{A}{\psi_1}}{\td{N^{\e'}_j}{\psi_1}}}
{\subst{\td{A}{\psi_1}}{\dsubst{H}{y}{z}}{a}}$
for all $i,j,\e,\e'$, and
\item $\ceqtmres[\Psi_1]{\td{x_i}{\psi_1}=\e}
{\ifbdots{a.\td{A}{\psi_1}}{\td{\dsubst{N^\e_i}{r}{y}}{\psi_1}}}
{\ifbdots{a.\td{A}{\psi_1}}{\td{M}{\psi_1}}}
{\subst{\td{A}{\psi_1}}{\dsubst{H}{\td{r}{\psi_1}}{z}}{a}}$
for all $i,\e$.
\end{enumerate}

All three follow from the induction hypothesis and \cref{lem:if-ceqtm} applied
to the appropriate equality judgments, using the Kan conditions of $\bool$ to
adjust the types: by the second Kan condition,
$\ceqtm[\Psi_1]{\dsubst{H}{\td{r}{\psi_1}}{z}}{\td{M}{\psi_1}}{\bool}$,
and by the third Kan condition of $\bool$,
$\ceqtmres[\Psi_1,y]{\td{x_i}{\psi_1}=\e}
{\dsubst{H}{y}{z}}{\td{N^\e_i}{\psi_1}}{\bool}$.

From this we conclude $\comsym \evals I_1$ and
$\inper[\Psi_2]{\td{I_1}{\psi_2}}{\td{\comsym}{\psi_2}}
{\td{\subst{A}{V}{a}}{\psi_1\psi_2}}$.
However, since $\td{x_i}{\psi_1\psi_2} = \e$, by \cref{thm:com},
$\ceqtm[\Psi_2]{\td{\comsym}{\psi_2}}
{\ifbdots{a.\td{A}{\psi_1\psi_2}}{\td{\dsubst{N^\e_i}{r'}{y}}{\psi_1\psi_2}}}
{\td{\subst{A}{V}{a}}{\psi_1\psi_2}}$.
But the right-hand side is what $\td{\ifsym}{\psi_1\psi_2}$ steps to, so
$\inper[\Psi_2]{\td{I_1}{\psi_2}}{\td{\ifsym}{\psi_1\psi_2}}
{\td{\subst{A}{V}{a}}{\psi_1\psi_2}}$ follows.

\item All $\td{x_i}{\psi_1}$ are dimension names,
$\td{r}{\psi_1} \neq \td{r'}{\psi_1}$,
all $\td{x_i}{\psi_1\psi_2}$ are dimension names, and
$\td{r}{\psi_1\psi_2} = \td{r'}{\psi_1\psi_2}$.

We have $\td{\ifsym}{\psi_1} \steps \comsym$ as in the previous case, and
$\td{\ifsym}{\psi_1\psi_2} \steps
\ifbdots{a.\td{A}{\psi_1\psi_2}}{\td{M}{\psi_1\psi_2}}$.
As before, $\coftype[\Psi_1]{\comsym}{\td{\subst{A}{V}{a}}{\psi_1}}$, so
$\comsym \evals I_1$ and
$\inper[\Psi_2]{\td{I_1}{\psi_2}}{\td{\comsym}{\psi_2}}
{\td{\subst{A}{V}{a}}{\psi_1\psi_2}}$.
Since $\td{r}{\psi_1\psi_2} = \td{r'}{\psi_1\psi_2}$, by \cref{thm:com},
$\ceqtm[\Psi_2]{\td{\comsym}{\psi_2}}
{\ifbdots{a.\td{A}{\psi_1\psi_2}}{\td{M}{\psi_1\psi_2}}}
{\td{\subst{A}{V}{a}}{\psi_1\psi_2}}$.
Thus
$\inper[\Psi_2]{\td{\comsym}{\psi_2}}
{\ifbdots{a.\td{A}{\psi_1\psi_2}}{\td{M}{\psi_1\psi_2}}}
{\td{\subst{A}{V}{a}}{\psi_1\psi_2}}$, and the result follows by transitivity.

\item All $\td{x_i}{\psi_1}$ are dimension names,
$\td{r}{\psi_1} \neq \td{r'}{\psi_1}$,
all $\td{x_i}{\psi_1\psi_2}$ are dimension names, and
$\td{r}{\psi_1\psi_2} \neq \td{r'}{\psi_1\psi_2}$.

We have $\td{\ifsym}{\psi_1} \steps \comsym$ as in the previous cases,
and $\td{\ifsym}{\psi_1\psi_2} \steps \td{\comsym}{\psi_2}$.
As before, $\coftype[\Psi_1]{\comsym}{\td{\subst{A}{V}{a}}{\psi_1}}$, so
$\comsym \evals I_1$ and
$\inper[\Psi_2]{\td{I_1}{\psi_2}}{\td{\comsym}{\psi_2}}
{\td{\subst{A}{V}{a}}{\psi_1\psi_2}}$, which is what we need to show.
\qedhere
\end{enumerate}
\end{enumerate}
\end{proof}

Since \cref{lem:if-vinper} holds for all $\vinper{V}{V'}{\bool}$,
\cref{lem:if-ceqtm} holds for \emph{any} $\ceqtm{M}{M'}{\bool}$, and thus
the elimination rule for booleans holds.

\paragraph{Computation}
If $\wftype{\oft{a}{\bool}}{A}$,
$\coftype{T}{\subst{A}{\true}{a}}$, and
$\coftype{F}{\subst{A}{\false}{a}}$, then
$\ceqtm{\ifb{a.A}{\true}{T}{F}}{T}{\subst{A}{\true}{a}}$ and
$\ceqtm{\ifb{a.A}{\false}{T}{F}}{F}{\subst{A}{\false}{a}}$.

For all $\psi$,
$\ifb{a.\td{A}{\psi}}{\true}{\td{T}{\psi}}{\td{F}{\psi}} \steps \td{T}{\psi}$, so
the former follows by \cref{lem:expansion} and
$\coftype{T}{\subst{A}{\true}{a}}$. The latter case is analogous.

%%%%%%%%%%%%%%%%%%%%%%%%%%%%%%%%%%%%%%%%%%%%%%%%%%%%%%%%%%%%%%%%%%%%%%%%%%%%%%%%
\subsection{Circle}

Our definition of $\C$ is very similar to that of $\bool$, because we defined
$\bool$ as a higher inductive type (with no path constructors). A cubical type
system \emph{has the circle} if $\veqper{\C}{\C}$ for all $\Psi$, and
$\vinper[-]{-}{-}{\C}$ is the least relation such that:
\begin{enumerate}
\item $\vinper{\base}{\base}{\C}$,
\item $\vinper{\lp{x}}{\lp{x}}{\C}$, and
\item $\vinper{\hcomgeneric{\etc{x_i}}{\C}}{\hcom{\etc{x_i}}{\C}{r}{r'}{O}{\etc{y.P^\e_i}}}{\C}$ whenever $r\neq r'$,
\begin{enumerate}
\item $\ceqtm{M}{O}{\C}$,
\item $\ceqtmres[\Psi,y]{x_i=\e,x_j=\e'}{N^\e_i}{N^{\e'}_j}{\C}$
for all $i,j,\e,\e'$,
\item $\ceqtmres[\Psi,y]{x_i=\e}{N^\e_i}{P^\e_i}{\C}$
for all $i,\e$, and
\item $\ceqtmres{x_i=\e}{\dsubst{N^\e_i}{r}{y}}{M}{\C}$
for all $i,\e$.
\end{enumerate}
\end{enumerate}

We proceed by proving theorems about cubical type systems that have the circle.
We will omit the many proofs that are identical to those in the previous
subsection.

\paragraph{Pretype}
$\cpretype{\C}$.

For all $\psi_1,\psi_2$,
$\td{\C}{\psi_1}\evals\C$,
$\td{\C}{\psi_2}\evals\C$,
$\td{\C}{\psi_1\psi_2}\evals\C$,
and $\veqper[\Psi_2]{\C}{\C}$.

\paragraph{Introduction}
$\coftype{\base}{\C}$, $\coftype{\lp{r}}{\C}$, and $\ceqtm{\lp{\e}}{\base}{\C}$.

\begin{enumerate}
\item
For all $\msubsts{\Psi_1}{\psi_1}{\Psi}$
and $\msubsts{\Psi_2}{\psi_2}{\Psi_1}$,
$\td{\base}{\psi_1}\evals\base$,
$\td{\base}{\psi_2}\evals\base$,
$\td{\base}{\psi_1\psi_2}\evals\base$,
and $\vinper[\Psi_2]{\base}{\base}{\C}$.

\item
For all $\msubsts{\Psi_1}{\psi_1}{\Psi}$
and $\msubsts{\Psi_2}{\psi_2}{\Psi_1}$,
we case on $\td{r}{\psi_1}$ and $\td{r}{\psi_1\psi_2}$:
\begin{enumerate}
\item $\td{r}{\psi_1} = \e$. (Therefore $\td{r}{\psi_1\psi_2} = \e$ also.)

Then $\lp{\td{r}{\psi_1}}\evals\base$,
$\td{\base}{\psi_2}\evals\base$,
$\lp{\td{r}{\psi_1\psi_2}}\evals\base$,
and $\vinper[\Psi_2]{\base}{\base}{\C}$.

\item $\td{r}{\psi_1} = x$ and $\td{x}{\psi_2} = \e$.

Then $\lp{\td{r}{\psi_1}}\evals\lp{x}$,
$\lp{\td{x}{\psi_2}}\evals\base$,
$\lp{\td{r}{\psi_1\psi_2}}\evals\base$,
and $\vinper[\Psi_2]{\base}{\base}{\C}$.

\item $\td{r}{\psi_1} = x$ and $\td{x}{\psi_2} = x'$.

Then $\lp{\td{r}{\psi_1}}\evals\lp{x}$,
$\lp{\td{x}{\psi_2}}\evals\lp{x'}$,
$\lp{\td{r}{\psi_1\psi_2}}\evals\lp{x'}$,
and $\vinper[\Psi_2]{\lp{x'}}{\lp{x'}}{\C}$.
\end{enumerate}

\item By head expansion and the first introduction rule, since for all $\psi$,
$\lp{\td{\e}{\psi}}\steps\td{\base}{\psi}$.
\end{enumerate}

\paragraph{Kan}
$\cpretype{\C}$ is Kan.

This proof is identical to the proof that $\cpretype{\bool}$ is Kan, because the
relevant portions of the operational semantics and the definition of
$\vinper[-]{-}{-}{\C}$ are identical.

\paragraph{Cubical}
For any $\Psi'$ and $\vinper[\Psi']{M}{N}{\C}$, $\ceqtm[\Psi']{M}{N}{\C}$.

There are three cases in which $\vinper[\Psi']{M}{N}{\C}$.
For $\base$ and $\lp{x}$, this follows from the introduction rules already
proven. For $\vinper[\Psi']{\hcomgeneric{\etc{x_i}}{\C}}
{\hcom{\etc{x_i}}{\C}{r}{r'}{O}{\etc{y.P^\e_i}}}{\C}$,
this follows from the first Kan condition of $\C$, again already proven.

\paragraph{Elimination}
If $\ceqtm{M}{M'}{\C}$,
$\eqtype{\oft{a}{\C}}{A}{A'}$,
$\ceqtm{P}{P'}{\subst{A}{\base}{a}}$,
$\ceqtm[\Psi,x]{L}{L'}{\subst{A}{\lp{x}}{a}}$, and
$\ceqtm{\dsubst{L}{\e}{x}}{P}{\subst{A}{\base}{a}}$ for all $\e$, then
$\ceqtm{\Celim{a.A}{M}{P}{x.L}}{\Celim{a.A'}{M'}{P'}{x.L'}}{\subst{A}{M}{a}}$.

We prove the elimination rule using the same strategy as our proof of the
elimination rule for booleans; see \cref{lem:if-ceqtm,lem:if-vinper} for full
details.

\begin{lemma}\label{lem:Celim-ceqtm}
If
\begin{enumerate}
\item $\ceqtm{M}{M'}{\C}$ such that for any $\msubsts{\Psi_1}{\psi_1}{\Psi}$
and $\msubsts{\Psi_2}{\psi_2}{\Psi_1}$, the elimination rule holds for any
pair of the aspects $M_2,M_{12},M_2',M_{12}'$ of $M,M'$,
\item $\eqtype{\oft{a}{\C}}{A}{A'}$,
\item $\ceqtm{P}{P'}{\subst{A}{\base}{a}}$,
\item $\ceqtm[\Psi,x]{L}{L'}{\subst{A}{\lp{x}}{a}}$, and
\item $\ceqtm{\dsubst{L}{\e}{x}}{P}{\subst{A}{\base}{a}}$ for all $\e$,
\end{enumerate}
then
$\ceqtm{\Celim{a.A}{M}{P}{x.L}}{\Celim{a.A'}{M'}{P'}{x.L'}}{\subst{A}{M}{a}}$.
\end{lemma}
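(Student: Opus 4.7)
My plan is to mirror, essentially line-for-line, the proof of \cref{lem:if-ceqtm}, replacing $\bool$ by $\C$ and $\ifsym$ by $\Celim{a.A}{-}{P}{x.L}$. The key structural facts needed are all in place: the operational semantics evaluates $\Celim$ on its principal argument, exactly as with $\ifsym$; $\C$ is a cubical pretype (proven earlier in this subsection); and the ``assumption about aspects'' hypothesis is phrased in exactly the same way as in the boolean case. The extra hypotheses on $P$, $L$, and $\dsubst{L}{\e}{x}$ do not need to be unpacked here---they are simply threaded through the argument. They will only be \emph{used} in the next lemma (the circle analog of \cref{lem:if-vinper}), where one actually case-analyzes a value $V$ in $\C$ and produces the appropriate reductions.

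Concretely, I would work through the unary case, showing $\coftype{\Celim{a.A}{M}{P}{x.L}}{\subst{A}{M}{a}}$, and then note that the full binary statement follows by repeating the same argument for $M',P',L'$. Fix $\msubsts{\Psi_1}{\psi_1}{\Psi}$ and $\msubsts{\Psi_2}{\psi_2}{\Psi_1}$. From $\coftype{M}{\C}$ I extract aspects $M_1,M_2,M_{12}$ with the usual $\vinper$ relationships. Using the $\Celim$ evaluation rule for its principal argument I get
\[
\td{\Celim{a.A}{M}{P}{x.L}}{\psi_1} \steps^* \Celim{a.\td{A}{\psi_1}}{M_1}{\td{P}{\psi_1}}{x.\td{L}{\psi_1}},
\]
and similarly $\td{\Celim{a.A}{M}{P}{x.L}}{\psi_1\psi_2}$ reduces to the corresponding term with $M_{12}$. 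The hypothesis on aspects applied to $M_1$ yields $\coftype[\Psi_1]{\Celim{a.\td{A}{\psi_1}}{M_1}{\td{P}{\psi_1}}{x.\td{L}{\psi_1}}}{\subst{\td{A}{\psi_1}}{M_1}{a}}$, which produces a value $I_1$ and gives coherence between $\td{I_1}{\psi_2}$ and the $M_2$ instance at $\Psi_2$. The hypothesis on aspects applied to $M_{12}$ versus $M_2$ gives coherence between the $M_2$ and $M_{12}$ instances. Chaining these by transitivity (after head expansion on $\td{\Celim{-}{-}{-}{-}}{\psi_1\psi_2}$) closes the argument.

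The only nontrivial supporting step is type coherence: I need $\eqper[\Psi_2]{\subst{\td{A}{\psi_1\psi_2}}{\td{M_1}{\psi_2}}{a}}{\subst{\td{A}{\psi_1\psi_2}}{M_{12}}{a}}$ and similarly with $\td{M}{\psi_1\psi_2}$ in place of $M_{12}$. Both are obtained by functionality of $\eqtype{\oft{a}{\C}}{A}{A'}$ once I produce the appropriate $\ceqtm[\Psi_2]{-}{-}{\C}$ judgments between these aspects. These in turn follow from \cref{lem:coftype-ceqtm}: each side is $\coftype[\Psi_1]{-}{\C}$ because $\C$ is cubical, and at every further substitution $\msubsts{\Psi'}{\psi}{\Psi_2}$ both sides lie in the same PER class as $\td{M}{\psi_1\psi_2\psi}$, so the hypothesis of \cref{lem:coftype-ceqtm} is met.

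I do not expect any genuine obstacle here; the lemma is intentionally an exact transcription of the boolean version, and the circle-specific features (a new value constructor $\lp{x}$ and the identification $\lp{\e} \eq \base$) play no role in this particular lemma. The real work, by analogy with \cref{lem:if-vinper}, will be in the subsequent lemma that discharges the ``assumption about aspects'' by induction on $\vinper{V}{V'}{\C}$, where the cases for $\base$, $\lp{x}$, and $\hcomsym$-values each require separate reasoning and where the hypotheses on $P$, $L$, and $\dsubst{L}{\e}{x}$ finally come into play.
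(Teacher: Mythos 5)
Your proposal is correct and follows exactly the paper's approach: the paper's entire proof of \cref{lem:Celim-ceqtm} is the one line ``Same proof as \cref{lem:if-ceqtm},'' and your expansion reproduces that argument faithfully with $\bool$ replaced by $\C$ and $\ifsym$ by $\Celimsym$. One small clarification on your remark that the hypotheses on $P$, $L$, and $\dsubst{L}{\e}{x}$ ``will only be used in the next lemma'': they are in fact invoked already here, pushed under dimension substitutions to satisfy the preconditions of the elimination rule each time you apply the ``assumption about aspects'' (e.g.\ to $M_1$ at $\Psi_1$), just as the $T,F$ hypotheses are consumed in the boolean case --- but this is an imprecision of phrasing, not a gap in the argument.
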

\begin{proof}
Same proof as \cref{lem:if-ceqtm}.
\end{proof}

\begin{lemma}\label{lem:Celim-vinper}
If $\vinper{V}{V'}{\C}$, then for any
$\eqtype{\oft{a}{\C}}{A}{A'}$,
$\ceqtm{P}{P'}{\subst{A}{\base}{a}}$, and
$\ceqtm[\Psi,x]{L}{L'}{\subst{A}{\lp{x}}{a}}$, such that
$\ceqtm{\dsubst{L}{\e}{x}}{P}{\subst{A}{\base}{a}}$ for all $\e$,
$\ceqtm{\Celim{a.A}{V}{P}{x.L}}{\Celim{a.A'}{V'}{P'}{x.L'}}{\subst{A}{V}{a}}$.
\end{lemma}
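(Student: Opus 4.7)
The plan is to mirror the boolean proof (Lemma~\ref{lem:if-vinper}) closely, proceeding by induction on $\vinper{V}{V'}{\C}$ using the three generating clauses in the definition of ``has the circle.'' Before starting, observe that the aspect hypothesis demanded by Lemma~\ref{lem:Celim-ceqtm} is supplied for free by the inductive hypothesis together with the fact that the aspects of a value $V$ in $\C$ are themselves related by $\vinper[-]{-}{-}{\C}$ (a consequence of $\C$ being cubical), so invocations of Lemma~\ref{lem:Celim-ceqtm} on subterms are always justified.

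For the two point/loop cases, we apply Lemma~\ref{lem:expansion} together with the operational rules $\Celim{a.A}{\base}{P}{x.L}\steps P$ and $\Celim{a.A}{\lp{w}}{P}{x.L}\steps \dsubst{L}{w}{x}$, both of which commute with dimension substitution. The case $V=V'=\base$ follows directly from the hypothesis $\ceqtm{P}{P'}{\subst{A}{\base}{a}}$. The case $V=V'=\lp{x}$ for a name $x\in\Psi$ follows from $\ceqtm[\Psi,x]{L}{L'}{\subst{A}{\lp{x}}{a}}$ by applying Lemma~\ref{lem:td-judgments} to the substitution instantiating the fresh bound dimension with the name $x$. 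The boundary hypothesis $\ceqtm{\dsubst{L}{\e}{x}}{P}{\subst{A}{\base}{a}}$ is precisely what ensures the loop case remains coherent when a further substitution $\psi$ sends $x$ to a constant $\e$, since then $\td{\lp{x}}{\psi}$ reduces to $\base$ and the eliminator reduces to $\td{P}{\psi}$ rather than the loop branch.

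The bulk of the work is the hcom case, where $V=\hcomgeneric{\etc{x_i}}{\C}$ and $V'=\hcom{\etc{x_i}}{\C}{r}{r'}{O}{\etc{y.P^\e_i}}$ with $r\neq r'$, together with the usual data on the cap, tubes, and adjacency. The operational semantics reduces Celim on such a value to a $\comsym$ expression with motive $z.\subst{A}{H}{a}$, where $H$ is the filler $\hcom{\etc{x_i}}{\C}{r}{z}{M}{\etc{y.N^\e_i}}$. Following the boolean template verbatim, we case-analyze the six possible configurations of $r$, $r'$, and $\etc{x_i}$ under $\psi_1$ and $\psi_1\psi_2$. In each subcase we combine the induction hypothesis with Lemma~\ref{lem:Celim-ceqtm} to obtain the closed equalities needed for the cap and tube branches of the $\comsym$, apply Theorem~\ref{thm:com} to conclude that the $\comsym$ is well-typed at $\subst{A}{V}{a}$, and then invoke parts~(2) or (3) of Theorem~\ref{thm:com} to simplify in the degenerate subcases where an extent becomes constant or $r=r'$ under the composite substitution.

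The main obstacle is the bookkeeping required to show that the motive $z.\subst{A}{H}{a}$ of the $\comsym$ really is a pretype line over $\C$. This uses the first Kan condition for $\C$ to show $H$ is an element of $\C$ at $[\Psi,z]$, the second and third Kan conditions for $\C$ to identify $\dsubst{H}{r}{z}$ with $M$ and $\dsubst{H}{y}{z}$ (on the $x_i=\e$ face) with $N^\e_i$, and then functionality of $\eqtype{\oft{a}{\C}}{A}{A'}$ to transport these identifications into the motive. Once this is in place, the three premises of Theorem~\ref{thm:com} for the $\comsym$ follow by applying Lemma~\ref{lem:Celim-ceqtm} and the induction hypothesis to the tube and cap equalities already assumed for $V$, and the remaining manipulations mirror the boolean case with no additional ideas needed.
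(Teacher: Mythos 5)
The base and hcom cases are fine and follow the paper's approach (the paper explicitly says the hcom case mirrors \cref{lem:if-vinper}, appealing to \cref{lem:Celim-ceqtm}). The gap is in the loop case.

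You claim the reduction $\Celim{a.A}{\lp{w}}{P}{x.L}\steps\dsubst{L}{w}{x}$ ``commutes with dimension substitution'' and that \cref{lem:expansion} therefore applies. This is false, and your very next sentence contradicts it. Take $V=\lp{y}$ for a name $y\in\Psi$ and consider $\psitd$ with $\td{y}{\psi}=\e$. Then
$\td{\Celim{a.A}{\lp{y}}{P}{x.L}}{\psi} = \Celim{a.\td{A}{\psi}}{\lp{\e}}{\td{P}{\psi}}{x.\td{L}{\psi}}$,
and since $\lp{\e}$ is not a value, the eliminator first evaluates its principal argument ($\lp{\e}\steps\base$) and then reduces to $\td{P}{\psi}$ --- not to $\td{\dsubst{L}{y}{x}}{\psi}=\td{\dsubst{L}{\e}{x}}{\psi}$. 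These two terms are \emph{exactly equal} by the boundary hypothesis $\ceqtm{\dsubst{L}{\e}{x}}{P}{\subst{A}{\base}{a}}$, but they are not related by $\steps^*$. Since \cref{lem:expansion} demands $\td{M}{\psi}\steps^*\td{M'}{\psi}$ syntactically for every $\psi$, it simply does not apply here: you cannot name a single term $M'$ to which the eliminator uniformly head-reduces.

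The paper resolves this by bypassing head expansion entirely in the loop case: it unwinds the definition of $\ceqtm{-}{-}{-}$ and verifies coherent aspects directly, splitting on what $\td{y}{\psi_1}$ and $\td{y}{\psi_1\psi_2}$ become. In the case $\td{y}{\psi_1}=y'$ a name but $\td{y}{\psi_1\psi_2}=\e$, the intermediate aspect evaluates through the loop branch to $\dsubst{\td{L}{\psi_1}}{y'}{x}$ while the one-shot aspect evaluates through $\base$ to $\td{P}{\psi_1\psi_2}$, and the boundary hypothesis is exactly what relates the two (together with transitivity and the equality of the motive types). Your proposal gestures at the right hypothesis but attributes the conclusion to a commutation property that does not hold, so the loop case as you have written it does not go through.
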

\begin{proof}
By induction on $\vinper{V}{V'}{\C}$.

\begin{enumerate}
\item $\vinper{\base}{\base}{\C}$.

For all $\psi$,
$\Celim{a.\td{A}{\psi}}{\base}{\td{P}{\psi}}{x.\td{L}{\psi}} \steps
\td{P}{\psi}$ and
$\Celim{a.\td{A'}{\psi}}{\base}{\td{P'}{\psi}}{x.\td{L'}{\psi}} \steps
\td{P'}{\psi}$.
Thus by \cref{lem:expansion} on both sides, the result follows from
$\ceqtm{P}{P'}{\subst{A}{\base}{a}}$, which we have assumed.

\item $\vinper{\lp{y}}{\lp{y}}{\C}$.

We focus on the unary case, showing that for all
$\msubsts{\Psi_1}{\psi_1}{\Psi}$ and
$\msubsts{\Psi_2}{\psi_2}{\Psi_1}$,
$\td{\Celimsym}{\psi_1} \evals E_1$ and
$\inper[\Psi_2]{\td{E_1}{\psi_2}}{\td{\Celimsym}{\psi_1\psi_2}}
{\td{\subst{A}{\lp{y}}{a}}{\psi_1\psi_2}}$.
We case-analyze $y$ under $\psi_1$ and $\psi_1\psi_2$:

\begin{enumerate}
\item $\td{y}{\psi_1} = \e$.

Then $\td{\Celimsym}{\psi_1} \steps^* \td{P}{\psi_1}$ and
$\td{\Celimsym}{\psi_1\psi_2} \steps^* \td{P}{\psi_1\psi_2}$.
By $\coftype{P}{\subst{A}{\base}{a}}$,
$\td{P}{\psi_1} \evals P_1$ and
$\inper[\Psi_2]{\td{P_1}{\psi_2}}{\td{P}{\psi_1\psi_2}}
{\td{\subst{A}{\base}{a}}{\psi_1\psi_2}}$. The result follows by
$\eqper[\Psi_2]
{\td{\subst{A}{\base}{a}}{\psi_1\psi_2}}
{\td{\subst{A}{\lp{y}}{a}}{\psi_1\psi_2}}$
because $\td{y}{\psi_1\psi_2} = \e$ and $\ceqtm[\Psi_2]{\base}{\lp{\e}}{\C}$.

\item $\td{y}{\psi_1} = y'$ and $\td{y}{\psi_1\psi_2} = \e$.

Then $\td{\Celimsym}{\psi_1} \steps^* \dsubst{\td{L}{\psi_1}}{y'}{x}$ and
$\td{\Celimsym}{\psi_1\psi_2} \steps^* \td{P}{\psi_1\psi_2}$.
By $\coftype[\Psi,x]{L}{\subst{A}{\lp{x}}{a}}$,
$\dsubst{\td{L}{\psi_1}}{y'}{x} \evals L_1$ and
$\inper[\Psi_2]{\td{L_1}{\psi_2}}
{\td{\dsubst{\td{L}{\psi_1}}{y'}{x}}{\psi_2} =
  \td{\dsubst{L}{\e}{x}}{\psi_1\psi_2}}
{\td{\subst{A}{\lp{\e}}{a}}{\psi_1\psi_2}}$,
since $x\fresh A$ and $\td{\dsubst{\td{x}{\psi_1}}{y'}{x}}{\psi_2} = \e$.
By $\ceqtm{\dsubst{L}{\e}{x}}{P}{\subst{A}{\base}{a}}$, we also have
$\inper[\Psi_2]{\td{\dsubst{L}{\e}{x}}{\psi_1\psi_2}}{\td{P}{\psi_1\psi_2}}
{\td{\subst{A}{\base}{a}}{\psi_1\psi_2}}$. The result follows by transitivity
and $\eqper[\Psi_2]
{\td{\subst{A}{\base}{a}}{\psi_1\psi_2}}
{\td{\subst{A}{\lp{\e}}{a}}{\psi_1\psi_2}}$.

\item $\td{y}{\psi_1} = y'$ and $\td{y}{\psi_1\psi_2} = y''$.

Then $\td{\Celimsym}{\psi_1} \steps^* \dsubst{\td{L}{\psi_1}}{y'}{x}$ and
$\td{\Celimsym}{\psi_1\psi_2} \steps^* \dsubst{\td{L}{\psi_1\psi_2}}{y''}{x}
= \td{\dsubst{\td{L}{\psi_1}}{y'}{x}}{\psi_2}$.
By $\coftype[\Psi,x]{L}{\subst{A}{\lp{x}}{a}}$,
$\dsubst{\td{L}{\psi_1}}{y'}{x} \evals L_1$ and
$\inper[\Psi_2]{\td{L_1}{\psi_2}}
{\td{\dsubst{\td{L}{\psi_1}}{y'}{x}}{\psi_2}}
{\td{\dsubst{\td{\subst{A}{\lp{x}}{a}}{\psi_1}}{y'}{x}}{\psi_2}}$.
The result follows by
$\td{\dsubst{\td{x}{\psi_1}}{y'}{x}}{\psi_2} = \td{y}{\psi_1\psi_2}$ and
$x\fresh A$.
\end{enumerate}

\item $\vinper{\hcomgeneric{\etc{x_i}}{\C}}{\hcom{\etc{x_i}}{\C}{r}{r'}{O}{\etc{y.P^\e_i}}}{\C}$, such that $r\neq r'$,
$\ceqtm{M}{O}{\bool}$,
$\ceqtmres[\Psi,y]{x_i=\e,x_j=\e'}{N^\e_i}{N^{\e'}_j}{\bool}$
for all $i,j,\e,\e'$,
$\ceqtmres[\Psi,y]{x_i=\e}{N^\e_i}{P^\e_i}{\bool}$
for all $i,\e$, and
$\ceqtmres{x_i=\e}{\dsubst{N^\e_i}{r}{y}}{M}{\bool}$
for all $i,\e$.

This case proceeds identically to the $\hcomsym$ case in \cref{lem:if-vinper},
and appeals to \cref{lem:Celim-ceqtm}.
\qedhere
\end{enumerate}
\end{proof}

Since \cref{lem:Celim-vinper} holds for all $\vinper{V}{V'}{\C}$,
\cref{lem:Celim-ceqtm} holds for \emph{any} $\ceqtm{M}{M'}{\C}$, and thus the
elimination rule for the circle holds.

\paragraph{Computation}
If $\wftype{\oft{a}{\C}}{A}$,
$\coftype{P}{\subst{A}{\base}{a}}$,
$\coftype[\Psi,x]{L}{\subst{A}{\lp{x}}{a}}$, and
$\ceqtm{\dsubst{L}{\e}{x}}{P}{\subst{A}{\base}{a}}$ for all $\e$, then
$\ceqtm{\Celim{a.A}{\base}{P}{x.L}}{P}{\subst{A}{\base}{a}}$ and
$\ceqtm{\Celim{a.A}{\lp{r}}{P}{x.L}}{\dsubst{L}{r}{x}}{\subst{A}{\lp{r}}{a}}$.

For all $\psi$, $\Celim{a.\td{A}{\psi}}{\base}{\td{P}{\psi}}{x.\td{L}{\psi}}
\steps \td{P}{\psi}$,
so the first computation rule follows by \cref{lem:expansion} and
$\coftype{P}{\subst{A}{\base}{a}}$.

For the second computation rule, we know that both sides are
$\coftype{-}{\subst{A}{\lp{r}}{a}}$, so by \cref{lem:coftype-ceqtm}, it suffices
to show that for any $\psitd$,
$\inper[\Psi']
{\Celim{a.\td{A}{\psi}}{\lp{\td{r}{\psi}}}{\td{P}{\psi}}{x.\td{L}{\psi}}}
{\td{\dsubst{L}{r}{x}}{\psi}}
{\td{\subst{A}{\lp{r}}{a}}{\psi}}$.
There are two cases:

\begin{enumerate}
\item $\td{r}{\psi} = \e$.

Then $\Celim{a.\td{A}{\psi}}{\lp{\td{r}{\psi}}}{\td{P}{\psi}}{x.\td{L}{\psi}}
\steps \Celim{a.\td{A}{\psi}}{\base}{\td{P}{\psi}}{x.\td{L}{\psi}}
\steps \td{P}{\psi}$
and we must show
$\inper[\Psi']{\td{P}{\psi}}{\td{\dsubst{L}{r}{x}}{\psi}}
{\td{\subst{A}{\lp{r}}{a}}{\psi}}$.
Since $\td{r}{\psi} = \e$, and $\eqper[\Psi']
{\subst{\td{A}{\psi}}{\lp{\e}}{a}}{\subst{\td{A}{\psi}}{\base}{a}}$,
this is equivalently
$\inper[\Psi']{\td{P}{\psi}}{\td{\dsubst{L}{\e}{x}}{\psi}}
{\td{\subst{A}{\base}{a}}{\psi}}$, which we know by
$\ceqtm{\dsubst{L}{\e}{x}}{P}{\subst{A}{\base}{a}}$.

\item $\td{r}{\psi} = w$.

Then $\Celim{a.\td{A}{\psi}}{\lp{\td{r}{\psi}}}{\td{P}{\psi}}{x.\td{L}{\psi}}
\steps \dsubst{\td{L}{\psi}}{w}{x} = \td{\dsubst{L}{r}{x}}{\psi}$,
and it suffices to show that
$\inper[\Psi']{\td{\dsubst{L}{r}{x}}{\psi}}{\td{\dsubst{L}{r}{x}}{\psi}}
{\td{\subst{A}{\lp{r}}{a}}{\psi}}$, which we know by
$\coftype[\Psi,x]{L}{\subst{A}{\lp{x}}{a}}$.
\end{enumerate}

%%%%%%%%%%%%%%%%%%%%%%%%%%%%%%%%%%%%%%%%%%%%%%%%%%%%%%%%%%%%%%%%%%%%%%%%%%%%%%%%
\subsection{Dependent functions}

If a cubical type system has $\ceqtype{A}{A'}$ and
$\eqtype{\oft{a}{A}}{B}{B'}$, we say it
\emph{has their dependent function type} when for all $\psitd$,
$\veqper[\Psi']{\picl{a}{\td{A}{\psi}}{\td{B}{\psi}}}
{\picl{a}{\td{A'}{\psi}}{\td{B'}{\psi}}}$, and
$\vinper[\Psi']{-}{-}{\picl{a}{\td{A}{\psi}}{\td{B}{\psi}}}$
is the least relation such that
\[
\vinper[\Psi']{\lam{a}{M}}{\lam{a}{M'}}{\picl{a}{\td{A}{\psi}}{\td{B}{\psi}}}
\]
when $\eqtm[\Psi']{\oft{a}{\td{A}{\psi}}}{M}{M'}{\td{B}{\psi}}$.

In any cubical type system where
$\veqper[\Psi']{\picl{a}{\td{A}{\psi}}{\td{B}{\psi}}}
{\picl{a}{\td{A'}{\psi}}{\td{B'}{\psi}}}$, the PERs
$\vinper[\Psi']{-}{-}{\picl{a}{\td{A}{\psi}}{\td{B}{\psi}}}$ and
$\vinper[\Psi']{-}{-}{\picl{a}{\td{A'}{\psi}}{\td{B'}{\psi}}}$ must be equal.
This is true because
$\eqtm[\Psi']{\oft{a}{\td{A}{\psi}}}{M}{M'}{\td{B}{\psi}}$
if and only if
$\eqtm[\Psi']{\oft{a}{\td{A'}{\psi}}}{M}{M'}{\td{B'}{\psi}}$.

In the remainder of this subsection, we assume we are working with a cubical
type system that has $\ceqtype{A}{A'}$, $\eqtype{\oft aA}{B}{B'}$, and their
dependent function type.

\paragraph{Pretype}
$\ceqpretype{\picl{a}{A}{B}}{\picl{a}{A'}{B'}}$.

For any $\msubsts{\Psi_1}{\psi_1}{\Psi}$ and $\msubsts{\Psi_2}{\psi_2}{\Psi_1}$,
$\wfval[\Psi_1]{\picl{a}{\td{A}{\psi_1}}{\td{B}{\psi_1}}}$ and
$\wfval[\Psi_2]{\picl{a}{\td{A}{\psi_1\psi_2}}{\td{B}{\psi_1\psi_2}}}$.
Then $\veqper[\Psi_2]
{\picl{a}{\td{A}{\psi_1\psi_2}}{\td{B}{\psi_1\psi_2}}}
{\picl{a}{\td{A'}{\psi_1\psi_2}}{\td{B'}{\psi_1\psi_2}}}$, which follows by our
assumption that the cubical type system has this dependent function type.

\paragraph{Introduction}
If $\eqtm{\oft aA}{M}{M'}{B}$ then
$\ceqtm{\lam{a}{M}}{\lam{a}{M'}}{\picl{a}{A}{B}}$.

Each side has coherent aspects up to syntactic equality, since
$\wfval[\Psi']{\lam{a}{\td{M}{\psi}}}$ for all $\psitd$.
Thus it suffices to show
$\vinper[\Psi_2]
{\lam{a}{\td{M}{\psi_1\psi_2}}}
{\lam{a}{\td{M'}{\psi_1\psi_2}}}
{\picl{a}{\td{A}{\psi_1\psi_2}}{\td{B}{\psi_1\psi_2}}}$,
which is true because
$\eqtm[\Psi_2]{\oft{a}{\td{A}{\psi_1\psi_2}}}
{\td{M}{\psi_1\psi_2}}{\td{M'}{\psi_1\psi_2}}{\td{B}{\psi_1\psi_2}}$.

\paragraph{Elimination}
If $\ceqtm{M}{M'}{\picl{a}{A}{B}}$ and $\ceqtm{N}{N'}{A}$, then
$\ceqtm{\app{M}{N}}{\app{M'}{N'}}{\subst{B}{N}{a}}$.

For any $\msubsts{\Psi_1}{\psi_1}{\Psi}$ and $\msubsts{\Psi_2}{\psi_2}{\Psi_1}$,
by $\coftype{M}{\picl{a}{A}{B}}$ we know
$\td{M}{\psi_1}\evals \lam{a}{O_1}$ and
$\oftype[\Psi_1]{\oft{a}{\td{A}{\psi_1}}}{O_1}{\td{B}{\psi_1}}$. Thus
\[
\app{\td{M}{\psi_1}}{\td{N}{\psi_1}}
\steps^* \app{\lam{a}{O_1}}{\td{N}{\psi_1}}
\steps \subst{O_1}{\td{N}{\psi_1}}{a}
\]
and since $\coftype[\Psi_1]{\td{N}{\psi_1}}{\td{A}{\psi_1}}$, we know
$\coftype[\Psi_1]{\subst{O_1}{\td{N}{\psi_1}}{a}}
{\td{\subst{B}{N}{a}}{\psi_1}}$ and thus
$\subst{O_1}{\td{N}{\psi_1}}{a} \evals X_1$ and
$\inper[\Psi_2]{\td{X_1}{\psi_2}}
{\subst{\td{O_1}{\psi_2}}{\td{N}{\psi_1\psi_2}}{a}}
{\td{\subst{B}{N}{a}}{\psi_1\psi_2}}$.

We also know
$\vinper[\Psi_2]
{\td{M}{\psi_1\psi_2}\evals \lam{a}{O_{12}}}
{\lam{a}{\td{O_1}{\psi_2}}}
{\picl{a}{\td{A}{\psi_1\psi_2}}{\td{B}{\psi_1\psi_2}}}$ so
$\eqtm[\Psi_2]{\oft{a}{\td{A}{\psi_1\psi_2}}}
{\td{O_1}{\psi_2}}{O_{12}}{\td{B}{\psi_1\psi_2}}$. Thus
\[
\app{\td{M}{\psi_1\psi_2}}{\td{N}{\psi_1\psi_2}}
\steps^* \app{\lam{a}{O_{12}}}{\td{N}{\psi_1\psi_2}}
\steps \subst{O_{12}}{\td{N}{\psi_1\psi_2}}{a}
\]
and
$\ceqtm[\Psi_2]
{\subst{\td{O_1}{\psi_2}}{\td{N}{\psi_1\psi_2}}{a}}
{\subst{O_{12}}{\td{N}{\psi_1\psi_2}}{a}}
{\td{\subst{B}{N}{a}}{\psi_1\psi_2}}$.
This implies
$\inper[\Psi_2]
{\subst{\td{O_1}{\psi_2}}{\td{N}{\psi_1\psi_2}}{a}}
{\subst{O_{12}}{\td{N}{\psi_1\psi_2}}{a}}
{\td{\subst{B}{N}{a}}{\psi_1\psi_2}}$, so
$\inper[\Psi_2]
{\td{X_1}{\psi_2}}
{\subst{O_{12}}{\td{N}{\psi_1\psi_2}}{a}}
{\td{\subst{B}{N}{a}}{\psi_1\psi_2}}$
follows by transitivity.

An analogous argument shows $\app{M'}{N'}$ also has coherent aspects.
To see that the aspects of $\app{M}{N}$ and $\app{M'}{N'}$ are related to each
other, we use the fact that
$\td{M'}{\psi_1\psi_2} \evals \lam{a}{O'_{12}}$ and
$\eqtm[\Psi_2]{\oft{a}{\td{A}{\psi_1\psi_2}}}
{O_{12}}{O'_{12}}{\td{B}{\psi_1\psi_2}}$, so
$\app{\td{M'}{\psi_1\psi_2}}{\td{N'}{\psi_1\psi_2}} \steps^*
\subst{O'_{12}}{\td{N'}{\psi_1\psi_2}}{a}$ and
$\ceqtm[\Psi_2]
{\subst{O_{12}}{\td{N}{\psi_1\psi_2}}{a}}
{\subst{O'_{12}}{\td{N'}{\psi_1\psi_2}}{a}}
{\td{\subst{B}{N}{a}}{\psi_1\psi_2}}$.

\paragraph{Computation}
If $\oftype{\oft aA}{M}{B}$ and $\coftype{N}{A}$, then
$\ceqtm{\app{\lam{a}{M}}{N}}{\subst{M}{N}{a}}{\subst{B}{N}{a}}$.

That $\coftype{\subst{M}{N}{a}}{\subst{B}{N}{a}}$ follows from the definition of
$\oftype{\oft aA}{M}{B}$, and the result follows by \cref{lem:expansion}.

\paragraph{Eta}
If $\coftype{M}{\picl{a}{A}{B}}$ then
$\ceqtm{M}{\lam{a}{\app{M}{a}}}{\picl{a}{A}{B}}$.

We start by proving $\coftype{\lam{a}{\app{M}{a}}}{\picl{a}{A}{B}}$.
By the introduction rule for dependent functions, it suffices to show
$\oftype{\oft{a}{A}}{\app{M}{a}}{B}$, which holds when for any $\psitd$
and $\ceqtm[\Psi']{N}{N'}{\td{A}{\psi}}$,
$\ceqtm[\Psi']{\app{\td{M}{\psi}}{N}}{\app{\td{M}{\psi}}{N'}}
{\subst{\td{B}{\psi}}{N}{a}}$.
This follows from the elimination rule and
$\coftype[\Psi']{\td{M}{\psi}}{\picl{a}{\td{A}{\psi}}{\td{B}{\psi}}}$.

The eta rule will now follow from \cref{lem:coftype-ceqtm} once we show that for
any $\psitd$,
$\inper[\Psi']{\td{M}{\psi}}{\lam{a}{\app{\td{M}{\psi}}{a}}}
{\picl{a}{\td{A}{\psi}}{\td{B}{\psi}}}$.
Since $\td{M}{\psi} \evals \lam{a}{O}$, this follows when
$\eqtm[\Psi']{\oft{a}{\td{A}{\psi}}}{O}{\app{\td{M}{\psi}}{a}}{\td{B}{\psi}}$,
that is, when for any $\msubsts{\Psi''}{\psi'}{\Psi'}$ and
$\ceqtm[\Psi'']{N}{N'}{\td{A}{\psi\psi'}}$,
$\ceqtm[\Psi'']{\subst{\td{O}{\psi'}}{N}{a}}
{\app{\td{M}{\psi\psi'}}{N'}}{\subst{\td{B}{\psi\psi'}}{N}{a}}$.

Both sides have this type (because
$\oftype[\Psi']{\oft{a}{\td{A}{\psi}}}{O}{\td{B}{\psi}}$),
so by \cref{lem:coftype-ceqtm} it suffices to show
that for any $\msubsts{\Psi'''}{\psi''}{\Psi''}$,
$\inper[\Psi''']{\subst{\td{O}{\psi'\psi''}}{\td{N}{\psi''}}{a}}
{\app{\td{M}{\psi\psi'\psi''}}{\td{N'}{\psi''}}}
{\td{\subst{\td{B}{\psi\psi'}}{N}{a}}{\psi''}}$.
By $\coftype{M}{\picl{a}{A}{B}}$,
$\td{M}{\psi\psi'\psi''}\evals \lam{a}{O''}$ and
$\eqtm[\Psi''']{\oft{a}{\td{A}{\psi\psi'\psi''}}}{\td{O}{\psi'\psi''}}{O''}
{\td{B}{\psi\psi'\psi''}}$.
Thus $\app{\td{M}{\psi\psi'\psi''}}{\td{N'}{\psi''}} \steps^*
\app{\lam{a}{O''}}{\td{N'}{\psi''}} \steps
\subst{O''}{\td{N'}{\psi''}}{a}$, and
$\inper[\Psi''']
{\subst{\td{O}{\psi'\psi''}}{\td{N}{\psi''}}{a}}
{\subst{O''}{\td{N'}{\psi''}}{a}}
{\td{\subst{\td{B}{\psi\psi'}}{N}{a}}{\psi''}}$.

\paragraph{Kan}
$\ceqpretype{\picl{a}{A}{B}}{\picl{a}{A'}{B'}}$ are equally Kan.

For the first Kan condition, we must show that for any $\psitd$, if
\begin{enumerate}
\item $\ceqtm[\Psi']{M}{O}{\picl{a}{\td{A}{\psi}}{\td{B}{\psi}}}$,
\item $\ceqtmres[\Psi',y]{r_i=\e,r_j=\e'}{N^\e_i}{N^{\e'}_j}
{\picl{a}{\td{A}{\psi}}{\td{B}{\psi}}}$
for any $i,j,\e,\e'$,
\item $\ceqtmres[\Psi',y]{r_i=\e}{N^\e_i}{P^\e_i}
{\picl{a}{\td{A}{\psi}}{\td{B}{\psi}}}$
for any $i,\e$, and
\item $\ceqtmres[\Psi']{r_i=\e}{\dsubst{N^\e_i}{r}{y}}{M}
{\picl{a}{\td{A}{\psi}}{\td{B}{\psi}}}$
for any $i,\e$,
\end{enumerate}
then $\ceqtm[\Psi']{\hcomgeneric{\etc{r_i}}
  {\picl{a}{\td{A}{\psi}}{\td{B}{\psi}}}}
{\hcom{\etc{r_i}}{\picl{a}{\td{A'}{\psi}}{\td{B'}{\psi}}}
  {r}{r'}{O}{\etc{y.P^\e_i}}}
{\picl{a}{\td{A}{\psi}}{\td{B}{\psi}}}$.

By \cref{lem:expansion} on both sides, it suffices to show that
\[\begin{aligned}
& \lam{a}{\hcom{\etc{r_i}}{\td{B}{\psi}}{r}{r'}
  {\app{M}{a}}{\etc{y.\app{N^\e_i}{a}}}} \\
\ceqtmtab[\Psi']{}
{\lam{a}{\hcom{\etc{r_i}}{\td{B'}{\psi}}{r}{r'}
  {\app{O}{a}}{\etc{y.\app{P^\e_i}{a}}}}}
{\picl{a}{\td{A}{\psi}}{\td{B}{\psi}}}
\end{aligned}\]
By the introduction rule for dependent functions, it suffices to show
the bodies of these lambdas are
$\eqtm[\Psi']{\oft{a}{\td{A}{\psi}}}{-}{-}{\td{B}{\psi}}$.
That is, for any
$\msubsts{\Psi''}{\psi'}{\Psi'}$ and
$\ceqtm[\Psi'']{Q}{Q'}{\td{A}{\psi\psi'}}$,
\[\begin{aligned}
& \hcom{\etc{\td{r_i}{\psi'}}}
  {\subst{\td{B}{\psi\psi'}}{Q}{a}}
  {\td{r}{\psi'}}{\td{r'}{\psi'}}
{\app{\td{M}{\psi'}}{Q}}
{\etc{y.\app{\td{N^\e_i}{\psi'}}{Q}}} \\
\ceqtmtab[\Psi'']{}
{\hcom{\etc{\td{r_i}{\psi'}}}
  {\subst{\td{B'}{\psi\psi'}}{Q'}{a}}
  {\td{r}{\psi'}}{\td{r'}{\psi'}}
  {\app{\td{O}{\psi'}}{Q'}}
  {\etc{y.\app{\td{P^\e_i}{\psi'}}{Q'}}}}
{\subst{\td{B}{\psi\psi'}}{Q}{a}}
\end{aligned}\]
Since $\eqtype{\oft{a}{A}}{B}{B'}$, we know
$\ceqtype[\Psi'']
{\subst{\td{B}{\psi\psi'}}{Q}{a}}
{\subst{\td{B'}{\psi\psi'}}{Q'}{a}}$ so these types are equally Kan, and thus
the above $\hcomsym$s are equal so long as:

\begin{enumerate}
\item $\ceqtm[\Psi'']
{\app{\td{M}{\psi'}}{Q}}{\app{\td{O}{\psi'}}{Q'}}
{\subst{\td{B}{\psi\psi'}}{Q}{a}}$,

\item $\ceqtmres[\Psi'',y]{\td{r_i}{\psi'}=\e,\td{r_j}{\psi'}=\e'}
{\app{\td{N^\e_i}{\psi'}}{Q}}{\app{\td{N^{\e'}_j}{\psi'}}{Q}}
{\subst{\td{B}{\psi\psi'}}{Q}{a}}$
for any $i,j,\e,\e'$,

\item $\ceqtmres[\Psi'',y]{\td{r_i}{\psi'}=\e}
{\app{\td{N^\e_i}{\psi'}}{Q}}{\app{\td{P^\e_i}{\psi'}}{Q'}}
{\subst{\td{B}{\psi\psi'}}{Q}{a}}$
for any $i,\e$, and

\item $\ceqtmres[\Psi'']{\td{r_i}{\psi'}=\e}
{\app{\td{\dsubst{N^\e_i}{r}{y}}{\psi'}}{Q}}{\app{\td{M}{\psi'}}{Q}}
{\subst{\td{B}{\psi\psi'}}{Q}{a}}$
for any $i,\e$ (since $y\fresh Q$).
\end{enumerate}
These follow from our hypotheses and the elimination rule for dependent
functions; the context-restricted judgments follow from the fact that dimension
substitutions push into the subterms of $\app--$. For example,
$\ceqtmres[\Psi'',y]{\td{r_i}{\psi'}=\e}
{\app{\td{N^\e_i}{\psi'}}{Q}}{\app{\td{P^\e_i}{\psi'}}{Q'}}
{\subst{\td{B}{\psi\psi'}}{Q}{a}}$
means that for any $\msubsts{\Psi'''}{\psi''}{(\Psi'',y)}$ such that
$\td{r_i}{\psi'\psi''}=\e$,
$\ceqtm[\Psi''']
{\app{\td{N^\e_i}{\psi'\psi''}}{\td{Q}{\psi''}}}
{\app{\td{P^\e_i}{\psi'\psi''}}{\td{Q'}{\psi''}}}
{\td{\subst{\td{B}{\psi\psi'}}{Q}{a}}{\psi''}}$.
We know that
$\ceqtmres[\Psi',y]{r_i=\e}{N^\e_i}{P^\e_i}
{\picl{a}{\td{A}{\psi}}{\td{B}{\psi}}}$
and $\psi'\psi''$ satisfies $r_i=\e$, so
$\ceqtm[\Psi''']{\td{N^\e_i}{\psi'\psi''}}{\td{P^\e_i}{\psi'\psi''}}
{\picl{a}{\td{A}{\psi\psi'\psi''}}{\td{B}{\psi\psi'\psi''}}}$,
$\ceqtm[\Psi''']{\td{Q}{\psi''}}{\td{Q'}{\psi''}}{\td{A}{\psi\psi'\psi''}}$,
and the elimination rule applies.

The second Kan condition requires that for any $\psitd$, if
\begin{enumerate}
\item $\coftype[\Psi']{M}{\picl{a}{\td{A}{\psi}}{\td{B}{\psi}}}$,
\item $\ceqtmres[\Psi',y]{r_i=\e,r_j=\e'}{N^\e_i}{N^{\e'}_j}
{\picl{a}{\td{A}{\psi}}{\td{B}{\psi}}}$
for any $i,j,\e,\e'$, and
\item $\ceqtmres[\Psi']{r_i=\e}{\dsubst{N^\e_i}{r}{y}}{M}
{\picl{a}{\td{A}{\psi}}{\td{B}{\psi}}}$
for any $i,\e$,
\end{enumerate}
then $\ceqtm[\Psi']{\hcom{\etc{r_i}}{\picl{a}{\td{A}{\psi}}{\td{B}{\psi}}}
{r}{r}{M}{\etc{y.N^\e_i}}}
{M}{\picl{a}{\td{A}{\psi}}{\td{B}{\psi}}}$.

By \cref{lem:expansion}, it suffices to show
\[
\ceqtm[\Psi']
{\lam{a}{\hcom{\etc{r_i}}{\td{B}{\psi}}{r}{r}
  {\app{M}{a}}{\etc{y.\app{N^\e_i}{a}}}}}
{M}{\picl{a}{\td{A}{\psi}}{\td{B}{\psi}}}
\]
By the eta and introduction rules for dependent functions, it suffices to show
that for any $\msubsts{\Psi''}{\psi'}{\Psi'}$ and
$\ceqtm[\Psi'']{O}{O'}{\td{A}{\psi\psi'}}$,
\[
\ceqtm[\Psi'']
{\hcom{\etc{\td{r_i}{\psi'}}}
{\subst{\td{B}{\psi\psi'}}{O}{a}}{\td{r}{\psi'}}{\td{r}{\psi'}}
  {\app{\td{M}{\psi'}}{O}}{\etc{y.\app{\td{N^\e_i}{\psi'}}{O}}}}
{\app{\td{M}{\psi'}}{O'}}
{\subst{\td{B}{\psi}}{O}{a}}
\]
We apply the second Kan condition of
$\cwftype[\Psi'']{\subst{\td{B}{\psi\psi'}}{O}{a}}$, deriving its hypotheses
from the elimination rule for dependent functions, as in the previous case.
We conclude the above $\hcomsym$ is equal to $\app{\td{M}{\psi'}}{O}$, which is
equal to $\app{\td{M}{\psi'}}{O'}$ again by the elimination rule.

The third Kan condition asserts that for any $\psitd$,
if $r_i = \e$ for some $i$,
\begin{enumerate}
\item $\coftype[\Psi']{M}{\picl{a}{\td{A}{\psi}}{\td{B}{\psi}}}$,
\item $\ceqtmres[\Psi',y]{r_i=\e,r_j=\e'}{N^\e_i}{N^{\e'}_j}
{\picl{a}{\td{A}{\psi}}{\td{B}{\psi}}}$
for any $i,j,\e,\e'$, and
\item $\ceqtmres[\Psi']{r_i=\e}{\dsubst{N^\e_i}{r}{y}}{M}
{\picl{a}{\td{A}{\psi}}{\td{B}{\psi}}}$
for any $i,\e$,
\end{enumerate}
then $\ceqtm[\Psi']{\hcom{\etc{r_i}}{\picl{a}{\td{A}{\psi}}{\td{B}{\psi}}}
{r}{r'}{M}{\etc{y.N^\e_i}}}
{\dsubst{N^\e_i}{r'}{y}}{\picl{a}{\td{A}{\psi}}{\td{B}{\psi}}}$.

As in the previous case, by \cref{lem:expansion} and the eta and introduction
rules for dependent functions, it suffices to show that for any
$\msubsts{\Psi''}{\psi'}{\Psi'}$ and $\ceqtm[\Psi'']{O}{O'}{\td{A}{\psi\psi'}}$,
\[
\ceqtm[\Psi'']
{\hcom{\etc{\td{r_i}{\psi'}}}
{\subst{\td{B}{\psi\psi'}}{O}{a}}{\td{r}{\psi'}}{\td{r'}{\psi'}}
  {\app{\td{M}{\psi'}}{O}}{\etc{y.\app{\td{N^\e_i}{\psi'}}{O}}}}
{\app{\td{\dsubst{N^\e_i}{r'}{y}}{\psi'}}{O'}}
{\subst{\td{B}{\psi}}{O}{a}}
\]
By the third Kan condition of
$\cwftype[\Psi'']{\subst{\td{B}{\psi\psi'}}{O}{a}}$ and the elimination rule,
the above $\hcomsym$ is equal to $\app{\td{\dsubst{N^\e_i}{r'}{y}}{\psi'}}{O}$,
which is equal to $\app{\td{\dsubst{N^\e_i}{r'}{y}}{\psi'}}{O'}$ by the
elimination rule.

The fourth Kan condition asserts that for any
$\msubsts{(\Psi',x)}{\psi}{\Psi}$, if
\[
\ceqtm[\Psi']{M}{M'}
{\picl{a}{\dsubst{\td{A}{\psi}}{r}{x}}{\dsubst{\td{B}{\psi}}{r}{x}}}
\]
then
\[
\ceqtm[\Psi']
{\coe{x.\picl{a}{\td{A}{\psi}}{\td{B}{\psi}}}{r}{r'}{M}}
{\coe{x.\picl{a}{\td{A'}{\psi}}{\td{B'}{\psi}}}{r}{r'}{M'}}
{\picl{a}{\dsubst{\td{A}{\psi}}{r'}{x}}{\dsubst{\td{B}{\psi}}{r'}{x}}}.
\]

By \cref{lem:expansion} on both sides and the introduction rule for dependent
functions, it suffices to show that for any $\msubsts{\Psi''}{\psi'}{\Psi'}$ and
$\ceqtm[\Psi'']{N}{N'}{\dsubst{\td{A}{\psi\psi'}}{\td{r'}{\psi'}}{x}}$,
\[\begin{aligned}
& \coe{x.\subst{\td{B}{\psi\psi'}}{\coe{x.\td{A}{\psi\psi'}}{\td{r'}{\psi'}}{x}{N}}{a}}{\td{r}{\psi'}}{\td{r'}{\psi'}}
  {\app{\td{M}{\psi'}}{\coe{x.\td{A}{\psi\psi'}}{\td{r'}{\psi'}}{\td{r}{\psi'}}{N}}} \\
\ceqtmtab[\Psi'']
{}
{\coe{x.\subst{\td{B'}{\psi\psi'}}{\coe{x.\td{A'}{\psi\psi'}}{\td{r'}{\psi'}}{x}{N'}}{a}}{\td{r}{\psi'}}{\td{r'}{\psi'}}
  {\app{\td{M'}{\psi'}}{\coe{x.\td{A'}{\psi\psi'}}{\td{r'}{\psi'}}{\td{r}{\psi'}}{N'}}}}
{\subst{\dsubst{\td{B}{\psi\psi'}}{\td{r'}{\psi'}}{x}}{N}{a}}.
\end{aligned}\]
By the fourth Kan condition of $\ceqtype{A}{A'}$,
$\ceqtm[\Psi'',x]
{\coe{x.\td{A}{\psi\psi'}}{\td{r'}{\psi'}}{x}{N}}
{\coe{x.\td{A'}{\psi\psi'}}{\td{r'}{\psi'}}{x}{N'}}
{\td{A}{\psi\psi'}}$, so
$\ceqtype[\Psi'',x]
{\subst{\td{B}{\psi\psi'}}{\coe{x.\td{A}{\psi\psi'}}{\td{r'}{\psi'}}{x}{N}}{a}}
{\subst{\td{B'}{\psi\psi'}}{\coe{x.\td{A'}{\psi\psi'}}{\td{r'}{\psi'}}{x}{N'}}{a}}$.
By the fourth Kan condition of these types and of $\ceqtype{A}{A'}$, and the
elimination rule for dependent functions, the $\coesym$ of interest above are
$\ceqtm[\Psi'']{-}{-}
{\dsubst{\subst{\td{B}{\psi\psi'}}{\coe{x.\td{A}{\psi\psi'}}{\td{r'}{\psi'}}{x}{N}}{a}}{\td{r'}{\psi'}}{x}}$.
By the fifth Kan condition of $\ceqtype{A}{A'}$,
$\ceqtm[\Psi'']
{\coe{x.\td{A}{\psi\psi'}}{\td{r'}{\psi'}}{\td{r'}{\psi'}}{N}}
{N}{\dsubst{\td{A}{\psi\psi'}}{\td{r'}{\psi'}}{x}}$, so
$\ceqtype[\Psi'']
{\dsubst{\subst{\td{B}{\psi\psi'}}{\coe{x.\td{A}{\psi\psi'}}{\td{r'}{\psi'}}{x}{N}}{a}}{\td{r'}{\psi'}}{x}}
{\subst{\dsubst{\td{B}{\psi\psi'}}{\td{r'}{\psi'}}{x}}{N}{a}}$ and the result
follows.

The fifth Kan condition asserts that for any
$\msubsts{(\Psi',x)}{\psi}{\Psi}$, if
$\coftype[\Psi']{M}
{\picl{a}{\dsubst{\td{A}{\psi}}{r}{x}}{\dsubst{\td{B}{\psi}}{r}{x}}}$,
then
$\ceqtm[\Psi']
{\coe{x.\picl{a}{\td{A}{\psi}}{\td{B}{\psi}}}{r}{r}{M}}{M}
{\picl{a}{\dsubst{\td{A}{\psi}}{r}{x}}{\dsubst{\td{B}{\psi}}{r}{x}}}$.

By \cref{lem:expansion} and the eta and introduction rules for dependent
functions, it suffices to show that for any $\msubsts{\Psi''}{\psi'}{\Psi'}$ and
$\ceqtm[\Psi'']{N}{N'}{\dsubst{\td{A}{\psi\psi'}}{\td{r}{\psi'}}{x}}$,
\[
\ceqtm[\Psi'']
{\coe{x.\subst{\td{B}{\psi\psi'}}{\coe{x.\td{A}{\psi\psi'}}{\td{r}{\psi'}}{x}{N}}{a}}{\td{r}{\psi'}}{\td{r}{\psi'}}
  {\app{\td{M}{\psi'}}{\coe{x.\td{A}{\psi\psi'}}{\td{r}{\psi'}}{\td{r}{\psi'}}{N}}}}
{\app{\td{M}{\psi'}}{N'}}
{\subst{\dsubst{\td{B}{\psi\psi'}}{\td{r}{\psi'}}{x}}{N}{a}}.
\]
As above, by the fourth Kan condition of $\ceqtype{A}{A'}$,
$\coftype[\Psi'',x]
{\coe{x.\td{A}{\psi\psi'}}{\td{r}{\psi'}}{x}{N}}
{\td{A}{\psi\psi'}}$, so
$\cwftype[\Psi'',x]
{\subst{\td{B}{\psi\psi'}}{\coe{x.\td{A}{\psi\psi'}}{\td{r}{\psi'}}{x}{N}}{a}}$.
By the fifth Kan condition of $\ceqtype{A}{A'}$ and transitivity of $\eq$,
$\ceqtm[\Psi'']
{\coe{x.\td{A}{\psi\psi'}}{\td{r}{\psi'}}{\td{r}{\psi'}}{N}}{N'}
{\dsubst{\td{A}{\psi\psi'}}{\td{r}{\psi'}}{x}}$.
Then by the fifth Kan condition of
$\cwftype[\Psi'',x]
{\subst{\td{B}{\psi\psi'}}{\coe{x.\td{A}{\psi\psi'}}{\td{r}{\psi'}}{x}{N}}{a}}$
and the elimination rule for dependent functions, the $\coesym$ on the left-hand
side above is
$\ceqtm[\Psi'']{-}
{\app{\td{M}{\psi'}}{N'}}
{\dsubst{\subst{\td{B}{\psi\psi'}}{\coe{x.\td{A}{\psi\psi'}}{\td{r}{\psi'}}{x}{N}}{a}}{\td{r}{\psi'}}{x}}$,
and this type is again equal to
$\subst{\dsubst{\td{B}{\psi\psi'}}{\td{r}{\psi'}}{x}}{N}{a}$ so the result
follows.

\paragraph{Cubical}
For any $\psitd$ and
$\vinper[\Psi']{M}{M'}{\picl{a}{\td{A}{\psi}}{\td{B}{\psi}}}$, we have
$\ceqtm[\Psi']{M}{M'}{\picl{a}{\td{A}{\psi}}{\td{B}{\psi}}}$.

Then $M = \lam{a}{N}$, $M' = \lam{a}{N'}$, and
$\eqtm[\Psi']{\oft{a}{\td{A}{\psi}}}{N}{N'}{\td{B}{\psi}}$,
and the result follows from the introduction rule for functions.

%%%%%%%%%%%%%%%%%%%%%%%%%%%%%%%%%%%%%%%%%%%%%%%%%%%%%%%%%%%%%%%%%%%%%%%%%%%%%%%%
\subsection{Dependent pairs}

If a cubical type system has $\ceqtype{A}{A'}$ and
$\eqtype{\oft{a}{A}}{B}{B'}$, we say it
\emph{has their dependent pair type} when for all $\psitd$,
$\veqper[\Psi']{\sigmacl{a}{\td{A}{\psi}}{\td{B}{\psi}}}
{\sigmacl{a}{\td{A'}{\psi}}{\td{B'}{\psi}}}$, and
$\vinper[\Psi']{-}{-}{\sigmacl{a}{\td{A}{\psi}}{\td{B}{\psi}}}$
is the least relation such that
\[
\vinper[\Psi']{\pair{M}{N}}{\pair{M'}{N'}}
{\sigmacl{a}{\td{A}{\psi}}{\td{B}{\psi}}}
\]
when $\ceqtm[\Psi']{M}{M'}{\td{A}{\psi}}$ and
$\ceqtm[\Psi']{N}{N'}{\subst{\td{B}{\psi}}{M}{a}}$.

This relation is symmetric because
$\ceqtype[\Psi']{\subst{\td{B}{\psi}}{M}{a}}{\subst{\td{B}{\psi}}{M'}{a}}$, so
$\ceqtm[\Psi']{N'}{N}{\subst{\td{B}{\psi}}{M'}{a}}$.
The PERs
$\vinper[\Psi']{-}{-}{\sigmacl{a}{\td{A}{\psi}}{\td{B}{\psi}}}$ and
$\vinper[\Psi']{-}{-}{\sigmacl{a}{\td{A'}{\psi}}{\td{B'}{\psi}}}$ are equal
because $\ceqtm[\Psi']{M}{M'}{\td{A}{\psi}}$ if and only if
$\ceqtm[\Psi']{M}{M'}{\td{A'}{\psi}}$, and
$\ceqtm[\Psi']{N}{N'}{\subst{\td{B}{\psi}}{M}{a}}$ if and only if
$\ceqtm[\Psi']{N}{N'}{\subst{\td{B'}{\psi}}{M}{a}}$.

In the remainder of this subsection, we assume we are working with a cubical
type system that has $\ceqtype{A}{A'}$, $\eqtype{\oft aA}{B}{B'}$, and their
dependent pair type.

\paragraph{Pretype}
$\ceqpretype{\sigmacl{a}{A}{B}}{\sigmacl{a}{A'}{B'}}$.

For any $\msubsts{\Psi_1}{\psi_1}{\Psi}$ and $\msubsts{\Psi_2}{\psi_2}{\Psi_1}$,
$\veqper[\Psi_2]
{\sigmacl{a}{\td{A}{\psi_1\psi_2}}{\td{B}{\psi_1\psi_2}}}
{\sigmacl{a}{\td{A'}{\psi_1\psi_2}}{\td{B'}{\psi_1\psi_2}}}$ because the cubical
type system has this dependent pair type.

\paragraph{Introduction}
If $\ceqtm{M}{M'}{A}$ and $\ceqtm{N}{N'}{\subst{B}{M}{a}}$, then
$\ceqtm{\pair{M}{N}}{\pair{M'}{N'}}{\sigmacl{a}{A}{B}}$.

Each side has coherent aspects up to syntactic equality, since
$\wfval[\Psi']{\pair{\td{M}{\psi}}{\td{N}{\psi}}}$ for all $\psitd$.
Thus it suffices to show
$\vinper[\Psi_2]
{\pair{\td{M}{\psi_1\psi_2}}{\td{N}{\psi_1\psi_2}}}
{\pair{\td{M'}{\psi_1\psi_2}}{\td{N'}{\psi_1\psi_2}}}
{\sigmacl{a}{\td{A}{\psi_1\psi_2}}{\td{B}{\psi_1\psi_2}}}$,
which is true because
$\ceqtm[\Psi_2]{\td{M}{\psi_1\psi_2}}{\td{M'}{\psi_1\psi_2}}
{\td{A}{\psi_1\psi_2}}$ and
$\ceqtm[\Psi_2]{\td{N}{\psi_1\psi_2}}{\td{N'}{\psi_1\psi_2}}
{\subst{\td{B}{\psi_1\psi_2}}{\td{M}{\psi_1\psi_2}}{a}}$.

\paragraph{Elimination}
If $\ceqtm{M}{M'}{\sigmacl{a}{A}{B}}$, then
$\ceqtm{\fst{M}}{\fst{M'}}{A}$ and
$\ceqtm{\snd{M}}{\snd{M'}}{\subst{B}{\fst{M}}{a}}$.

For any $\msubsts{\Psi_1}{\psi_1}{\Psi}$ and $\msubsts{\Psi_2}{\psi_2}{\Psi_1}$,
we know $\td{M}{\psi_1}\evals \pair{O_1}{P_1}$ where
$\coftype[\Psi_1]{O_1}{\td{A}{\psi_1}}$ and
$\coftype[\Psi_1]{P_1}{\subst{\td{B}{\psi_1}}{O_1}{a}}$.
Thus $\fst{\td{M}{\psi_1}}\steps^* \fst{\pair{O_1}{P_1}} \steps O_1$
where $O_1 \evals X_1$ and
$\inper[\Psi_2]{\td{X_1}{\psi_2}}{\td{O_1}{\psi_2}}{\td{A}{\psi_1\psi_2}}$.
We also know
$\vinper[\Psi_2]
{\td{M}{\psi_1\psi_2}\evals \pair{O_{12}}{P_{12}}}
{\pair{\td{O_1}{\psi_2}}{\td{P_1}{\psi_2}}}
{\sigmacl{a}{\td{A}{\psi_1\psi_2}}{\td{B}{\psi_1\psi_2}}}$, so
$\ceqtm[\Psi_2]{O_{12}}{\td{O_1}{\psi_2}}{\td{A}{\psi_1\psi_2}}$ and
$\ceqtm[\Psi_2]{P_{12}}{\td{P_1}{\psi_2}}
{\subst{\td{B}{\psi_1\psi_2}}{O_{12}}{a}}$.
Thus $\fst{\td{M}{\psi_1\psi_2}}\steps^* \fst{\pair{O_{12}}{P_{12}}}
\steps O_{12}$.
We want to show
$\inper[\Psi_2]{\td{X_1}{\psi_2}}{O_{12}}{\td{A}{\psi_1\psi_2}}$,
which follows from
$\inper[\Psi_2]{O_{12}}{\td{O_1}{\psi_2}}{\td{A}{\psi_1\psi_2}}$ and
transitivity.

An analogous argument shows $\fst{M'}$ also has coherent aspects. To see that
the aspects of $\fst{M}$ and $\fst{M'}$ are related to each other,
we use the fact that $\td{M'}{\psi_1\psi_2} \evals \pair{O_{12}'}{P_{12}'}$ and
$\ceqtm[\Psi_2]{O_{12}}{O_{12}'}{\td{A}{\psi_1}}$, so
$\fst{\td{M'}{\psi_1\psi_2}} \steps^* O_{12}'$ and
$\inper[\Psi_2]{O_{12}}{O_{12}'}{\td{A}{\psi_1\psi_2}}$.

For the second elimination rule,
$\snd{\td{M}{\psi_1}}\steps^* \snd{\pair{O_1}{P_1}} \steps P_1$
where $P_1 \evals Y_1$ and
$\inper[\Psi_2]{\td{Y_1}{\psi_2}}{\td{P_1}{\psi_2}}
{\subst{\td{B}{\psi_1\psi_2}}{\td{O_1}{\psi_2}}{a}}$; and
$\snd{\td{M}{\psi_1\psi_2}}\steps^* \snd{\pair{O_{12}}{P_{12}}}
\steps P_{12}$.
We want to show
$\inper[\Psi_2]{\td{Y_1}{\psi_2}}{P_{12}}
{\subst{\td{B}{\psi_1\psi_2}}{\fst{\td{M}{\psi_1\psi_2}}}{a}}$,
which follows from
$\inper[\Psi_2]{P_{12}}{\td{P_1}{\psi_2}}
{\subst{\td{B}{\psi_1\psi_2}}{O_{12}}{a}}$ and transitivity
once we show
$\eqper[\Psi_2]
{\subst{\td{B}{\psi_1\psi_2}}{O_{12}}{a}}
{\subst{\td{B}{\psi_1\psi_2}}{\td{O_1}{\psi_2}}{a}}$ and
$\eqper[\Psi_2]
{\subst{\td{B}{\psi_1\psi_2}}{O_{12}}{a}}
{\subst{\td{B}{\psi_1\psi_2}}{\fst{\td{M}{\psi_1\psi_2}}}{a}}$.

The former holds by
$\ceqtm[\Psi_2]{O_{12}}{\td{O_1}{\psi_2}}{\td{A}{\psi_1\psi_2}}$,
and the latter holds by
$\ceqtm[\Psi_2]{O_{12}}{\fst{\td{M}{\psi_1\psi_2}}}{\td{A}{\psi_1\psi_2}}$,
which follows from \cref{lem:coftype-ceqtm} because both sides are
$\coftype[\Psi_2]{-}{\td{A}{\psi_1\psi_2}}$ (using the first elimination rule)
and for any $\msubsts{\Psi'}{\psi}{\Psi_2}$,
$\inper[\Psi']{\td{O_{12}}{\psi}}
{\fst{\td{M}{\psi_1\psi_2\psi}}}{\td{A}{\psi_1\psi_2\psi}}$
(since $\fst{\td{M}{\psi_1\psi_2\psi}} \steps^* O$ where
$\ceqtm[\Psi']{\td{O_{12}}{\psi}}{O}{\td{A}{\psi_1\psi_2\psi}}$).
The aspects of $\snd{M'}$ are coherent and related to those of $\snd{M}$ by
the same argument as for $\fst{-}$.

\paragraph{Computation}
If $\coftype{M}{A}$ and $\coftype{N}{\subst{B}{M}{a}}$, then
$\ceqtm{\fst{\pair{M}{N}}}{M}{A}$ and
$\ceqtm{\snd{\pair{M}{N}}}{N}{\subst{B}{M}{a}}$.

Both follow from \cref{lem:expansion}.

\paragraph{Eta}
If $\coftype{M}{\sigmacl{a}{A}{B}}$, then
$\ceqtm{M}{\pair{\fst{M}}{\snd{M}}}{\sigmacl{a}{A}{B}}$.

The elimination and introduction rules for dependent pairs imply that
$\coftype{\pair{\fst{M}}{\snd{M}}}{\sigmacl{a}{A}{B}}$.
Thus by \cref{lem:coftype-ceqtm} it suffices to show that for any $\psitd$,
$\inper[\Psi']{\td{M}{\psi}}{\pair{\fst{\td{M}{\psi}}}{\snd{\td{M}{\psi}}}}
{\sigmacl{a}{\td{A}{\psi}}{\td{B}{\psi}}}$.
Since $\td{M}{\psi}\evals\pair{O}{P}$, this follows when
$\ceqtm[\Psi']{O}{\fst{\td{M}{\psi}}}{\td{A}{\psi}}$ and
$\ceqtm[\Psi']{P}{\snd{\td{M}{\psi}}}{\subst{\td{B}{\psi}}{O}{a}}$.

To show the former, we apply \cref{lem:coftype-ceqtm} and show that for any
$\msubsts{\Psi''}{\psi'}{\Psi'}$,
$\inper[\Psi'']{\td{O}{\psi'}}{\fst{\td{M}{\psi\psi'}}}{\td{A}{\psi\psi'}}$.
By $\coftype{M}{\sigmacl{a}{A}{B}}$,
$\td{M}{\psi\psi'}\evals\pair{O'}{P'}$ where
$\ceqtm[\Psi'']{O'}{\td{O}{\psi'}}{\td{A}{\psi\psi'}}$ and
$\ceqtm[\Psi'']{P'}{\td{P}{\psi'}}{\subst{\td{B}{\psi\psi'}}{O'}{a}}$.
Thus $\fst{\td{M}{\psi\psi'}} \steps^* \fst{\pair{O'}{P'}} \steps O'$ and
$\inper[\Psi'']{O'}{\td{O}{\psi'}}{\td{A}{\psi\psi'}}$.

To show the latter, we apply \cref{lem:coftype-ceqtm} and show that for any
$\msubsts{\Psi''}{\psi'}{\Psi'}$,
$\inper[\Psi'']{\td{P}{\psi'}}{\snd{\td{M}{\psi\psi'}}}
{\subst{\td{B}{\psi\psi'}}{\td{O}{\psi'}}{a}}$.
But $\snd{\td{M}{\psi\psi'}} \steps^* \snd{\pair{O'}{P'}} \steps P'$ and
$\inper[\Psi'']{P'}{\td{P}{\psi'}}{\subst{\td{B}{\psi\psi'}}{O'}{a}}$.

\paragraph{Kan}
$\ceqpretype{\sigmacl{a}{A}{B}}{\sigmacl{a}{A'}{B'}}$ are equally Kan.

For the first Kan condition, we must show that for any $\psitd$, if
\begin{enumerate}
\item $\ceqtm[\Psi']{M}{O}{\sigmacl{a}{\td{A}{\psi}}{\td{B}{\psi}}}$,
\item $\ceqtmres[\Psi',y]{r_i=\e,r_j=\e'}{N^\e_i}{N^{\e'}_j}
{\sigmacl{a}{\td{A}{\psi}}{\td{B}{\psi}}}$
for any $i,j,\e,\e'$,
\item $\ceqtmres[\Psi',y]{r_i=\e}{N^\e_i}{P^\e_i}
{\sigmacl{a}{\td{A}{\psi}}{\td{B}{\psi}}}$
for any $i,\e$, and
\item $\ceqtmres[\Psi']{r_i=\e}{\dsubst{N^\e_i}{r}{y}}{M}
{\sigmacl{a}{\td{A}{\psi}}{\td{B}{\psi}}}$
for any $i,\e$,
\end{enumerate}
then $\ceqtm[\Psi']{\hcomgeneric{\etc{r_i}}
  {\sigmacl{a}{\td{A}{\psi}}{\td{B}{\psi}}}}
{\hcom{\etc{r_i}}{\sigmacl{a}{\td{A'}{\psi}}{\td{B'}{\psi}}}
  {r}{r'}{O}{\etc{y.P^\e_i}}}
{\picl{a}{\td{A}{\psi}}{\td{B}{\psi}}}$.

By \cref{lem:expansion} on both sides, it suffices to show that
\begin{small}\[\begin{aligned}
&\pair{\hcom{\etc{r_i}}{\td{A}{\psi}}{r}{r'}{\fst{M}}{\etc{y.\fst{N^\e_i}}}}
      {\com{\etc{r_i}}{z.\subst{\td{B}{\psi}}{F}{a}}
        {r}{r'}{\snd{M}}{\etc{y.\snd{N^\e_i}}}} \\
\ceqtmtab[\Psi']{}
{\pair{\hcom{\etc{r_i}}{\td{A'}{\psi}}{r}{r'}{\fst{O}}{\etc{y.\fst{P^\e_i}}}}
      {\com{\etc{r_i}}{z.\subst{\td{B'}{\psi}}{F'}{a}}
        {r}{r'}{\snd{O}}{\etc{y.\snd{P^\e_i}}}}}
{\sigmacl{a}{\td{A}{\psi}}{\td{B}{\psi}}}
\end{aligned}\]\end{small}
where
\[\begin{aligned}
F &= \hcom{\etc{r_i}}{\td{A}{\psi}}{r}{z}{\fst{M}}{\etc{y.\fst{N^\e_i}}} \\
F' &= \hcom{\etc{r_i}}{\td{A'}{\psi}}{r}{z}{\fst{O}}{\etc{y.\fst{P^\e_i}}} \\
\end{aligned}\]
By the introduction rule for dependent pairs, it suffices to show the first
components are $\ceqtm[\Psi']{-}{-}{\td{A}{\psi}}$ and the second components are
$\ceqtm[\Psi']{-}{-}{\subst{\td{B}{\psi}}
{\hcom{\etc{r_i}}{\td{A}{\psi}}{r}{r'}{\fst{M}}{\etc{y.\fst{N^\e_i}}}}{a}}$.
The first components follow from the first Kan condition of
$\ceqtype[\Psi']{\td{A}{\psi}}{\td{A'}{\psi}}$ and the first elimination rule
for dependent pairs (again, deriving the context-restricted judgments by pushing
dimension substitutions into $\fst{-}$).
For the same reason, $\ceqtm[\Psi',z]{F}{F'}{\td{A}{\psi}}$.

The second components follow from \cref{thm:com} and
$\ceqtype[\Psi',z]
{\subst{\td{B}{\psi}}{F}{a}}
{\subst{\td{B'}{\psi}}{F'}{a}}$: it suffices to show
\begin{enumerate}
\item $\ceqtm[\Psi']{\snd{M}}{\snd{O}}
{\subst{\td{B}{\psi}}{\dsubst{F}{r}{z}}{a}}$,
\item $\ceqtmres[\Psi',y]{r_i=\e,r_j=\e'}{\snd{N^\e_i}}{\snd{N^{\e'}_j}}
{\subst{\td{B}{\psi}}{\dsubst{F}{y}{z}}{a}}$
for any $i,j,\e,\e'$,
\item $\ceqtmres[\Psi',y]{r_i=\e}{\snd{N^\e_i}}{\snd{P^\e_i}}
{\subst{\td{B}{\psi}}{\dsubst{F}{y}{z}}{a}}$
for any $i,\e$, and
\item $\ceqtmres[\Psi']{r_i=\e}{\snd{\dsubst{N^\e_i}{r}{y}}}{\snd{M}}
{\subst{\td{B}{\psi}}{\dsubst{F}{r}{z}}{a}}$
for any $i,\e$.
\end{enumerate}
because it implies
$\ceqtm[\Psi']{\comsym}{\comsym}{\subst{\td{B}{\psi}}{\dsubst{F}{r'}{z}}{a}}$,
and $\dsubst{F}{r'}{z} =
\hcom{\etc{r_i}}{\td{A}{\psi}}{r}{r'}{\fst{M}}{\etc{y.\fst{N^\e_i}}}$.
These all follow from the second elimination rule for dependent pairs,
$\ceqtm[\Psi']{\dsubst{F}{r}{z}}{\fst{M}}{\td{A}{\psi}}$
(by the second Kan condition of $\cwftype[\Psi']{\td{A}{\psi}}$),
and $\ceqtmres[\Psi',y]{r_i=\e}{\dsubst{F}{y}{z}}{\fst{N^\e_i}}{\td{A}{\psi}}$
(by the third Kan condition of $\cwftype[\Psi']{\td{A}{\psi}}$ and the
definition of context restriction).

The second Kan condition requires that for any $\psitd$, if
\begin{enumerate}
\item $\coftype[\Psi']{M}{\sigmacl{a}{\td{A}{\psi}}{\td{B}{\psi}}}$,
\item $\ceqtmres[\Psi',y]{r_i=\e,r_j=\e'}{N^\e_i}{N^{\e'}_j}
{\sigmacl{a}{\td{A}{\psi}}{\td{B}{\psi}}}$
for any $i,j,\e,\e'$, and
\item $\ceqtmres[\Psi']{r_i=\e}{\dsubst{N^\e_i}{r}{y}}{M}
{\sigmacl{a}{\td{A}{\psi}}{\td{B}{\psi}}}$
for any $i,\e$,
\end{enumerate}
then $\ceqtm[\Psi']{\hcom{\etc{r_i}}{\sigmacl{a}{\td{A}{\psi}}{\td{B}{\psi}}}
{r}{r}{M}{\etc{y.N^\e_i}}}
{M}{\sigmacl{a}{\td{A}{\psi}}{\td{B}{\psi}}}$.

By \cref{lem:expansion}, it suffices to show
\begin{small}\[
\ceqtm[\Psi']
{\pair{\hcom{\etc{r_i}}{\td{A}{\psi}}{r}{r}{\fst{M}}{\etc{y.\fst{N^\e_i}}}}
      {\com{\etc{r_i}}{z.\subst{\td{B}{\psi}}{F}{a}}
        {r}{r}{\snd{M}}{\etc{y.\snd{N^\e_i}}}}}
{M}{\sigmacl{a}{\td{A}{\psi}}{\td{B}{\psi}}}
\]\end{small}%
where $F = \hcom{\etc{r_i}}{\td{A}{\psi}}{r}{z}{\fst{M}}{\etc{y.\fst{N^\e_i}}}$.
By the eta and introduction rules for dependent pairs, it suffices to show
that
\[
\ceqtm[\Psi']{\fst{M}}
{\hcom{\etc{r_i}}{\td{A}{\psi}}{r}{r}{\fst{M}}{\etc{y.\fst{N^\e_i}}}}
{\td{A}{\psi}}
\]
which follows from the second Kan condition of $\cwftype[\Psi']{\td{A}{\psi}}$,
and
\[
\ceqtm[\Psi']
{\snd{M}}
{\com{\etc{r_i}}{z.\subst{\td{B}{\psi}}{F}{a}}
  {r}{r}{\snd{M}}{\etc{y.\snd{N^\e_i}}}}
{\subst{\td{B}{\psi}}{\fst{M}}{a}}
\]
which follows from \cref{thm:com}, deriving its hypotheses from the second
elimination rule for dependent pairs as in the previous case, and by
$\ceqtm[\Psi']{\dsubst{F}{r}{z}}{\fst{M}}{\td{A}{\psi}}$
(by the second Kan condition of $\cwftype[\Psi']{\td{A}{\psi}}$).

The third Kan condition asserts that for any $\psitd$,
if $r_i = \e$ for some $i$,
\begin{enumerate}
\item $\coftype[\Psi']{M}{\sigmacl{a}{\td{A}{\psi}}{\td{B}{\psi}}}$,
\item $\ceqtmres[\Psi',y]{r_i=\e,r_j=\e'}{N^\e_i}{N^{\e'}_j}
{\sigmacl{a}{\td{A}{\psi}}{\td{B}{\psi}}}$
for any $i,j,\e,\e'$, and
\item $\ceqtmres[\Psi']{r_i=\e}{\dsubst{N^\e_i}{r}{y}}{M}
{\sigmacl{a}{\td{A}{\psi}}{\td{B}{\psi}}}$
for any $i,\e$,
\end{enumerate}
then $\ceqtm[\Psi']{\hcom{\etc{r_i}}{\sigmacl{a}{\td{A}{\psi}}{\td{B}{\psi}}}
{r}{r'}{M}{\etc{y.N^\e_i}}}
{\dsubst{N^\e_i}{r'}{y}}{\sigmacl{a}{\td{A}{\psi}}{\td{B}{\psi}}}$.

As in the previous case, by \cref{lem:expansion}, it suffices to show
\begin{small}\[
\ceqtm[\Psi']
{\pair{\hcom{\etc{r_i}}{\td{A}{\psi}}{r}{r}{\fst{M}}{\etc{y.\fst{N^\e_i}}}}
      {\com{\etc{r_i}}{z.\subst{\td{B}{\psi}}{F}{a}}
        {r}{r}{\snd{M}}{\etc{y.\snd{N^\e_i}}}}}
{N^\e_i}{\sigmacl{a}{\td{A}{\psi}}{\td{B}{\psi}}}
\]\end{small}%
where $F = \hcom{\etc{r_i}}{\td{A}{\psi}}{r}{z}{\fst{M}}{\etc{y.\fst{N^\e_i}}}$.
The first projection of that pair is equal to $\fst{N^\e_i}$ by the third Kan
condition of $\cwftype[\Psi']{\td{A}{\psi}}$, and the second projection is equal
to $\snd{N^\e_i}$ by \cref{thm:com} and the fact that
$\ceqtm[\Psi']{\dsubst{F}{r'}{z}}{\fst{N^\e_i}}{\td{A}{\psi}}$
(by the third Kan condition of $\cwftype[\Psi']{\td{A}{\psi}}$ and $r_i=\e$).
The result follows by the eta and introduction rules for dependent pairs.

The fourth Kan condition asserts that for any
$\msubsts{(\Psi',x)}{\psi}{\Psi}$, if
\[
\ceqtm[\Psi']{M}{M'}
{\sigmacl{a}{\dsubst{\td{A}{\psi}}{r}{x}}{\dsubst{\td{B}{\psi}}{r}{x}}}
\]
then
\[
\ceqtm[\Psi']
{\coe{x.\sigmacl{a}{\td{A}{\psi}}{\td{B}{\psi}}}{r}{r'}{M}}
{\coe{x.\sigmacl{a}{\td{A'}{\psi}}{\td{B'}{\psi}}}{r}{r'}{M'}}
{\sigmacl{a}{\dsubst{\td{A}{\psi}}{r'}{x}}{\dsubst{\td{B}{\psi}}{r'}{x}}}.
\]
By \cref{lem:expansion} on both sides and the introduction rule for dependent
pairs, it suffices to show
\[
\ceqtm[\Psi']
{\coe{x.\td{A}{\psi}}{r}{r'}{\fst{M}}}
{\coe{x.\td{A'}{\psi}}{r}{r'}{\fst{M'}}}
{\td{A}{\psi}}
\]
which holds by the fourth Kan condition of
$\ceqtype[\Psi',x]{\td{A}{\psi}}{\td{A'}{\psi}}$, and
\[
\ceqtm[\Psi']
{\coe{x.\subst{\td{B}{\psi}}{\coe{x.\td{A}{\psi}}{r}{x}{\fst{M}}}{a}}{r}{r'}{\snd{M}}}
{\coe{x.\subst{\td{B'}{\psi}}{\coe{x.\td{A'}{\psi}}{r}{x}{\fst{M'}}}{a}}{r}{r'}{\snd{M'}}}
{\subst{\td{B}{\psi}}{\coe{x.\td{A}{\psi}}{r}{r'}{\fst{M}}}{a}}
\]
which holds by the fourth Kan condition of
$\ceqtype[\Psi',x]
{\subst{\td{B}{\psi}}{\coe{x.\td{A}{\psi}}{r}{x}{\fst{M}}}{a}}
{\subst{\td{B'}{\psi}}{\coe{x.\td{A}{\psi}}{r}{x}{\fst{M'}}}{a}}$
(since $\ceqtm[\Psi',x]
{\coe{x.\td{A}{\psi}}{r}{x}{\fst{M}}}
{\coe{x.\td{A'}{\psi}}{r}{x}{\fst{M'}}}
{\td{A}{\psi}}$) and $x\fresh \td{B}{\psi}$.

The fifth Kan condition asserts that for any
$\msubsts{(\Psi',x)}{\psi}{\Psi}$, if
$\coftype[\Psi']{M}
{\sigmacl{a}{\dsubst{\td{A}{\psi}}{r}{x}}{\dsubst{\td{B}{\psi}}{r}{x}}}$,
then
$\ceqtm[\Psi']
{\coe{x.\picl{a}{\td{A}{\psi}}{\td{B}{\psi}}}{r}{r}{M}}{M}
{\sigmacl{a}{\dsubst{\td{A}{\psi}}{r}{x}}{\dsubst{\td{B}{\psi}}{r}{x}}}$.

By \cref{lem:expansion} on both sides and the introduction rule for dependent
pairs, it suffices to show
\[
\ceqtm[\Psi']{\fst{M}}
{\coe{x.\td{A}{\psi}}{r}{r}{\fst{M}}}
{\td{A}{\psi}}
\]
which holds by the fifth Kan condition of
$\ceqtype[\Psi',x]{\td{A}{\psi}}{\td{A'}{\psi}}$, and
\[
\ceqtm[\Psi']{\snd{M}}
{\coe{x.\subst{\td{B}{\psi}}{\coe{x.\td{A}{\psi}}{r}{x}{\fst{M}}}{a}}{r}{r}{\snd{M}}}
{\subst{\td{B}{\psi}}{\fst{M}}{a}}
\]
which holds by the fifth Kan condition of
$\cwftype[\Psi',x]
{\subst{\td{B}{\psi}}{\coe{x.\td{A}{\psi}}{r}{x}{\fst{M}}}{a}}$
and by the fact that
$\ceqtype[\Psi']
{\subst{\td{B}{\psi}}{\coe{x.\td{A}{\psi}}{r}{r}{\fst{M}}}{a}}
{\subst{\td{B}{\psi}}{\fst{M}}{a}}$.

\paragraph{Cubical}
For any $\psitd$ and
$\vinper[\Psi']{M}{M'}{\sigmacl{a}{\td{A}{\psi}}{\td{B}{\psi}}}$, we have
$\ceqtm[\Psi']{M}{M'}{\sigmacl{a}{\td{A}{\psi}}{\td{B}{\psi}}}$.

Then $M = \pair{O}{P}$, $M' = \pair{O'}{P'}$,
$\ceqtm[\Psi']{O}{O'}{\td{A}{\psi}}$ and
$\ceqtm[\Psi']{P}{P'}{\subst{\td{B}{\psi}}{O}{a}}$,
and the result follows from the introduction rule for dependent pairs.

%%%%%%%%%%%%%%%%%%%%%%%%%%%%%%%%%%%%%%%%%%%%%%%%%%%%%%%%%%%%%%%%%%%%%%%%%%%%%%%%
\subsection{Identification types}

If a cubical type system has $\ceqtype[\Psi,x]{A}{A'}$,
$\ceqtm{P_0}{P_0'}{\dsubst{A}{0}{x}}$, and
$\ceqtm{P_1}{P_1'}{\dsubst{A}{1}{x}}$, we say it
\emph{has their identification type} when for all $\psitd$,
$\veqper[\Psi']{\Id{x.\td{A}{\psi}}{\td{P_0}{\psi}}{\td{P_1}{\psi}}}
{\Id{x.\td{A'}{\psi}}{\td{P_0'}{\psi}}{\td{P_1'}{\psi}}}$, and
$\vinper[\Psi']{-}{-}{\Id{x.\td{A}{\psi}}{\td{P_0}{\psi}}{\td{P_1}{\psi}}}$
is the least relation such that
\[
\vinper[\Psi']{\dlam{x}{M}}{\dlam{x}{M'}}
{\Id{x.\td{A}{\psi}}{\td{P_0}{\psi}}{\td{P_1}{\psi}}}
\]
when $\ceqtm[\Psi',x]{M}{M'}{\td{A}{\psi}}$ and
$\ceqtm[\Psi']{\dsubst{M}{\e}{x}}{\td{P_\e}{\psi}}
{\dsubst{\td{A}{\psi}}{\e}{x}}$ for all $\e$.

The PERs
$\vinper[\Psi']{-}{-}{\Id{x.\td{A}{\psi}}{\td{P_0}{\psi}}{\td{P_1}{\psi}}}$ and
$\vinper[\Psi']{-}{-}{\Id{x.\td{A'}{\psi}}{\td{P_0'}{\psi}}{\td{P_1'}{\psi}}}$
are equal because $\ceqtm[\Psi',x]{M}{M'}{\td{A}{\psi}}$ if and only if
$\ceqtm[\Psi',x]{M}{M'}{\td{A'}{\psi}}$, and
$\ceqtm[\Psi']{\dsubst{M}{\e}{x}}{\td{P_\e}{\psi}}
{\dsubst{\td{A}{\psi}}{\e}{x}}$
if and only if
$\ceqtm[\Psi']{\dsubst{M}{\e}{x}}{\td{P_\e'}{\psi}}
{\dsubst{\td{A'}{\psi}}{\e}{x}}$.

In the remainder of this subsection, we assume we are working with a cubical
type system that has $\ceqtype[\Psi,x]{A}{A'}$,
$\ceqtm{P_0}{P_0'}{\dsubst{A}{0}{x}}$,
$\ceqtm{P_1}{P_1'}{\dsubst{A}{1}{x}}$, and their
identification type.

\paragraph{Pretype}
$\ceqpretype{\Id{x.A}{P_0}{P_1}}{\Id{x.A'}{P_0'}{P_1'}}$.

For any $\msubsts{\Psi_1}{\psi_1}{\Psi}$ and $\msubsts{\Psi_2}{\psi_2}{\Psi_1}$,
\[
\veqper[\Psi_2]
{\Id{x.\td{A}{\psi_1\psi_2}}{\td{P_0}{\psi_1\psi_2}}{\td{P_1}{\psi_1\psi_2}}}
{\Id{x.\td{A'}{\psi_1\psi_2}}{\td{P_0'}{\psi_1\psi_2}}{\td{P_1'}{\psi_1\psi_2}}}
\]
because the cubical type system has this identification type.

\paragraph{Introduction}
If $\ceqtm[\Psi,x]{M}{M'}{A}$ and
$\ceqtm{\dsubst{M}{\e}{x}}{P_\e}{\dsubst{A}{\e}{x}}$ for all $\e$, then
$\ceqtm{\dlam{x}{M}}{\dlam{x}{M'}}{\Id{x.A}{P_0}{P_1}}$.

Each side has coherent aspects up to syntactic equality, since
$\wfval[\Psi']{\dlam{x}{\td{M}{\psi}}}$ for all $\psitd$. Thus it suffices to
show
$\vinper[\Psi_2]
{\dlam{x}{\td{M}{\psi_1\psi_2}}}
{\dlam{x}{\td{M'}{\psi_1\psi_2}}}
{\Id{x.\td{A}{\psi_1\psi_2}}{\td{P_0}{\psi_1\psi_2}}{\td{P_1}{\psi_1\psi_2}}}$,
which is true because
$\ceqtm[\Psi_2,x]{\td{M}{\psi_1\psi_2}}{\td{M'}{\psi_1\psi_2}}
{\td{A}{\psi_1\psi_2}}$ and
$\ceqtm[\Psi_2]{\dsubst{\td{M}{\psi_1\psi_2}}{\e}{x}}{\td{P_\e}{\psi_1\psi_2}}
{\dsubst{\td{A}{\psi_1\psi_2}}{\e}{x}}$ for all $\e$.

\paragraph{Elimination}
If $\ceqtm{M}{M'}{\Id{x.A}{P_0}{P_1}}$ then
$\ceqtm{\dapp{M}{r}}{\dapp{M'}{r}}{\dsubst{A}{r}{x}}$ and
$\ceqtm{\dapp{M}{\e}}{P_\e}{\dsubst{A}{\e}{x}}$.

For any $\msubsts{\Psi_1}{\psi_1}{\Psi}$ and $\msubsts{\Psi_2}{\psi_2}{\Psi_1}$,
we know $\td{M}{\psi_1}\evals \dlam{x}{N_1}$ where
$\coftype[\Psi_1,x]{N_1}{\td{A}{\psi_1}}$.
Thus $\dapp{(\td{M}{\psi_1})}{\td{r}{\psi_1}} \steps^*
\dapp{(\dlam{x}{N_1})}{\td{r}{\psi_1}} \steps \dsubst{N_1}{\td{r}{\psi_1}}{x}$
where $\dsubst{N_1}{\td{r}{\psi_1}}{x} \evals X_1$ and
$\inper[\Psi_2]{\td{X_1}{\psi_2}}{\td{\dsubst{N_1}{\td{r}{\psi_1}}{x}}{\psi_2}}
{\td{\dsubst{A}{r}{x}}{\psi_1\psi_2}}$.
We also know
$\vinper[\Psi_2]
{\td{M}{\psi_1\psi_2}\evals \dlam{x}{N_{12}}}
{\dlam{x}{\td{N_1}{\psi_2}}}
{\Id{x.\td{A}{\psi_1\psi_2}}{\td{P_0}{\psi_1\psi_2}}{\td{P_1}{\psi_1\psi_2}}}$,
so $\ceqtm[\Psi_2,x]{N_{12}}{\td{N_1}{\psi_2}}{\td{A}{\psi_1\psi_2}}$.
Thus $\dapp{(\td{M}{\psi_1\psi_2})}{\td{r}{\psi_1\psi_2}}\steps^*
\dapp{(\dlam{x}{N_{12}})}{\td{r}{\psi_1\psi_2}} \steps
\dsubst{N_{12}}{\td{r}{\psi_1\psi_2}}{x}$.
We want to show
$\inper[\Psi_2]{\td{X_1}{\psi_2}}{\dsubst{N_{12}}{\td{r}{\psi_1\psi_2}}{x}}
{\td{\dsubst{A}{r}{x}}{\psi_1\psi_2}}$,
which follows by
$\inper[\Psi_2]
{\td{\dsubst{N_1}{\td{r}{\psi_1}}{x}}{\psi_2}}
{\dsubst{N_{12}}{\td{r}{\psi_1\psi_2}}{x}}
{\td{\dsubst{A}{r}{x}}{\psi_1\psi_2}}$ and transitivity.

An analogous argument shows $\dapp{M'}{r}$ also has coherent aspects. To see
that the aspects of $\dapp{M}{r}$ and $\dapp{M'}{r}$ are related to each other,
we use the fact that $\td{M'}{\psi_1\psi_2} \evals \dlam{x}{N_{12}'}$ and
$\ceqtm[\Psi_2,x]{N_{12}}{N_{12}'}{\td{A}{\psi_1\psi_2}}$, so
$\dapp{(\td{M'}{\psi_1\psi_2})}{\td{r}{\psi_1\psi_2}} \steps^*
\dsubst{N_{12}'}{\td{r}{\psi_1\psi_2}}{x}$ and
$\inper[\Psi_2]
{\dsubst{N_{12}}{\td{r}{\psi_1\psi_2}}{x}}
{\dsubst{N_{12}'}{\td{r}{\psi_1\psi_2}}{x}}
{\td{\dsubst{A}{r}{x}}{\psi_1\psi_2}}$.

For the second elimination rule, by \cref{lem:coftype-ceqtm} it suffices to
show that for any $\psitd$,
$\inper[\Psi']
{\dapp{(\td{M}{\psi})}{\e}}
{\td{P_\e}{\psi}}
{\td{\dsubst{A}{\e}{x}}{\psi}}$.
But $\td{M}{\psi}\evals \dlam{x}{N}$ where
$\coftype[\Psi',x]{N}{\td{A}{\psi}}$ and
$\ceqtm[\Psi']{\dsubst{N}{\e}{x}}{P_\e}{\dsubst{\td{A}{\psi}}{\e}{x}}$.
Thus $\dapp{(\td{M}{\psi})}{\e} \steps^*
\dapp{(\dlam{x}{N})}{\e} \steps \dsubst{N}{\e}{x}$ and
$\inper[\Psi']
{\dsubst{N}{\e}{x}}
{\td{P_\e}{\psi}}
{\td{\dsubst{A}{\e}{x}}{\psi}}$.

\paragraph{Computation}
If $\coftype[\Psi,x]{M}{A}$ then
$\ceqtm{\dapp{(\dlam{x}{M})}{r}}{\dsubst{M}{r}{x}}{\dsubst{A}{r}{x}}$.

Follows from \cref{lem:expansion}.

\paragraph{Eta}
If $\coftype{M}{\Id{x.A}{P_0}{P_1}}$ then
$\ceqtm{M}{\dlam{x}{(\dapp{M}{x})}}{\Id{x.A}{P_0}{P_1}}$.

By the elimination and introduction rules for identification types,
$\coftype{\dlam{x}{(\dapp{M}{x})}}{\Id{x.A}{P_0}{P_1}}$.
By \cref{lem:coftype-ceqtm} it suffices to show that for any $\psitd$,
$\inper[\Psi']{\td{M}{\psi}}{\dlam{x}{(\dapp{\td{M}{\psi}}{x})}}
{\Id{x.\td{A}{\psi}}{\td{P_0}{\psi}}{\td{P_1}{\psi}}}$.
We know $\td{M}{\psi}\evals\dlam{x}{N}$ such that
$\coftype[\Psi',x]{N}{\td{A}{\psi}}$ and
$\ceqtm[\Psi']{\dsubst{N}{\e}{x}}{\td{P_\e}{\psi}}
{\dsubst{\td{A}{\psi}}{\e}{x}}$; we must show
$\ceqtm[\Psi',x]{N}{\dapp{\td{M}{\psi}}{x}}{\td{A}{\psi}}$.
Again by \cref{lem:coftype-ceqtm}, we show that for any
$\msubsts{\Psi''}{\psi'}{(\Psi',x)}$,
$\inper[\Psi'']{\td{N}{\psi'}}{\dapp{\td{M}{\psi\psi'}}{\td{x}{\psi'}}}
{\td{A}{\psi\psi'}}$. But
$\vinper[\Psi'']{\td{M}{\psi\psi'} \evals \dlam{x}{N'}}{\dlam{x}{\td{N}{\psi'}}}
{\Id{x.\td{A}{\psi\psi'}}{\td{P_0}{\psi\psi'}}{\td{P_1}{\psi\psi'}}}$, so
$\ceqtm[\Psi'',x]{N'}{\td{N}{\psi'}}{\td{A}{\psi\psi'}}$ and
$\dapp{\td{M}{\psi\psi'}}{\td{x}{\psi'}} \steps^*
\dapp{(\dlam{x}{N'})}{\td{x}{\psi'}} \steps \dsubst{N'}{\td{x}{\psi'}}{x}$.
The result follows by
$\ceqtm[\Psi'',x]{\dsubst{N'}{\td{x}{\psi'}}{x}}{\td{N}{\psi'}}
{\td{A}{\psi\psi'}}$.

\paragraph{Kan}
$\ceqpretype{\Id{x.A}{P_0}{P_1}}{\Id{x.A'}{P_0'}{P_1'}}$ are equally Kan.

For the first Kan condition, we must show that for any $\psitd$, if
\begin{enumerate}
\item $\ceqtm[\Psi']{M}{O}
{\Id{x.\td{A}{\psi}}{\td{P_0}{\psi}}{\td{P_1}{\psi}}}$,
\item $\ceqtmres[\Psi',y]{r_i=\e,r_j=\e'}{N^\e_i}{N^{\e'}_j}
{\Id{x.\td{A}{\psi}}{\td{P_0}{\psi}}{\td{P_1}{\psi}}}$
for any $i,j,\e,\e'$,
\item $\ceqtmres[\Psi',y]{r_i=\e}{N^\e_i}{P^\e_i}
{\Id{x.\td{A}{\psi}}{\td{P_0}{\psi}}{\td{P_1}{\psi}}}$
for any $i,\e$, and
\item $\ceqtmres[\Psi']{r_i=\e}{\dsubst{N^\e_i}{r}{y}}{M}
{\Id{x.\td{A}{\psi}}{\td{P_0}{\psi}}{\td{P_1}{\psi}}}$
for any $i,\e$, then
\end{enumerate}
$\ceqtm[\Psi']{\hcomgeneric{\etc{r_i}}
{\Id{x.\td{A}{\psi}}{\td{P_0}{\psi}}{\td{P_1}{\psi}}}}%
{\hcom{\etc{r_i}}{\Id{x.\td{A'}{\psi}}{\td{P_0'}{\psi}}{\td{P_1'}{\psi}}}
{r}{r'}{O}{\etc{y.P^\e_i}}}
{\Id{x.\td{A}{\psi}}{\td{P_0}{\psi}}{\td{P_1}{\psi}}}$.

By \cref{lem:expansion} on both sides and the introduction rule for
identification types, it suffices to show
\[\begin{aligned}
&{\hcom{\etc{r_i},x}{\td{A}{\psi}}{r}{r'}
  {\dapp{M}{x}}
  {\etc{y.\dapp{N^\e_i}{x}},\_.\td{P_0}{\psi},\_.\td{P_1}{\psi}}} \\
\ceqtmtab[\Psi',x]{}
{\hcom{\etc{r_i},x}{\td{A'}{\psi}}{r}{r'}
  {\dapp{O}{x}}
  {\etc{y.\dapp{P^\e_i}{x}},\_.\td{P_0'}{\psi},\_.\td{P_1'}{\psi}}}
{\td{A}{\psi}}
\end{aligned}\]
and $\ceqtm[\Psi']{\dsubst{\hcomsym}{\e}{x}}{\td{P_\e}{\psi}}
{\dsubst{\td{A}{\psi}}{\e}{x}}$.
These follow from the first and third Kan conditions of
$\ceqtype[\Psi']{\td{A}{\psi}}{\td{A'}{\psi}}$; most of the Kan conditions'
hypotheses follow from the first elimination rule for identifications, but
because we add a new pair of tubes, there are four additional hypotheses:
\begin{enumerate}
\item $\ceqtmres[\Psi',x,y]{x=\e}{\td{P_\e}{\psi}}{\td{P_\e'}{\psi}}
{\td{A}{\psi}}$ for any $\e$.

Since $x\fresh \td{P_\e}{\psi}$, this follows by
$\ceqtm[\Psi',y]{\td{P_\e}{\psi}}{\td{P_\e'}{\psi}}
{\dsubst{\td{A}{\psi}}{\e}{x}}$.

\item $\ceqtmres[\Psi',x,y]{r_i=\e,x=\e'}{\dapp{N^\e_i}{x}}{\td{P_{\e'}}{\psi}}
{\td{A}{\psi}}$ for any $i,\e,\e'$.

It suffices to show that for any $\msubsts{\Psi''}{\psi'}{(\Psi',x,y)}$ such
that $\td{x}{\psi'}=\e'$ and $\td{r_i}{\psi'}=\e$,
$\ceqtm[\Psi'']{\dapp{(\td{N^\e_i}{\psi'})}{\e'}}{\td{P_{\e'}}{\psi\psi'}}
{\td{A}{\psi\psi'}}$, which follows from the second elimination rule for
identifications.

\item $\ceqtmres[\Psi',x,y]{x=\e,x=\e'}{\td{P_{\e}}{\psi}}{\td{P_{\e'}}{\psi}}
{\td{A}{\psi}}$ for any $\e,\e'$.

If $\e=\e'$ this follows by assumption; otherwise it holds vacuously.

\item $\ceqtmres[\Psi',x]{x=\e}{\dapp{M}{x}}
{\dsubst{\td{P_\e}{\psi}}{r'}{y}}{\td{A}{\psi}}$ for any $\e$.

It suffices to show that for any $\msubsts{\Psi''}{\psi'}{(\Psi',x)}$ such that
$\td{x}{\psi'}=\e$,
$\ceqtm[\Psi'']{\dapp{M}{\e}}{\td{P_\e}{\psi}}{\td{A}{\psi}}$ (since $x,y\fresh
P_\e$), which holds by the second elimination rule for identifications.
\end{enumerate}

For the second Kan condition, we must show that for any $\psitd$, if
\begin{enumerate}
\item $\coftype[\Psi']{M}
{\Id{x.\td{A}{\psi}}{\td{P_0}{\psi}}{\td{P_1}{\psi}}}$,
\item $\ceqtmres[\Psi',y]{r_i=\e,r_j=\e'}{N^\e_i}{N^{\e'}_j}
{\Id{x.\td{A}{\psi}}{\td{P_0}{\psi}}{\td{P_1}{\psi}}}$
for any $i,j,\e,\e'$, and
\item $\ceqtmres[\Psi']{r_i=\e}{\dsubst{N^\e_i}{r}{y}}{M}
{\Id{x.\td{A}{\psi}}{\td{P_0}{\psi}}{\td{P_1}{\psi}}}$
for any $i,\e$, then
\end{enumerate}
$\ceqtm[\Psi']
{\hcom{\etc{r_i}}{\Id{x.\td{A}{\psi}}{\td{P_0}{\psi}}{\td{P_1}{\psi}}}{r}{r}{M}{\etc{y.N^\e_i}}
}{M}
{\Id{x.\td{A}{\psi}}{\td{P_0}{\psi}}{\td{P_1}{\psi}}}$.

By \cref{lem:expansion} and the eta and introduction rules for identifications,
it suffices to show
\[
\ceqtm[\Psi',x]
{{\hcom{\etc{r_i},x}{\td{A}{\psi}}{r}{r}
  {\dapp{M}{x}}
  {\etc{y.\dapp{N^\e_i}{x}},\_.\td{P_0}{\psi},\_.\td{P_1}{\psi}}}}
{\dapp{M}{x}}
{\td{A}{\psi}}
\]
which holds by the second Kan condition of $\cwftype[\Psi']{\td{A}{\psi}}$,
deriving its hypotheses as before.

For the third Kan condition, we must show that for any $\psitd$, if $r_i=\e$,
\begin{enumerate}
\item $\coftype[\Psi']{M}
{\Id{x.\td{A}{\psi}}{\td{P_0}{\psi}}{\td{P_1}{\psi}}}$,
\item $\ceqtmres[\Psi',y]{r_i=\e,r_j=\e'}{N^\e_i}{N^{\e'}_j}
{\Id{x.\td{A}{\psi}}{\td{P_0}{\psi}}{\td{P_1}{\psi}}}$
for any $i,j,\e,\e'$, and
\item $\ceqtmres[\Psi']{r_i=\e}{\dsubst{N^\e_i}{r}{y}}{M}
{\Id{x.\td{A}{\psi}}{\td{P_0}{\psi}}{\td{P_1}{\psi}}}$
for any $i,\e$, then
\end{enumerate}
$\ceqtm[\Psi']
{\hcomgeneric{\etc{r_i}}{\Id{x.\td{A}{\psi}}{\td{P_0}{\psi}}{\td{P_1}{\psi}}}
}{\dsubst{N^\e_i}{r'}{y}}
{\Id{x.\td{A}{\psi}}{\td{P_0}{\psi}}{\td{P_1}{\psi}}}$.

By \cref{lem:expansion} and the eta and introduction rules for identifications,
it suffices to show
\[
\ceqtm[\Psi',x]
{{\hcom{\etc{r_i},x}{\td{A}{\psi}}{r}{r'}
  {\dapp{M}{x}}
  {\etc{y.\dapp{N^\e_i}{x}},\_.\td{P_0}{\psi},\_.\td{P_1}{\psi}}}}
{\dapp{\dsubst{N^\e_i}{r'}{y}}{x}}
{\td{A}{\psi}}
\]
which follows from the third Kan condition of $\cwftype[\Psi']{\td{A}{\psi}}$,
deriving its hypotheses as before.

For the fourth Kan condition, we must show that for any
$\msubsts{(\Psi',y)}{\psi}{\Psi}$, if
\[
\ceqtm[\Psi']{M}{M'}
{\Id{x.\dsubst{\td{A}{\psi}}{r}{y}}{\dsubst{\td{P_0}{\psi}}{r}{y}}{\dsubst{\td{P_1}{\psi}}{r}{y}}}
\]
then
\[
\ceqtm[\Psi']
{\coe{y.\Id{x.\td{A}{\psi}}{\td{P_0}{\psi}}{\td{P_1}{\psi}}}{r}{r'}{M}}
{\coe{y.\Id{x.\td{A'}{\psi}}{\td{P_0'}{\psi}}{\td{P_1'}{\psi}}}{r}{r'}{M'}}
{\Id{x.\dsubst{\td{A}{\psi}}{r'}{y}}{\dsubst{\td{P_0}{\psi}}{r'}{y}}{\dsubst{\td{P_1}{\psi}}{r'}{y}}}.
\]

By \cref{lem:expansion} on both sides and the introduction rule for
identification types, it suffices to show
\[
\ceqtm[\Psi',x]
{\com{x}{y.\td{A}{\psi}}{r}{r'}{\dapp{M}{x}}{y.\td{P_0}{\psi},y.\td{P_1}{\psi}}}
{\com{x}{y.\td{A'}{\psi}}{r}{r'}{\dapp{M'}{x}}{y.\td{P_0'}{\psi},y.\td{P_1'}{\psi}}}
{\dsubst{\td{A}{\psi}}{r'}{y}}
\]
and for all $\e$,
\[
\ceqtm[\Psi']
{\com{\e}{y.\td{A}{\psi}}{r}{r'}{\dapp{M}{\e}}{y.\td{P_0}{\psi},y.\td{P_1}{\psi}}}
{\dsubst{\td{P_\e}{\psi}}{r'}{y}}
{\dsubst{\dsubst{\td{A}{\psi}}{r'}{y}}{\e}{x}}.
\]
These both follow from \cref{thm:com} once we have shown:
\begin{enumerate}
\item $\ceqtm[\Psi',x]{\dapp{M}{x}}{\dapp{M'}{x}}{\dsubst{\td{A}{\psi}}{r}{y}}$.

Follows from the first elimination rule for identifications.

\item $\ceqtmres[\Psi',x,y]{x=\e,x=\e'}{\td{P_\e}{\psi}}{\td{P_{\e'}}{\psi}}
{\td{A}{\psi}}$ for any $\e,\e'$.

Follows from $\coftype[\Psi',y]{\td{P_\e}{\psi}}{\dsubst{\td{A}{\psi}}{\e}{x}}$
if $\e=\e'$ and vacuously if $\e\neq\e'$.

\item $\ceqtmres[\Psi',x,y]{x=\e}{\td{P_\e}{\psi}}{\td{P_\e'}{\psi}}
{\td{A}{\psi}}$ for any $\e$.

Follows from $\ceqtm[\Psi',y]{\td{P_\e}{\psi}}{\td{P_\e'}{\psi}}
{\dsubst{\td{A}{\psi}}{\e}{x}}$.

\item $\ceqtmres[\Psi',x]{x=\e}{\dsubst{\td{P_\e}{\psi}}{r}{y}}{\dapp{M}{x}}
{\dsubst{\td{A}{\psi}}{r}{y}}$ for any $\e$.

Follows from the second elimination rule for identifications.
\end{enumerate}

For the fifth Kan condition, we must show that for any
$\msubsts{(\Psi',y)}{\psi}{\Psi}$, if
\[
\coftype[\Psi']{M}
{\Id{x.\dsubst{\td{A}{\psi}}{r}{y}}{\dsubst{\td{P_0}{\psi}}{r}{y}}{\dsubst{\td{P_1}{\psi}}{r}{y}}}
\]
then
\[
\ceqtm[\Psi']
{\coe{y.\Id{x.\td{A}{\psi}}{\td{P_0}{\psi}}{\td{P_1}{\psi}}}{r}{r}{M}}
{M}
{\Id{x.\dsubst{\td{A}{\psi}}{r}{y}}{\dsubst{\td{P_0}{\psi}}{r}{y}}{\dsubst{\td{P_1}{\psi}}{r}{y}}}.
\]

By \cref{lem:expansion} and the eta rule for identification types, it suffices
to show
\[
\ceqtm[\Psi',x]
{\com{x}{y.\td{A}{\psi}}{r}{r}{\dapp{M}{x}}{y.\td{P_0}{\psi},y.\td{P_1}{\psi}}}
{\dapp{M}{x}}
{\dsubst{\td{A}{\psi}}{r}{y}}
\]
which follows from \cref{thm:com}, establishing its hypotheses as before.

\paragraph{Cubical}
For any $\psitd$ and $\vinper[\Psi']{M}{M'}
{\Id{x.\td{A}{\psi}}{\td{P_0}{\psi}}{\td{P_1}{\psi}}}$, we have
$\ceqtm[\Psi']{M}{M'}{\Id{x.\td{A}{\psi}}{\td{P_0}{\psi}}{\td{P_1}{\psi}}}$.

Then $M = \dlam{x}{N}$, $M' = \dlam{x}{N'}$,
$\ceqtm[\Psi',x]{N}{N'}{\td{A}{\psi}}$,
$\ceqtm[\Psi']{\dsubst{N}{\e}{x}}{\td{P_\e}{\psi}}
{\dsubst{\td{A}{\psi}}{\e}{x}}$ for all $\e$, and the result follows from the
introduction rule for identification types.

%%%%%%%%%%%%%%%%%%%%%%%%%%%%%%%%%%%%%%%%%%%%%%%%%%%%%%%%%%%%%%%%%%%%%%%%%%%%%%%%
\subsection{Not}

If a cubical type system has booleans, we say it \emph{has the $\notb{x}$ type}
when $\veqper[\Psi,x]{\notb{x}}{\notb{x}}$ for all $\Psi$,
and $\vinper[\Psi,x]{-}{-}{\notb{x}}$ is the least relation such that:
\[ \vinper[\Psi,x]{\notel{x}{M}}{\notel{x}{M'}}{\notb{x}} \]
when $\ceqtm[\Psi,x]{M}{M'}{\bool}$.

This type is somewhat unusual because it exists primarily to be coerced along
($\coe{x.\notb{x}}{r}{r'}{M}$), rather than to be introduced or eliminated in
the manner of dependent function, pair, and identification types. Accordingly,
the bulk of this section is dedicated to proving that
$\cpretype[\Psi,x]{\notb{x}}$ is Kan.

In the remainder of this section, we assume we are working with a cubical type
system that has booleans and the $\notb{x}$ type. We will need the following
lemmas, which hold in any cubical type system with booleans:

\begin{lemma}
If $\coftype{M}{\bool}$ then $\ceqtm{\notf{\notf{M}}}{M}{\bool}$.
\end{lemma}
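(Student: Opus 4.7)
The plan is to follow the strategy of \cref{lem:if-ceqtm,lem:if-vinper}, since $\notf{\notf{M}}$ is a nested instance of boolean elimination: unfolded, it is $\ifb{\_.\bool}{\ifb{\_.\bool}{M}{\false}{\true}}{\false}{\true}$. First, two applications of the boolean elimination rule (with $a.A = a.\bool$, $T = \false$, $F = \true$) give $\coftype{\notf{\notf{M}}}{\bool}$, and then \cref{lem:coftype-ceqtm} reduces the claim to showing $\inper[\Psi']{\notf{\notf{\td{M}{\psi}}}}{\td{M}{\psi}}{\bool}$ for every $\psitd$. Since $\ifsym$ evaluates its principal argument first, it suffices to show that whenever $V$ is a value with $\vinper[\Psi']{V}{V}{\bool}$, the term $\notf{\notf{V}}$ evaluates to some $W$ with $\vinper[\Psi']{W}{V}{\bool}$.

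In the style of \cref{lem:if-ceqtm,lem:if-vinper}, I would prove two companion lemmas. \emph{Lemma A} states: if $\coftype{M}{\bool}$ and the double-negation law holds for every aspect value of $M$ under every iterated dimension substitution, then $\ceqtm{\notf{\notf{M}}}{M}{\bool}$; its proof is a routine adaptation of \cref{lem:if-ceqtm}. \emph{Lemma B} states: if $\vinper{V}{V}{\bool}$, then $\ceqtm{\notf{\notf{V}}}{V}{\bool}$, proved by induction on the least relation defining $\vinper[-]{-}{-}{\bool}$. Together these yield the desired result: Lemma B applied to every aspect value of $M$ supplies the premise of Lemma A.

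In Lemma B, the $\true$ and $\false$ cases are immediate by \cref{lem:expansion}, since $\notf{\notf{\true}} \steps^* \true$ and $\notf{\notf{\false}} \steps^* \false$ by the $\ifsym$ computation rules. For the hcom case $V = \hcomgeneric{\etc{x_i}}{\bool}$ with $r \neq r'$, cap $M$, and tubes $N^\e_i$, the $\ifsym$-on-hcom operational step sends $\notf{V}$ to a $\comsym$ in $z.\bool$, which by definition of $\comsym$ and the step rule $\coe{x.\bool}{r}{r'}{M'} \steps M'$ is equal in $\bool$ to $\hcom{\etc{x_i}}{\bool}{r}{r'}{\notf{M}}{\etc{y.\notf{N^\e_i}}}$ (itself a value). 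Iterating, $\notf{\notf{V}}$ evaluates to a value $W$ that, by \cref{lem:expansion}, is equal in $\bool$ to $\hcom{\etc{x_i}}{\bool}{r}{r'}{\notf{\notf{M}}}{\etc{y.\notf{\notf{N^\e_i}}}}$. Applying Lemma A to $M$ and to each $N^\e_i$ — with the inductive hypothesis supplying the double-negation property on their strictly smaller aspect values — yields $\ceqtm{\notf{\notf{M}}}{M}{\bool}$ together with the context-restricted analogues for the tubes; these are exactly the four conditions of the hcom clause in the definition of $\vinper[-]{-}{-}{\bool}$ required to conclude $\vinper[\Psi']{W}{V}{\bool}$.

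The \textbf{main obstacle} is the hcom case of Lemma B: one must unwind two nested layers of $\ifsym$-on-hcom together with their inserted trivial coercions, and then invoke the inductive hypothesis on non-value subterms by way of Lemma A to verify cap equality, tube-tube compatibility, tube alignment, and cap-tube adjacency. Everything else is a careful but unsurprising adaptation of the arguments already spelled out for the boolean eliminator.
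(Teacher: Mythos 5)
Your proposal follows essentially the same approach as the paper: establish $\coftype{\notf{\notf{M}}}{\bool}$ via the boolean introduction and elimination rules, reduce by \cref{lem:coftype-ceqtm} to a per-substitution $\inpersym$ claim, and argue by cases on the value of $\td{M}{\psi}$, handling the $\hcomsym$ case by induction over the least relation $\vinper[-]{-}{-}{\bool}$. The explicit Lemma A / Lemma B scaffolding (mirroring \cref{lem:if-ceqtm,lem:if-vinper}) is a clean way to make precise the induction that the paper carries out more informally — the paper's proof says only ``we case on'' and then silently invokes ``the inductive hypothesis,'' which your mutually recursive pair spells out correctly.

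One step is mis-cited, though the surrounding structure is sound. In the $\hcomsym$ case you claim that the value $W$ of $\notf{\notf{V}}$ is ``by \cref{lem:expansion}\dots equal in $\bool$ to $\hcom{\etc{x_i}}{\bool}{r}{r'}{\notf{\notf{M}}}{\etc{y.\notf{\notf{N^\e_i}}}}$.'' Head expansion cannot establish this directly: $W$ is already a value — a $\hcomsym$ at $\bool$ whose cap and tubes carry the trivial $\coe{\_.\bool}{-}{-}{-}$ wrappers that unfolding $\comsym$ inserts — and neither side head-reduces to the other, since the coercions sit inside a value. What you need is the first Kan condition of $\bool$, to compare $\hcomsym$s whose components are exactly equal, together with \cref{lem:expansion} applied to each inner $\coe{\_.\bool}{-}{-}{-}$ \emph{separately} (each of which steps to its argument), and the elimination rule so that $\notf{-}$ of equal elements are equal. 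This is precisely the combination the paper cites at this point (``the elimination rule and first and fourth Kan conditions of $\bool$''). The fix is local: replace the single appeal to \cref{lem:expansion} with the first Kan condition plus per-coercion head expansion, and then proceed with the inductive hypothesis and the first Kan condition to compare against $V$ as you intend.
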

\begin{proof}
Recalling that $\notf{M}$ is notation for $\ifb{\_.\bool}{M}{\false}{\true}$,
we conclude from the introduction and elimination rules for booleans that
$\coftype{\notf{\notf{M}}}{\bool}$. By \cref{lem:coftype-ceqtm}, it suffices to
show that for any $\psitd$,
$\inper[\Psi']{\notf{\notf{\td{M}{\psi}}}}{\td{M}{\psi}}{\bool}$. We case on
$\inper[\Psi']{\td{M}{\psi}}{\td{M}{\psi}}{\bool}$:
\begin{enumerate}
\item $\td{M}{\psi} \evals \true$.

Then $\notf{\notf{\td{M}{\psi}}} \steps^*
\notf{\notf{\true}} \steps \notf{\false} \steps \true$, and
$\vinper[\Psi']{\true}{\true}{\bool}$.

\item $\td{M}{\psi} \evals \false$.

Then $\notf{\notf{\td{M}{\psi}}} \steps^*
\notf{\notf{\false}} \steps \notf{\true} \steps \false$, and
$\vinper[\Psi']{\false}{\false}{\bool}$.

\item $\td{M}{\psi} \evals \hcom{\etc{x_i}}{\bool}{r}{r'}{O}{\etc{y.N^\e_i}}$
where $r\neq r'$,
$\coftype[\Psi']{O}{\bool}$,
$\ceqtmres[\Psi',y]{x_i=\e,x_j=\e'}{N^\e_i}{N^{\e'}_j}{\bool}$ for all
$i,j,\e,\e'$, and
$\ceqtmres[\Psi']{x_i=\e}{\dsubst{N^\e_i}{r}{y}}{O}{\bool}$ for all $i,\e$.

Then, expanding the definition of $\comsym$,
\[\begin{aligned}
\notf{\notf{\td{M}{\psi}}} &\steps^*
\notf{\hcom{\etc{x_i}}{\bool}{r}{r'}
  {\coe{\_.\bool}{r}{r'}{\notf{O}}}
  {\etc{y.\coe{\_.\bool}{y}{r'}{\notf{N^\e_i}}}}}
\\ &\steps\hphantom{{}^*}
\hcom{\etc{x_i}}{\bool}{r}{r'}
  {\coe{\_.\bool}{r}{r'}{\notf{\coe{\_.\bool}{r}{r'}{\notf{O}}}}}
  {\dots}
% = \notf{\com{\etc{x_i}}{\_.\bool}{r}{r'}{\notf{O}}{\etc{y.\notf{N^\e_i}}}}
\end{aligned}\]
which, by the elimination rule and first and fourth Kan conditions of $\bool$,
is
\[
\ceqtm[\Psi']{-}
{\hcom{\etc{x_i}}{\bool}{r}{r'}{\notf{\notf{O}}}{\notf{\notf{N^\e_i}}}}{\bool}.
\]
The result follows by the inductive hypothesis and the first Kan condition of
$\bool$.
\qedhere
\end{enumerate}
\end{proof}

\begin{lemma}\label{lem:inper-not-swap}
If $\coftype{M}{\bool}$, $\coftype{N}{\bool}$, and for all $\psitd$,
$\inper[\Psi']{\notf{\td{M}{\psi}}}{\td{N}{\psi}}{\bool}$, then
$\ceqtm{M}{\notf{N}}{\bool}$ (and in particular, $\inper{M}{\notf{N}}{\bool}$).
\end{lemma}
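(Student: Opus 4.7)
The plan is to use the previous lemma (that $\notf{-}$ is an involution on $\bool$) together with \cref{lem:coftype-ceqtm}. We already have $\coftype{M}{\bool}$ by hypothesis, and $\coftype{\notf{N}}{\bool}$ follows from $\coftype{N}{\bool}$ by the elimination and introduction rules for $\bool$ (since $\notf{N}$ is defined as $\ifb{\_.\bool}{N}{\false}{\true}$, using $\coftype{\false}{\bool}$ and $\coftype{\true}{\bool}$). Hence by \cref{lem:coftype-ceqtm}, it suffices to show that for every $\psitd$, $\inper[\Psi']{\td{M}{\psi}}{\notf{\td{N}{\psi}}}{\bool}$.

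To establish this, I will first upgrade the hypothesis pointwise into a closed equality. By \cref{lem:coftype-ceqtm}, the hypothesis $\inper[\Psi']{\notf{\td{M}{\psi}}}{\td{N}{\psi}}{\bool}$ for all $\psitd$, together with $\coftype{\notf{M}}{\bool}$ and $\coftype{N}{\bool}$, yields $\ceqtm{\notf{M}}{N}{\bool}$. Applying the elimination rule for $\bool$ to both sides (with motive $\_.\bool$, true branch $\false$, false branch $\true$) gives $\ceqtm{\notf{\notf{M}}}{\notf{N}}{\bool}$. By the preceding lemma, $\ceqtm{\notf{\notf{M}}}{M}{\bool}$, so by symmetry and transitivity of $\eq$, $\ceqtm{M}{\notf{N}}{\bool}$, which is what we wanted (and which implies $\inper{M}{\notf{N}}{\bool}$ in particular).

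In short, the proof is a short three-line calculation: apply $\notf{-}$ to both sides of the given equality, and then cancel $\notf{\notf{-}}$ using the involution lemma. I do not anticipate any real obstacle here, as both \cref{lem:coftype-ceqtm} and the elimination rule for $\bool$ (already proven in this section) do all the heavy lifting; the only subtlety is to remember that the hypothesis is stated fiberwise over $\psitd$ rather than as a single closed equality, but \cref{lem:coftype-ceqtm} bridges exactly that gap.
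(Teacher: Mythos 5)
Your proof is correct and follows essentially the same route as the paper's: first upgrade the fiberwise hypothesis to $\ceqtm{\notf{M}}{N}{\bool}$ via \cref{lem:coftype-ceqtm}, apply $\notf{-}$ to both sides, and cancel the double negation using the preceding involution lemma. The only cosmetic difference is your initial reduction via \cref{lem:coftype-ceqtm} in the first paragraph, which ends up being unnecessary since the chain of exact equalities in the second paragraph already yields $\ceqtm{M}{\notf{N}}{\bool}$ directly.
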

\begin{proof}
By \cref{lem:coftype-ceqtm}, $\ceqtm{\notf{M}}{N}{\bool}$. Then
$\ceqtm{\notf{\notf{M}}}{\notf{N}}{\bool}$, so $\ceqtm{M}{\notf{N}}{\bool}$.
\end{proof}

\paragraph{Pretype}
$\cpretype[\Psi,x]{\notb{x}}$ and $\ceqpretype{\notb{\e}}{\bool}$.

For the first part, there are three cases to consider.
For any $\msubsts{\Psi_1}{\psi_1}{\Psi}$ and $\msubsts{\Psi_2}{\psi_2}{\Psi_1}$,
\begin{enumerate}
\item If $\td{x}{\psi_1} = \e$ then
$\notb{\td{x}{\psi_1}}\evals\bool$,
$\td{\bool}{\psi_2}\evals\bool$, and
$\notb{\td{x}{\psi_1\psi_2}}\evals\bool$;

\item If $\td{x}{\psi_1} = x'$ and $\td{x'}{\psi_2} = \e$ then
$\notb{\td{x}{\psi_1}}\evals\notb{x'}$,
$\notb{\td{x'}{\psi_2}}\evals\bool$, and
$\notb{\td{x}{\psi_1\psi_2}}\evals\bool$; and

\item If $\td{x}{\psi_1} = x'$ and $\td{x'}{\psi_2} = x''$ then
$\notb{\td{x}{\psi_1}}\evals\notb{x'}$,
$\notb{\td{x'}{\psi_2}}\evals\notb{x''}$, and
$\notb{\td{x}{\psi_1\psi_2}}\evals\notb{x''}$.
\end{enumerate}
But $\veqper[\Psi_2]{\bool}{\bool}$ and
$\veqper[\Psi_2',x'']{\notb{x''}}{\notb{x''}}$ where $\Psi_2 = (\Psi_2',x'')$.

For the second part, $\notb{\td{\e}{\psi}} \evals \bool$ and
$\veqper[\Psi']{\bool}{\bool}$ for any $\psitd$.

\paragraph{Introduction}
If $\ceqtm{M}{N}{\bool}$ then $\ceqtm{\notel{r}{M}}{\notel{r}{N}}{\notb{r}}$.

Let $\msubsts{\Psi_1}{\psi_1}{\Psi}$ and $\msubsts{\Psi_2}{\psi_2}{\Psi_1}$.
\begin{enumerate}
\item If $\td{r}{\psi_1} = 0$ then
$\notel{0}{\td{M}{\psi_1}} \steps \notf{\td{M}{\psi_1}}$ and
$\notel{0}{\td{M}{\psi_1\psi_2}} \steps \notf{\td{M}{\psi_1\psi_2}}$. By
$\ceqtm[\Psi_1]{\notf{\td{M}{\psi_1}}}{\notf{\td{N}{\psi_1}}}{\bool}$ we know
$\notf{\td{M}{\psi_1}} \evals M_1'$, $\notf{\td{N}{\psi_1}} \evals N_1'$, and
$\inperfour[\Psi_2]
{\td{M_1'}{\psi_2}}{\td{N_1'}{\psi_2}}
{\notf{\td{M}{\psi_1\psi_2}}}{\notf{\td{N}{\psi_1\psi_2}}}{\bool}$, which is
what we wanted to show.

\item If $\td{r}{\psi_1} = 1$ then
$\notel{1}{\td{M}{\psi_1}} \steps \td{M}{\psi_1}$ and
$\notel{1}{\td{M}{\psi_1\psi_2}} \steps \td{M}{\psi_1\psi_2}$. Our assumption
directly implies $\td{M}{\psi_1} \evals M_1$, $\td{N}{\psi_1} \evals N_1$, and
$\inperfour[\Psi_2]
{\td{M_1}{\psi_2}}{\td{N_1}{\psi_2}}
{\td{M}{\psi_1\psi_2}}{\td{N}{\psi_1\psi_2}}{\bool}$.

\item If $\td{r}{\psi_1} = x$ and $\td{x}{\psi_2} = 0$ then
$\isval{\notel{x}{\td{M}{\psi_1}}}$ and
$\notel{0}{\td{M}{\psi_1\psi_2}} \steps \notf{\td{M}{\psi_1\psi_2}}$.
Then by
$\ceqtm[\Psi_2]{\notf{\td{M}{\psi_1\psi_2}}}{\notf{\td{N}{\psi_1\psi_2}}}{\bool}$
we conclude
$\inperfour[\Psi_2]
{\notf{\td{M}{\psi_1\psi_2}}}{\notf{\td{N}{\psi_1\psi_2}}}
{\notf{\td{M}{\psi_1\psi_2}}}{\notf{\td{N}{\psi_1\psi_2}}}{\bool}$.

\item If $\td{r}{\psi_1} = x$ and $\td{x}{\psi_2} = 1$ then
$\isval{\notel{x}{\td{M}{\psi_1}}}$ and
$\notel{1}{\td{M}{\psi_1\psi_2}} \steps \td{M}{\psi_1\psi_2}$.
By our assumption,
$\inperfour[\Psi_2]
{\td{M}{\psi_1\psi_2}}{\td{N}{\psi_1\psi_2}}
{\td{M}{\psi_1\psi_2}}{\td{N}{\psi_1\psi_2}}{\bool}$.

\item If $\td{r}{\psi_1} = x$ and $\td{x}{\psi_2} = x'$ then
$\isval{\notel{x}{\td{M}{\psi_1}}}$ and
$\isval{\notel{x'}{\td{M}{\psi_1\psi_2}}}$.
By
$\ceqtm[\Psi_2]{\td{M}{\psi_1\psi_2}}{\td{N}{\psi_1\psi_2}}{\bool}$,
$\vinperfour[\Psi_2]
{\notel{x'}{\td{M}{\psi_1\psi_2}}}{\notel{x'}{\td{N}{\psi_1\psi_2}}}
{\notel{x'}{\td{M}{\psi_1\psi_2}}}{\notel{x'}{\td{N}{\psi_1\psi_2}}}
{\notb{x'}}$.
\end{enumerate}

\paragraph{Computation}
If $\coftype{M}{\bool}$, then
$\ceqtm{\coe{x.\notb{x}}{\e}{\e}{M}}{M}{\bool}$ and
$\ceqtm{\coe{x.\notb{x}}{\e}{\eb}{M}}{\notf{M}}{\bool}$.

Follows from \cref{lem:expansion}.

\paragraph{Kan}
$\ceqpretype{\notb{\e}}{\bool}$ are equally Kan, and
$\cpretype[\Psi,x]{\notb{x}}$ is Kan.

The first follows directly from \cref{lem:expansion} and the fact that
$\cpretype{\bool}$ is Kan, because $\hcomsym$ and $\coesym$ first evaluate their
type argument and $\notb{\e} \steps \bool$.

For the second,
the operational semantics for $\hcomsym$ at $\notb{x}$ involve $\coesym$, so we
start by proving the fourth Kan condition, which asserts that for
any $\msubsts{(\Psi',x')}{\psi}{(\Psi,x)}$, if
$\ceqtm[\Psi']{M}{N}{\notb{\dsubst{\td{x}{\psi}}{r}{x'}}}$, then
$\ceqtm[\Psi']
{\coe{x'.\notb{\td{x}{\psi}}}{r}{r'}{M}}
{\coe{x'.\notb{\td{x}{\psi}}}{r}{r'}{N}}
{\notb{\dsubst{\td{x}{\psi}}{r'}{x'}}}$.

If $\td{x}{\psi} = \e$ then by \cref{lem:expansion,lem:ceqpretype-ceqtm} it
suffices to show
$\ceqtm[\Psi']{\coe{x'.\bool}{r}{r'}{M}}{\coe{x'.\bool}{r}{r'}{N}}{\bool}$,
which is the fourth Kan condition of $\bool$.
If $\td{x}{\psi} = y \neq x'$ then by \cref{lem:expansion} it suffices to show
$\ceqtm[\Psi']{M}{N}{\notb{y}}$ when $\ceqtm[\Psi']{M}{N}{\notb{y}}$.
Otherwise, $\td{x}{\psi} = x'$, and we must show that if
$\ceqtm[\Psi']{M}{N}{\notb{r}}$ then
$\ceqtm[\Psi']
{\coe{x'.\notb{x'}}{r}{r'}{M}}
{\coe{x'.\notb{x'}}{r}{r'}{N}}
{\notb{r'}}$.
Establishing this requires a large case split; we focus on the unary version
because the binary one follows easily. Let
$\msubsts{\Psi_1}{\psi_1}{\Psi'}$ and $\msubsts{\Psi_2}{\psi_2}{\Psi_1}$.

\begin{enumerate}
\item
If $\td{r}{\psi_1} = \e$ and $\td{r'}{\psi_1} = \e$
then $\coe{x.\notb{x}}{\e}{\e}{\td{M}{\psi_1}} \steps
\td{M}{\psi_1} \evals M_1$,
and $\coe{x.\notb{x}}{\e}{\e}{\td{M}{\psi_1\psi_2}} \steps
\td{M}{\psi_1\psi_2}$ where
$\inper[\Psi_2]{\td{M_1}{\psi_2}}{\td{M}{\psi_1\psi_2}}{\bool}$.

\item
If $\td{r}{\psi_1} = \e$ and $\td{r'}{\psi_1} = \eb$ then
$\coe{x.\notb{x}}{\e}{\eb}{\td{M}{\psi_1}} \steps
\notf{\td{M}{\psi_1}} \evals X_1$,
and $\coe{x.\notb{x}}{\e}{\eb}{\td{M}{\psi_1\psi_2}} \steps
\notf{\td{M}{\psi_1\psi_2}}$.
By \cref{lem:ceqpretype-ceqtm} we know $\coftype[\Psi_1]{\td{M}{\psi_1}}{\bool}$
so $\coftype[\Psi_1]{\notf{\td{M}{\psi_1}}}{\bool}$ and therefore
$\inper[\Psi_2]{\td{X_1}{\psi_2}}{\notf{\td{M}{\psi_1\psi_2}}}{\bool}$.

\item
If $\td{r}{\psi_1} = 1$ and $\td{r'}{\psi_1} = x$ then
$\coe{x.\notb{x}}{1}{x}{\td{M}{\psi_1}} \steps \notel{x}{\td{M}{\psi_1}}$.

\begin{enumerate}
\item
If $\td{x}{\psi_2} = 1$ then
$\notel{1}{\td{M}{\psi_1\psi_2}} \steps \td{M}{\psi_1\psi_2}$ and
$\coe{x.\notb{x}}{1}{1}{\td{M}{\psi_1\psi_2}} \steps \td{M}{\psi_1\psi_2}$,
where $\inper[\Psi_2]{\td{M}{\psi_1\psi_2}}{\td{M}{\psi_1\psi_2}}{\bool}$.

\item
If $\td{x}{\psi_2} = 0$ then
$\notel{0}{\td{M}{\psi_1\psi_2}} \steps \notf{\td{M}{\psi_1\psi_2}}$,
$\coe{x.\notb{x}}{1}{0}{\td{M}{\psi_1\psi_2}} \steps
\notf{\td{M}{\psi_1\psi_2}}$, and the result follows from
$\coftype[\Psi_2]{\td{M}{\psi_1\psi_2}}{\bool}$.

\item
If $\td{x}{\psi_2} = x'$ then
$\isval{\notel{x'}{\td{M}{\psi_1\psi_2}}}$ and
$\coe{x.\notb{x}}{1}{x'}{\td{M}{\psi_1\psi_2}} \steps
\notel{x'}{\td{M}{\psi_1\psi_2}}$, and so
$\vinper[\Psi_2]
{\notel{x'}{\td{M}{\psi_1\psi_2}}}
{\notel{x'}{\td{M}{\psi_1\psi_2}}}
{\notb{x'}}$
because $\coftype[\Psi_2]{\td{M}{\psi_1\psi_2}}{\bool}$.
\end{enumerate}

\item
If $\td{r}{\psi_1} = 0$ and $\td{r'}{\psi_1} = x$ then
$\coe{x.\notb{x}}{0}{x}{\td{M}{\psi_1}} \steps
\notel{x}{\notf{\td{M}{\psi_1}}}$.

\begin{enumerate}
\item
If $\td{x}{\psi_2} = 0$ then
$\notel{0}{\notf{\td{M}{\psi_1\psi_2}}} \steps
\notf{\notf{\td{M}{\psi_1\psi_2}}}$ and
$\coe{x.\notb{x}}{0}{0}{\td{M}{\psi_1\psi_2}} \steps \td{M}{\psi_1\psi_2}$.
By $\coftype[\Psi_2]{\td{M}{\psi_1\psi_2}}{\bool}$ we have
$\ceqtm[\Psi_2]
{\notf{\notf{\td{M}{\psi_1\psi_2}}}}
{\td{M}{\psi_1\psi_2}}{\bool}$ and in particular
$\inper[\Psi_2]
{\notf{\notf{\td{M}{\psi_1\psi_2}}}}
{\td{M}{\psi_1\psi_2}}{\bool}$.

\item
If $\td{x}{\psi_2} = 1$ then
$\notel{1}{\notf{\td{M}{\psi_1\psi_2}}} \steps
\notf{\td{M}{\psi_1\psi_2}}$,
$\coe{x.\notb{x}}{0}{1}{\td{M}{\psi_1\psi_2}} \steps
\notf{\td{M}{\psi_1\psi_2}}$, and the result follows from
$\coftype[\Psi_2]{\notf{\td{M}{\psi_1\psi_2}}}{\bool}$.

\item
If $\td{x}{\psi_2} = x'$ then
$\isval{\notel{x'}{\notf{\td{M}{\psi_1\psi_2}}}}$,
$\coe{x.\notb{x}}{0}{x'}{\td{M}{\psi_1\psi_2}} \steps
\notel{x'}{\notf{\td{M}{\psi_1\psi_2}}}$, and the result follows from
$\coftype[\Psi_2]{\notf{\td{M}{\psi_1\psi_2}}}{\bool}$.
\end{enumerate}

\item
If $\td{r}{\psi_1} = x$ and $\td{r'}{\psi_1} = 1$ then
$\coftype[\Psi_1]{\td{M}{\psi_1}}{\notb{x}}$ and so
$\td{M}{\psi_1} \evals \notel{x}{N}$ where $\coftype[\Psi_1]{N}{\bool}$.
Therefore
$\coe{x.\notb{x}}{x}{1}{\td{M}{\psi_1}} \steps^*
\coe{x.\notb{x}}{x}{1}{\notel{x}{N}} \steps^* N \evals N_0$.

\begin{enumerate}
\item
If $\td{x}{\psi_2} = 1$ then
$\coe{x.\notb{x}}{1}{1}{\td{M}{\psi_1\psi_2}} \steps \td{M}{\psi_1\psi_2}$.
By $\coftype[\Psi_1]{\td{M}{\psi_1}}{\notb{x}}$ we know that
$\inper[\Psi_2]{\notel{1}{\td{N}{\psi_2}} \steps
\td{N}{\psi_2}}{\td{M}{\psi_1\psi_2}}{\bool}$ and by
$\coftype[\Psi_1]{N}{\bool}$ we know
$\inper[\Psi_2]{\td{N}{\psi_2}}{\td{N_0}{\psi_2}}{\bool}$.
We conclude
$\inper[\Psi_2]{\td{M}{\psi_1\psi_2}}{\td{N_0}{\psi_2}}{\bool}$ as desired.

\item
If $\td{x}{\psi_2} = 0$ then
$\coe{x.\notb{x}}{0}{1}{\td{M}{\psi_1\psi_2}} \steps
\notf{\td{M}{\psi_1\psi_2}}$, and we must show
$\inper[\Psi_2]{\notf{\td{M}{\psi_1\psi_2}}}{\td{N_0}{\psi_2}}{\bool}$.
By $\coftype[\Psi_1]{N}{\bool}$, it suffices to show
$\inper[\Psi_2]{\notf{\td{M}{\psi_1\psi_2}}}{\td{N}{\psi_2}}{\bool}$.
By $\coftype[\Psi_1]{\td{M}{\psi_1}}{\notb{x}}$
we know that for any $\msubsts{\Psi'}{\psi}{\Psi_2}$,
$\inper[\Psi']{\notel{0}{\td{N}{\psi_2\psi}} \steps
\notf{\td{N}{\psi_2\psi}}}{\td{M}{\psi_1\psi_2\psi}}{\bool}$,
and the result follows by \cref{lem:inper-not-swap}.

\item
If $\td{x}{\psi_2} = x'$ then
$\vinper[\Psi_2]
{\td{M}{\psi_1\psi_2} \evals \notel{x'}{N'}}
{\notel{x'}{\td{N}{\psi_2}}}{\notb{x'}}$ where
$\ceqtm[\Psi_2]{N'}{\td{N}{\psi_2}}{\bool}$, and
$\coe{x.\notb{x}}{x'}{1}{\td{M}{\psi_1\psi_2}} \steps^*
\coe{x.\notb{x}}{x'}{1}{\notel{x'}{N'}} \steps^* N'$. Then
$\inper[\Psi_2]{\td{N}{\psi_2}}{\td{N_0}{\psi_2}}{\bool}$ and
$\inper[\Psi_2]{N'}{\td{N}{\psi_2}}{\bool}$ so we have
$\inper[\Psi_2]{N'}{\td{N_0}{\psi_2}}{\bool}$.
\end{enumerate}

\item
If $\td{r}{\psi_1} = x$ and $\td{r'}{\psi_1} = 0$ then
$\coftype[\Psi_1]{\td{M}{\psi_1}}{\notb{x}}$ and so
$\td{M}{\psi_1} \evals \notel{x}{N}$ where $\coftype[\Psi_1]{N}{\bool}$.
Therefore
$\coe{x.\notb{x}}{x}{0}{\td{M}{\psi_1}} \steps^*
\coe{x.\notb{x}}{x}{0}{\notel{x}{N}} \steps^* \notf{N} \evals X_1$.

\begin{enumerate}
\item
If $\td{x}{\psi_2} = 0$ then
$\coe{x.\notb{x}}{0}{0}{\td{M}{\psi_1\psi_2}} \steps \td{M}{\psi_1\psi_2}$.
By $\coftype[\Psi_1]{\td{M}{\psi_1}}{\notb{x}}$ we know that
$\inper[\Psi_2]{\notel{0}{\td{N}{\psi_2}} \steps \notf{\td{N}{\psi_2}}}
{\td{M}{\psi_1\psi_2}}{\bool}$, and by
$\coftype[\Psi_1]{\notf{N}}{\bool}$ we know
$\inper[\Psi_2]{\td{X_1}{\psi_2}}{\notf{\td{N}{\psi_2}}}{\bool}$. Thus
$\inper[\Psi_2]{\td{M}{\psi_1\psi_2}}{\td{X_1}{\psi_2}}{\bool}$.

\item
If $\td{x}{\psi_2} = 1$ then
$\coe{x.\notb{x}}{1}{0}{\td{M}{\psi_1\psi_2}} \steps
\notf{\td{M}{\psi_1\psi_2}}$.
By $\coftype[\Psi_1]{\td{M}{\psi_1}}{\notb{x}}$ we know that for any
$\msubsts{\Psi'}{\psi}{\Psi_2}$,
$\inper[\Psi']{\notel{1}{\td{N}{\psi_2\psi}} \steps \td{N}{\psi_2\psi}}
{\td{M}{\psi_1\psi_2\psi}}{\bool}$, so by \cref{lem:coftype-ceqtm},
$\ceqtm[\Psi_2]{\td{N}{\psi_2}}{\td{M}{\psi_1\psi_2}}{\bool}$, and thus
$\inper[\Psi_2]{\notf{\td{N}{\psi_2}}}{\notf{\td{M}{\psi_1\psi_2}}}{\bool}$.
By $\coftype[\Psi_1]{\notf{N}}{\bool}$, we know
$\inper[\Psi_2]{\td{X_1}{\psi_2}}{\notf{\td{N}{\psi_2}}}{\bool}$. Therefore
$\inper[\Psi_2]{\td{X_1}{\psi_2}}{\notf{\td{M}{\psi_1\psi_2}}}{\bool}$.

\item
If $\td{x}{\psi_2} = x'$ then
$\vinper[\Psi_2]
{\td{M}{\psi_1\psi_2} \evals \notel{x'}{N'}}
{\notel{x'}{\td{N}{\psi_2}}}{\notb{x'}}$ where
$\ceqtm[\Psi_2]{N'}{\td{N}{\psi_2}}{\bool}$, and
$\coe{x.\notb{x}}{x'}{0}{\td{M}{\psi_1\psi_2}} \steps^*
\coe{x.\notb{x}}{x'}{0}{\notel{x'}{N'}} \steps^* \notf{N'}$. Then
$\inper[\Psi_2]{\notf{N'}}{\notf{\td{N}{\psi_2}}}{\bool}$, and by
$\coftype[\Psi_1]{\notf{N}}{\bool}$,
$\inper[\Psi_2]{\td{X_1}{\psi_2}}{\notf{\td{N}{\psi_2}}}{\bool}$, so
$\inper[\Psi_2]{\notf{N'}}{\td{X_1}{\psi_2}}{\bool}$.
\end{enumerate}

\item
If $\td{r}{\psi_1} = x$ and $\td{r'}{\psi_1} = y$ then
$\coftype[\Psi_1]{\td{M}{\psi_1}}{\notb{x}}$ and so
$\td{M}{\psi_1} \evals \notel{x}{N}$ where $\coftype[\Psi_1]{N}{\bool}$.
Therefore
$\coe{x.\notb{x}}{x}{y}{\td{M}{\psi_1}} \steps^*
\coe{x.\notb{x}}{x}{y}{\notel{x}{N}} \steps^* \notel{y}{N}$.

\begin{enumerate}
\item
If $\td{x}{\psi_2} = \e$ and $\td{y}{\psi_2} = \e$ then
$\coe{x.\notb{x}}{\e}{\e}{\td{M}{\psi_1\psi_2}} \steps
\td{M}{\psi_1\psi_2}$. By
$\coftype[\Psi_1]{\td{M}{\psi_1}}{\notb{x}}$, we have
$\inper[\Psi_2]{\notel{\e}{\td{N}{\psi_2}}}{\td{M}{\psi_1\psi_2}}{\bool}$ as
desired.

\item
If $\td{x}{\psi_2} = 0$ and $\td{y}{\psi_2} = 1$ then
$\coe{x.\notb{x}}{0}{1}{\td{M}{\psi_1\psi_2}} \steps
\notf{\td{M}{\psi_1\psi_2}}$, and
$\notel{1}{\td{N}{\psi_2}} \steps \td{N}{\psi_2}$. By
$\coftype[\Psi_1]{\td{M}{\psi_1}}{\notb{x}}$, for any
$\msubsts{\Psi'}{\psi}{\Psi_2}$,
$\inper[\Psi']{\notel{0}{\td{N}{\psi_2\psi}} \steps \notf{\td{N}{\psi_2\psi}}}
{\td{M}{\psi_1\psi_2\psi}}{\bool}$, so by \cref{lem:inper-not-swap} we have
$\inper[\Psi_2]{\td{N}{\psi_2}}{\notf{\td{M}{\psi_1\psi_2}}}{\bool}$ as desired.

\item
If $\td{x}{\psi_2} = 1$ and $\td{y}{\psi_2} = 0$ then
$\coe{x.\notb{x}}{1}{0}{\td{M}{\psi_1\psi_2}} \steps
\notf{\td{M}{\psi_1\psi_2}}$, and
$\notel{0}{\td{N}{\psi_2}} \steps \notf{\td{N}{\psi_2}}$. By
$\coftype[\Psi_1]{\td{M}{\psi_1}}{\notb{x}}$, for any
$\msubsts{\Psi'}{\psi}{\Psi_2}$,
$\inper[\Psi']{\notel{1}{\td{N}{\psi_2\psi}} \steps \td{N}{\psi_2\psi}}
{\td{M}{\psi_1\psi_2\psi}}{\bool}$, so by \cref{lem:coftype-ceqtm}
$\ceqtm[\Psi_2]{\td{N}{\psi_2}}{\td{M}{\psi_1\psi_2}}{\bool}$ and so
$\inper[\Psi_2]{\notf{\td{N}{\psi_2}}}{\notf{\td{M}{\psi_1\psi_2}}}{\bool}$.

\item
If $\td{x}{\psi_2} = 1$ and $\td{y}{\psi_2} = y'$ then
$\coe{x.\notb{x}}{1}{y'}{\td{M}{\psi_1\psi_2}} \steps
\notel{y'}{\td{M}{\psi_1\psi_2}}$, and $\isval{\notel{y'}{\td{N}{\psi_2}}}$. By
$\coftype[\Psi_1]{\td{M}{\psi_1}}{\notb{x}}$, for any
$\msubsts{\Psi'}{\psi}{\Psi_2}$,
$\inper[\Psi']{\notel{1}{\td{N}{\psi_2\psi}} \steps \td{N}{\psi_2\psi}}
{\td{M}{\psi_1\psi_2\psi}}{\bool}$, so by \cref{lem:coftype-ceqtm},
$\ceqtm[\Psi_2]{\td{N}{\psi_2}}{\td{M}{\psi_1\psi_2}}{\bool}$ and so
$\vinper[\Psi_2]{\notel{y'}{\td{M}{\psi_1\psi_2}}}{\notel{y'}{\td{N}{\psi_2}}}{\notb{y'}}$.

\item
If $\td{x}{\psi_2} = 0$ and $\td{y}{\psi_2} = y'$ then
$\coe{x.\notb{x}}{0}{y'}{\td{M}{\psi_1\psi_2}} \steps
\notel{y'}{\notf{\td{M}{\psi_1\psi_2}}}$, and
$\isval{\notel{y'}{\td{N}{\psi_2}}}$. By
$\coftype[\Psi_1]{\td{M}{\psi_1}}{\notb{x}}$, for any
$\msubsts{\Psi'}{\psi}{\Psi_2}$,
$\inper[\Psi']{\notel{0}{\td{N}{\psi_2\psi}} \steps \notf{\td{N}{\psi_2\psi}}}
{\td{M}{\psi_1\psi_2\psi}}{\bool}$, so by \cref{lem:inper-not-swap},
$\ceqtm[\Psi_2]{\td{N}{\psi_2}}{\notf{\td{M}{\psi_1\psi_2}}}{\bool}$ and so
$\vinper[\Psi_2]
{\notel{y'}{\notf{\td{M}{\psi_1\psi_2}}}}
{\notel{y'}{\td{N}{\psi_2}}}{\notb{y'}}$.

\item
If $\td{x}{\psi_2} = x'$ and $\td{y}{\psi_2} = 1$ then
$\vinper[\Psi_2]
{\td{M}{\psi_1\psi_2} \evals \notel{x'}{N'}}
{\notel{x'}{\td{N}{\psi_2}}}{\notb{x'}}$ where
$\ceqtm[\Psi_2]{N'}{\td{N}{\psi_2}}{\bool}$. Moreover,
$\coe{x.\notb{x}}{x'}{1}{\td{M}{\psi_1\psi_2}} \steps^*
\coe{x.\notb{x}}{x'}{1}{\notel{x'}{N'}} \steps^* N'$, and
$\notel{1}{\td{N}{\psi_2}} \steps \td{N}{\psi_2}$. Then
$\inper[\Psi_2]{N'}{\td{N}{\psi_2}}{\bool}$ as desired.

\item
If $\td{x}{\psi_2} = x'$ and $\td{y}{\psi_2} = 0$ then
$\vinper[\Psi_2]
{\td{M}{\psi_1\psi_2} \evals \notel{x'}{N'}}
{\notel{x'}{\td{N}{\psi_2}}}{\notb{x'}}$ where
$\ceqtm[\Psi_2]{N'}{\td{N}{\psi_2}}{\bool}$. Moreover,
$\coe{x.\notb{x}}{x'}{0}{\td{M}{\psi_1\psi_2}} \steps^*
\coe{x.\notb{x}}{x'}{0}{\notel{x'}{N'}} \steps^* \notf{N'}$, and
$\notel{0}{\td{N}{\psi_2}} \steps \notf{\td{N}{\psi_2}}$. Then
$\inper[\Psi_2]{\notf{N'}}{\notf{\td{N}{\psi_2}}}{\bool}$ as desired.

\item
If $\td{x}{\psi_2} = x'$ and $\td{y}{\psi_2} = y'$ then
$\vinper[\Psi_2]
{\td{M}{\psi_1\psi_2} \evals \notel{x'}{N'}}
{\notel{x'}{\td{N}{\psi_2}}}{\notb{x'}}$ where
$\ceqtm[\Psi_2]{N'}{\td{N}{\psi_2}}{\bool}$. Moreover,
$\coe{x.\notb{x}}{x'}{y'}{\td{M}{\psi_1\psi_2}} \steps^*
\coe{x.\notb{x}}{x'}{y'}{\notel{x'}{N'}} \steps \notel{y'}{N'}$,
$\isval{\notel{y'}{\td{N}{\psi_2}}}$, and
$\vinper[\Psi_2]{\notel{y'}{N'}}{\notel{y'}{\td{N}{\psi_2}}}{\bool}$ as desired.
\end{enumerate}
\end{enumerate}

The fifth Kan condition asserts that for any
$\msubsts{(\Psi',x')}{\psi}{(\Psi,x)}$, if
$\coftype[\Psi']{M}{\notb{\dsubst{\td{x}{\psi}}{r}{x'}}}$, then
$\ceqtm[\Psi']{\coe{x'.\notb{\td{x}{\psi}}}{r}{r}{M}}{M}
{\notb{\dsubst{\td{x}{\psi}}{r}{x'}}}$.
By \cref{lem:coftype-ceqtm}, it suffices to show that for any
$\msubsts{\Psi''}{\psi'}{\Psi'}$,
$\inper[\Psi'']
{\coe{x'.\notb{\td{x}{\psi\psi'}}}{\td{r}{\psi'}}{\td{r}{\psi'}}{\td{M}{\psi'}}}
{\td{M}{\psi'}}{\notb{\td{\dsubst{\td{x}{\psi}}{r}{x'}}{\psi'}}}$; we already
know
$\inper[\Psi'']{\td{M}{\psi'}}{\td{M}{\psi'}}
{\notb{\td{\dsubst{\td{x}{\psi}}{r}{x'}}{\psi'}}}$.

\begin{enumerate}
\item $\td{x}{\psi\psi'} = \e$.
Then $\coe{x'.\notb{\e}}{\td{r}{\psi'}}{\td{r}{\psi'}}{\td{M}{\psi'}} \steps
\td{M}{\psi'}$.

\item $\td{x}{\psi\psi'} = y \neq x'$.
Then $\coe{x'.\notb{y}}{\td{r}{\psi'}}{\td{r}{\psi'}}{\td{M}{\psi'}} \steps
\td{M}{\psi'}$.

\item
$\td{x}{\psi\psi'} = x'$ and $\td{r}{\psi'} = \e$.
Then $\coe{x'.\notb{x'}}{\e}{\e}{\td{M}{\psi'}} \steps \td{M}{\psi'}$.

\item
$\td{x}{\psi\psi'} = x'$ and $\td{r}{\psi'} = y$.
Then
$\coe{x'.\notb{x'}}{y}{y}{\td{M}{\psi'}} \steps^*
 \coe{x'.\notb{x'}}{y}{y}{\notel{y}{N}} \steps \notel{y}{N}$,
where $\td{M}{\psi'} \evals \notel{y}{N}$ and $\coftype[\Psi'']{N}{\bool}$;
thus $\vinper[\Psi'']{\notel{y}{N}}{\notel{y}{N}}{\notb{y}}$, which is what we
wanted to show.
\end{enumerate}

The proofs of the first three Kan conditions rely on the fourth Kan condition,
as well as one additional lemma:

\begin{lemma}\label{lem:ceqtm-notel-coe}
If $\coftype{M}{\notb{r}}$, then
$\ceqtm{\notel{r}{\coe{x.\notb{x}}{r}{1}{M}}}{M}{\notb{r}}$.
\end{lemma}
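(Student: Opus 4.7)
The plan is to reduce the problem to pointwise $\inper$-equality via \cref{lem:coftype-ceqtm} and then case-analyze the dimension $r$ under an arbitrary dimension substitution. First I would establish that both sides are $\coftype{-}{\notb{r}}$: the right-hand side by hypothesis, and the left-hand side because the fourth Kan condition of $\notb{x}$ (already proven) gives $\coftype{\coe{x.\notb{x}}{r}{1}{M}}{\notb{1}}$, which by $\ceqpretype{\notb{1}}{\bool}$ is an element of $\bool$, and then the introduction rule for $\notb{-}$ produces an element of $\notb{r}$. By \cref{lem:coftype-ceqtm}, it remains to show that for every $\psitd$,
\[
\inper[\Psi']{\notel{\td{r}{\psi}}{\coe{x.\notb{x}}{\td{r}{\psi}}{1}{\td{M}{\psi}}}}{\td{M}{\psi}}{\notb{\td{r}{\psi}}}.
\]

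I would then split on $\td{r}{\psi}$. When $\td{r}{\psi} = 1$, the left-hand side steps (via $\notel{1}{-}\steps -$ and $\coe{x.\notb{x}}{1}{1}{-}\steps -$) to $\td{M}{\psi}$, so both sides evaluate identically in $\bool$. When $\td{r}{\psi} = 0$, the left-hand side steps (via $\notel{0}{-}\steps\notf{-}$ and $\coe{x.\notb{x}}{0}{1}{-}\steps\notf{-}$) to $\notf{\notf{\td{M}{\psi}}}$, and the earlier lemma on involutivity of $\notf$ on $\bool$ provides the required $\inper$-equality.

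The main obstacle is the dimension-name case $\td{r}{\psi} = x$. Here the left-hand side is already a value (since $\notel{x}{-}$ is always a value), so we must decode $\vinper$ at $\notb{x}$: using $\td{M}{\psi}\evals\notel{x}{N}$ with $\coftype[\Psi']{N}{\bool}$, it suffices to prove $\ceqtm[\Psi']{\coe{x.\notb{x}}{x}{1}{\td{M}{\psi}}}{N}{\bool}$. Head expansion does not apply directly because the operational behavior of $\coe{x.\notb{x}}{x}{1}{-}$ depends on subsequent dimension substitutions. So I would invoke \cref{lem:coftype-ceqtm} a second time and, for each $\msubsts{\Psi''}{\psi'}{\Psi'}$, sub-case on $\td{x}{\psi'}\in\{0,1,x''\}$. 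In each sub-case the computation rules for $\coe{x.\notb{x}}{-}{1}{-}$ reduce the left-hand side to $\td{M}{\psi\psi'}$, $\notf{\td{M}{\psi\psi'}}$, or (using that $\td{M}{\psi\psi'}\evals\notel{x''}{N'}$) to $N'$, respectively. The coherent aspects of $M$ at $\psi$ then relate $\td{M}{\psi\psi'}$ to the evaluation of $\td{(\notel{x}{N})}{\psi'}$, which is $\td{N}{\psi'}$, $\notf{\td{N}{\psi'}}$, or $\notel{x''}{\td{N}{\psi'}}$; in the $\td{x}{\psi'}=0$ sub-case one additionally invokes the $\notf\notf$ lemma to close the gap.

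In summary, the proof reduces by two applications of \cref{lem:coftype-ceqtm} to a finite case analysis on the dimension $r$ (and, within the generic sub-case, on the further aspect of $x$), with every leaf closed by computation rules, the involutivity of $\notf$ on $\bool$, and the coherent aspects of $M$. The one nontrivial bookkeeping step is the nested substitution in the dimension-name case; everything else is mechanical.
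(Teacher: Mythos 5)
Your proposal matches the paper's proof: both sides are shown to be elements via the introduction rule and fourth Kan condition of $\notb{x}$, \cref{lem:coftype-ceqtm} reduces the goal to a pointwise $\inper$ claim, the dimension $r$ is case-analyzed under $\psi$ with the $\e$ cases closed by head reduction and the $\notf\notf$ involutivity lemma, and the $\td{r}{\psi}=x$ case is handled by a second nested invocation of \cref{lem:coftype-ceqtm} with the same three sub-cases on $\td{x}{\psi'}$. The only cosmetic difference is that the paper packages your ``invoke the $\notf\notf$ lemma to close the gap'' step in the $\td{x}{\psi'}=0$ sub-case into the dedicated \cref{lem:inper-not-swap} (which is itself proven from the $\notf\notf$ lemma), so the lifting from a quantified $\inper$ family to a $\ceqtm$ you implicitly rely on is made explicit there.
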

\begin{proof}
The introduction rule and fourth Kan condition of $\notb{x}$ imply
$\coftype{\notel{r}{\coe{x.\notb{x}}{r}{1}{M}}}{\notb{r}}$. Therefore by
\cref{lem:coftype-ceqtm} it suffices to show that for any $\psitd$,
$\inper[\Psi']
{\notel{\td{r}{\psi}}{\coe{x.\notb{x}}{\td{r}{\psi}}{1}{\td{M}{\psi}}}}
{\td{M}{\psi}}
{\notb{\td{r}{\psi}}}$.

\begin{enumerate}
\item If $\td{r}{\psi} = 0$ then
$\notel{0}{\coe{x.\notb{x}}{0}{1}{\td{M}{\psi}}} \steps
\notf{\coe{x.\notb{x}}{0}{1}{\td{M}{\psi}}} \steps
\notf{\notf{\td{M}{\psi}}}$.
By \cref{lem:ceqpretype-ceqtm},
$\coftype[\Psi']{\td{M}{\psi}}{\bool}$, so
$\ceqtm[\Psi']{\notf{\notf{\td{M}{\psi}}}}{\td{M}{\psi}}{\bool}$ and therefore
$\inper[\Psi']{\notf{\notf{\td{M}{\psi}}}}{\td{M}{\psi}}{\bool}$.

\item If $\td{r}{\psi} = 1$ then
$\notel{1}{\coe{x.\notb{x}}{1}{1}{\td{M}{\psi}}} \steps
\coe{x.\notb{x}}{1}{1}{\td{M}{\psi}} \steps \td{M}{\psi}$, and
$\inper[\Psi']{\td{M}{\psi}}{\td{M}{\psi}}{\bool}$.

\item If $\td{r}{\psi} = x$ then
$\isval{\notel{x}{\coe{x.\notb{x}}{x}{1}{\td{M}{\psi}}}}$. We know
$\inper[\Psi']{\td{M}{\psi}}{\td{M}{\psi}}{\notb{x}}$, so
$\td{M}{\psi} \evals \notel{x}{N}$ where
$\coftype[\Psi']{N}{\bool}$. To show
$\vinper[\Psi']
{\notel{x}{\coe{x.\notb{x}}{x}{1}{\td{M}{\psi}}}}
{\notel{x}{N}}{\notb{x}}$, we must show
$\ceqtm[\Psi']{\coe{x.\notb{x}}{x}{1}{\td{M}{\psi}}}{N}{\bool}$.
Again, by \cref{lem:coftype-ceqtm} it suffices to show that for any
$\msubsts{\Psi''}{\psi'}{\Psi'}$,
$\inper[\Psi'']
{\coe{x.\notb{x}}{\td{x}{\psi'}}{1}{\td{M}{\psi\psi'}}}
{\td{N}{\psi'}}{\bool}$.
\begin{enumerate}
\item If $\td{x}{\psi'} = 0$ then
$\coe{x.\notb{x}}{0}{1}{\td{M}{\psi\psi'}} \steps
\notf{\td{M}{\psi\psi'}}$. Because
$\coftype[\Psi']{\td{M}{\psi}}{\notb{x}}$ and
$\td{M}{\psi} \evals \notel{x}{N}$,
we know that for any $\msubsts{\Psi'''}{\psi''}{\Psi''}$,
$\inper[\Psi''']
{\notel{0}{\td{N}{\psi'\psi''}} \steps \notf{\td{N}{\psi'\psi''}}}
{\td{M}{\psi\psi'\psi''}}{\bool}$.
Therefore by \cref{lem:inper-not-swap} we have
$\inper[\Psi'']{\notf{\td{M}{\psi\psi'}}}{\td{N}{\psi'}}{\bool}$, which is what
we needed.

\item If $\td{x}{\psi'} = 1$ then
$\coe{x.\notb{x}}{1}{1}{\td{M}{\psi\psi'}} \steps \td{M}{\psi\psi'}$. By
$\coftype[\Psi']{\td{M}{\psi}}{\notb{x}}$ and
$\td{M}{\psi} \evals \notel{x}{N}$, we have
$\inper[\Psi'']
{\notel{1}{\td{N}{\psi'}} \steps \td{N}{\psi'}}
{\td{M}{\psi\psi'}}{\bool}$, which is what we needed.

\item If $\td{x}{\psi'} = x'$ then by
$\coftype[\Psi']{\td{M}{\psi}}{\notb{x}}$ and
$\td{M}{\psi} \evals \notel{x}{N}$, we have
$\inper[\Psi'']{\notel{x'}{\td{N}{\psi'}}}{\td{M}{\psi\psi'}}{\notb{x'}}$,
so $\td{M}{\psi\psi'} \evals \notel{x'}{N'}$ and
$\ceqtm[\Psi'']{\td{N}{\psi'}}{N'}{\bool}$. Therefore
$\coe{x.\notb{x}}{x'}{1}{\td{M}{\psi\psi'}} \steps^*
\coe{x.\notb{x}}{x'}{1}{\notel{x'}{N'}} \steps
\notel{1}{N'} \steps N'$ and we must show
$\inper[\Psi'']{N'}{\td{N}{\psi'}}{\notb{x'}}$, which follows from
$\ceqtm[\Psi'']{\td{N}{\psi'}}{N'}{\bool}$.
\qedhere
\end{enumerate}
\end{enumerate}
\end{proof}
In particular, if $\coftype{M}{\bool}$, then
$\ceqtm{\notf{\coe{x.\notb{x}}{0}{1}{M}}}{M}{\bool}$.

The first Kan condition asserts that, for any
$\msubsts{\Psi'}{\psi}{(\Psi,x)}$, if
\begin{enumerate}
\item $\ceqtm[\Psi']{M}{O}{\notb{\td{x}{\psi}}}$,
\item $\ceqtmres[\Psi',y]{r_i=\e,r_j=\e'}{N^\e_i}{N^{\e'}_j}{\notb{\td{x}{\psi}}}$
for any $i,j,\e,\e'$,
\item $\ceqtmres[\Psi',y]{r_i=\e}{N^\e_i}{P^\e_i}{\notb{\td{x}{\psi}}}$
for any $i,\e$, and
\item $\ceqtmres[\Psi']{r_i=\e}{\dsubst{N^\e_i}{r}{y}}{M}{\notb{\td{x}{\psi}}}$
for any $i,\e$, then
\end{enumerate}
then $\ceqtm[\Psi']{\hcomgeneric{\etc{r_i}}{\notb{\td{x}{\psi}}}}%
{\hcom{\etc{r_i}}{\notb{\td{x}{\psi}}}{r}{r'}{O}{\etc{y.P^\e_i}}}{\notb{\td{x}{\psi}}}$.

Let $\msubsts{\Psi_1}{\psi_1}{\Psi'}$ and
$\msubsts{\Psi_2}{\psi_2}{\Psi_1}$. We will again focus on the unary case.

\begin{enumerate}
\item If $\td{x}{\psi\psi_1} = \e$ then
$\coftype[\Psi_1]
{\hcom{\etc{\td{r_i}{\psi_1}}}{\notb{\e}}{\td{r}{\psi_1}}{\td{r'}{\psi_1}}
{\td{M}{\psi_1}}{\etc{y.\td{N^\e_i}{\psi_1}}}}
{\notb{\e}}$
by the first Kan condition of $\cwftype[\Psi_1]{\notb{\e}}$.
Therefore $\td{\hcomsym}{\psi_1} \evals X_1$ and
$\inper[\Psi_2]{\td{X_1}{\psi_2}}{\td{\hcomsym}{\psi_1\psi_2}}{\bool}$.

\item If $\td{x}{\psi\psi_1} = x'$ then
\[
\td{\hcomsym}{\psi_1} \evals
\notel{x'}{
  \hcom{\etc{\td{r_i}{\psi_1}}}{\bool}{\td{r}{\psi_1}}{\td{r'}{\psi_1}}
  {\coe{x.\notb{x}}{x'}{1}{\td{M}{\psi_1}}}
  {\etc{y.\coe{x.\notb{x}}{x'}{1}{\td{N^\e_i}{\psi_1}}}}}.
\]
Let $H$ be the argument of the above $\notel{x'}{-}$; we must show that
$\inper[\Psi_2]
{\td{\hcomsym}{\psi_1\psi_2}}
{\notel{\td{x'}{\psi_2}}{\td{H}{\psi_2}}}
{\notb{\td{x'}{\psi_2}}}$.
By the fourth Kan condition of $\cpretype[\Psi_1,x]{\notb{x}}$ and the first Kan
condition of $\cwftype[\Psi_1]{\bool}$,
$\coftype[\Psi_1]{H}{\bool}$.

\begin{enumerate}
\item If $\td{x'}{\psi_2} = 0$ then each side of
$\inper[\Psi_2]
{\td{\hcomsym}{\psi_1\psi_2}}
{\notel{\td{x'}{\psi_2}}{\td{H}{\psi_2}}}
{\notb{\td{x'}{\psi_2}}}$ steps once to:
\[
\inper[\Psi_2]
{\hcom{\etc{\td{r_i}{\psi_1\psi_2}}}{\bool}{\td{r}{\psi_1\psi_2}}{\td{r'}{\psi_1\psi_2}}
{\td{M}{\psi_1\psi_2}}{\etc{y.\td{N^\e_i}{\psi_1\psi_2}}}}
{\notf{\td{H}{\psi_2}}}{\bool}.
\]

If some $\td{r_i}{\psi_1\psi_2} = \e$ then by the elimination rule and third Kan
condition of $\cwftype[\Psi_2]{\bool}$, it suffices to show
$\inper[\Psi_2]
{\td{\dsubst{N^\e_i}{r'}{y}}{\psi_1\psi_2}}
{\notf{\coe{x.\notb{x}}{0}{1}{\td{\dsubst{N^\e_i}{r'}{y}}{\psi_1\psi_2}}}}
{\bool}$,
which follows from \cref{lem:ceqtm-notel-coe}.

If $\td{r}{\psi_1\psi_2}=\td{r'}{\psi_1\psi_2}$,
then by the elimination rule and second Kan condition of
$\cwftype[\Psi_2]{\bool}$, it suffices to show
$\inper[\Psi_2]
{\td{M}{\psi_1\psi_2}}
{\notf{\coe{x.\notb{x}}{0}{1}{\td{M}{\psi_1\psi_2}}}}
{\bool}$,
which follows from \cref{lem:ceqtm-notel-coe}.

If all $\td{r_i}{\psi_1\psi_2}$ are dimension names and
$\td{r}{\psi_1\psi_2}\neq\td{r'}{\psi_1\psi_2}$, then
the left-hand side is a value and the right-hand side steps twice to
\[
\hcom{\etc{\td{r_i}{\psi_1\psi_2}}}{\bool}{\td{r}{\psi_1\psi_2}}{\td{r'}{\psi_1\psi_2}}
{\coe{\_.\bool}{\td{r}{\psi_1\psi_2}}{\td{r'}{\psi_1\psi_2}}
  {\notf{\coe{x.\notb{x}}{0}{1}{\td{M}{\psi_1\psi_2}}}}}
{\dots}
\]
which $\eq$ the left-hand side by \cref{lem:ceqtm-notel-coe}, the first Kan
condition of $\cwftype[\Psi_2]{\bool}$, and \cref{lem:expansion}.

\item If $\td{x'}{\psi_2} = 1$ then each side steps once to
\[
\inper[\Psi_2]
{\hcom{\etc{\td{r_i}{\psi_1\psi_2}}}{\bool}{\td{r}{\psi_1\psi_2}}{\td{r'}{\psi_1\psi_2}}
{\td{M}{\psi_1\psi_2}}{\etc{y.\td{N^\e_i}{\psi_1\psi_2}}}}
{\td{H}{\psi_2}}{\bool}
\]
which follows by the first Kan condition of $\cwftype[\Psi_2]{\bool}$ and
$\ceqtm[\Psi_2]{\td{M}{\psi_1\psi_2}}{\coe{x.\notb{x}}{1}{1}{\td{M}{\psi_1\psi_2}}}
{\bool}$ (by the computation rule for $\notb{x}$).

\item If $\td{x'}{\psi_2} = x''$ then each side steps once to
\[
\vinper[\Psi_2]{\notel{x''}{\td{H}{\psi_2}}}{\notel{x''}{\td{H}{\psi_2}}}{\notb{x''}}
\]
which follows from $\coftype[\Psi_2]{\td{H}{\psi_2}}{\bool}$.
\end{enumerate}
\end{enumerate}

The second Kan condition asserts that, for any
$\msubsts{\Psi'}{\psi}{(\Psi,x)}$, if
\begin{enumerate}
\item $\coftype[\Psi']{M}{\notb{\td{x}{\psi}}}$,
\item $\ceqtmres[\Psi',y]{r_i=\e,r_j=\e'}{N^\e_i}{N^{\e'}_j}{\notb{\td{x}{\psi}}}$
for any $i,j,\e,\e'$, and
\item $\ceqtmres[\Psi']{r_i=\e}{\dsubst{N^\e_i}{r}{y}}{M}{\notb{\td{x}{\psi}}}$
for any $i,\e$,
\end{enumerate}
then $\ceqtm[\Psi']
{\hcom{\etc{r_i}}{\notb{\td{x}{\psi}}}{r}{r}{M}{\etc{y.N^\e_i}}}
{M}{\notb{\td{x}{\psi}}}$.

By \cref{lem:coftype-ceqtm}, it suffices to show that for all
$\msubsts{\Psi''}{\psi'}{\Psi'}$,
\[
\inper[\Psi'']
{\hcom{\etc{\td{r_i}{\psi'}}}{\notb{\td{x}{\psi\psi'}}}{\td{r}{\psi'}}{\td{r}{\psi'}}
{\td{M}{\psi'}}{\etc{y.\td{N^\e_i}{\psi'}}}}
{\td{M}{\psi'}}
{\notb{\td{x}{\psi\psi'}}}.
\]
\begin{enumerate}
\item If $\td{x}{\psi\psi'} = \e$ then 
$\ceqtm[\Psi'']
{\hcom{\etc{\td{r_i}{\psi'}}}{\notb{\e}}{\td{r}{\psi'}}{\td{r}{\psi'}}
{\td{M}{\psi'}}{\etc{y.\td{N^\e_i}{\psi'}}}}
{\td{M}{\psi'}}
{\notb{\e}}$
by the second Kan condition of $\cpretype[\Psi'']{\notb{\e}}$.

\item If $\td{x}{\psi\psi'} = x'$ then the left side steps to
\[
\notel{x'}{
  \hcom{\etc{\td{r_i}{\psi'}}}{\bool}{\td{r}{\psi'}}{\td{r}{\psi'}}
  {\coe{x.\notb{x}}{x'}{1}{\td{M}{\psi'}}}
  {\etc{y.\coe{x.\notb{x}}{x'}{1}{\td{N^\e_i}{\psi'}}}}}
\]
The second Kan condition of $\bool$ (along with the introduction rule of
$\notb{x'}$ and fourth Kan condition of $\cpretype[\Psi'',x]{\notb{x}}$) implies
this is
$\ceqtm[\Psi'']{-}
{\notel{x'}{\coe{x.\notb{x}}{x'}{1}{\td{M}{\psi'}}}}
{\notb{x'}}$
which by \cref{lem:ceqtm-notel-coe} is
$\ceqtm[\Psi'']{-}{\td{M}{\psi'}}{\notb{x'}}$.
\end{enumerate}

The third Kan condition asserts that, for any $\msubsts{\Psi'}{\psi}{(\Psi,x)}$,
if $r_i = \e$ for some $i$,
\begin{enumerate}
\item $\coftype[\Psi']{M}{\notb{\td{x}{\psi}}}$,
\item $\ceqtmres[\Psi',y]{r_i=\e,r_j=\e'}{N^\e_i}{N^{\e'}_j}{\notb{\td{x}{\psi}}}$
for any $i,j,\e,\e'$, and
\item $\ceqtmres[\Psi']{r_i=\e}{\dsubst{N^\e_i}{r}{y}}{M}{\notb{\td{x}{\psi}}}$
for any $i,\e$,
\end{enumerate}
then $\ceqtm[\Psi']
{\hcom{\etc{r_i}}{\notb{\td{x}{\psi}}}{r}{r'}{M}{\etc{y.N^\e_i}}}
{\dsubst{N^\e_i}{r'}{y}}{\notb{\td{x}{\psi}}}$.

By \cref{lem:coftype-ceqtm}, it suffices to show that for all
$\msubsts{\Psi''}{\psi'}{\Psi'}$,
\[
\inper[\Psi'']
{\hcom{\etc{\td{r_i}{\psi'}}}{\notb{\td{x}{\psi\psi'}}}{\td{r}{\psi'}}{\td{r'}{\psi'}}
{\td{M}{\psi'}}{\etc{y.\td{N^\e_i}{\psi'}}}}
{\dsubst{\td{N^\e_i}{\psi'}}{\td{r'}{\psi'}}{y}}
{\notb{\td{x}{\psi\psi'}}}.
\]
\begin{enumerate}
\item If $\td{x}{\psi\psi'} = \e'$ then by the third Kan condition of
$\cpretype[\Psi'']{\notb{\e'}}$ (since $\td{r_i}{\psi'} = \e$),
$\ceqtm[\Psi_1]{\td{\hcomsym}{\psi'}}
{\td{\dsubst{N^\e}{r'}{y}}{\psi'}}
{\notb{\e'}}$.

\item If $\td{x}{\psi\psi'} = x'$ then
\[
\td{\hcomsym}{\psi'} \steps
\notel{x'}{
  \hcom{\etc{\td{r_i}{\psi'}}}{\bool}{\td{r}{\psi'}}{\td{r'}{\psi'}}
  {\coe{x.\notb{x}}{x'}{1}{\td{M}{\psi'}}}
  {\etc{y.\coe{x.\notb{x}}{x'}{1}{\td{N^\e_i}{\psi'}}}}}
\]
By the introduction rule for $\notb{x'}$ and the third Kan condition of $\bool$,
this term is
$\ceqtm[\Psi'']{-}
{\notel{x'}{\coe{x.\notb{x}}{x'}{1}{\td{\dsubst{N^\e_i}{r'}{y}}{\psi'}}}}
{\notb{x'}}$, which by \cref{lem:ceqtm-notel-coe} is
$\ceqtm[\Psi'']{-}{\td{\dsubst{N^\e_i}{r'}{y}}{\psi'}}{\notb{x'}}$.
\end{enumerate}

\paragraph{Cubical}
For any $\msubsts{\Psi'}{\psi}{(\Psi,x)}$ and
$\vinper[\Psi']{M}{N}{A_0}$ (where $\notb{\td{x}{\psi}}\evals A_0$) then
$\ceqtm[\Psi']{M}{N}{\notb{\td{x}{\psi}}}$.

If $\td{x}{\psi} = \e$ then $A_0 = \bool$ and
$\ceqtm[\Psi']{M}{N}{\bool}$ (because $\cpretype[\Psi']{\bool}$ is cubical), so
$\ceqtm[\Psi']{M}{N}{\notb{\e}}$.
If $\td{x}{\psi} = x'$ then
$M = \notel{x'}{O}$,
$N = \notel{x'}{P}$,
$\ceqtm[\Psi']{O}{P}{\bool}$, and the result follows by the introduction rule
for $\notb{x'}$.

\newpage
\section{Proof theory}
\label{sec:proof-theory}

As mentioned in the introduction, there is wide latitude in the choice
of proof theories for computational type theory.  Here we consider
some rules inspired by the formal cubical type theories given
by~\citet{cohen2016cubical} and \citet{licata2014cubical} so as to make clear
that our computational semantics are a valid interpretation of those rules.
However, these semantics may be used to justify concepts, such as
strict types, that are not currently considered in the formal setting.
We emphasize that there is no strong reason to limit consideration to
inductively defined proof theories.  The role of a proof theory is to
provide access to the truth, in particular to support mechanization.
But there are methods of accessing the truth, such as decision
procedures for arithmetic, that do not fit into the conventional setup
for proof theory.

For the sake of concision and clarity, we state the following rules in
\emph{local form}, extending them to \emph{global form} by \emph{uniformity},
also called \emph{naturality}. (This format was suggested
by~\citet{martin1984intuitionistic}, itself inspired by Gentzen's original
concept of natural deduction.)
All the rules for dependent function and pair types should include the
hypotheses $\cwftype{A}$ and $\wftype{\oft aA}{B}$; the rules for identification
types should include the hypothesis $\cwftype{A}$.

Recall that $\Psi$ and $\Xi$ are unordered sets, and the equations in $\Xi$ are
also unordered. $\J$ stands for any type equality or element equality judgment.
Rather than stating every rule with an arbitrary context restriction $\Xi$, we
introduce context restrictions in the hypotheses of the $\hcomsym$ typing rules,
and discharge them with a special set of ``restriction rules.''

While the theorems in \cref{sec:types} are stated only for closed terms, the
corresponding generalizations to open-term sequents follow by the definition of
the open judgments, the fact that the introduction and elimination rules respect
equality (proven in \cref{sec:types}), and the fact that all substitutions
commute with term formers.

\paragraph{Structural rules}

\[
\infer
  {\oftype{\oft{a}{A}}{a}{A}}
  {\cwftype{A}}
\qquad
\infer
  {\judgctx{\oft aA}{\J}}
  {\judg{\J} & \cwftype{A}}
\qquad
\infer
  {\judg[\Psi']{\td{\J}{\psi}}}
  {\judg{\J} & \psitd}
\]

\[
\infer
  {\ceqtype{A'}{A}}
  {\ceqtype{A}{A'}}
\qquad
\infer
  {\ceqtype{A}{A''}}
  {\ceqtype{A}{A'} & \ceqtype{A'}{A''}}
\]

\[
\infer
  {\ceqtm{M}{M'}{A}}
  {\ceqtm{M'}{M}{A}}
\qquad
\infer
  {\ceqtm{M}{M''}{A}}
  {\ceqtm{M}{M'}{A} & \ceqtm{M'}{M''}{A}}
\]

\[
\infer
  {\ceqtm{M}{M'}{A'}}
  {\ceqtm{M}{M'}{A} & \ceqtype{A}{A'}}
\]

\[
\infer
  {\ceqtype{\subst{B}{N}{a}}{\subst{B'}{N'}{a}}}
  {\eqtype{\oft{a}{A}}{B}{B'} & \ceqtm{N}{N'}{A}}
\qquad
\infer
  {\ceqtm{\subst{M}{N}{a}}{\subst{M'}{N'}{a}}{\subst{B}{N}{a}}}
  {\eqtm{\oft{a}{A}}{M}{M'}{B} & \ceqtm{N}{N'}{A}}
\]

\paragraph{Dependent function types}

\[
\infer
  {\ceqtype{\picl{a}{A}{B}}{\picl{a}{A'}{B'}}}
  {\ceqtype{A}{A'} &
   \eqtype{\oft aA}{B}{B'}}
\]

\[
\infer
  {\ceqtm{\lam{a}{M}}{\lam{a}{M'}}{\picl{a}{A}{B}}}
  {\eqtm{\oft aA}{M}{M'}{B}}
\]

\[
\infer
  {\ceqtm{\app{M}{N}}{\app{M'}{N'}}{\subst{B}{N}{a}}}
  {\ceqtm{M}{M'}{\picl{a}{A}{B}} &
   \ceqtm{N}{N'}{A}}
\]

\[
\infer
  {\ceqtm{\app{\lam{a}{M}}{N}}{\subst{M}{N}{a}}{\subst{B}{N}{a}}}
  {\oftype{\oft aA}{M}{B} &
   \coftype{N}{A}}
\]

\[
\infer
  {\ceqtm{M}{\lam{a}{\app{M}{a}}}{\picl{a}{A}{B}}}
  {\coftype{M}{\picl{a}{A}{B}}}
\]

\paragraph{Dependent pair types}

\[
\infer
  {\ceqtype{\sigmacl{a}{A}{B}}{\sigmacl{a}{A'}{B'}}}
  {\ceqtype{A}{A'} &
   \eqtype{\oft aA}{B}{B'}}
\]

\[
\infer
  {\ceqtm{\pair{M}{N}}{\pair{M'}{N'}}{\sigmacl{a}{A}{B}}}
  {\ceqtm{M}{M'}{A} &
   \ceqtm{N}{N'}{\subst{B}{M}{a}}}
\]

\[
\infer
  {\ceqtm{\fst{P}}{\fst{P'}}{A}}
  {\ceqtm{P}{P'}{\sigmacl{a}{A}{B}}}
\qquad
\infer
  {\ceqtm{\snd{P}}{\snd{P'}}{\subst{B}{\fst{P}}{a}}}
  {\ceqtm{P}{P'}{\sigmacl{a}{A}{B}}}
\]

\[
\infer
  {\ceqtm{\fst{\pair{M}{N}}}{M}{A}}
  {\coftype{M}{A} &
   \coftype{N}{\subst{B}{M}{a}}}
\qquad
\infer
  {\ceqtm{\snd{\pair{M}{N}}}{N}{\subst{B}{M}{a}}}
  {\coftype{M}{A} &
   \coftype{N}{\subst{B}{M}{a}}}
\]

\[
\infer
  {\ceqtm{P}{\pair{\fst{P}}{\snd{P}}}{\sigmacl{a}{A}{B}}}
  {\coftype{P}{\sigmacl{a}{A}{B}}}
\]

\paragraph{Identification types}

\[
\infer
  {\ceqtype{\Id{x.A}{P_0}{P_1}}{\Id{x.A'}{P_0'}{P_1'}}}
  {\ceqtype[\Psi,x]{A}{A'} &
   \ceqtm{P_0}{P_0'}{\dsubst{A}{0}{x}} &
   \ceqtm{P_1}{P_1'}{\dsubst{A}{1}{x}}}
\]

\[
\infer
  {\ceqtm{\dlam{x}{M}}{\dlam{x}{M'}}{\Id{x.A}{P_0}{P_1}}}
  {\ceqtm[\Psi,x]{M}{M'}{A} &
   \ceqtm{\dsubst{M}{0}{x}}{P_0}{\dsubst{A}{0}{x}} &
   \ceqtm{\dsubst{M}{1}{x}}{P_1}{\dsubst{A}{1}{x}}}
\]

\[
\infer
  {\ceqtm{\dapp{M}{r}}{\dapp{M'}{r}}{\dsubst{A}{r}{x}}}
  {\ceqtm{M}{M'}{\Id{x.A}{P_0}{P_1}}}
\qquad
\infer
  {\ceqtm{\dapp{M}{\e}}{P_\e}{\dsubst{A}{\e}{x}}}
  {\coftype{M}{\Id{x.A}{P_0}{P_1}}}
\]

\[
\infer
  {\ceqtm{\dapp{(\dlam{x}{M})}{r}}{\dsubst{M}{r}{x}}{\dsubst{A}{r}{x}}}
  {\coftype[\Psi,x]{M}{A}}
\]

\[
\infer
  {\ceqtm{M}{\dlam{x}{(\dapp{M}{x})}}{\Id{x.A}{P_0}{P_1}}}
  {\coftype{M}{\Id{x.A}{P_0}{P_1}}}
\]

\paragraph{Booleans}

\[
\infer{\cwftype{\bool}}{}
\]

\[
\infer{\coftype{\true}{\bool}}{}
\qquad
\infer{\coftype{\false}{\bool}}{}
\]

\[
\infer
  {\ceqtm{\ifb{a.A}{M}{T}{F}}{\ifb{a.A'}{M'}{T'}{F'}}{\subst{A}{M}{a}}}
  {\eqtype{\oft{a}{\bool}}{A}{A'} &
   \ceqtm{M}{M'}{\bool} &
   \ceqtm{T}{T'}{\subst{A}{\true}{a}} &
   \ceqtm{F}{F'}{\subst{A}{\false}{a}}}
\]

\[
\infer
  {\ceqtm{\ifb{a.A}{\true}{T}{F}}{T}{\subst{A}{\true}{a}}}
  {\wftype{\oft{a}{\bool}}{A} &
   \coftype{T}{\subst{A}{\true}{a}} &
   \coftype{F}{\subst{A}{\false}{a}}}
\]

\[
\infer
  {\ceqtm{\ifb{a.A}{\false}{T}{F}}{F}{\subst{A}{\false}{a}}}
  {\wftype{\oft{a}{\bool}}{A} &
   \coftype{T}{\subst{A}{\true}{a}} &
   \coftype{F}{\subst{A}{\false}{a}}}
\]

\paragraph{Circle}

\[
\infer{\cwftype{\C}}{}
\]

\[
\infer
  {\coftype{\base}{\C}}
  {}
\qquad
\infer
  {\coftype{\lp{r}}{\C}}
  {}
\qquad
\infer
  {\ceqtm{\lp{\e}}{\base}{\C}}
  {}
\]

\[
\infer
  {\ceqtm{\Celim{a.A}{M}{P}{x.L}}{\Celim{a.A'}{M'}{P'}{x.L'}}{\subst{A}{M}{a}}}
  {\begin{array}{ll}
   &\eqtype{\oft{a}{\C}}{A}{A'} \\
   &\ceqtm{M}{M'}{\C} \\
   &\ceqtm{P}{P'}{\subst{A}{\base}{a}} \\
   &\ceqtm[\Psi,x]{L}{L'}{\subst{A}{\lp{x}}{a}} \\
   (\forall\e) &\ceqtm{\dsubst{L}{\e}{x}}{P}{\subst{A}{\base}{a}}
   \end{array}}
\]

\[
\infer
  {\ceqtm{\Celim{a.A}{\base}{P}{x.L}}{P}{\subst{A}{\base}{a}}}
  {\wftype{\oft{a}{\C}}{A} &
   \coftype[\Psi,x]{L}{\subst{A}{\lp{x}}{a}} &
   (\forall\e)\ \ceqtm{\dsubst{L}{\e}{x}}{P}{\subst{A}{\base}{a}}}
\]

\[
\infer
  {\ceqtm{\Celim{a.A}{\lp{r}}{P}{x.L}}{\dsubst{L}{r}{x}}{\subst{A}{\lp{r}}{a}}}
  {\wftype{\oft{a}{\C}}{A} &
   \coftype[\Psi,x]{L}{\subst{A}{\lp{x}}{a}} &
   (\forall\e)\ \ceqtm{\dsubst{L}{\e}{x}}{P}{\subst{A}{\base}{a}}}
\]

\paragraph{Hcom}

\[
\infer
  {\ceqtm{\hcomgeneric{\etc{r_i}}{A}}
         {\hcom{\etc{r_i}}{A'}{r}{r'}{O}{\etc{y.P^\e_i}}}{A}}
  {\begin{array}{ll}
   &\ceqtype{A}{A'} \\
   &\ceqtm{M}{O}{A} \\
   (\forall i,\e) &\ceqtmres[\Psi,y]{r_i = \e}{N^\e_i}{P^\e_i}{A} \\
   (\forall i,j,\e,\e') &\ceqtmres[\Psi,y]{r_i = \e,r_j = \e'}{N^\e_i}{N^{\e'}_j}{A} \\
   (\forall i,\e) &\ceqtmres{r_i = \e}{\dsubst{N^\e_i}{r}{y}}{M}{A}
   \end{array}}
\]

\[
\infer
  {\ceqtm{\hcom{\etc{r_i}}{A}{r}{r}{M}{\etc{y.N^\e_i}}}{M}{A}}
  {\begin{array}{ll}
   &\cwftype{A} \\
   &\coftype{M}{A} \\
   (\forall i,j,\e,\e') &\ceqtmres[\Psi,y]{r_i = \e,r_j = \e'}{N^\e_i}{N^{\e'}_j}{A} \\
   (\forall i,\e) &\ceqtmres{r_i = \e}{\dsubst{N^\e_i}{r}{y}}{M}{A}
   \end{array}}
\]

\[
\infer
  {\ceqtm{\hcomgeneric{r_1,\dots,r_{i-1},\e,r_{i+1},\dots,r_n}{A}}{\dsubst{N^\e_i}{r'}{y}}{A}}
  {\begin{array}{ll}
   &\cwftype{A} \\
   &\coftype{M}{A} \\
   (\forall i,j,\e,\e') &\ceqtmres[\Psi,y]{r_i = \e,r_j = \e'}{N^\e_i}{N^{\e'}_j}{A} \\
   (\forall i,\e) &\ceqtmres{r_i = \e}{\dsubst{N^\e_i}{r}{y}}{M}{A}
   \end{array}}
\]

\paragraph{Restriction rules}

\[
\infer
  {\judgres{\cdot}{\J}}
  {\judg{\J}}
\qquad
\infer
  {\judgres{\Xi,r=r}{\J}}
  {\judgres{\Xi}{\J}}
\]

\[
\infer
  {\judgres{\Xi,0=1}{\J}}
  {}
\qquad
\infer
  {\judgres[\Psi,x]{x=0,x=1}{\J}}
  {}
\]

\[
\infer
  {\judgres[\Psi,x]{x=\e}{\J}}
  {\judg{\dsubst{\J}{\e}{x}}}
\qquad
\infer
  {\judgres[\Psi,x,y]{x=\e,y=\e'}{\J}}
  {\judg{\dsubst{\dsubst{\J}{\e}{x}}{\e'}{y}}}
\]

\paragraph{Coe}

\[
\infer
  {\ceqtm{\coegeneric{x.A}}{\coe{x.A'}{r}{r'}{N}}{\dsubst{A}{r'}{x}}}
  {\ceqtype[\Psi,x]{A}{A'} &
   \ceqtm{M}{N}{\dsubst{A}{r}{x}}}
\]

\[
\infer
  {\ceqtm{\coe{x.A}{r}{r}{M}}{M}{\dsubst{A}{r}{x}}}
  {\cwftype[\Psi,x]{A} &
   \coftype{M}{\dsubst{A}{r}{x}}}
\]

\paragraph{Not}

\[
\infer{\cwftype{\notb{r}}}{}
\qquad
\infer{\ceqtype{\notb{\e}}{\bool}}{}
\]

\[
\infer
  {\ceqtm{\coe{x.\notb{x}}{\e}{\e}{M}}{M}{\bool}}
  {\coftype{M}{\bool}}
\qquad
\infer
  {\ceqtm{\coe{x.\notb{x}}{\e}{\eb}{M}}{\notf{M}}{\bool}}
  {\coftype{M}{\bool}}
\]

%%% Local Variables:
%%% mode: latex
%%% TeX-master: "dep"
%%% End:

\appendix
\section{Further developments}
\label{appendix}

This appendix contains results obtained since July 2016.

\subsection{Strict booleans}

Extend the term syntax by $\sbool$, and the operational semantics by

\[
\infer
  {\isval{\sbool}}
  {}
\qquad
\infer
  {\hcomgeneric{\etc{r_i}}{\sbool} \steps M}
  {}
\qquad
\infer
  {\coe{x.\sbool}{r}{r'}{M} \steps M}
  {}
\]

A cubical type system \emph{has strict booleans} if $\veqper{\sbool}{\sbool}$ for
all $\Psi$, and $\vinper[-]{-}{-}{\sbool}$ is the least relation such that:
\[
  \vinper{\true}{\true}{\sbool}
  \text{ and }
  \vinper{\false}{\false}{\sbool}.
\]
In the remainder of this subsection, we consider cubical type systems that have
strict booleans.

\paragraph{Pretype}
$\cpretype{\sbool}$.

For all $\psi_1,\psi_2$,
$\td{\sbool}{\psi_1}\evals\sbool$,
$\td{\sbool}{\psi_2}\evals\sbool$,
$\td{\sbool}{\psi_1\psi_2}\evals\sbool$,
and $\veqper[\Psi_2]{\sbool}{\sbool}$.

\paragraph{Introduction}
$\coftype{\true}{\sbool}$ and $\coftype{\false}{\sbool}$.

For all $\msubsts{\Psi_1}{\psi_1}{\Psi}$
and $\msubsts{\Psi_2}{\psi_2}{\Psi_1}$,
$\td{\true}{\psi_1}\evals\true$,
$\td{\true}{\psi_2}\evals\true$,
$\td{\true}{\psi_1\psi_2}\evals\true$,
and $\vinper[\Psi_2]{\true}{\true}{\sbool}$.
The $\false$ case is analogous.

Unlike the ordinary booleans, all strict booleans (at all dimensions) are equal
to either $\true$ or $\false$. This fact justifies our treatment of $\hcomsym$
at $\sbool$, and the extensionality rule proven below.

\begin{lemma}\label{lem:sbool-ceqtm}
If $\coftype{M}{\sbool}$ then either
$\ceqtm{M}{\true}{\sbool}$ or
$\ceqtm{M}{\false}{\sbool}$.
\end{lemma}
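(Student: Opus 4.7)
The plan is to exploit the very restrictive definition of $\vinper{-}{-}{\sbool}$: the only value-equalities it provides are $\vinper{\true}{\true}{\sbool}$ and $\vinper{\false}{\false}{\sbool}$. In particular, a $\sbool$-value is either $\true$ or $\false$, and the PER never relates $\true$ to $\false$.

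First I would instantiate the definition of $\coftype{M}{\sbool}$ at $\psi_1 = \psi_2 = \id$ to obtain $M \evals V$ with $\vinper{V}{V}{\sbool}$. By the observation above, $V$ is either $\true$ or $\false$; without loss of generality suppose $V = \true$, the $\false$ case being symmetric.

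The key subclaim is that for every $\psitd$, $\td{M}{\psi} \evals \true$. To prove this, apply the coherent aspects of $\coftype{M}{\sbool}$ with $\psi_1 = \id$ and $\psi_2 = \psi$: from $M \evals \true$ we get $\td{M}{\psi} \evals M'$ with $\vinper[\Psi']{\td{\true}{\psi}}{M'}{\sbool}$, i.e., $\vinper[\Psi']{\true}{M'}{\sbool}$. The only value so related to $\true$ is $\true$ itself, so $M' = \true$.

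Having established the subclaim, the conclusion $\ceqtm{M}{\true}{\sbool}$ follows from \cref{lem:coftype-ceqtm}: we already have $\coftype{M}{\sbool}$ by hypothesis and $\coftype{\true}{\sbool}$ by the introduction rule, and for any $\psitd$ both $\td{M}{\psi}$ and $\true$ evaluate to $\true$, giving $\inper[\Psi']{\td{M}{\psi}}{\true}{\sbool}$. The only potential obstacle is the bookkeeping around the coherent aspects definition; since $\vinper{-}{-}{\sbool}$ is determined by two trivial equations and exhibits no mixed relations between $\true$ and $\false$, there is essentially nothing subtle to check.
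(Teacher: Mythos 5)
Your proof is correct and takes essentially the same approach as the paper's: obtain $M\evals V$ with $\vinper{V}{V}{\sbool}$, use the exhaustive nature of the $\sbool$ value PER to conclude $V\in\{\true,\false\}$, and then invoke \cref{lem:coftype-ceqtm} together with the introduction rule. The only cosmetic difference is that you make the intermediate claim ``$\td{M}{\psi}\evals\true$ for all $\psi$'' explicit, whereas the paper passes directly to the needed fact $\inper[\Psi']{\td{\true}{\psi}}{\td{M}{\psi}}{\sbool}$ from the same coherence instance.
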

\begin{proof}
Then $M\evals M_0$ where $\vinper{M_0}{M_0}{\sbool}$, so either $M_0=\true$ or
$M_0=\false$. In the former case, for any $\psitd$,
$\inper[\Psi']{\td{\true}{\psi}}{\td{M}{\psi}}{\sbool}$, so by
\cref{lem:coftype-ceqtm} and the introduction rule for strict booleans,
$\ceqtm{M}{\true}{\sbool}$. The $\false$ case is analogous.
\end{proof}

\paragraph{Elimination}
If $\ceqtm{M}{M'}{\sbool}$,
$\wftype{\oft{a}{\sbool}}{A}$,
$\ceqtm{T}{T'}{\subst{A}{\true}{a}}$, and
$\ceqtm{F}{F'}{\subst{A}{\false}{a}}$, then
$\ceqtm{\ifb{a.B}{M}{T}{F}}{\ifb{a.B'}{M'}{T'}{F'}}{\subst{A}{M}{a}}$.
(Note that the motive of the $\ifsym$ is ignored in the relevant portion of the
operational semantics, so it need not agree with the desired type.)

By \cref{lem:sbool-ceqtm}, either $\ceqtm{M}{\true}{\sbool}$ or
$\ceqtm{M}{\false}{\sbool}$. We consider the former case; the latter is
symmetric. Then for all $\psitd$, $\td{M}{\psi}\evals\true$ so
$\ifb{a.\td{B}{\psi}}{\td{M}{\psi}}{\td{T}{\psi}}{\td{F}{\psi}} \steps^*
\ifb{a.\td{B}{\psi}}{\true}{\td{T}{\psi}}{\td{F}{\psi}} \steps
\td{T}{\psi}$. Similarly, $\td{M'}{\psi}\evals\true$, and so by
\cref{lem:expansion} on both sides, it suffices to show
$\ceqtm{T}{T'}{\subst{A}{M}{a}}$. This follows from
$\ceqtype{\subst{A}{\true}{a}}{\subst{A}{M}{a}}$.

\paragraph{Extensionality}
If $\oftype{\oft{a}{\sbool}}{M}{A}$,
$\oftype{\oft{a}{\sbool}}{M'}{A}$,
$\ceqtm{\subst{M}{\true}{a}}{\subst{M'}{\true}{a}}{\subst{A}{\true}{a}}$,
$\ceqtm{\subst{M}{\false}{a}}{\subst{M'}{\false}{a}}{\subst{A}{\false}{a}}$, and
$\ceqtm{N}{N'}{\sbool}$,
then $\ceqtm{\subst{M}{N}{a}}{\subst{M'}{N'}{a}}{\subst{A}{N}{a}}$.

By \cref{lem:sbool-ceqtm}, either $\ceqtm{N}{\true}{\sbool}$
or $\ceqtm{N}{\false}{\sbool}$. In the former case, because
$\oftype{\oft{a}{\sbool}}{M}{A}$, we know that
$\ceqtm
{\subst{M}{\true}{a}}
{\subst{M}{N}{a}}
{\subst{A}{\true}{a}}$ and similarly for $M'$. By our assumption
$\ceqtm{\subst{M}{\true}{a}}{\subst{M'}{\true}{a}}{\subst{A}{\true}{a}}$, and
the fact that $\ceqpretype{\subst{A}{\true}{a}}{\subst{A}{N}{a}}$,
we conclude
$\ceqtm
{\subst{M}{N}{a}}
{\subst{M'}{N'}{a}}
{\subst{A}{N}{a}}$ as required. The $\false$ case is similar.

\paragraph{Computation}
If $\wftype{\oft{a}{\sbool}}{A}$,
$\coftype{T}{\subst{A}{\true}{a}}$, and
$\coftype{F}{\subst{A}{\false}{a}}$, then
$\ceqtm{\ifb{a.B}{\true}{T}{F}}{T}{\subst{A}{\true}{a}}$ and
$\ceqtm{\ifb{a.B}{\false}{T}{F}}{F}{\subst{A}{\false}{a}}$.

For all $\psi$,
$\ifb{a.\td{B}{\psi}}{\true}{\td{T}{\psi}}{\td{F}{\psi}} \steps
\td{T}{\psi}$, so the former follows by \cref{lem:expansion} and
$\coftype{T}{\subst{A}{\true}{a}}$. The latter case is analogous.

\paragraph{Kan}
$\cpretype{\sbool}$ is Kan.

The first Kan condition requires that for any $\Psi'$, if
\begin{enumerate}
\item $\ceqtm[\Psi']{M}{O}{\sbool}$,
\item $\ceqtmres[\Psi',y]{r_i=\e,r_j=\e'}{N^\e_i}{N^{\e'}_j}{\sbool}$
for any $i,j,\e,\e'$,
\item $\ceqtmres[\Psi',y]{r_i=\e}{N^\e_i}{P^\e_i}{\sbool}$
for any $i,\e$, and
\item $\ceqtmres[\Psi']{r_i=\e}{\dsubst{N^\e_i}{r}{y}}{M}{\sbool}$
for any $i,\e$,
\end{enumerate}
then $\ceqtm[\Psi']{\hcomgeneric{\etc{r_i}}{\sbool}}
{\hcom{\etc{r_i}}{\sbool}{r}{r'}{O}{\etc{y.P^\e_i}}}
{\sbool}$.

For all $\psitd$, $\td{\hcomsym}{\psi} \steps \td{M}{\psi}$, so by
\cref{lem:expansion} on both sides, it suffices to show
$\ceqtm[\Psi']{M}{O}{\sbool}$, which is the first assumption.

The second Kan condition requires that for any $\Psi'$, if
\begin{enumerate}
\item $\coftype[\Psi']{M}{\sbool}$,
\item $\ceqtmres[\Psi',y]{r_i=\e,r_j=\e'}{N^\e_i}{N^{\e'}_j}
{\sbool}$
for any $i,j,\e,\e'$, and
\item $\ceqtmres[\Psi']{r_i=\e}{\dsubst{N^\e_i}{r}{y}}{M}
{\sbool}$
for any $i,\e$,
\end{enumerate}
then $\ceqtm[\Psi']{\hcom{\etc{r_i}}{\sbool}
{r}{r}{M}{\etc{y.N^\e_i}}}
{M}{\sbool}$.

Again, by \cref{lem:expansion}, this follows from $\ceqtm[\Psi']{M}{M}{\sbool}$.

The third Kan condition requires that for any $\Psi'$,
if $r_i = \e$ for some $i$,
\begin{enumerate}
\item $\coftype[\Psi']{M}{\sbool}$,
\item $\ceqtmres[\Psi',y]{r_i=\e,r_j=\e'}{N^\e_i}{N^{\e'}_j}{\sbool}$
for any $i,j,\e,\e'$, and
\item $\ceqtmres[\Psi']{r_i=\e}{\dsubst{N^\e_i}{r}{y}}{M}{\sbool}$
for any $i,\e$,
\end{enumerate}
then $\ceqtm[\Psi']{\hcom{\etc{r_i}}{\sbool}{r}{r'}{M}{\etc{y.N^\e_i}}}
{\dsubst{N^\e_i}{r'}{y}}{\sbool}$.

The second hypothesis implies that
$\coftyperes[\Psi',y]{\e=\e}{N^\e_i}{\sbool}$, and hence that
$\coftype[\Psi',y]{N^\e_i}{\sbool}$. By \cref{lem:sbool-ceqtm},
$\ceqtm[\Psi',y]{N^\e_i}{\true}{\sbool}$ or
$\ceqtm[\Psi',y]{N^\e_i}{\false}{\sbool}$. Assume the former; then
$\ceqtm[\Psi']{\dsubst{N^\e_i}{r}{y}}{\true}{\sbool}$ and
$\ceqtm[\Psi']{\dsubst{N^\e_i}{r'}{y}}{\true}{\sbool}$, so by the third
hypothesis, $\ceqtm[\Psi']{M}{\true}{\sbool}$. The result follows by
\cref{lem:expansion}.

The fourth Kan condition asserts that for any $\Psi'$, if
$\ceqtm[\Psi']{M}{N}{\sbool}$, then
$\ceqtm[\Psi']{{\coe{x.\sbool}{r}{r'}{M}}}{{\coe{x.\sbool}{r}{r'}{N}}}{\sbool}$.
This follows immediately by \cref{lem:expansion}.

The fifth Kan condition requires that for any $\Psi'$, if
$\coftype[\Psi']{M}{\sbool}$, then
$\ceqtm[\Psi']{{\coe{x.\sbool}{r}{r}{M}}}{M}{\sbool}$.
This again follows immediately by \cref{lem:expansion}.

\paragraph{Cubical}
For any $\Psi'$ and $\vinper[\Psi']{M}{N}{\sbool}$, $\ceqtm[\Psi']{M}{N}{\sbool}$.

Then $M=N=\true$ or $M=N=\false$; in each case, this follows by the introduction
rule for strict booleans.

The above construction justifies adding the following rules to those found in
\cref{sec:proof-theory}:

\[
\infer{\cwftype{\sbool}}{}
\]

\[
\infer{\coftype{\true}{\sbool}}{}
\qquad
\infer{\coftype{\false}{\sbool}}{}
\]

\[
\infer
  {\ceqtm{\ifb{}{M}{T}{F}}{\ifb{}{M'}{T'}{F'}}{\subst{A}{M}{a}}}
  {\wftype{\oft{a}{\sbool}}{A} &
   \ceqtm{M}{M'}{\sbool} &
   \ceqtm{T}{T'}{\subst{A}{\true}{a}} &
   \ceqtm{F}{F'}{\subst{A}{\false}{a}}}
\]

\[
\infer
  {\ceqtm{\ifb{}{\true}{T}{F}}{T}{\subst{A}{\true}{a}}}
  {\wftype{\oft{a}{\sbool}}{A} &
   \coftype{T}{\subst{A}{\true}{a}} &
   \coftype{F}{\subst{A}{\false}{a}}}
\]

\[
\infer
  {\ceqtm{\ifb{}{\false}{T}{F}}{F}{\subst{A}{\false}{a}}}
  {\wftype{\oft{a}{\sbool}}{A} &
   \coftype{T}{\subst{A}{\true}{a}} &
   \coftype{F}{\subst{A}{\false}{a}}}
\]

\[
\infer
  {\ceqtm{\subst{M}{N}{a}}{\subst{M'}{N'}{a}}{\subst{A}{N}{a}}}
  {\begin{array}{ll}
   &\ceqtm{N}{N'}{\sbool} \\
   &\oftype{\oft{a}{\sbool}}{M}{A} \\
   &\oftype{\oft{a}{\sbool}}{M'}{A} \\
   &\ceqtm{\subst{M}{\true}{a}}{\subst{M'}{\true}{a}}{\subst{A}{\true}{a}} \\
   &\ceqtm{\subst{M}{\false}{a}}{\subst{M'}{\false}{a}}{\subst{A}{\false}{a}}
   \end{array}}
\]

\subsection{Fixed point constructions}
\label{ssec:fixed-point}

\Cref{sec:proof-theory} relies on the existence of a cubical type system with
booleans, the circle, all dependent function, pair, and identification types,
and the $\notb{x}$ type. Here we give an explicit fixed point construction of
the least such cubical type system.

Let $\VPER$ denote the collection of ternary relations $R(\Psi,M,N)$ over
$\wfval{M}$ and $\wfval{N}$, such that $R(\Psi,-,-)$ is symmetric and
transitive. Define the \emph{approximation ordering} $R \vperleq S$ by
\[
R \subseteq S \land
\forall \Psi,M. \left( R(\Psi,M,M) \implies
\left( R(\Psi,M,-) = S(\Psi,M,-) \right) \right)
\]
That is, $R\subseteq S$ and any element of both $R$ and $S$ has identical
equivalence classes in each.

\begin{lemma}
$(\VPER,\vperleq)$ is a complete partial order, i.e., $\vperleq$ is a partial
order with a least element and joins of all directed subsets.
\end{lemma}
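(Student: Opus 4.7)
The plan is to verify each of the five CPO requirements in turn: that $\vperleq$ is (i) reflexive, (ii) antisymmetric, (iii) transitive, (iv) has a least element, and (v) admits joins of directed subsets. The first four are essentially bookkeeping; the real work is in (v), because the order relation is stricter than subset inclusion (it also demands that equivalence classes agree on shared inhabitants), and we must see that this stricter condition is preserved when passing to a directed union.

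First I would dispatch (i)--(iv). Reflexivity and antisymmetry are immediate from the two conjuncts in the definition of $\vperleq$. For transitivity, given $R \vperleq S \vperleq T$, the inclusion $R \subseteq T$ is clear; and if $R(\Psi, M, M)$, then $R \subseteq S$ gives $S(\Psi, M, M)$, so $R(\Psi, M, -) = S(\Psi, M, -) = T(\Psi, M, -)$ by composing the two class-equality conditions. For (iv), the empty relation lies in $\VPER$ (vacuously symmetric and transitive) and is below every $R$: it is contained in $R$, and its second condition is vacuous.

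The crux is (v). Let $D \subseteq \VPER$ be directed and set $R := \bigcup_{S \in D} S$. I would first show $R \in \VPER$. Symmetry is immediate, since membership in $R$ factors through some $S \in D$ which is already symmetric. For transitivity, given $R(\Psi, M, N)$ and $R(\Psi, N, P)$ coming from witnesses $S_1, S_2 \in D$, I would use directedness to find an upper bound $S \in D$ with $S_1, S_2 \vperleq S$; both facts then hold in $S$ and its transitivity yields $S(\Psi, M, P)$, hence $R(\Psi, M, P)$. Next, to see $S \vperleq R$ for every $S \in D$, the inclusion is immediate; for the equivalence-class condition, suppose $S(\Psi, M, M)$ and $R(\Psi, M, N)$ via some witness $T \in D$. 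Directedness gives $U \in D$ dominating both $S$ and $T$ in $\vperleq$; then $U(\Psi, M, M)$, so $S(\Psi, M, -) = U(\Psi, M, -)$, and $T \subseteq U$ gives $U(\Psi, M, N)$, whence $S(\Psi, M, N)$. Finally, for least-upper-boundness, if every $S \in D$ satisfies $S \vperleq T$, then $R \subseteq T$; and whenever $R(\Psi, M, M)$, some $S \in D$ witnesses it, so $S(\Psi, M, -) = T(\Psi, M, -)$, which together with $S \subseteq R \subseteq T$ forces $R(\Psi, M, -) = T(\Psi, M, -)$.

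The one step I expect to require the most care is the equivalence-class half of $S \vperleq R$, since it is the only place where the directedness hypothesis is genuinely used and where the strictness of $\vperleq$ over mere inclusion becomes essential. Everywhere else, the argument is either vacuous or a direct rephrasing of the definitions.
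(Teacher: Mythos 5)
Your proof is correct and follows essentially the same route as the paper: dispatch reflexivity, antisymmetry, and transitivity of $\vperleq$ from the corresponding properties of $\subseteq$ together with the reflexivity and transitivity of the equivalence-class condition; take the empty relation as the bottom; and show that a directed union is itself a $\VPER$ (symmetry trivially, transitivity via a directedness upper bound), is an upper bound (again via directedness for the class-agreement condition), and is least (via the class-agreement part of each $S\vperleq T$). Your least-upper-bound step derives $R(\Psi,M,-)=T(\Psi,M,-)$ by sandwiching between $S(\Psi,M,-)$ and $T(\Psi,M,-)$, whereas the paper instead composes $S\vperleq R$ and $S\vperleq T$; both are correct and equally direct, so this is a cosmetic difference, not a different approach.
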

\begin{proof}
The relation $\vperleq$ is reflexive, antisymmetric, and transitive because
$\subseteq$ is, and the condition on equivalence classes is reflexive and
transitive. The empty relation is $\vperleq$ all other $\VPER$s.

Assume $\{R_i\}_i$ is a directed set of $\VPER$s, i.e., a nonempty set such that
every pair of $R_i,R_j$ have an upper bound $R_k$ also in the set. Then we
construct the least upper bound as:
\[
(\bigsqcup_i R_i)(\Psi,M,N) \iff \exists i. R_i(\Psi,M,N)
\]
This is symmetric because each $R_i$ is. It is transitive because whenever
$R_i(\Psi,M,N)$ and $R_j(\Psi,N,O)$, they have an upper bound $R_k$ with
$R_k(\Psi,M,N)$, $R_k(\Psi,N,O)$, and thus $R_k(\Psi,M,O)$.

To see that it is an upper bound of $\{R_i\}_i$, first observe that by
definition, $R_i \subseteq \bigsqcup_i R_i$. Then, assume $R_i(\Psi,M,M)$ and
$(\bigsqcup_i R_i)(\Psi,M,N)$ (hence $R_j(\Psi,M,N)$ for some $j$). But
$R_i,R_j$ have an upper bound $R_k$ where
$R_k(\Psi,M,-)=R_j(\Psi,M,-)=R_i(\Psi,M,-)$, so in particular, $R_i(\Psi,M,N)$.
Finally, let $S$ be another upper bound of $\{R_i\}_i$, and show $\bigsqcup_i
R_i \vperleq S$. We have $\bigsqcup R_i \subseteq S$ because all $R_i \vperleq
S$ and so all $R_i \subseteq S$. Assume $(\bigsqcup_i R_i)(\Psi,M,M)$ (hence
$R_i(\Psi,M,N)$ for some $i$); because $R_i \vperleq \bigsqcup_i R_i$ and $R_i
\vperleq S$, we have $R_i(\Psi,M,-)=(\bigsqcup_i R_i)(\Psi,M,-)=S(\Psi,M,-)$.
\end{proof}

\begin{corollary}
Every monotone operator on $\VPER$ has a least fixed point.
\end{corollary}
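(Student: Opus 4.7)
The plan is to apply a transfinite Bourbaki--Witt style iteration: construct an increasing chain of VPERs starting from the least element $\bot$, take joins at limit ordinals, argue by cardinality that the chain must stabilize, and then show the stable value is the least fixed point.

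First, I would define, by transfinite recursion, a chain $(R_\alpha)$ in $\VPER$ indexed by ordinals: set $R_0 = \bot$ (the empty relation, which is the bottom element exhibited in the preceding lemma); set $R_{\alpha+1} = F(R_\alpha)$ at successor stages; and set $R_\lambda = \bigsqcup_{\alpha < \lambda} R_\alpha$ at limit stages. I would then prove by transfinite induction that $R_\alpha \vperleq R_\beta$ whenever $\alpha \leq \beta$. The base case reduces to $\bot \vperleq F(\bot)$, which holds trivially because $\bot$ contains no triples, so both clauses defining $\vperleq$ are vacuous. The successor case is exactly monotonicity of $F$. The limit case follows from the construction of directed joins in the preceding lemma, which guarantees that each $R_\alpha$ below $\lambda$ satisfies $R_\alpha \vperleq R_\lambda$; note that the chain is linearly ordered, hence directed, so the joins indeed live in $\VPER$.

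Second, I would argue stabilization. Because $\VPER$ is a (large) set, its cardinality is bounded, so there is some least ordinal $\alpha^*$ with $R_{\alpha^*} = R_{\alpha^*+1}$. By the recursion this equation unfolds to $R_{\alpha^*} = F(R_{\alpha^*})$, so $R_{\alpha^*}$ is a fixed point of $F$. Finally, I would verify leastness: let $S$ be any fixed point of $F$, and show $R_\alpha \vperleq S$ for all $\alpha$ by transfinite induction. The base case is that $\bot$ is the least element of the order; the successor case applies monotonicity to $R_\alpha \vperleq S$ to get $R_{\alpha+1} = F(R_\alpha) \vperleq F(S) = S$; and the limit case uses the universal property of $\bigsqcup$ for directed subsets, which is implicit in the ``least upper bound'' portion of the preceding lemma.

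The part requiring the most care is the limit step, precisely because $\vperleq$ is strictly stronger than $\subseteq$: an element that is self-related at stage $\alpha$ must retain exactly the same equivalence class in every later $R_\beta$. This coherence is exactly what the join construction in the preceding lemma was engineered to preserve, so the obstacle reduces to invoking that lemma cleanly at each limit ordinal. Modulo this bookkeeping, the remainder of the argument is a standard transfinite iteration, and no further properties of $F$ beyond monotonicity are needed.
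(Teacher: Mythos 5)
Your argument is correct and is, in essence, the standard transfinite-iteration proof of the CPO Fixpoint Theorem that the paper cites (Davey and Priestley, Thm.~8.22) rather than proves; the paper itself offers no proof beyond the citation. One small point of imprecision: establishing $R_\gamma \vperleq R_{\gamma+1}$ at successor stages is not a single application of monotonicity of $F$ but requires a nested transfinite induction on $\gamma$ (with the limit subcase using that $\{R_\delta\}_{\delta<\gamma}$ and $\{R_{\delta+1}\}_{\delta<\gamma}$ are mutually cofinal, so they share a least upper bound), though this is routine bookkeeping and does not affect the soundness of your overall plan.
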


A proof of this corollary can be found in \citet[8.22]{daveypriestleylattices}.
We proceed by using this fact to construct the least $\VPER$s described in the
sections on the booleans and the circle, which we will call
$\mathbb{B}(\Psi,V,V')$ and $\mathbb{C}(\Psi,V,V')$, respectively. (Then, to be
precise, a cubical type system \emph{has booleans} if $\veqper{\bool}{\bool}$
for all $\Psi$, and $\vinper{M}{N}{\bool} \iff \mathbb{B}(\Psi,M,N)$.)

\begin{definition}
For a $\VPER$ $R$, we define the derived relations:
\begin{enumerate}
\item $R^\cube(\Psi,M,N)$ iff
for all $\msubsts{\Psi_1}{\psi_1}{\Psi}$ and $\msubsts{\Psi_2}{\psi_2}{\Psi_1}$,
\begin{enumerate}
\item
$\td{M}{\psi_1}\evals M_1$,
$\td{M_1}{\psi_2}\evals M_2$,
$\td{M}{\psi_1\psi_2}\evals M_{12}$,
\item
$\td{N}{\psi_1}\evals N_1$,
$\td{N_1}{\psi_2}\evals N_2$,
$\td{N}{\psi_1\psi_2}\evals N_{12}$, and
\item
$R(\Psi_2,-,-)$ relates all of $M_2,M_{12},N_2,N_{12}$.
\end{enumerate}

\item $R^\cube(\Psi\mid\Xi,M,N)$ iff for any $\psitd$ satisfying $\Xi$,
$R^\cube(\Psi',\td{M}{\psi},\td{N}{\psi})$.
\end{enumerate}
\end{definition}

Then $\mathbb{B}$ is the least fixed point of the following monotone operator on
$\VPER$s:
\[\begin{aligned}
F(R) &= \{(\Psi,\true,\true),(\Psi,\false,\false)\} \\&\hspace{0.2em}\cup
\{(\Psi,\hcomgeneric{\etc{x_i}}{\bool},\hcom{\etc{x_i}}{\bool}{r}{r'}{O}{\etc{y.P^\e_i}})
\mid \\&\hspace{1.7em}
r\neq r' \land
R^\cube(\Psi,M,O) \land
R^\cube((\Psi,y \mid x_i=\e,x_j=\e'),N^\e_i,N^{\e'}_j) \land{} \\&\hspace{1.7em}
R^\cube((\Psi,y \mid x_i=\e),N^\e_i,P^\e_i) \land
R^\cube((\Psi \mid x_i=\e),\dsubst{N^\e_i}{r}{y},M) \}
\end{aligned}\]
This is monotone because whenever $R\vperleq S$, $F(R)\subseteq F(S)$; and for
any $(F(R))(\Psi,M,M)$, either $M=\true$ or $M=\false$ or $M=\hcomsym$, and in
each case we can show that $(F(R))(\Psi,M,-) = (F(S))(\Psi,M,-)$. One can also
check that this definition agrees with the earlier one. The construction of
$\mathbb{C}$ proceeds similarly.

Now, let $\CTS$ denote the collection of cubical type systems, which we here
denote $(E,\Phi)$ rather than $(\veqper[-]{}{},\vinper[-]{}{}-)$. We define an
approximation ordering $(E,\Phi) \ctsleq (E',\Phi')$ by
\[
E \vperleq E' \land
\forall \Psi,A. \left( E(\Psi,A,A) \implies
\left( \Phi(\Psi,A,-,-) = \Phi'(\Psi,A,-,-) \right) \right)
\]
That is, the elements of $E$ are elements of $E'$ with the same equivalence
classes, and are sent to the same PERs by $\Phi$ and $\Phi'$.

\begin{lemma}
$(\CTS,\ctsleq)$ is a complete partial order.
\end{lemma}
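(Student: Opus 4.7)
The plan is to mirror the proof for $(\VPER,\vperleq)$, handling the two components of a cubical type system in tandem and carefully using directedness to reconcile them. First I would verify that $\ctsleq$ is a partial order: reflexivity is immediate, antisymmetry reduces to antisymmetry of $\vperleq$ on the first component together with the second clause of the definition (for $E$-self-related $A$, both $\Phi$'s agree, and $\Phi$ is only defined on such $A$), and transitivity is analogous. The least element is the pair of empty relations, which trivially satisfies the cubical type system conditions vacuously and is $\ctsleq$ every other cubical type system.

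The substantial work is constructing joins of directed sets $\{(E_i,\Phi_i)\}_i$. I would set $E = \bigsqcup_i E_i$ using the VPER construction already proven, and define
\[
\Phi(\Psi,A,M,N) \iff \exists i.\ \Phi_i(\Psi,A,M,N).
\]
Then I would check that $(E,\Phi)$ is a cubical type system. That $E$ is a VPER is the earlier lemma. For each $E$-self-related $A$, I would show $\Phi(\Psi,A,-,-)$ is a PER (symmetry and transitivity propagate through the existential by using directedness to find a common $i$), and that whenever $E(\Psi,A,B)$ holds we have $\Phi(\Psi,A,-,-) = \Phi(\Psi,B,-,-)$.

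This last coherence step is the main obstacle. If $E(\Psi,A,B)$, some $E_i(\Psi,A,B)$ witnesses it, and given $\Phi(\Psi,A,M,N)$ via some $\Phi_j(\Psi,A,M,N)$, I would pick an upper bound $k$ of $i$ and $j$ in the directed set. Since $E_j(\Psi,A,A)$ holds (as $\Phi_j$ presupposes a self-related type) and $(E_j,\Phi_j) \ctsleq (E_k,\Phi_k)$, we get $\Phi_k(\Psi,A,M,N)$; since $E_i \vperleq E_k$ gives $E_k(\Psi,A,B)$ and $(E_k,\Phi_k)$ is itself a cubical type system, we conclude $\Phi_k(\Psi,B,M,N)$, hence $\Phi(\Psi,B,M,N)$. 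The reverse direction is symmetric.

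Finally I would verify the universal property. For the upper bound property, $E_i \vperleq E$ is immediate, and for $E_i$-self-related $A$ the equality $\Phi_i(\Psi,A,-,-) = \Phi(\Psi,A,-,-)$ follows by the same directedness trick: any witness $\Phi_j(\Psi,A,M,N)$ and the given $E_i$-self-relatedness can be combined at a common upper bound $k$, where $(E_i,\Phi_i) \ctsleq (E_k,\Phi_k)$ forces $\Phi_i(\Psi,A,M,N)$. For the least upper bound property, given any other upper bound $(E',\Phi')$, the inclusion $E \vperleq E'$ follows from the VPER case, and the PER-equality clause for $E$-self-related $A$ comes from picking the $i$ witnessing membership of a given pair in $\Phi$ and invoking $(E_i,\Phi_i) \ctsleq (E',\Phi')$. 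Throughout, the only genuinely new ingredient beyond the VPER argument is the systematic use of directedness to find a single index that simultaneously witnesses type-equalities and element-equalities.
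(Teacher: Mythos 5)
Your proof is correct and takes essentially the same route as the paper: both take $E=\bigsqcup_i E_i$ from the $\VPER$ lemma, take $\Phi$ to be the union over $i$ of the $\Phi_i$'s (the paper writes the conjunct $E_i(\Psi,A,A)$ explicitly, but since each $\Phi_i$ is only defined on $E_i$-self-related types this is the same relation), and then use directedness of the family to reconcile the element PERs at a common index $k$. You fill in the verification details that the paper dispatches with ``one can verify,'' but the construction and the key observation are identical.
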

\begin{proof}
The relation $\ctsleq$ is a partial order because $E \vperleq E'$ is. The pair
of empty relations is $\ctsleq$ all other $\CTS$s. Given a directed set
$\{(E_i,\Phi_i)\}_i$ of $\CTS$s, the least upper bound is:
\[
\bigsqcup_i (E_i,\Phi_i) = \left(\bigsqcup_i E_i,
\left\{(\Psi,A,M,N)\mid
\exists i.(E_i(\Psi,A,A) \land \Phi_i(\Psi,A,M,N))\right\}\right)
\]
Because $\{(E_i,\Phi_i)\}_i$ is directed, for any $j,k,\Psi,A$ where
$E_j(\Psi,A,A)$ and $E_k(\Psi,A,A)$, we have $\Phi_j(\Psi,A,-,-) =
\Phi_k(\Psi,A,-,-)$, so the second component is equal to all of these. One can
verify that the above definition yields a least upper bound.
\end{proof}

Once again, every monotone operator on $\CTS$ therefore has a least fixed point.

For every closed or open judgment $\judg{\J}$ defined in \cref{sec:meanings}, we
write $\relcts{(E,\Phi)}{\judg{\J}}$ to explicitly annotate that this judgment
is meant relative to the cubical type system $(E,\Phi)$. We have defined
$(E,\Phi)\ctsleq (E',\Phi')$ in order to ensure that
$\relcts{(E,\Phi)}{\judg{\J}}$ implies $\relcts{(E',\Phi')}{\judg{\J}}$:

\begin{lemma}\label{lem:ctsleq-approx}~
Whenever $(E,\Phi)\ctsleq (E',\Phi')$,
\begin{enumerate}
\item if
$\relcts{(E,\Phi)}{\judg{\J}}$, then
$\relcts{(E',\Phi')}{\judg{\J}}$;

\item if
$\relcts{(E,\Phi)}{\cpretype{A}}$, then
$\relcts{(E,\Phi)}{\ceqpretype{A}{B}}$ iff
$\relcts{(E',\Phi')}{\ceqpretype{A}{B}}$;

\item if
$\relcts{(E,\Phi)}{\coftype{M}{A}}$, then
$\relcts{(E,\Phi)}{\ceqtm{M}{N}{A}}$ iff
$\relcts{(E',\Phi')}{\ceqtm{M}{N}{A}}$; and

\item if
$\relcts{(E,\Phi)}{\cwftype{A}}$, then
$\relcts{(E,\Phi)}{\ceqtype{A}{B}}$ iff
$\relcts{(E',\Phi')}{\ceqtype{A}{B}}$.
\end{enumerate}
\end{lemma}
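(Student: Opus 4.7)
The plan is to prove (1) first, and then derive (2)--(4) from (1) together with the stronger direction of $\vperleq$. The key observation is that $\ctsleq$ says not only $E \subseteq E'$ and $\Phi \subseteq \Phi'$, but also that the values already typed in $(E,\Phi)$ have exactly the same equivalence classes in $(E',\Phi')$, both at the type level and at the element level.

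For part (1), I would proceed by induction on the structure of $\J$. The two base cases are $\ceqpretype{A}{B}$ and $\ceqtm{M}{N}{A}$. For the former, the conclusion $\veqperfour[\Psi_2]{A_2}{A_{12}}{B_2}{B_{12}}$ holds in $E$ and transfers to $E'$ because $E \vperleq E'$ implies $E \subseteq E'$. For the latter, the conclusion $\vinperfour[\Psi_2]{M_2}{M_{12}}{N_2}{N_{12}}{A_{12}}$ uses $\Phi(\Psi_2, A_{12}, -, -)$; but the presupposition $\ceqpretype{A}{A}$ supplies $E(\Psi_2, A_{12}, A_{12})$, so the $\ctsleq$ condition gives $\Phi(\Psi_2, A_{12}, -, -) = \Phi'(\Psi_2, A_{12}, -, -)$ and the element equality transfers unchanged. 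For $\ceqtype{A}{B}$, I need to transfer the cubical and Kan conditions from $(E,\Phi)$ to $(E',\Phi')$. Since these are expressed entirely as closed $\ceqpretype$ and $\ceqtm$ judgments at (substitution instances of) $A$ and $B$, and each such judgment is preserved by the base cases just established, the Kan/cubical conditions transfer automatically. Finally, the open judgments $\wfctx{\G}$, $\eqpretype{\G}{B}{B'}$, $\eqtm{\G}{M}{M'}{B}$, and $\eqtype{\G}{B}{B'}$ all unfold to closed judgments at arbitrary substitution instances, so an induction on the length of $\G$ completes the argument.

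For parts (2)--(4), the forward directions are instances of (1). For the backward directions, I use the second clause of $\vperleq$: if $R(\Psi, V, V)$ holds, then $R(\Psi, V, -) = S(\Psi, V, -)$ whenever $R \vperleq S$. Concretely, for (2), given $\relcts{(E,\Phi)}{\cpretype{A}}$, the evaluated values $A_{12}, A_2$ all lie in $E(\Psi_2, -, -)$'s equivalence class of $A$-values, so by $E \vperleq E'$ any $E'$-relation $E'(\Psi_2, A_{12}, B_{12})$, etc., is actually an $E$-relation, yielding $\ceqpretype{A}{B}$ in $(E,\Phi)$. For (3), the hypothesis $\coftype{M}{A}$ supplies both $E(\Psi_2, A_{12}, A_{12})$ (to invoke the second clause of $\ctsleq$, identifying $\Phi$ and $\Phi'$ at $A_{12}$) and $\Phi(\Psi_2, A_{12}, M_2, M_{12})$ (to invoke the clause of $\vperleq$ on $\Phi(\Psi_2, A_{12}, -, -)$); together these let us pull each component of $\vinperfour[\Psi_2]{M_2}{M_{12}}{N_2}{N_{12}}{A_{12}}$ back from $(E',\Phi')$ into $(E,\Phi)$. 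Part (4) combines (2), (3), and the observation that the cubical and Kan conditions involve only judgments at $A$ and $B$, which transfer in either direction by (2) and (3).

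The main technical obstacle is bookkeeping for the backward direction: in each case I must verify that the presupposed well-formedness in $(E,\Phi)$ supplies exactly the self-relatedness facts on evaluated values that are needed to invoke the stronger clause of $\vperleq$, and that these facts propagate through every dimension substitution considered in \cref{def:ceqpretype,def:ceqtm}. Once this pattern is isolated as a single lemma about $\vperleq$ (namely, that $R(\Psi, V, V)$ and $S(\Psi, V, W)$ together imply $R(\Psi, V, W)$), the remaining case analysis is routine unwinding of the definitions.
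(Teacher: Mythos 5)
The paper states this lemma without proof, so there is no paper argument to compare against; your proposal fills a gap the authors left implicit. Your core insight is correct and is the right one: $\ctsleq$ does not merely give containment but guarantees that $E$ and $E'$ have the same equivalence classes on $E$'s diagonal, and that $\Phi(\Psi,A,-,-) = \Phi'(\Psi,A,-,-)$ whenever $E(\Psi,A,A)$, which is exactly what makes the backward directions of (2)--(4) go through.

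There is, however, a genuine gap in your treatment of part (1). You propose to prove (1) first ``by induction on $\J$'' and then derive (2)--(4). But the $\ceqtype{A}{B}$ case of (1) requires transferring the cubical and Kan conditions from $(E,\Phi)$ to $(E',\Phi')$, and these are \emph{implications}: the Kan conditions say ``if certain $\ceqtm$ and $\ceqtmres$ hypotheses hold at $\td{A}{\psi}$, then a $\ceqtm$ conclusion holds.'' Transferring an implication forward requires transferring its hypotheses \emph{backward} --- precisely the iff content of (3) that you develop only afterward. (The cubical condition has the same shape, and its hypothesis is moreover a raw $\vinper$ relation, not a $\ceqtm$ judgment, so your claim that everything is ``expressed entirely as closed $\ceqpretype$ and $\ceqtm$ judgments'' is not quite accurate either --- though the fix is the same: $\cpretype{A}$ in $(E,\Phi)$ puts $A_0$ in $E$'s diagonal, so $\Phi$ and $\Phi'$ agree on that slice.) The fix is to reorganize: establish the iff's in (2) and (3) directly from the $\ctsleq$ clause, \emph{then} handle $\ceqtype$, $\wfctx$, and the open judgments, from which (1) and (4) follow. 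A smaller wrinkle: you appeal to ``$\vperleq$ on $\Phi(\Psi_2,A_{12},-,-)$'' and want $\Phi(\Psi_2,A_{12},M_2,M_{12})$ to pull back a relation, but the second clause of $\ctsleq$ asserts a literal \emph{equality} of PERs at $A_{12}$, so no self-relatedness of $M$ is needed --- $\cpretype{A}$ alone suffices for the backward direction of (3). This is harmless since $\coftype{M}{A}$ implies $\cpretype{A}$, but the appeal to the pull-back lemma is unnecessary for the $\Phi$ component.
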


Finally, we define the monotone operator $F(E,\Phi) = (E',\Phi')$ where
\[\begin{aligned}
E' &= \{(\Psi,\bool,\bool)\} \\&\hspace{0.2em}\cup
\{(\Psi,\C,\C)\} \\&\hspace{0.2em}\cup
\{(\Psi,\picl{a}{A}{B},\picl{a}{A'}{B'}) \mid \\&\hspace{1.7em}
\relcts{(E,\Phi)}{\ceqtype{A}{A'}} \land
\relcts{(E,\Phi)}{\eqtype{\oft{a}{A}}{B}{B'}} \} \\&\hspace{0.2em}\cup
\{(\Psi,\sigmacl{a}{A}{B},\sigmacl{a}{A'}{B'}) \mid \\&\hspace{1.7em}
\relcts{(E,\Phi)}{\ceqtype{A}{A'}} \land
\relcts{(E,\Phi)}{\eqtype{\oft{a}{A}}{B}{B'}} \} \\&\hspace{0.2em}\cup
\{(\Psi,\Id{x.A}{P_0}{P_1},\Id{x.A'}{P_0'}{P_1'}) \mid \\&\hspace{1.7em}
\relcts{(E,\Phi)}{\ceqtype[\Psi,x]{A}{A'}} \land
\relcts{(E,\Phi)}{\ceqtm{P_\e}{P_\e'}{\dsubst{A}{\e}{x}}} \} \\&\hspace{0.2em}\cup
\{((\Psi,x),\notb{x},\notb{x}) \mid
\relcts{(E,\Phi)}{\cwftype[\Psi,x]{\bool}} \}
\\
\Phi' &= \{(\Psi,\bool,V,V') \mid \mathbb{B}(\Psi,V,V')\} \\&\hspace{0.2em}\cup
\{(\Psi,\C,V,V') \mid \mathbb{C}(\Psi,V,V')\} \\&\hspace{0.2em}\cup
\{(\Psi,\picl{a}{A}{B},\lam{a}{M},\lam{a}{M'}) \mid
\relcts{(E,\Phi)}{\eqtm{\oft{a}{A}}{M}{M'}{B}}\} \\&\hspace{0.2em}\cup
\{(\Psi,\sigmacl{a}{A}{B},\pair{M}{N},\pair{M'}{N'}) \mid \\&\hspace{1.7em}
\relcts{(E,\Phi)}{\ceqtm{M}{M'}{A}} \land
\relcts{(E,\Phi)}{\ceqtm{N}{N'}{\subst{B}{M}{a}}}\} \\&\hspace{0.2em}\cup
\{(\Psi,\Id{x.A}{P_0}{P_1},\dlam{x}{M},\dlam{x}{M'}) \mid \\&\hspace{1.7em}
\relcts{(E,\Phi)}{\ceqtm[\Psi,x]{M}{M'}{A}} \land
\relcts{(E,\Phi)}{\ceqtm{\dsubst{M}{\e}{x}}{P_\e}{\dsubst{A}{\e}{x}}}
\} \\&\hspace{0.2em}\cup
\{((\Psi,x),\notb{x},\notel{x}{M},\notel{x}{M'}) \mid
\relcts{(E,\Phi)}{\ceqtm[\Psi,x]{M}{M'}{\bool}}\}
\end{aligned}\]

\begin{theorem}
$F$ is monotone, and any fixed point of $F$ has booleans, the circle, all
dependent function types, all dependent pair types, all identification types,
and the $\notb{x}$ type.
\end{theorem}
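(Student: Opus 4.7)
The proof is in two parts: monotonicity of $F$, and the claim that every fixed point contains each listed type former. The workhorse is \cref{lem:ctsleq-approx}.

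For monotonicity, suppose $(E,\Phi)\ctsleq(E',\Phi')$. I would case on the defining clauses of $F$ to show $F(E,\Phi)\ctsleq F(E',\Phi')$. Forward inclusion of both components is immediate: each clause of $F(E,\Phi)$ requires a collection of type- or element-equality judgments to hold relative to $(E,\Phi)$, and by part~(1) of \cref{lem:ctsleq-approx} each such judgment holds relative to $(E',\Phi')$ as well. For the equivalence-class preservation condition demanded by $\ctsleq$, fix a type $A$ reflexively related in $F(E,\Phi)_E$. By inspecting $F$, the reflexive membership of $A$ supplies the hypotheses ensuring $\cwftype{A_i}$ (and, where needed, $\wftype{\oft aA_i}{B_i}$) relative to $(E,\Phi)$ for every constituent subterm of $A$. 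Now if $A \approx B$ in $F(E',\Phi')_E$, then $A$ and $B$ must share a head constructor and the corresponding constituent equalities hold in $(E',\Phi')$; applying parts~(3)–(4) of \cref{lem:ctsleq-approx} to each constituent, using those reflexive well-formedness facts as the required presuppositions, transfers the equalities back to $(E,\Phi)$, so $A \approx B$ in $F(E,\Phi)_E$. The $\Phi$-component follows identically, using part~(3) on the element equalities appearing in each clause.

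For the fixed point claim, if $(E,\Phi)=F(E,\Phi)$ then each clause of the definition of $F$ becomes an equality characterizing $E$ and $\Phi$. For booleans, the $\bool$-clauses of $F$ give $E(\Psi,\bool,\bool)$ for every $\Psi$ and $\Phi(\Psi,\bool,-,-)=\mathbb{B}(\Psi,-,-)$; since $\mathbb{B}$ was constructed as the least fixed point of the operator describing the ``has booleans'' definition, this matches the ``least relation such that \dots'' condition exactly. The circle case is identical, with $\mathbb{C}$ in place of $\mathbb{B}$. The dependent function, dependent pair, identification, and $\notb{x}$ clauses directly instantiate the corresponding ``has $X$'' definitions from \cref{sec:types}; no further work is needed beyond unfolding the matching clause of $F$.

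The main obstacle is the equivalence-class preservation half of monotonicity, since the forward use of \cref{lem:ctsleq-approx}(1) only handles inclusion, not the additional constraint imposed by $\ctsleq$. Discharging it requires systematically marshaling the reflexive hypotheses already encoded by membership in $F(E,\Phi)$ in order to invoke the ``iff'' conclusions of parts~(2)–(4); a secondary subtlety is that dependent formers mention open judgments like $\eqtype{\oft{a}{A}}{B}{B'}$, so the lemma must be used in open form as well, which is legitimate because \cref{lem:ctsleq-approx} is stated for an arbitrary judgment $\J$. Both points are routine bookkeeping once the appropriate reflexive presuppositions are in hand.
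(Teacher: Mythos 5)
Your proof follows the paper's approach exactly: monotonicity from \cref{lem:ctsleq-approx} by inspecting the clauses of $F$, and the fixed-point property by unwinding the matching clauses. One small step the paper makes explicit and you gloss over with ``no further work is needed'': the ``has their $X$ type'' definitions quantify over all $\psitd$, while a clause of $F$ only supplies the entry at a fixed $\Psi$, so you must first observe that $\ceqtype{A}{A'}$ and $\eqtype{\oft{a}{A}}{B}{B'}$ are closed under dimension substitution (\cref{lem:td-judgments}) and then apply the clause of $F$ at each $\Psi'$ to obtain the remaining entries of $E$ and $\Phi$.
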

\begin{proof}
Monotonicity follows directly from \cref{lem:ctsleq-approx} by inspecting the
definition of $F$.

Let $(E,\Phi)$ be a fixed point of $F$. $(E,\Phi)$ has booleans and the circle
because every cubical type system in the image of $F$ clearly does. Given that
$(E,\Phi)$ has booleans, in order to show that it has the $\notb{x}$ type, we
must show that $E((\Psi,x),\notb{x},\notb{x})$ and $\Phi((\Psi,x),\notb{x},-,-)$
is the least relation relating $\notel{x}{M}$ and $\notel{x}{M'}$ when
$\relcts{(E,\Phi)}{\ceqtm[\Psi,x]{M}{M'}{\bool}}$. These follow by
$\relcts{(E,\Phi)}{\cwftype[\Psi,x]{\bool}}$, $(E,\Phi)=F(E,\Phi)$, and the
definition of $F$.

Assume
$\relcts{(E,\Phi)}{\ceqtype{A}{A'}}$ and
$\relcts{(E,\Phi)}{\eqtype{\oft{a}{A}}{B}{B'}}$; we want to show that $(E,\Phi)$
has their dependent function type. For all $\psitd$,
$\relcts{(E,\Phi)}{\ceqtype[\Psi']{\td{A}{\psi}}{\td{A'}{\psi}}}$ and
$\relcts{(E,\Phi)}
{\eqtype[\Psi']{\oft{a}{\td{A}{\psi}}}{\td{B}{\psi}}{\td{B'}{\psi}}}$. Because
$(E,\Phi)=F(E,\Phi)$, by the definition of $F$,
$E(\Psi',\picl{a}{\td{A}{\psi}}{\td{B}{\psi}},
\picl{a}{\td{A'}{\psi}}{\td{B'}{\psi}})$ and
$\Phi(\Psi',\picl{a}{\td{A}{\psi}}{\td{B}{\psi}},-,-)$ is the least relation
relating $\lam{a}{M}$ and $\lam{a}{M'}$ when
$\relcts{(E,\Phi)}{\eqtm[\Psi']{\oft{a}{\td{A}{\psi}}}{M}{M'}{\td{B}{\psi}}}$.
The argument for dependent pair types and identification types is similar.
\end{proof}

It follows that $F$ has a least fixed point $(E,\Phi)$ which validates the rules
we have given in \cref{sec:proof-theory}. However, by \cref{lem:ctsleq-approx},
any judgments that hold in $(E,\Phi)$ also hold in any larger cubical type
system $(E,\Phi)\ctsleq (E',\Phi')$.

\subsection{Additional lemmas}

Here we prove a number of general-purpose lemmas omitted in \cref{ssec:lemmas}.

\begin{lemma}\label{lem:ceqpretype-cubical}
If $\ceqpretype{A}{B}$ and $A$ is cubical, then $B$ is cubical.
\end{lemma}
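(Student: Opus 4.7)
The plan is to unfold the definition of cubical for $B$ and reduce it to the cubical property of $A$ via the assumed equality $\ceqpretype{A}{B}$. To show $B$ is cubical, I need: for any $\psitd$ and any $\vinper[\Psi']{M}{N}{B_0}$ with $\td{B}{\psi}\evals B_0$, we have $\ceqtm[\Psi']{M}{N}{\td{B}{\psi}}$.

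First I would use $\ceqpretype{A}{B}$ (instantiated with $\psi_1 = \psi$ and $\psi_2 = \id[\Psi']$) to conclude that $\td{A}{\psi}\evals A_0$, $\td{B}{\psi}\evals B_0$, and $\veqper[\Psi']{A_0}{B_0}$. By the second clause of \cref{def:veqper}, the PERs $\vinper[\Psi']{-}{-}{A_0}$ and $\vinper[\Psi']{-}{-}{B_0}$ then coincide, so from $\vinper[\Psi']{M}{N}{B_0}$ we obtain $\vinper[\Psi']{M}{N}{A_0}$.

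Next I would apply the hypothesis that $A$ is cubical at the same $\psitd$ to deduce $\ceqtm[\Psi']{M}{N}{\td{A}{\psi}}$. To transport this to $\td{B}{\psi}$, I would invoke \cref{lem:ceqpretype-ceqtm}, using the observation (made immediately after \cref{def:ceqtmres}) that $\ceqpretype{A}{B}$ is equivalent to $\ceqpretype[\Psi']{\td{A}{\psi}}{\td{B}{\psi}}$ holding for all $\psitd$. Combining these yields $\ceqtm[\Psi']{M}{N}{\td{B}{\psi}}$, which is exactly what is needed.

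There is no real obstacle here; the argument is a direct book-keeping exercise consisting of (i) reading off the value-level coincidence of the two PERs from $\veqper[\Psi']{A_0}{B_0}$, (ii) invoking cubicality of $A$, and (iii) transporting along $\ceqpretype{A}{B}$ via \cref{lem:ceqpretype-ceqtm}. The only point that merits care is ensuring that the presuppositions of \cref{lem:ceqpretype-ceqtm} are available at $\Psi'$, which is guaranteed because exact equality of pretypes is closed under dimension substitution.
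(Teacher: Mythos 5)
Your proof is correct and takes essentially the same approach as the paper: extract $\veqper[\Psi']{A_0}{B_0}$ from $\ceqpretype{A}{B}$, use it to transfer $\vinper[\Psi']{M}{N}{B_0}$ to $\vinper[\Psi']{M}{N}{A_0}$, apply cubicality of $A$, and finish with \cref{lem:ceqpretype-ceqtm}. The extra care you take in noting the presuppositions are met is welcome but not a point of divergence.
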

\begin{proof}
For any $\psitd$ and $\vinper[\Psi']{M}{N}{B_0}$ for $\td{B}{\psi}\evals B_0$,
show that $\ceqtm[\Psi']{M}{N}{\td{B}{\psi}}$. But
$\veqper[\Psi']{\td{A}{\psi}\evals A_0}{B_0}$ so $\vinper[\Psi']{M}{N}{A_0}$; by
$A$ cubical we have
$\ceqtm[\Psi']{M}{N}{\td{A}{\psi}}$; and the result follows by
\cref{lem:ceqpretype-ceqtm}.
\end{proof}

The following lemmas are variations on head expansion (\cref{lem:expansion}).

\begin{lemma}\label{lem:coherent-expansion}
Assume we have $\wftm{M}$, $\cpretype{A}$, and a family of terms
$\{M^{\Psi'}_\psi\}_{\psitd}$ such that for all $\psitd$,
$\ceqtm[\Psi']
{M^{\Psi'}_{\psi}}
{\td{(M^{\Psi}_{\id})}{\psi}}
{\td{A}{\psi}}$ and
$\td{M}{\psi} \steps^* M^{\Psi'}_\psi$. Then
$\ceqtm{M}{M^\Psi_{\id}}{A}$.
\end{lemma}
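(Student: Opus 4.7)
The plan is to reduce the statement to a direct verification against \cref{def:ceqtm}, using the hypothesis at well-chosen dimension substitutions to produce each required aspect and to link them pairwise via transitivity of the PERs. Write $N \eqdef M^\Psi_{\id}$ for readability. First, I would observe that instantiating the hypothesis at $\psi = \id$ gives $\ceqtm{M^\Psi_{\id}}{M^\Psi_{\id}}{A}$, i.e., $\coftype{N}{A}$; by \cref{lem:td-judgments} we therefore obtain $\coftype[\Psi']{\td{N}{\psi}}{\td{A}{\psi}}$ for every $\psitd$, and in particular $\td{N}{\psi_1}\evals N_1$, $\td{N_1}{\psi_2}\evals N_2$, $\td{N}{\psi_1\psi_2}\evals N_{12}$, and the PER $\vinper[\Psi_2]{N_2}{N_{12}}{A_{12}}$ (where $\td{A}{\psi_1\psi_2}\evals A_{12}$) holds.

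Next, I would fix arbitrary $\msubsts{\Psi_1}{\psi_1}{\Psi}$ and $\msubsts{\Psi_2}{\psi_2}{\Psi_1}$ and produce the aspects of $M$. Instantiating the hypothesis at $\psi=\psi_1$ yields $\ceqtm[\Psi_1]{M^{\Psi_1}_{\psi_1}}{\td{N}{\psi_1}}{\td{A}{\psi_1}}$, so $M^{\Psi_1}_{\psi_1}\evals M_1$ for some value $M_1$; combined with $\td{M}{\psi_1}\steps^* M^{\Psi_1}_{\psi_1}$ this gives $\td{M}{\psi_1}\evals M_1$. Applying \cref{def:ceqtm} to this same judgment with inner substitutions $\id_{\Psi_1}$ and $\psi_2$ produces values $\td{M_1}{\psi_2}\evals M_2$ and $\td{M^{\Psi_1}_{\psi_1}}{\psi_2}\evals X$ such that $\vinperfour[\Psi_2]{M_2}{X}{N_2}{N_{12}}{A_{12}}$. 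Similarly, instantiating the hypothesis at $\psi=\psi_1\psi_2$ gives $\ceqtm[\Psi_2]{M^{\Psi_2}_{\psi_1\psi_2}}{\td{N}{\psi_1\psi_2}}{\td{A}{\psi_1\psi_2}}$, so $M^{\Psi_2}_{\psi_1\psi_2}\evals M_{12}$ for some value $M_{12}$; together with $\td{M}{\psi_1\psi_2}\steps^* M^{\Psi_2}_{\psi_1\psi_2}$ this yields $\td{M}{\psi_1\psi_2}\evals M_{12}$, and the same judgment (at identity inner substitutions) gives $\vinper[\Psi_2]{M_{12}}{N_{12}}{A_{12}}$.

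Finally, combining $\vinperfour[\Psi_2]{M_2}{X}{N_2}{N_{12}}{A_{12}}$ with $\vinper[\Psi_2]{M_{12}}{N_{12}}{A_{12}}$ by symmetry and transitivity of the PER at $A_{12}$ gives $\vinperfour[\Psi_2]{M_2}{M_{12}}{N_2}{N_{12}}{A_{12}}$, which is exactly the condition in \cref{def:ceqtm} needed to conclude $\ceqtm{M}{N}{A}$. The only real obstacle is the bookkeeping: making sure the ``intermediate'' aspect $X$ of $M^{\Psi_1}_{\psi_1}$ under $\psi_2$ is used solely as a bridge between the two instances of the hypothesis and then discarded by transitivity; once this chain is set up, no Kan-theoretic content is required, only the definitions of $\ceqtm$ and coherent aspects.
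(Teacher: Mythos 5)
Your proof is correct and follows essentially the same strategy as the paper: verify the $\vinperfour$ condition of \cref{def:ceqtm} directly by chaining aspects through transitivity, using the instance of the hypothesis at $\psi_1$ (together with the coherence it confers on $M^{\Psi_1}_{\psi_1}$) to link $M_2$ through $X$ to the $N$-aspects, and the instance at $\psi_1\psi_2$ to bring $M_{12}$ into the chain. The paper presents the chain as $M_2 \sim M_{12} \sim N_{12} \sim N_2$, establishing each consecutive pair, whereas you establish $M_2 \sim X \sim N_2 \sim N_{12}$ in a single unfolding of the hypothesis at $\psi_1$ and then adjoin $M_{12} \sim N_{12}$ separately; the ingredients and the overall structure are the same.
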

\begin{proof}
For any $\msubsts{\Psi_1}{\psi_1}{\Psi}$ and $\msubsts{\Psi_2}{\psi_2}{\Psi_1}$,
we must show that
$\inperfour[\Psi_2]
{\td{M_1}{\psi_2}}
{\td{M}{\psi_1\psi_2}}
{\td{(M^\Psi_{\id})}{\psi_1\psi_2}}
{\td{M'_1}{\psi_2}}
{\td{A}{\psi_1\psi_2}}$
where $\td{M}{\psi_1}\evals M_1$ and $\td{(M^\Psi_{\id})}{\psi_1}\evals M'_1$.
\begin{enumerate}
\item $\inper[\Psi_2]{\td{M_1}{\psi_2}}{\td{M}{\psi_1\psi_2}}
{\td{A}{\psi_1\psi_2}}$.

We know $\td{M}{\psi_1} \steps^* M^{\Psi_1}_{\psi_1}$ and
$\coftype[\Psi_1]{M^{\Psi_1}_{\psi_1}}{\td{A}{\psi_1}}$, so
$M^{\Psi_1}_{\psi_1}\evals M_1$ and
$\inper[\Psi_2]{\td{M_1}{\psi_2}}{\td{(M^{\Psi_1}_{\psi_1})}{\psi_2}}
{\td{A}{\psi_1\psi_2}}$.
By $\ceqtm[\Psi_1]
{M^{\Psi_1}_{\psi_1}}
{\td{(M^{\Psi}_{\id})}{\psi_1}}
{\td{A}{\psi_1}}$ under $\psi_2$,
$\ceqtm[\Psi_2]
{\td{(M^{\Psi}_{\id})}{\psi_1\psi_2}}
{M^{\Psi_2}_{\psi_1\psi_2}}
{\td{A}{\psi_1\psi_2}}$, and transitivity, we have
$\ceqtm[\Psi_2]
{\td{(M^{\Psi_1}_{\psi_1})}{\psi_2}}
{M^{\Psi_2}_{\psi_1\psi_2}}
{\td{A}{\psi_1\psi_2}}$ and thus
$\inper[\Psi_2]{\td{(M^{\Psi_1}_{\psi_1})}{\psi_2}}{M^{\Psi_2}_{\psi_1\psi_2}}
{\td{A}{\psi_1\psi_2}}$. The result
$\inper[\Psi_2]
{\td{M}{\psi_1\psi_2} \steps^* M^{\Psi_2}_{\psi_1\psi_2}}
{\td{M_1}{\psi_2}}
{\td{A}{\psi_1\psi_2}}$ follows.

\item $\inper[\Psi_2]{\td{M}{\psi_1\psi_2}}{\td{(M^\Psi_{\id})}{\psi_1\psi_2}}
{\td{A}{\psi_1\psi_2}}$.

Follows from
$\inper[\Psi_2]
{\td{M}{\psi_1\psi_2} \steps^* M^{\Psi_2}_{\psi_1\psi_2}}
{\td{(M^\Psi_{\id})}{\psi_1\psi_2}}
{\td{A}{\psi_1\psi_2}}$, by
$\ceqtm[\Psi_2]
{M^{\Psi_2}_{\psi_1\psi_2}}
{\td{(M^{\Psi}_{\id})}{\psi_1\psi_2}}
{\td{A}{\psi_1\psi_2}}$.

\item $\inper[\Psi_2]{\td{(M^\Psi_{\id})}{\psi_1\psi_2}}{\td{M'_1}{\psi_2}}
{\td{A}{\psi_1\psi_2}}$.

Follows from $\coftype{M^{\Psi}_{\id}}{A}$.
\qedhere
\end{enumerate}
\end{proof}

\Cref{lem:expansion} is a special case of the above, in which $M^{\Psi'}_\psi =
\td{M'}{\psi}$.

\begin{lemma}\label{lem:pretype-expansion}
If $\ceqpretype{A'}{B}$ and for all $\psitd$,
$\td{A}{\psi} \steps^* \td{A'}{\psi}$, then $\ceqpretype{A}{B}$.
\end{lemma}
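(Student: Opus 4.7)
The plan is to mimic the proof of \cref{lem:expansion}, transporting the three clauses of \cref{def:ceqpretype} from the hypothesis $\ceqpretype{A'}{B}$ to the conclusion $\ceqpretype{A}{B}$ by using the reduction $\td{A}{\psi}\steps^*\td{A'}{\psi}$ together with determinacy of evaluation. The $B$-clause transfers verbatim; only the $A$-clause requires work, and that work amounts to identifying the aspects of $A$ with those of $A'$.

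First, fix arbitrary $\msubsts{\Psi_1}{\psi_1}{\Psi}$ and $\msubsts{\Psi_2}{\psi_2}{\Psi_1}$. From $\ceqpretype{A'}{B}$ extract values $A'_1,A'_2,A'_{12},B_1,B_2,B_{12}$ with $\td{A'}{\psi_1}\evals A'_1$, $\td{A'_1}{\psi_2}\evals A'_2$, $\td{A'}{\psi_1\psi_2}\evals A'_{12}$ (and the analogous chain for $B$), together with $\veqperfour[\Psi_2]{A'_2}{A'_{12}}{B_2}{B_{12}}$. The clause for $B$ in \cref{def:ceqpretype} for the pair $(A,B)$ is then immediate.

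For the $A$-clause, observe that by hypothesis $\td{A}{\psi_1}\steps^*\td{A'}{\psi_1}\steps^* A'_1$, and since $A'_1$ is a value this composite evaluation gives $\td{A}{\psi_1}\evals A'_1$; similarly $\td{A}{\psi_1\psi_2}\steps^*\td{A'}{\psi_1\psi_2}\steps^* A'_{12}$, so $\td{A}{\psi_1\psi_2}\evals A'_{12}$. For the middle aspect, since $A_1=A'_1$ we have $\td{A_1}{\psi_2}=\td{A'_1}{\psi_2}\evals A'_2$, so $A_2=A'_2$. Hence we may take the aspects of $A$ to be literally $A'_1$, $A'_2$, $A'_{12}$, and the required $\veqperfour[\Psi_2]{A_2}{A_{12}}{B_2}{B_{12}}$ coincides with the PER datum already supplied by $\ceqpretype{A'}{B}$.

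There is no substantive obstacle: everything reduces to chasing definitions and invoking determinacy. The only subtle point is the middle aspect, where one must observe that the evaluation $\td{A'_1}{\psi_2}\evals A'_2$ comes from the hypothesis about $A'$ (applied to the substitutions $\psi_1$ and $\psi_2$), not from any reduction on $\td{A}{-}$; the reduction hypothesis is only used at $\psi_1$ and $\psi_1\psi_2$, since $A_1=A'_1$ is already a value and there is no reason for $\td{A}{\psi_1\psi_2}$ to reduce through $\td{A'}{\psi_1}$ before it reduces to $\td{A'}{\psi_1\psi_2}$.
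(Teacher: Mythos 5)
Your proof is correct and follows essentially the same route as the paper's: both use $\td{A}{\psi}\steps^*\td{A'}{\psi}$ at $\psi_1$ and $\psi_1\psi_2$ to identify the evaluated aspects of $A$ with those of $A'$ (via determinacy), and both observe that the middle aspect $\td{A_1}{\psi_2}\evals A_2$ is handled by the syntactic identity $A_1=A'_1$ rather than by the reduction hypothesis. The paper phrases the conclusion via two auxiliary $\sim^{\Psi_2}$ equalities and transitivity where you identify the aspects directly, but this is a cosmetic difference.
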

\begin{proof}
For any $\msubsts{\Psi_1}{\psi_1}{\Psi}$ and $\msubsts{\Psi_2}{\psi_2}{\Psi_1}$,
we know $\eqperfour[\Psi_2]
{\td{A'}{\psi_1\psi_2}}{\td{A'_1}{\psi_2}}
{\td{B}{\psi_1\psi_2}}{\td{B_1}{\psi_2}}$
where $\td{A'}{\psi_1}\evals A_1$ and $\td{B}{\psi_1}\evals B_1$.
It remains to show $\eqper[\Psi_2]{\td{A'}{\psi_1\psi_2}}{\td{A}{\psi_1\psi_2}}$
and $\eqper[\Psi_2]{\td{A'_1}{\psi_2}}{\td{A_1}{\psi_2}}$ where
$\td{A}{\psi_1}\evals A_1$.
The former holds by $\eqper[\Psi_2]
{\td{A}{\psi_1\psi_2} \steps^* \td{A'}{\psi_1\psi_2}}
{\td{A'}{\psi_1\psi_2}}$.
The latter holds because $\td{A}{\psi_1}\steps^*\td{A'}{\psi_1}\evals A'_1$;
thus $A_1 = A'_1$, and the result follows by
$\eqper[\Psi_2]{\td{A'_1}{\psi_2}}{\td{A'_1}{\psi_2}}$.
\end{proof}

\begin{lemma}\label{lem:type-expansion}
If $\ceqtype{A'}{B}$ and for all $\psitd$,
$\td{A}{\psi} \steps^* \td{A'}{\psi}$, then $\ceqtype{A}{B}$.
\end{lemma}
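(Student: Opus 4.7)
The plan is to unfold the three components of $\ceqtype{A}{B}$ (pretype equality, cubicality, equally Kan) and show each separately, using that $A$ reduces to $A'$ uniformly under dimension substitution together with the assumption that $\ceqtype{A'}{B}$.

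First, I would obtain $\ceqpretype{A}{B}$ immediately by invoking \cref{lem:pretype-expansion} on the pretype-equality component $\ceqpretype{A'}{B}$ extracted from the hypothesis. This in particular gives $\cpretype{A}$. For cubicality, $B$ is cubical by hypothesis, and by the symmetry of $\ceqpretype$ we have $\ceqpretype{B}{A}$, so \cref{lem:ceqpretype-cubical} yields that $A$ is cubical as well.

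The remaining task is to show $A$ and $B$ are equally Kan. The key observation is that the operational semantics evaluates the type subscript of both $\hcomsym$ and $\coesym$ before anything else, so under any $\psitd$ the reduction $\td{A}{\psi}\steps^*\td{A'}{\psi}$ lifts to a reduction $\td{(\hcomgeneric{\etc{r_i}}{A})}{\psi}\steps^*\td{(\hcomgeneric{\etc{r_i}}{A'})}{\psi}$ (and analogously for $\coesym$), since substitution commutes with these term formers and the congruence rule for the type argument is a head reduction. For each of the five Kan conditions I would: (i) translate the hypotheses from $\td{A}{\psi}$ to $\td{A'}{\psi}$ via $\ceqpretype[\Psi']{\td{A}{\psi}}{\td{A'}{\psi}}$ and \cref{lem:ceqpretype-ceqtm}; (ii) apply the corresponding Kan condition of the pair $A',B$ (which we have by hypothesis) to obtain the equality with the $\td{A'}{\psi}$-form of the $\hcomsym$ or $\coesym$ on the left; (iii) convert the type of that equality back to $\td{A}{\psi}$ via \cref{lem:ceqpretype-ceqtm}; and (iv) finish by \cref{lem:expansion} applied to the uniform reduction on the left-hand side, replacing $\td{A'}{\psi}$ by $\td{A}{\psi}$ in the subscript.

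The mildly delicate point, and the only real obstacle, is step (iv): \cref{lem:expansion} is stated for a term reducing to another term whose equality class is known, so one must check that the family of reductions $\td{(\hcomgeneric{\etc{r_i}}{A})}{\psi}\steps^*\td{(\hcomgeneric{\etc{r_i}}{A'})}{\psi}$ is indeed uniform in $\psi$ — but this is immediate from the congruence reduction rule and the commutation of substitution with $\hcomsym$ and $\coesym$. Once this is observed, the verifications of the Kan conditions are purely mechanical and symmetric across all five cases.
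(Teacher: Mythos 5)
Your proposal is correct and follows essentially the same route as the paper: \cref{lem:pretype-expansion} for pretype equality, \cref{lem:ceqpretype-cubical} for cubicality (the paper happens to use $A'$ cubical and $\ceqpretype{A}{A'}$ where you use $B$ cubical and $\ceqpretype{A}{B}$, but both work), and then for each Kan condition transporting hypotheses across $\ceqpretype$ via \cref{lem:ceqpretype-ceqtm}, applying the Kan condition of the pair $A',B$, and finishing with \cref{lem:expansion} using the uniform reduction on the $\hcomsym$ or $\coesym$ type subscript. The key observation you flag — that these reductions lift uniformly under dimension substitution because the type argument is principal and substitution commutes with $\hcomsym$/$\coesym$ — is exactly the point the paper's proof rests on.
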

\begin{proof}
By \cref{lem:pretype-expansion}, $\ceqpretype{A}{A'}$ and $\ceqpretype{A}{B}$.
By the former and \cref{lem:ceqpretype-cubical}, $A$ is cubical. Thus it
suffices to show that $A$ and $B$ are equally Kan. We outline the arguments for
the $\hcomsym$ and $\coesym$ conditions separately.

In the first Kan condition, we are given elements
$\coftype[\Psi']{-}{\td{A}{\psi}}$ and must show
$\ceqtm[\Psi']{\hcomsym_{\td{A}{\psi}}}{\hcomsym_{\td{B}{\psi}}}{\td{A}{\psi}}$.
The given elements are $\coftype[\Psi']{-}{\td{A'}{\psi}}$ also, so by
$\ceqtype{A'}{B}$,
$\ceqtm[\Psi']
{\hcomsym_{\td{A'}{\psi}}}
{\hcomsym_{\td{B}{\psi}}}{\td{A}{\psi}}$. The result follows by
\cref{lem:expansion} on the left, since for any
$\msubsts{\Psi''}{\psi'}{\Psi'}$, $\hcomsym_{\td{A}{\psi\psi'}}\steps^*
\hcomsym_{\td{A'}{\psi\psi'}}$. The second and third Kan conditions follow
similarly.

In the fourth Kan condition, we are given elements
$\coftype[\Psi']{-}{\dsubst{\td{A}{\psi}}{r}{x}}$ and must show
$\ceqtm[\Psi']{\coesym_{x.\td{A}{\psi}}}{\coesym_{x.\td{B}{\psi}}}
{\dsubst{\td{A}{\psi}}{r'}{x}}$. The given elements are
$\coftype[\Psi']{-}{\dsubst{\td{A'}{\psi}}{r}{x}}$ also, so by
$\ceqtype{A'}{B}$,
$\ceqtm[\Psi']{\coesym_{x.\td{A'}{\psi}}}{\coesym_{x.\td{B}{\psi}}}
{\dsubst{\td{A}{\psi}}{r'}{x}}$. The result follows by
\cref{lem:expansion} on the left, since for any
$\msubsts{\Psi''}{\psi'}{\Psi'}$, $\coesym_{x.\td{A}{\psi\psi'}}\steps^*
\coesym_{x.\td{A'}{\psi\psi'}}$. The fifth Kan condition follows similarly.
\end{proof}

Next, we prove a strengthened form of \cref{lem:coftype-ceqtm} for cubical
pretypes.

\begin{lemma}\label{lem:coftype-ceqtm-cubical}
If $\cpretype{A}$ is cubical, $\coftype{M}{A}$, $\coftype{N}{A}$, and
$\inper{M}{N}{A}$, then $\ceqtm{M}{N}{A}$.
\end{lemma}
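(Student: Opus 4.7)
The plan is to upgrade the single value-level relation $\vinper{M_0}{N_0}{A_0}$ (at the current $\Psi$) to the full judgment $\ceqtm{M_0}{N_0}{A}$ using cubicality of $A$, and then use the coherent aspects of $M$ and $N$ to transfer this equality from the canonical forms $M_0, N_0$ to $M, N$ themselves.

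First I would unpack the hypotheses: by $\coftype{M}{A}$, $\coftype{N}{A}$, and $\cpretype{A}$ we obtain values with $M\evals M_0$, $N\evals N_0$, and $A\evals A_0$; the hypothesis $\inper{M}{N}{A}$ then amounts to $\vinper{M_0}{N_0}{A_0}$. Applying the cubicality of $A$ at $\psi = \id$ to this relation yields $\ceqtm{M_0}{N_0}{A}$.

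Next, to conclude $\ceqtm{M}{N}{A}$, I would verify the definition directly: fix $\msubsts{\Psi_1}{\psi_1}{\Psi}$ and $\msubsts{\Psi_2}{\psi_2}{\Psi_1}$ and check the evaluation clauses and the $\vinperfour[\Psi_2]{M_2}{M_{12}}{N_2}{N_{12}}{A_{12}}$ condition. The evaluations of $M$ and $N$, together with the ``diagonal'' relations $\vinper[\Psi_2]{M_2}{M_{12}}{A_{12}}$ and $\vinper[\Psi_2]{N_2}{N_{12}}{A_{12}}$, come for free from $\coftype{M}{A}$ and $\coftype{N}{A}$. The missing cross relation $\vinper[\Psi_2]{M_{12}}{N_{12}}{A_{12}}$ I would obtain by instantiating $\ceqtm{M_0}{N_0}{A}$ at the substitutions $\psi_1' = \psi_1\psi_2$ and $\psi_2' = \id$, which gives $\vinper[\Psi_2]{M_0''}{N_0''}{A_{12}}$ where $\td{M_0}{\psi_1\psi_2}\evals M_0''$ and $\td{N_0}{\psi_1\psi_2}\evals N_0''$. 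Instantiating $\coftype{M}{A}$ at $\psi_1' = \id$ and $\psi_2' = \psi_1\psi_2$ relates $M_{12}$ to $M_0''$ (and likewise for $N$); chaining these by transitivity of $\vinper[\Psi_2]{-}{-}{A_{12}}$ yields the desired relation.

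The argument is essentially bookkeeping once cubicality is used; there is no real obstacle, only the need to carefully juggle aspects, so the main thing to get right is which substitution pair to feed into $\ceqtm{M_0}{N_0}{A}$ and $\coftype{M}{A}$ to line up the canonical forms being compared.
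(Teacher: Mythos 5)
Your argument is correct and follows essentially the same route as the paper: use cubicality at $\psi = \id$ to upgrade $\vinper{M_0}{N_0}{A_0}$ to $\ceqtm{M_0}{N_0}{A}$, then bridge from $M_0, N_0$ back to $M, N$ via the coherent aspects supplied by $\coftype{M}{A}$ and $\coftype{N}{A}$. The only presentational difference is that the paper routes the second half through \cref{lem:coftype-ceqtm} (reducing the goal to showing $\inper[\Psi']{\td{M}{\psi}}{\td{N}{\psi}}{\td{A}{\psi}}$ for each $\psitd$), whereas you unfold the definition of $\ceqtm{M}{N}{A}$ and chain the $\vinper[\Psi_2]$ relations by hand, which is precisely the bookkeeping that lemma packages.
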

\begin{proof}
By \cref{lem:coftype-ceqtm}, it suffices to show that for any $\psitd$,
$\inper[\Psi']{\td{M}{\psi}}{\td{N}{\psi}}{\td{A}{\psi}}$.
By $\coftype{M}{A}$, we know that
$\inper[\Psi']{\td{M}{\psi}}{\td{M_0}{\psi}}{\td{A}{\psi}}$ where $M\evals M_0$,
and similarly for $N$.
By $\cpretype{A}$ cubical and $\vinper{M_0}{N_0}{A_0}$, we know that
$\ceqtm{M_0}{N_0}{A}$, and thus,
$\inper[\Psi']{\td{M_0}{\psi}}{\td{N_0}{\psi}}{\td{A}{\psi}}$.
The result follows by transitivity.
\end{proof}

\begin{lemma}\label{lem:coftype-evals-ceqtm}
If $\cpretype{A}$ is cubical and $\coftype{M}{A}$, then
$M\evals V$ and $\ceqtm{M}{V}{A}$.
\end{lemma}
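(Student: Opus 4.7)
The plan is to extract $V$ directly from the definition of $\coftype{M}{A}$, then invoke cubicality of $A$ to promote $V$ to a closed element of $A$, and finally apply \cref{lem:coftype-ceqtm-cubical} to conclude $\ceqtm{M}{V}{A}$.

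First, I would apply \cref{def:ceqtm} to $\coftype{M}{A}$ with $\psi_1 = \id$ and $\psi_2 = \id$. This yields that $M\evals V$ for some value $V$, $A\evals A_0$, and $\vinper{V}{V}{A_0}$. This immediately gives the first half of the conclusion, namely that $M$ evaluates to some value $V$.

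Next, since $A$ is cubical and $\vinper{V}{V}{A_0}$ with $\td{A}{\id}\evals A_0$, \cref{def:cubical} gives $\ceqtm{V}{V}{A}$, i.e., $\coftype{V}{A}$. Combined with the assumed $\coftype{M}{A}$, the hypotheses of \cref{lem:coftype-ceqtm-cubical} are almost in place; all that remains is $\inper{M}{V}{A}$. But $M\evals V$, $V\evals V$, $A\evals A_0$, and $\vinper{V}{V}{A_0}$, so by \cref{def:inper} we have $\inper{M}{V}{A}$.

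Applying \cref{lem:coftype-ceqtm-cubical} then gives $\ceqtm{M}{V}{A}$, completing the proof. No step here is a real obstacle — the lemma is essentially a repackaging of the definitional unfolding of $\coftype{M}{A}$ together with the cubicality condition, and the main subtlety is just recognizing that cubicality of $A$ is precisely what is needed to turn the value-level PER relation $\vinper{V}{V}{A_0}$ into the closed judgment $\coftype{V}{A}$.
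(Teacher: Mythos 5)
Your proposal is correct and follows the paper's own proof essentially step for step: extract $V$ and $\vinper{V}{V}{A_0}$ from $\coftype{M}{A}$, use cubicality to get $\coftype{V}{A}$, and close with \cref{lem:coftype-ceqtm-cubical}. The only difference is that you spell out the easy verification of $\inper{M}{V}{A}$, which the paper leaves implicit.
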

\begin{proof}
By $\coftype{M}{A}$, $M\evals V$ and $\vinper{V}{V}{A_0}$ where $A\evals A_0$.
Because $\cpretype{A}$ is cubical, $\coftype{V}{A}$. The result follows by
\cref{lem:coftype-ceqtm-cubical}.
\end{proof}

\subsection{Univalence for isomorphisms}

Extend the term syntax by $\ia{r}{A,B,F,G}$, $\iain{r}{M,F}$, and
$\iaout{r}{M,G}$, and the operational semantics by

\[
\infer
  {\isval{\ia{x}{A,B,F,G}}}
  {}
\qquad
\infer
  {\ia{0}{A,B,F,G} \steps A}
  {}
\qquad
\infer
  {\ia{1}{A,B,F,G} \steps B}
  {}
\]

\[
\infer
  {\isval{\iain{x}{M,F}}}
  {}
\qquad
\infer
  {\iain{0}{M,F} \steps M}
  {}
\qquad
\infer
  {\iain{1}{M,F} \steps \app{F}{M}}
  {}
\]

\[
\infer
  {\iaout{0}{M,G} \steps M}
  {}
\qquad
\infer
  {\iaout{1}{M,G} \steps \app{G}{M}}
  {}
\]

\[
\infer
  {\iaout{x}{M,G} \steps \iaout{x}{M',G}}
  {M \steps M'}
\qquad
\infer
  {\iaout{x}{\iain{x}{M,F},G} \steps M}
  {}
\]

\[
\infer
  {\coe{x.\ia{x}{A,B,F,G}}{r}{r'}{M}
   \steps
   \iain{r'}
        {\coe{x.A}{r}{r'}{\iaout{r}{M,\dsubst{G}{r}{x}}},\dsubst{F}{r'}{x}}}
  {}
\]

\[
\infer
  {\coe{x.\ia{w}{A,B,F,G}}{r}{r'}{M}
   \steps
   \iain{w}
        {C,\dsubst{F}{r'}{x}}}
  {w \neq x &
   C = \com{w}{x.A}{r}{r'}
           {\iaout{w}{M,\dsubst{G}{r}{x}}}
           {y.\coe{x.A}{r}{y}{M},y.\app{\dsubst{G}{y}{x}}{\coe{x.B}{r}{y}{M}}}}
\]

\[
\infer
  {\hcomgeneric{\etc{r_i}}{\ia{x}{A,B,F,G}}
   \steps
   \iain{x}{\hcom{\etc{r_i},x}{A}{r}{r'}{\iaout{x}{M,G}}{\etc{T}},F}}
  {\etc{T} = \etc{y.\iaout{x}{N^\e_i,G}},
             z.\hcom{\etc{r_i}}{A}{r}{z}{M}{\etc{y.N^\e_i}},
             z.\app{G}{\hcom{\etc{r_i}}{B}{r}{z}{M}{\etc{y.N^\e_i}}}}
\]

If a cubical type system has dependent function types,
\begin{enumerate}
\item $\ceqtype{A}{A'}$,
\item $\ceqtyperes{r=1}{B}{B'}$,
\item $\ceqtmres{r=1}{F}{F'}{\arr{A}{B}}$,
\item $\ceqtmres{r=1}{G}{G'}{\arr{B}{A}}$,
\item $\eqtmres{r=1}{\oft{a}{A}}{\app{G}{\app{F}{a}}}{a}{A}$, and
\item $\eqtmres{r=1}{\oft{b}{B}}{\app{F}{\app{G}{b}}}{b}{B}$,
\end{enumerate}
we say it has their \emph{isomorphism-univalence type} when for all
$\psitd$ such that $\td{r}{\psi} = x$,
$\veqper[\Psi']{\ia{x}{\td{A}{\psi},\td{B}{\psi},\td{F}{\psi},\td{G}{\psi}}}
{\ia{x}{\td{A'}{\psi},\td{B'}{\psi},\td{F'}{\psi},\td{G'}{\psi}}}$, and
$\vinper[\Psi']{-}{-}{\ia{x}{\td{A}{\psi},\td{B}{\psi},\td{F}{\psi},\td{G}{\psi}}}$
is the least relation such that
\[
\vinper[\Psi']
{\iain{x}{M,F''}}
{\iain{x}{M',F'''}}
{\ia{x}{\td{A}{\psi},\td{B}{\psi},\td{F}{\psi},\td{G}{\psi}}}
\]
when $\ceqtm[\Psi']{M}{M'}{\td{A}{\psi}}$,
$\ceqtmres[\Psi']{x=1}{\td{F}{\psi}}{F''}{\arr{\td{A}{\psi}}{\td{B}{\psi}}}$, and
$\ceqtmres[\Psi']{x=1}{\td{F}{\psi}}{F'''}{\arr{\td{A}{\psi}}{\td{B}{\psi}}}$.

We will abbreviate the six premises above as $\Isoeq{r}{A,B,F,G}{A',B',F',G'}$,
and abbreviate the unary version $\Iso{r}{A,B,F,G}$. Note that if $r=0,1$ the
condition of having the isomorphism-univalence type is trivial. In the remainder
of this subsection, we consider cubical type systems that have dependent
function types and the isomorphism-univalence type of
$\Isoeq{r}{A,B,F,G}{A',B',F',G'}$.

Isomorphism-univalence types generalize $\notb{x}$, which constructs an $x$-line
between $\bool$ and $\bool$ corresponding to the isomorphism $\notf{-}$. In
contrast, $\ia{x}{A,B,F,G}$ constructs an $x$-line between $\dsubst{A}{0}{x}$
and $\dsubst{B}{1}{x}$, given an $x$-line $A$ and any (strict) isomorphism
between $\dsubst{A}{1}{x}$ and $\dsubst{B}{1}{x}$.
\[
\xymatrix@=1em{
  {} \ar[r] & x
}
\qquad
\xymatrix@C=8em@R=3em{
  \dsubst{A}{0}{x} \ar[rd]|{\ia{x}{A,B,F,G}\quad} \ar[r]^{A} &
  \dsubst{A}{1}{x} \ar@/^/[d]^{\dsubst{F}{1}{x}} \\
  & \dsubst{B}{1}{x} \ar@/^/[u]^{\dsubst{G}{1}{x}}} \\
\]
Coercion from $0$ to $1$ is implemented by coercing in $A$, then applying the
forward direction of the isomorphism. Analogously to $\notb{x}$, the canonical
elements of $\ia{x}{A,B,F,G}$ are $\iain{x}{M,F}$ where $\coftype{M}{A}$;
computing the right face of such an element applies the isomorphism, but
computing the left face does not. (The opposite is the case with $\notb{x}$.)
Modulo implementation details, however, one may simply replace the type
$\notb{x}$ with $\ia{x}{\bool,\bool,\lam{a}{\notf{a}},\lam{a}{\notf{a}}}$.

A crucial difference between these constructions is that $A$ may have
non-trivial coercion, unlike $\bool$, making it impossible to extract $M$ from
$\iain{x}{M,F}$ by applying $\coesym$ as in \cref{lem:ceqtm-notel-coe}. Instead,
we provide $\iaout{x}{M,G}$ for this explicit purpose.

\paragraph{Pretype}
$\ceqpretype{\ia{r}{A,B,F,G}}{\ia{r}{A',B',F',G'}}$.
If $r=0$ then $\ceqpretype{\ia{0}{A,B,F,G}}{A}$, and
if $r=1$ then $\ceqpretype{\ia{1}{A,B,F,G}}{B}$.

Here and in future proofs, we will abbreviate the arguments to $\ia{r}{-}$ by
$I$ when clear. We focus on the unary case. For any
$\msubsts{\Psi_1}{\psi_1}{\Psi}$ and $\msubsts{\Psi_2}{\psi_2}{\Psi_1}$,
\begin{enumerate}
\item If $\td{r}{\psi_1} = 0$ then
by $\cwftype{A}$, we have
$\ia{0}{\td{I}{\psi_1}} \steps \td{A}{\psi_1} \evals A_1$ and
$\eqper[\Psi_2]{\td{A_1}{\psi_2}}{\td{A}{\psi_1\psi_2}}$.

\item If $\td{r}{\psi_1} = 1$ then
by $\cwftype[\Psi_1]{\td{B}{\psi_1}}$, we have
$\ia{1}{\td{I}{\psi_1}} \steps \td{B}{\psi_1} \evals B_1$ and
$\eqper[\Psi_2]{\td{B_1}{\psi_2}}{\td{B}{\psi_1\psi_2}}$.

\item If $\td{r}{\psi_1} = x$ and $\td{x}{\psi_2} = 0$ then
$\isval{\ia{x}{\td{I}{\psi_1}}}$, and by $\cwftype{A}$ we have
$\ia{0}{\td{I}{\psi_1\psi_2}}\steps \td{A}{\psi_1\psi_2}$,
$\ia{\td{r}{\psi_1\psi_2}}{\td{I}{\psi_1\psi_2}} \steps \td{A}{\psi_1\psi_2}$,
and $\eqper[\Psi_2]{\td{A}{\psi_1\psi_2}}{\td{A}{\psi_1\psi_2}}$.

\item If $\td{r}{\psi_1} = x$ and $\td{x}{\psi_2} = 1$ then
$\isval{\ia{x}{\td{I}{\psi_1}}}$, and by
$\cwftype[\Psi_2]{\td{B}{\psi_1\psi_2}}$ we have
$\ia{1}{\td{I}{\psi_1\psi_2}}\steps\td{B}{\psi_1\psi_2}$,
$\ia{\td{r}{\psi_1\psi_2}}{\td{I}{\psi_1\psi_2}}\steps\td{B}{\psi_1\psi_2}$, and
$\eqper[\Psi_2]{\td{B}{\psi_1\psi_2}}{\td{B}{\psi_1\psi_2}}$.

\item If $\td{r}{\psi_1} = x$ and $\td{x}{\psi_2} = x'$ then
$\isval{\ia{x}{\td{I}{\psi_1}}}$, $\isval{\ia{x'}{\td{I}{\psi_1\psi_2}}}$, and
by assumption,
$\veqper[\Psi_2]{\ia{x'}{\td{I}{\psi_1\psi_2}}}{\ia{x'}{\td{I}{\psi_1\psi_2}}}$.
\end{enumerate}
The equations hold because for any $\psitd$,
$\ia{0}{\td{I}{\psi}}\steps \td{A}{\psi}$ where $\cwftype{A}$, and
$\ia{1}{\td{I}{\psi}}\steps \td{B}{\psi}$ where $\cwftyperes{1=1}{B}$.

\paragraph{Introduction}
If $\ceqtm{M}{M'}{A}$ then
$\ceqtm{\iain{r}{M,F}}{\iain{r}{M',F'}}{\ia{r}{A,B,F,G}}$.
If $r=0$ then $\ceqtm{\iain{0}{M,F}}{M}{A}$, and if $r=1$ then
$\ceqtm{\iain{1}{M,F}}{\app{F}{M}}{B}$.

For any $\msubsts{\Psi_1}{\psi_1}{\Psi}$ and $\msubsts{\Psi_2}{\psi_2}{\Psi_1}$,
\begin{enumerate}
\item If $\td{r}{\psi_1} = 0$ then
by $\ceqtm{M}{M'}{A}$, we know that
$\iain{0}{\td{M}{\psi_1},\td{F}{\psi_1}}\steps\td{M}{\psi_1}\evals M_1$,
$\iain{0}{\td{M}{\psi_1\psi_2},\td{F}{\psi_1\psi_2}}\steps\td{M}{\psi_1\psi_2}$,
and
$\inper[\Psi_2]{\td{M_1}{\psi_2}}{\td{M}{\psi_1\psi_2}}{\td{A}{\psi_1\psi_2}}$
(and similarly for $\iain{r}{M',F'}$).

\item If $\td{r}{\psi_1} = 1$ then
$\ceqtm[\Psi_1]{\app{\td{F}{\psi_1}}{\td{M}{\psi_1}}}
{\app{\td{F'}{\psi_1}}{\td{M'}{\psi_1}}}{\td{B}{\psi_1}}$ by the elimination
rule for dependent functions, so we know that
$\iain{1}{\td{M}{\psi_1},\td{F}{\psi_1}}\steps
\app{\td{F}{\psi_1}}{\td{M}{\psi_1}}\evals N_1$ and
$\inper[\Psi_2]
{\iain{1}{\td{M}{\psi_1\psi_2},\td{F}{\psi_1\psi_2}}\steps
\app{\td{F}{\psi_1\psi_2}}{\td{M}{\psi_1\psi_2}}}
{\td{N_1}{\psi_2}}{\td{B}{\psi_1\psi_2}}$
(and similarly for $\iain{r}{M',F'}$).

\item If $\td{r}{\psi_1} = x$ and $\td{x}{\psi_2} = 0$ then
$\isval{\iain{x}{\td{M}{\psi_1},\td{F}{\psi_1}}}$, and by
$\ceqtm{M}{M'}{A}$ we have
$\inper[\Psi_2]
{\iain{0}{\td{M}{\psi_1\psi_2},\td{F}{\psi_1\psi_2}} \steps
\td{M}{\psi_1\psi_2}}
{\td{M}{\psi_1\psi_2}}{\td{A}{\psi_1\psi_2}}$
(and similarly for $\iain{r}{M',F'}$).

\item If $\td{r}{\psi_1} = x$ and $\td{x}{\psi_2} = 1$ then
$\isval{\iain{x}{\td{M}{\psi_1},\td{F}{\psi_1}}}$, and by
\[ \ceqtm[\Psi_2]{\app{\td{F}{\psi_1\psi_2}}{\td{M}{\psi_1\psi_2}}}
{\app{\td{F'}{\psi_1\psi_2}}{\td{M'}{\psi_1\psi_2}}}{\td{B}{\psi_1\psi_2}} \]
we have
$\inper[\Psi_2]
{\iain{1}{\td{M}{\psi_1\psi_2},\td{F}{\psi_1\psi_2}}\steps
\app{\td{F}{\psi_1\psi_2}}{\td{M}{\psi_1\psi_2}}}
{\app{\td{F}{\psi_1\psi_2}}{\td{M}{\psi_1\psi_2}}}
{\td{B}{\psi_1\psi_2}}$
(and similarly for $\iain{r}{M',F'}$).

\item If $\td{r}{\psi_1} = x$ and $\td{x}{\psi_2} = x'$ then
$\isval{\iain{x}{\td{M}{\psi_1},\td{F}{\psi_1}}}$ and
$\isval{\iain{x'}{\td{M}{\psi_1\psi_2},\td{F}{\psi_1\psi_2}}}$. To see
\[ \vinper[\Psi_2]
{\iain{x'}{\td{M}{\psi_1\psi_2},\td{F}{\psi_1\psi_2}}}
{\iain{x'}{\td{M'}{\psi_1\psi_2},\td{F'}{\psi_1\psi_2}}}
{\ia{x'}{\td{I}{\psi_1\psi_2}}} \]
notice that
$\ceqtm[\Psi_2]{\td{M}{\psi_1\psi_2}}{\td{M'}{\psi_1\psi_2}}
{\td{A}{\psi_1\psi_2}}$
and by \cref{lem:td-judgres}, we have
$\ceqtmres[\Psi_2]{x'=1}{\td{F}{\psi_1\psi_2}}{\td{F'}{\psi_1\psi_2}}
{\arr{\td{A}{\psi_1\psi_2}}{\td{B}{\psi_1\psi_2}}}$.
\end{enumerate}
The first equation holds by \cref{lem:expansion} and our hypothesis; the second
holds by these and the elimination rule for dependent functions.

\paragraph{Cubical}
For any $\psitd$ and $\vinper[\Psi']{M}{N}{I_0}$ where
$\ia{\td{r}{\psi}}{\td{I}{\psi}}\evals I_0$,
$\ceqtm[\Psi']{M}{N}{\ia{\td{r}{\psi}}{\td{I}{\psi}}}$.

If $\td{r}{\psi} = 0$ then this follows from
$\ceqpretype[\Psi']{\ia{\td{r}{\psi}}{\td{I}{\psi}}}{\td{A}{\psi}}$ and the fact
that $\cpretype{A}$ is cubical; and similarly when $\td{r}{\psi} = 1$.
If $\td{r}{\psi} = x$ then this follows from the introduction rule.

\paragraph{Elimination}
If $\ceqtm{M}{M'}{\ia{r}{A,B,F,G}}$ then
$\ceqtm{\iaout{r}{M,G}}{\iaout{r}{M',G'}}{A}$.
If $r=0$ then $\ceqtm{\iaout{0}{M,G}}{M}{A}$, and if $r=1$ then
$\ceqtm{\iaout{1}{M,G}}{\app{G}{M}}{A}$.

We focus on the unary case. For any $\msubsts{\Psi_1}{\psi_1}{\Psi}$ and
$\msubsts{\Psi_2}{\psi_2}{\Psi_1}$,
\begin{enumerate}
\item If $\td{r}{\psi_1} = 0$ then
by $\coftype[\Psi_1]{\td{M}{\psi_1}}{\ia{0}{\td{I}{\psi_1}}}$ and
$\ceqpretype[\Psi_1]{\ia{0}{\td{I}{\psi_1}}}{\td{A}{\psi_1}}$,
$\iaout{0}{\td{M}{\psi_1},\td{G}{\psi_1}}\steps\td{M}{\psi_1}\evals M_1$ and
$\inper[\Psi_2]
{\iaout{0}{\td{M}{\psi_1\psi_2},\td{G}{\psi_1\psi_2}}\steps\td{M}{\psi_1\psi_2}}
{\td{M_1}{\psi_2}}
{\td{A}{\psi_1\psi_2}}$.

\item If $\td{r}{\psi_1} = 1$ then by
$\coftype[\Psi_1]{\td{M}{\psi_1}}{\ia{1}{\td{I}{\psi_1}}}$,
$\ceqpretype[\Psi_1]{\ia{1}{\td{I}{\psi_1}}}{\td{B}{\psi_1}}$,
$\coftype[\Psi_1]{\td{G}{\psi_1}}{\arr{\td{B}{\psi_1}}{\td{A}{\psi_1}}}$, and
the elimination rule for dependent functions, we have
$\coftype[\Psi_1]{\app{\td{G}{\psi_1}}{\td{M}{\psi_1}}}{\td{A}{\psi_1}}$. Thus
$\iaout{1}{\td{M}{\psi_1},\td{G}{\psi_1}}\steps
\app{\td{G}{\psi_1}}{\td{M}{\psi_1}}\evals N_1$ and
\[ \inper[\Psi_2]
{\iaout{1}{\td{M}{\psi_1\psi_2},\td{G}{\psi_1\psi_2}}\steps
\app{\td{G}{\psi_1\psi_2}}{\td{M}{\psi_1\psi_2}}}
{\td{N_1}{\psi_2}}
{\td{A}{\psi_1\psi_2}}. \]

\item If $\td{r}{\psi_1} = x$ then
$\coftype[\Psi_1]{\td{M}{\psi_1}}{\ia{x}{\td{I}{\psi_1}}}$, so
$\td{M}{\psi_1}\evals\iain{x}{N,H}$ where
$\coftype[\Psi_1]{N}{\td{A}{\psi_1}}$ and
$\ceqtmres[\Psi_1]{x=1}
{\td{F}{\psi_1}}{H}{\arr{\td{A}{\psi_1}}{\td{B}{\psi_1}}}$. Thus
$\iaout{x}{\td{M}{\psi_1},\td{G}{\psi_1}}\steps^*
\iaout{x}{\iain{x}{N,H},\td{G}{\psi_1}}\steps N \evals N_0$.

\begin{enumerate}
\item If $\td{x}{\psi_2} = 0$ then
$\iaout{0}{\td{M}{\psi_1\psi_2},\td{G}{\psi_1\psi_2}}\steps
\td{M}{\psi_1\psi_2}$ and we must show
$\inper[\Psi_2]{\td{M}{\psi_1\psi_2}}{\td{N_0}{\psi_2}}
{\td{A}{\psi_1\psi_2}}$. This follows from
$\inper[\Psi_2]
{\iain{0}{\td{N}{\psi_2},\td{H}{\psi_2}}\steps\td{N}{\psi_2}}
{\td{M}{\psi_1\psi_2}}
{\ia{0}{\td{I}{\psi_1\psi_2}}}$,
$\ia{0}{\td{I}{\psi_1\psi_2}}\steps \td{A}{\psi_1\psi_2}$, and
$\inper[\Psi_2]{\td{N}{\psi_2}}{\td{N_0}{\psi_2}}{\td{A}{\psi_1\psi_2}}$.

\item If $\td{x}{\psi_2} = 1$ then
$\iaout{1}{\td{M}{\psi_1\psi_2},\td{G}{\psi_1\psi_2}}\steps
\app{\td{G}{\psi_1\psi_2}}{\td{M}{\psi_1\psi_2}}$ and we must show
$\inper[\Psi_2]{\app{\td{G}{\psi_1\psi_2}}{\td{M}{\psi_1\psi_2}}}
{\td{N_0}{\psi_2}}{\td{A}{\psi_1\psi_2}}$. By
$\inper[\Psi_2]{\td{N}{\psi_2}}{\td{N_0}{\psi_2}}{\td{A}{\psi_1\psi_2}}$,
we instead show that
$\inper[\Psi_2]{\app{\td{G}{\psi_1\psi_2}}{\td{M}{\psi_1\psi_2}}}
{\td{N}{\psi_2}}{\td{A}{\psi_1\psi_2}}$.

For all $\msubsts{\Psi'}{\psi}{\Psi_2}$,
$\inper[\Psi']
{\td{M}{\psi_1\psi_2\psi}}
{\iain{1}{\td{N}{\psi_2\psi},\td{H}{\psi_2\psi}}\steps
\app{\td{H}{\psi_2\psi}}{\td{N}{\psi_2\psi}}}
{\td{B}{\psi_1\psi_2\psi}}$, so by \cref{lem:coftype-ceqtm},
$\coftype[\Psi_2]{\td{M}{\psi_1\psi_2}}{\td{B}{\psi_1\psi_2}}$, and
$\coftype[\Psi_2]{\app{\td{H}{\psi_2}}{\td{N}{\psi_2}}}
{\td{B}{\psi_1\psi_2}}$, we have
\[
\ceqtm[\Psi_2]
{\td{M}{\psi_1\psi_2}}
{\app{\td{H}{\psi_2}}{\td{N}{\psi_2}}}
{\td{B}{\psi_1\psi_2}}.
\]
By the elimination rule for dependent functions and
$\ceqtm[\Psi_2]{\td{F}{\psi_1\psi_2}}{H\psi_2}
{\arr{\td{A}{\psi_1\psi_2}}{\td{B}{\psi_1\psi_2}}}$,
\[
\ceqtm[\Psi_2]
{\td{M}{\psi_1\psi_2}}
{\app{\td{F}{\psi_1\psi_2}}{\td{N}{\psi_2}}}
{\td{B}{\psi_1\psi_2}},
\]
and by the elimination rule for dependent functions and the hypothesis that
$\eqtm[\Psi_2]{\oft{a}{\td{A}{\psi_1\psi_2}}}
{\app{\td{G}{\psi_1\psi_2}}{\app{\td{F}{\psi_1\psi_2}}{a}}}{a}
{\td{A}{\psi_1\psi_2}}$,
\[
\ceqtm[\Psi_2]
{\app{\td{G}{\psi_1\psi_2}}{\td{M}{\psi_1\psi_2}}}
{\td{N}{\psi_2}}
{\td{A}{\psi_1\psi_2}}.
\]
The result follows immediately.

\item If $\td{x}{\psi_2} = x'$ then
$\inper[\Psi_2]
{\td{M}{\psi_1\psi_2}}
{\iain{x'}{\td{N}{\psi_2},\td{H}{\psi_2}}}
{\td{A}{\psi_1\psi_2}}$, so
$\td{M}{\psi_1\psi_2}\evals\iain{x'}{N',H'}$ and
$\ceqtm[\Psi_2]{N'}{\td{N}{\psi_2}}{\td{A}{\psi_1\psi_2}}$. Thus
$\iaout{x'}{\td{M}{\psi_1\psi_2},\td{G}{\psi_1\psi_2}}\steps^*
\iaout{x'}{\iain{x'}{N',H'},\td{G}{\psi_1\psi_2}}\steps N'$, and we must show
$\inper[\Psi_2]{N'}{\td{N_0}{\psi_2}}{\td{A}{\psi_1\psi_2}}$. But this follows
from $\inper[\Psi_2]{N'}{\td{N}{\psi_2}}{\td{A}{\psi_1\psi_2}}$ and
$\inper[\Psi_2]{\td{N}{\psi_2}}{\td{N_0}{\psi_2}}{\td{A}{\psi_1\psi_2}}$.
\end{enumerate}
\end{enumerate}
Once again the equations hold by \cref{lem:expansion} and the elimination rule
for dependent functions.

\paragraph{Computation}
If $\coftype{M}{A}$ then $\ceqtm{\iaout{r}{\iain{r}{M,F},G}}{M}{A}$.

By \cref{lem:coftype-ceqtm} and the introduction and elimination rules we have
just proven, it suffices to show that for any $\psitd$,
$\inper[\Psi']{\iaout{\td{r}{\psi}}{\iain{\td{r}{\psi}}{\td{M}{\psi},\td{F}{\psi}},\td{G}{\psi}}}
{\td{M}{\psi}}{\td{A}{\psi}}$.

\begin{enumerate}
\item If $\td{r}{\psi} = 0$ then
$\iaout{0}{\iain{0}{\td{M}{\psi},\td{F}{\psi}},\td{G}{\psi}} \steps^*
\td{M}{\psi}$ and the result follows by $\coftype{M}{A}$.

\item If $\td{r}{\psi} = 1$ then
by the equations in the introduction and elimination rules,
\[ \ceqtm[\Psi']
{\iaout{1}{\iain{1}{\td{M}{\psi},\td{F}{\psi}},\td{G}{\psi}}}
{\app{\td{G}{\psi}}{\app{\td{F}{\psi}}{\td{M}{\psi}}}}
{\td{A}{\psi}} \]
and by $\eqtmres{r=1}{\oft{a}{A}}{\app{G}{\app{F}{a}}}{a}{A}$ we
have
$\ceqtm[\Psi']
{\app{\td{G}{\psi}}{\app{\td{F}{\psi}}{\td{M}{\psi}}}}
{\td{M}{\psi}}
{\td{A}{\psi}}$. The result follows immediately.

\item If $\td{r}{\psi} = x$ then
$\iaout{x}{\iain{x}{\td{M}{\psi},\td{F}{\psi}},\td{G}{\psi}} \steps
\td{M}{\psi}$ and the result follows by $\coftype{M}{A}$.
\end{enumerate}

\paragraph{Eta}
If $\coftype{M}{\ia{r}{A,B,F,G}}$ then
$\ceqtm{\iain{r}{\iaout{r}{M,G},F}}{M}{\ia{r}{A,B,F,G}}$.

By \cref{lem:coftype-ceqtm} and the introduction and elimination rules we have
just proven, it suffices to show that for any $\psitd$,
$\inper[\Psi']{\iain{\td{r}{\psi}}{\iaout{\td{r}{\psi}}{\td{M}{\psi},\td{F}{\psi}},\td{G}{\psi}}}
{\td{M}{\psi}}{\ia{\td{r}{\psi}}{\td{I}{\psi}}}$.

\begin{enumerate}
\item If $\td{r}{\psi} = 0$ then
$\iain{0}{\iaout{0}{\td{M}{\psi},\td{G}{\psi}},\td{F}{\psi}} \steps^*
\td{M}{\psi}$ and the result follows by $\coftype{M}{A}$.

\item If $\td{r}{\psi} = 1$ then
by the equations in the introduction and elimination rules,
\[ \ceqtm[\Psi']
{\iain{1}{\iaout{1}{\td{M}{\psi},\td{G}{\psi}},\td{F}{\psi}}}
{\app{\td{F}{\psi}}{\app{\td{G}{\psi}}{\td{M}{\psi}}}}
{\td{B}{\psi}} \]
and by assumption,
$\ceqtm[\Psi']
{\app{\td{F}{\psi}}{\app{\td{G}{\psi}}{\td{M}{\psi}}}}
{\td{M}{\psi}}
{\td{B}{\psi}}$. The result follows immediately.

\item If $\td{r}{\psi} = x$ then
$\td{M}{\psi}\evals\iain{x}{N,F''}$ where
$\coftype[\Psi']{N}{\td{A}{\psi}}$ and
$\ceqtmres[\Psi']{x=1}{F''}{\td{F}{\psi}}{\arr{\td{A}{\psi}}{\td{B}{\psi}}}$. By
\cref{lem:coftype-evals-ceqtm}, since $\cpretype[\Psi']{\ia{x}{\td{I}{\psi}}}$
is cubical,
$\ceqtm[\Psi']{\td{M}{\psi}}{\iain{x}{N,F''}}{\ia{x}{\td{I}{\psi}}}$. By the
equations in the introduction and elimination rules,
\[
\ceqtm[\Psi']
{\iain{x}{\iaout{x}{\td{M}{\psi},\td{G}{\psi}},\td{F}{\psi}}}
{\iain{x}{\iaout{x}{\iain{x}{N,\td{F}{\psi}},\td{G}{\psi}},F''}}
{\ia{x}{\td{I}{\psi}}}.
\]
By the computation rule, the right-hand side is
$\ceqtm[\Psi']{-}{\iain{x}{N,F''}}{\ia{x}{\td{I}{\psi}}}$, and so the result
$\inper[\Psi']{\iain{x}{\iaout{x}{\td{M}{\psi},\td{G}{\psi}},\td{F}{\psi}}}
{\td{M}{\psi}}{\ia{x}{\td{I}{\psi}}}$ follows.
\end{enumerate}

\paragraph{Kan}
$\ceqpretype{\ia{r}{I}}{\ia{r}{I'}}$ are equally Kan;
if $r=0$, $\ceqpretype{\ia{0}{I}}{A}$ are equally Kan; and
if $r=1$, $\ceqpretype{\ia{1}{I}}{B}$ are equally Kan.

The latter two equations hold by \cref{lem:type-expansion}, because
when $r=0$, $\ia{0}{\td{I}{\psi}}\steps\td{A}{\psi}$ and $\cwftype{A}$, and when
$r=1$, $\ia{1}{\td{I}{\psi}}\steps\td{B}{\psi}$ and $\cwftyperes{1=1}{B}$.

The first Kan condition (in its unary form) requires that for any $\psitd$, if
\begin{enumerate}
\item $\coftype[\Psi']{M}{\ia{\td{r''}{\psi}}{\td{I}{\psi}}}$,
\item $\ceqtmres[\Psi',y]{r_i=\e,r_j=\e'}{N^\e_i}{N^{\e'}_j}
{\ia{\td{r''}{\psi}}{\td{I}{\psi}}}$ for any $i,j,\e,\e'$, and
\item $\ceqtmres[\Psi']{r_i=\e}{\dsubst{N^\e_i}{r}{y}}{M}
{\ia{\td{r''}{\psi}}{\td{I}{\psi}}}$ for any $i,\e$,
\end{enumerate}
then $\coftype[\Psi']{\hcomgeneric{\etc{r_i}}{\ia{\td{r''}{\psi}}{\td{I}{\psi}}}}%
{\ia{\td{r''}{\psi}}{\td{I}{\psi}}}$.

If $\td{r''}{\psi}=\e$ then this follows from the fact that
$\ceqpretype{\ia{0}{I}}{A}$ and $\ceqpretype{\ia{1}{I}}{B}$ are equally Kan.
If $\td{r''}{\psi}=x$, we appeal to \cref{lem:coherent-expansion} with
\[
\{X^{\Psi''}_{\psi'}\}_{\msubsts{\Psi''}{\psi'}{\Psi'}} =
\begin{cases}
\hcom{\etc{\td{r_i}{\psi'}}}{\td{A}{\psi\psi'}}{\td{r}{\psi'}}{\td{r'}{\psi'}}
{\td{M}{\psi'}}{\etc{y.\td{N^\e_i}{\psi'}}}
& \text{if $\td{x}{\psi'}=0$} \\
\hcom{\etc{\td{r_i}{\psi'}}}{\td{B}{\psi\psi'}}{\td{r}{\psi'}}{\td{r'}{\psi'}}
{\td{M}{\psi'}}{\etc{y.\td{N^\e_i}{\psi'}}}
& \text{if $\td{x}{\psi'}=1$} \\
\iain{x'}{\hcom{\etc{\td{r_i}{\psi'}},x'}{\td{A}{\psi\psi'}}
{\td{r}{\psi'}}{\td{r'}{\psi'}}{\iaout{x'}{\td{M}{\psi'},\td{G}{\psi\psi'}}}
{\etc{T}},\td{F}{\psi\psi'}}
& \text{if $\td{x}{\psi'}=x'$} \\
\end{cases}
\]
\begin{gather*}
\etc{T} = \etc{y.\iaout{x'}{\td{N^\e_i}{\psi'},\td{G}{\psi\psi'}}},
          z.\hcom{\etc{\td{r_i}{\psi'}}}{\td{A}{\psi\psi'}}{\td{r}{\psi'}}{z}
                 {\td{M}{\psi'}}{\etc{y.\td{N^\e_i}{\psi'}}}, \\
          z.\app{\td{G}{\psi\psi'}}
                {\hcom{\etc{\td{r_i}{\psi'}}}{\td{B}{\psi\psi'}}{\td{r}{\psi'}}{z}
                      {\td{M}{\psi'}}{\etc{y.\td{N^\e_i}{\psi'}}}}
\end{gather*}
We begin by showing the following result, which states the well-formedness of
the Kan composition in $X^{\Psi'}_{\id[\Psi']}$; it is the core of the argument
that the first three Kan conditions for $\ia{r}{I}$ hold.

\begin{lemma}\label{lem:ia-hcom}
If $\Iso[\Psi']{x}{\td{I}{\psi}}$,
\begin{enumerate}
\item $\coftype[\Psi']{M}{\ia{x}{\td{I}{\psi}}}$,
\item $\ceqtmres[\Psi',y]{r_i=\e,r_j=\e'}{N^\e_i}{N^{\e'}_j}
{\ia{x}{\td{I}{\psi}}}$ for any $i,j,\e,\e'$, and
\item $\ceqtmres[\Psi']{r_i=\e}{\dsubst{N^\e_i}{r}{y}}{M}
{\ia{x}{\td{I}{\psi}}}$ for any $i,\e$,
\end{enumerate}
then
\begin{enumerate}
% old cap
\item $\coftype[\Psi']{\iaout{x}{M,\td{G}{\psi}}}{\td{A}{\psi}}$,
% old tube/tube
\item $\ceqtmres[\Psi',y]{r_i=\e,r_j=\e'}
{\iaout{x}{N^\e_i,\td{G}{\psi}}}
{\iaout{x}{N^{\e'}_j,\td{G}{\psi}}}
{\td{A}{\psi}}$ for any $i,j,\e,\e'$,
% old tube/cap
\item $\ceqtmres[\Psi']{r_i=\e}
{\iaout{x}{\dsubst{N^\e_i}{r}{y},\td{G}{\psi}}}
{\iaout{x}{M,\td{G}{\psi}}}
{\td{A}{\psi}}$ for any $i,\e$,
% left tube
\item $\coftyperes[\Psi',y]{x=0}
{\hcom{\etc{r_i}}{\td{A}{\psi}}{r}{y}{M}{\etc{y.N^\e_i}}}
{\td{A}{\psi}}$,
% right tube
\item $\coftyperes[\Psi',y]{x=1}
{\app{\td{G}{\psi}}
     {\hcom{\etc{r_i}}{\td{B}{\psi}}{r}{y}{M}{\etc{y.N^\e_i}}}}
{\td{A}{\psi}}$,
% old tube/left tube
\item $\ceqtmres[\Psi',y]{r_i=\e,x=0}
{\iaout{x}{N^\e_i,\td{G}{\psi}}}
{\hcom{\etc{r_i}}{\td{A}{\psi}}{r}{y}{M}{\etc{y.N^\e_i}}}
{\td{A}{\psi}}$ for any $i,\e$,
% old tube/right tube
\item $\ceqtmres[\Psi',y]{r_i=\e,x=1}
{\iaout{x}{N^\e_i,\td{G}{\psi}}}
{\app{\td{G}{\psi}}
     {\hcom{\etc{r_i}}{\td{B}{\psi}}{r}{y}{M}{\etc{y.N^\e_i}}}}
{\td{A}{\psi}}$ for any $i,\e$,
% left tube/cap
\item $\ceqtmres[\Psi']{x=0}
{\hcom{\etc{r_i}}{\td{A}{\psi}}{r}{r}{M}{\etc{y.N^\e_i}}}
{\iaout{x}{M,\td{G}{\psi}}}
{\td{A}{\psi}}$, and
% right tube/cap
\item $\ceqtmres[\Psi']{x=1}
{\app{\td{G}{\psi}}
     {\hcom{\etc{r_i}}{\td{B}{\psi}}{r}{r}{M}{\etc{y.N^\e_i}}}}
{\iaout{x}{M,\td{G}{\psi}}}
{\td{A}{\psi}}$.
\end{enumerate}
\end{lemma}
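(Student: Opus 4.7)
The plan splits the nine conclusions into three groups corresponding to their proof structure. For items (1)--(3), apply the elimination rule for $\ia{x}{\td{I}{\psi}}$ (established earlier in this subsection) to the hypotheses (1), (2), (3): that rule sends $\ceqtm{-}{-}{\ia{x}{I}}$ to $\ceqtm{\iaout{x}{-,G}}{\iaout{x}{-,G}}{A}$, and since $\iaout{x}{-,-}$ commutes with dimension substitution, the context-restricted judgments transport directly, giving (1)--(3).

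For items (4) and (5), use the pretype equalities $\ceqpretyperes[\Psi']{x=0}{\ia{x}{\td{I}{\psi}}}{\td{A}{\psi}}$ and $\ceqpretyperes[\Psi']{x=1}{\ia{x}{\td{I}{\psi}}}{\td{B}{\psi}}$ (established in the pretype clause of this subsection) to transport $M$ and the $N^\e_i$, together with their coherence conditions, into $\td{A}{\psi}$ and $\td{B}{\psi}$ respectively. The first Kan condition of $\cwftype[\Psi']{\td{A}{\psi}}$ applied under the $x=0$ restriction yields (4). For (5), the first Kan condition of $\cwftype[\Psi']{\td{B}{\psi}}$ under $x=1$ produces an hcom in $\td{B}{\psi}$, and postcomposing with $\td{G}{\psi} \in \arr{\td{B}{\psi}}{\td{A}{\psi}}$ (valid under $x=1$ by assumption) via the elimination rule for dependent functions lands it in $\td{A}{\psi}$.

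For items (6)--(9), in each case we compare an $\iaout{x}{-,\td{G}{\psi}}$ term to a Kan composite under a restriction that pins $x$ to a constant. Under $x=0$, the elimination computation rule gives $\iaout{x}{-,\td{G}{\psi}} \eq$ its first argument; under $x=1$ it gives $\iaout{x}{-,\td{G}{\psi}} \eq \app{\td{G}{\psi}}{-}$. Combine these with the third Kan condition of $\td{A}{\psi}$ or $\td{B}{\psi}$ (for (6) and (7), using $r_i=\e$ to conclude the filler equals $\dsubst{N^\e_i}{y}{y}$, i.e., $N^\e_i$) and with the second Kan condition (for (8) and (9), using $r=r$ to conclude the hcom equals its cap $M$). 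In each case the two equalities chain to produce the required context-restricted judgment.

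The main obstacle is bookkeeping the compound context restrictions: the $\iaout{x}{-,G}$ computation rules only fire after $x$ is actually substituted by $0$ or $1$, so at each step we must appeal to \cref{lem:td-judgres} to push the enclosing dimension substitution through, then apply the closed computation and Kan-condition equalities, and fold the result back into the required $\judgres{\Xi}{\J}$ form. Particular care is needed for (7), where we must verify that the elimination computation $\ceqtm{\iaout{1}{N^\e_i,G}}{\app{G}{N^\e_i}}{A}$ composes correctly with the $r_i=\e$ face of the filler in $\td{B}{\psi}$ (which equals $N^\e_i$) under the simultaneous application of $\td{G}{\psi}$, so that both sides reduce to $\app{\td{G}{\psi}}{N^\e_i}$ in $\td{A}{\psi}$.
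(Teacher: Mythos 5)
Your proposal follows essentially the same route as the paper's proof: items (1)--(3) by the elimination rule for $\iaout{}$; items (4)--(5) by the first Kan condition of $\td{A}{\psi}$ and $\td{B}{\psi}$ (using the pretype equalities under the $x=\e$ restriction, and postcomposing with $G$ for (5)); and items (6)--(9) by chaining the $\iaout$ computation equations with the third and second Kan conditions respectively. The bookkeeping observation about pushing dimension substitutions through $\iaout$ via \cref{lem:td-judgres} is the right supporting detail and is implicit in the paper's terse presentation.
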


\begin{proof}~
\begin{enumerate}
% old cap, old tube/old tube, old tube/cap
\item[1-3.] \setcounter{enumi}{3}
These follow by the elimination rule.

% left tube
\item Show that for any $\msubsts{\Psi''}{\psi'}{(\Psi',y)}$ satisfying $x=0$,
$\coftype[\Psi'']
{\hcom{\etc{\td{r_i}{\psi'}}}{\td{A}{\psi\psi'}}{\td{r}{\psi'}}{\td{y}{\psi'}}{\td{M}{\psi'}}{\etc{y.\td{N^\e_i}{\psi'}}}}
{\td{A}{\psi\psi'}}$. The result follows by the first Kan condition of
$\cwftype[\Psi'']{\td{A}{\psi\psi'}}$, because
$\ceqpretype[\Psi'']{\ia{0}{\td{I}{\psi\psi'}}}{\td{A}{\psi\psi'}}$.

% right tube
\item As in the previous case, for any $\msubsts{\Psi''}{\psi'}{(\Psi',y)}$
satisfying $x=1$,
$\coftype[\Psi'']{\hcomsym_{\td{B}{\psi\psi'}}}{\td{B}{\psi\psi'}}$. The result
follows by $\coftype[\Psi'']{\td{G}{\psi\psi'}}
{\arr{\td{B}{\psi\psi'}}{\td{A}{\psi\psi'}}}$ and the elimination rule for
dependent functions.

% old tube/left tube
\item Under any satisfying $\psi'$, the left-hand side is equal to
$\td{N^\e_i}{\psi'}$ by the elimination rule, and the right-hand side is equal
to $\dsubst{\td{N^\e_i}{\psi'}}{\td{y}{\psi'}}{\td{y}{\psi'}}$ by the third Kan
condition of $\cwftype[\Psi'']{\td{A}{\psi\psi'}}$.

% old tube/right tube
\item Under any satisfying $\psi'$, the left-hand side is equal to
$\app{\td{G}{\psi\psi'}}{\td{N^\e_i}{\psi'}}$ by the elimination rule, and the
right-hand side is equal to
$\app{\td{G}{\psi\psi'}}
{\dsubst{\td{N^\e_i}{\psi'}}{\td{y}{\psi'}}{\td{y}{\psi'}}}$ by the elimination
rule for dependent functions and the third Kan condition of
$\cwftype[\Psi'']{\td{B}{\psi\psi'}}$.

% left tube/cap
\item Under any satisfying $\psi'$, the left-hand side is equal to
$\td{M}{\psi'}$ by the second Kan condition of
$\cwftype[\Psi'']{\td{A}{\psi\psi'}}$, and the right-hand side is equal to
$\td{M}{\psi'}$ by the elimination rule.

% right tube/cap
\item Under any satisfying $\psi'$, the left-hand side is equal to
$\app{\td{G}{\psi\psi'}}{\td{M}{\psi'}}$ by the elimination rule for dependent
functions and the second Kan condition of $\cwftype[\Psi'']{\td{B}{\psi\psi'}}$,
and the right-hand side is equal to $\app{\td{G}{\psi\psi'}}{\td{M}{\psi'}}$ by
the elimination rule.
\qedhere
\end{enumerate}
\end{proof}

We now show that for all $\msubsts{\Psi''}{\psi'}{\Psi'}$,
$\ceqtm[\Psi'']
{X^{\Psi''}_{\psi'}}
{\td{(X^{\Psi'}_{\id[\Psi']})}{\psi'}}
{\ia{\td{x}{\psi'}}{\td{I}{\psi\psi'}}}$, from which it follows that
$\ceqtm[\Psi']
{\hcomgeneric{\etc{r_i}}{\ia{x}{\td{I}{\psi}}}}
{X^{\Psi'}_{\id[\Psi']}}
{\ia{x}{\td{I}{\psi}}}$. This completes the proof of the first Kan condition
(including the binary case, which follows by transitivity).
\begin{enumerate}
\item $\td{x}{\psi'}=0$.

By the introduction rule, \cref{lem:ia-hcom}, and the third Kan condition of
$\cwftype[\Psi'']{\td{A}{\psi\psi'}}$,
\[
\ceqtm[\Psi'']
{\td{(X^{\Psi'}_{\id[\Psi']})}{\psi'}}
{\hcom{\etc{\td{r_i}{\psi'}}}{\td{A}{\psi\psi'}}{\td{r}{\psi'}}{\td{r'}{\psi'}}
                 {\td{M}{\psi'}}{\etc{y.\td{N^\e_i}{\psi'}}}}
{\td{A}{\psi\psi'}}
\]
which, by $\ceqpretype[\Psi'']{\td{A}{\psi\psi'}}{\ia{0}{\td{I}{\psi\psi'}}}$,
completes this case.

\item $\td{x}{\psi'}=1$.

By the introduction rule, \cref{lem:ia-hcom}, and the third Kan condition of
$\cwftype[\Psi'']{\td{A}{\psi\psi'}}$,
\[
\ceqtm[\Psi'']{\td{(X^{\Psi'}_{\id[\Psi']})}{\psi'}}
{\app{\td{F}{\psi\psi'}}
     {\app{\td{G}{\psi\psi'}}
          {\hcom{\etc{\td{r_i}{\psi'}}}{\td{B}{\psi\psi'}}
                {\td{r}{\psi'}}{\td{r'}{\psi'}}
                {\td{M}{\psi'}}{\etc{y.\td{N^\e_i}{\psi'}}}}}}
{\td{B}{\psi\psi'}}.
\]
The result follows by the first Kan condition of
$\cwftype[\Psi'']{\td{B}{\psi\psi'}}$,
$\ceqpretype[\Psi'']{\td{B}{\psi\psi'}}{\ia{1}{\td{I}{\psi\psi'}}}$, and
$\eqtm[\Psi'']{\oft{b}{\td{B}{\psi\psi'}}}{\app{\td{F}{\psi\psi'}}
{\app{\td{G}{\psi\psi'}}{b}}}{b}{\td{B}{\psi\psi'}}$.

\item $\td{x}{\psi'}=x'$.

Immediate by the introduction rule, \cref{lem:ia-hcom}, and the first Kan
condition of $\cwftype[\Psi'']{\td{A}{\psi\psi'}}$.
\end{enumerate}

The second Kan condition requires that for any $\psitd$, under the same
hypotheses as before, if $r=r'$, then
$\ceqtm[\Psi']
{\hcom{\etc{r_i}}{\ia{\td{r''}{\psi}}{\td{I}{\psi}}}{r}{r}{M}{\etc{y.N^\e_i}}}
{M}{\ia{\td{r''}{\psi}}{\td{I}{\psi}}}$.

If $\td{r''}{\psi}=\e$ then this follows from the fact that
$\ceqpretype{\ia{0}{I}}{A}$ and $\ceqpretype{\ia{1}{I}}{B}$ are equally Kan.
If $\td{r''}{\psi}=x$ then we have already shown that the left-hand side
\[
\ceqtm[\Psi']{-}
{\iain{x}{\hcom{\etc{r_i},x}{\td{A}{\psi}}
{r}{r}{\iaout{x}{M,\td{G}{\psi}}}
{\dots},\td{F}{\psi}}}
{\ia{x}{\td{I}{\psi}}}.
\]
By the introduction rule, \cref{lem:ia-hcom}, and the second Kan condition of
$\cwftype[\Psi']{\td{A}{\psi}}$, the above is
\[
\ceqtm[\Psi']{-}{\iain{x}{\iaout{x}{M,\td{G}{\psi}},\td{F}{\psi}}}
{\ia{x}{\td{I}{\psi}}}
\]
which by the eta rule $\ceqtm[\Psi']{-}{M}{\ia{x}{\td{I}{\psi}}}$.

The third Kan condition requires that for any $\psitd$, under the same
hypotheses as before, if $r_i=\e$, then
$\ceqtm[\Psi']
{\hcom{\etc{r_i}}{\ia{\td{r''}{\psi}}{\td{I}{\psi}}}{r}{r'}{M}{\etc{y.N^\e_i}}}
{\dsubst{N^\e_i}{r'}{y}}{\ia{\td{r''}{\psi}}{\td{I}{\psi}}}$.

If $\td{r''}{\psi}=\e$ then this follows from the fact that
$\ceqpretype{\ia{0}{I}}{A}$ and $\ceqpretype{\ia{1}{I}}{B}$ are equally Kan.
If $\td{r''}{\psi}=x$ then we have already shown that the left-hand side
\[
\ceqtm[\Psi']{-}
{\iain{x}{\hcom{\etc{r_i},x}{\td{A}{\psi}}
{r}{r'}{\dots}
{\etc{y.\iaout{x}{N^\e_i,\td{G}{\psi}}},z.\dots,z.\dots},\td{F}{\psi}}}
{\ia{x}{\td{I}{\psi}}}.
\]

By the introduction rule, \cref{lem:ia-hcom}, and the third Kan condition of
$\cwftype[\Psi']{\td{A}{\psi}}$, the above is

\[
\ceqtm[\Psi']{-}{\iain{x}
{\iaout{x}{\dsubst{N^\e_i}{r'}{y},\td{G}{\psi}},\td{F}{\psi}}}
{\ia{x}{\td{I}{\psi}}}
\]
which by the eta rule
$\ceqtm[\Psi']{-}{\dsubst{N^\e_i}{r'}{y}}{\ia{x}{\td{I}{\psi}}}$.

The fourth Kan condition (in its unary form) requires that for any
$\msubsts{(\Psi',x)}{\psi}{\Psi}$, if
$\coftype[\Psi']{M}{\ia{\dsubst{\td{r''}{\psi}}{r}{x}}{\dsubst{\td{I}{\psi}}{r}{x}}}$, then
$\coftype[\Psi']{\coe{x.\ia{\td{r''}{\psi}}{\td{I}{\psi}}}{r}{r'}{M}}
{\ia{\dsubst{\td{r''}{\psi}}{r'}{x}}{\dsubst{\td{I}{\psi}}{r'}{x}}}$.

There are three cases.
If $\td{r''}{\psi}=\e$ then this follows from the fact that
$\cpretype[\Psi']{\ia{\e}{\td{I}{\psi}}}$ and $\td{A}{\psi}$ or $\td{B}{\psi}$
(depending on $\e$) are equally Kan.
If $\td{r''}{\psi}=x$ then we know
$\coftype[\Psi']{M}{\ia{r}{\dsubst{\td{I}{\psi}}{r}{x}}}$ and must show
$\coftype[\Psi']{\coe{x.\ia{x}{\td{I}{\psi}}}{r}{r'}{M}}
{\ia{r'}{\dsubst{\td{I}{\psi}}{r'}{x}}}$.
By \cref{lem:expansion} it suffices to show
\[
\coftype[\Psi']
{\iain{r'}{
\coe{x.\td{A}{\psi}}{r}{r'}{\iaout{r}{M,\dsubst{\td{G}{\psi}}{r}{x}}},
\dsubst{\td{F}{\psi}}{r'}{x}}}
{\ia{r'}{\dsubst{\td{I}{\psi}}{r'}{x}}}.
\]
In this case we know $\Iso[\Psi',x]{x}{\td{I}{\psi}}$, so
$\Iso[\Psi']{r}{\dsubst{\td{I}{\psi}}{r}{x}}$, and by the elimination rule and
$\coftype[\Psi']{M}{\ia{r}{\dsubst{\td{I}{\psi}}{r}{x}}}$, we have
$\coftype[\Psi']
{\iaout{r}{M,\dsubst{\td{G}{\psi}}{r}{x}}}
{\dsubst{\td{A}{\psi}}{r}{x}}$.
By the fourth Kan condition of $\cwftype[\Psi',x]{\td{A}{\psi}}$,
$\coftype[\Psi']
{\coe{x.\td{A}{\psi}}{r}{r'}{\iaout{r}{M,\dsubst{\td{G}{\psi}}{r}{x}}}}
{\dsubst{\td{A}{\psi}}{r'}{x}}$. The result follows by the introduction rule
and $\Iso[\Psi']{r'}{\dsubst{\td{I}{\psi}}{r'}{x}}$.

The final case is $\td{r''}{\psi} = w \neq x$: if
$\coftype[\Psi']{M}{\ia{w}{\dsubst{\td{I}{\psi}}{r}{x}}}$, then
$\coftype[\Psi']{\coe{x.\ia{w}{\td{I}{\psi}}}{r}{r'}{M}}
{\ia{w}{\dsubst{\td{I}{\psi}}{r'}{x}}}$. We appeal to
\cref{lem:coherent-expansion} with
\[
\{X^{\Psi''}_{\psi'}\}_{\msubsts{\Psi''}{\psi'}{\Psi'}} =
\begin{cases}
\coe{x.\td{A}{\psi\psi'}}{\td{r}{\psi'}}{\td{r'}{\psi'}}{\td{M}{\psi'}}
& \text{if $\td{w}{\psi'}=0$} \\
\coe{x.\td{B}{\psi\psi'}}{\td{r}{\psi'}}{\td{r'}{\psi'}}{\td{M}{\psi'}}
& \text{if $\td{w}{\psi'}=1$} \\
\iain{w'}{C,\dsubst{\td{F}{\psi\psi'}}{\td{r'}{\psi'}}{x}}
& \text{if $\td{w}{\psi'}=w'$} \\
\end{cases}
\]
\begin{gather*}
C = \com{w'}{x.\td{A}{\psi\psi'}}
        {\td{r}{\psi'}}{\td{r'}{\psi'}}{D}
        {y.\coe{x.\td{A}{\psi\psi'}}{\td{r}{\psi'}}{y}{\td{M}{\psi'}},
         y.\app{\dsubst{\td{G}{\psi\psi'}}{y}{x}}
               {\coe{x.\td{B}{\psi\psi'}}{\td{r}{\psi'}}{y}{\td{M}{\psi'}}}} \\
D = \iaout{w'}{\td{M}{\psi'},\dsubst{\td{G}{\psi\psi'}}{\td{r}{\psi'}}{x}}
\end{gather*}
We begin by showing the following result, which states the well-formedness of
the $\comsym$ in $X^{\Psi'}_{\id[\Psi']}$.

\begin{lemma}\label{lem:ia-coe}
If $\Iso[\Psi',x]{w}{\td{I}{\psi}}$ and
$\coftype[\Psi']{M}{\ia{w}{\dsubst{\td{I}{\psi}}{r}{x}}}$ where
$w\neq x$, then
\begin{enumerate}
\item $\coftype[\Psi']{\iaout{w}{M,\dsubst{\td{G}{\psi}}{r}{x}}}
{\dsubst{\td{A}{\psi}}{r}{x}}$,
\item $\coftyperes[\Psi',y]{w=0}
{\coe{x.\td{A}{\psi}}{r}{y}{M}}
{\dsubst{\td{A}{\psi}}{y}{x}}$,
\item $\coftyperes[\Psi',y]{w=1}
{\app{\dsubst{\td{G}{\psi}}{y}{x}}{\coe{x.\td{B}{\psi}}{r}{y}{M}}}
{\dsubst{\td{A}{\psi}}{y}{x}}$,
\item $\ceqtmres[\Psi']{w=0}
{\coe{x.\td{A}{\psi}}{r}{r}{M}}
{\iaout{w}{M,\dsubst{\td{G}{\psi}}{r}{x}}}
{\dsubst{\td{A}{\psi}}{r}{x}}$, and
\item $\ceqtmres[\Psi']{w=1}
{\app{\dsubst{\td{G}{\psi}}{r}{x}}{\coe{x.\td{B}{\psi}}{r}{r}{M}}}
{\iaout{w}{M,\dsubst{\td{G}{\psi}}{r}{x}}}
{\dsubst{\td{A}{\psi}}{r}{x}}$.
\end{enumerate}
\end{lemma}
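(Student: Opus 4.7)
The plan is to dispatch the five clauses by combining the elimination/equality rules for $\ia{w}{-}$ with the Kan conditions of $A$ and $B$ (and the elimination rule for dependent functions), using the fact that $w\neq x$ so dimension substitution by $r$ for $x$ leaves $w$ unchanged, and that the hypothesis $\Iso[\Psi',x]{w}{\td{I}{\psi}}$ survives such a substitution.

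For clause (1), the hypothesis $\Iso[\Psi',x]{w}{\td{I}{\psi}}$ gives in particular $\coftyperes[\Psi',x]{w=1}{\td{G}{\psi}}{\arr{\td{B}{\psi}}{\td{A}{\psi}}}$, which under $\dsubst{-}{r}{x}$ (noting $w\fresh r/x$) yields $\coftyperes[\Psi']{w=1}{\dsubst{\td{G}{\psi}}{r}{x}}{\arr{\dsubst{\td{B}{\psi}}{r}{x}}{\dsubst{\td{A}{\psi}}{r}{x}}}$. The $\iaout$ elimination rule for $\coftype[\Psi']{M}{\ia{w}{\dsubst{\td{I}{\psi}}{r}{x}}}$ then produces the required typing.

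For clauses (2) and (3), I would first use $\ceqpretype[\Psi']{\ia{w}{\dsubst{\td{I}{\psi}}{r}{x}}}{\dsubst{\td{A}{\psi}}{r}{x}}$ (resp.\ with $\td{B}{\psi}$) under $w=0$ (resp.\ $w=1$), so that the restricted hypothesis on $M$ becomes a typing in $\dsubst{\td{A}{\psi}}{r}{x}$ (resp.\ $\dsubst{\td{B}{\psi}}{r}{x}$). The fourth Kan condition of $\cwftype[\Psi',x]{\td{A}{\psi}}$ (resp.\ $\cwftype[\Psi',x]{\td{B}{\psi}}$, available from the $\Iso$ hypothesis under $w=1$) then gives the required typing of $\coe{x.\td{A}{\psi}}{r}{y}{M}$ (resp.\ $\coe{x.\td{B}{\psi}}{r}{y}{M}$); for (3), I compose with $\dsubst{\td{G}{\psi}}{y}{x}$ via the elimination rule for dependent functions, using that under $w=1$ this term has type $\arr{\dsubst{\td{B}{\psi}}{y}{x}}{\dsubst{\td{A}{\psi}}{y}{x}}$.

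Clauses (4) and (5) follow by two-step chains of equalities under the given restrictions. Under $w=0$, the equation $\ceqtm[\Psi']{\iaout{0}{M,\dsubst{\td{G}{\psi}}{r}{x}}}{M}{\dsubst{\td{A}{\psi}}{r}{x}}$ from the $\iaout$ elimination rule, combined with the fifth Kan condition of $\td{A}{\psi}$ giving $\ceqtm[\Psi']{\coe{x.\td{A}{\psi}}{r}{r}{M}}{M}{\dsubst{\td{A}{\psi}}{r}{x}}$, yields (4) by transitivity. Under $w=1$, the fifth Kan condition of $\td{B}{\psi}$ gives $\ceqtm[\Psi']{\coe{x.\td{B}{\psi}}{r}{r}{M}}{M}{\dsubst{\td{B}{\psi}}{r}{x}}$; congruence of application on $\dsubst{\td{G}{\psi}}{r}{x}$ then reduces $\app{\dsubst{\td{G}{\psi}}{r}{x}}{\coe{x.\td{B}{\psi}}{r}{r}{M}}$ to $\app{\dsubst{\td{G}{\psi}}{r}{x}}{M}$, which by the $\iaout$ elimination rule at $w=1$ equals $\iaout{w}{M,\dsubst{\td{G}{\psi}}{r}{x}}$. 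The only mildly delicate point throughout is keeping track of how $\Iso[\Psi',x]{w}{\td{I}{\psi}}$ restricts along $\dsubst{-}{r}{x}$ together with $w$-restrictions, but since $w\neq x$ the substitutions commute cleanly and no new case analysis is needed.
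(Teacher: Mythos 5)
Your proposal is correct and follows essentially the same route as the paper's proof: clause (1) via the $\iaout$ elimination rule (using that $\Iso[\Psi',x]{w}{\td{I}{\psi}}$ restricts cleanly along $\dsubst{-}{r}{x}$ since $w\neq x$), clauses (2)--(3) via the equations $\ceqpretype{\ia{\e}{-}}{A}$ or $B$ together with the fourth Kan condition and dependent-function elimination, and clauses (4)--(5) via the fifth Kan condition chained with the $\iaout$ elimination equations. The only cosmetic difference is that the paper explicitly unwinds each context restriction by quantifying over a satisfying $\psi'$ before applying the Kan conditions, whereas you argue at the restricted-judgment level directly; the underlying steps are identical.
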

\begin{proof}~
\begin{enumerate}
\item Follows by the elimination rule.

\item Let $\msubsts{\Psi''}{\psi'}{(\Psi',y)}$ satisfy $w=0$; we must show
$\coftype[\Psi'']{\coe{x.\td{A}{\psi\psi'}}{\td{r}{\psi'}}{\td{y}{\psi'}}{\td{M}{\psi'}}}
{\td{\dsubst{\td{A}{\psi}}{y}{x}}{\psi'}}$. We know that
$\coftype[\Psi']{M}{\ia{w}{\dsubst{\td{I}{\psi}}{r}{x}}}$, so
$\coftype[\Psi'']{\td{M}{\psi'}}{\ia{0}{\dsubst{\td{I}{\psi\psi'}}{\td{r}{\psi'}}{x}}}$
and thus
$\coftype[\Psi'']{\td{M}{\psi'}}{\dsubst{\td{A}{\psi\psi'}}{\td{r}{\psi'}}{x}}$.
The result follows from the fourth Kan condition of
$\cwftype[\Psi'',x]{\td{A}{\psi\psi'}}$.

\item Let $\msubsts{\Psi''}{\psi'}{(\Psi',y)}$ satisfy $w=1$; we must show
$\coftype[\Psi'']
{\app{\td{\dsubst{\td{G}{\psi}}{y}{x}}{\psi'}}
     {\coe{x.\td{B}{\psi\psi'}}{\td{r}{\psi'}}{\td{y}{\psi'}}{\td{M}{\psi'}}}}
{\td{\dsubst{\td{A}{\psi}}{y}{x}}{\psi'}}$. We know
$\coftype[\Psi'']{\td{M}{\psi'}}{\ia{1}{\dsubst{\td{I}{\psi\psi'}}{\td{r}{\psi'}}{x}}}$,
so
$\coftype[\Psi'']{\td{M}{\psi'}}{\dsubst{\td{B}{\psi\psi'}}{\td{r}{\psi'}}{x}}$.
We know $\cwftyperes[\Psi',x]{w=1}{\td{B}{\psi}}$, so
$\cwftype[\Psi'',x]{\td{B}{\psi\psi'}}$, and by its fourth Kan condition,
$\coftype[\Psi'']
{\coe{x.\td{B}{\psi\psi'}}{\td{r}{\psi'}}{\td{y}{\psi'}}{\td{M}{\psi'}}}
{\td{\dsubst{\td{B}{\psi}}{y}{x}}{\psi'}}$.
We know
$\coftyperes[\Psi',x]{w=1}{\td{G}{\psi}}{\arr{\td{B}{\psi}}{\td{A}{\psi}}}$, so
$\coftype[\Psi'']{\td{\dsubst{\td{G}{\psi}}{y}{x}}{\psi'}}
{\arr{\td{\dsubst{\td{B}{\psi}}{y}{x}}{\psi'}}
     {\td{\dsubst{\td{A}{\psi}}{y}{x}}{\psi'}}}$, and the result follows by the
elimination rule for dependent functions.

\item Under a satisfying $\psi'$, show
$\ceqtm[\Psi'']
{\coe{x.\td{A}{\psi\psi'}}{\td{r}{\psi'}}{\td{r}{\psi'}}{\td{M}{\psi'}}}
{\iaout{0}{\td{M}{\psi'},\td{\dsubst{\td{G}{\psi}}{r}{x}}{\psi'}}}
{\td{\dsubst{\td{A}{\psi}}{r}{x}}{\psi'}}$. By the fifth Kan condition of
$\cwftype[\Psi'',x]{\td{A}{\psi\psi'}}$, the left-hand side is
$\ceqtm[\Psi'']{-}{\td{M}{\psi'}}
{\td{\dsubst{\td{A}{\psi}}{r}{x}}{\psi'}}$; by the elimination rule, so is the
right-hand side.

\item Under a satisfying $\psi'$, show
\[
\ceqtm[\Psi'']
{\app{\td{\dsubst{\td{G}{\psi}}{r}{x}}{\psi'}}
     {\coe{x.\td{B}{\psi\psi'}}{\td{r}{\psi'}}{\td{r}{\psi'}}{\td{M}{\psi'}}}}
{\iaout{1}{\td{M}{\psi'},\td{\dsubst{\td{G}{\psi}}{r}{x}}{\psi'}}}
{\td{\dsubst{\td{A}{\psi}}{r}{x}}{\psi'}}.
\]
By the elimination rule for dependent functions and the fifth Kan condition of
$\cwftype[\Psi'',x]{\td{B}{\psi\psi'}}$, the left-hand side is
$\ceqtm[\Psi'']{-}
{\app{\td{\dsubst{\td{G}{\psi}}{r}{x}}{\psi'}}
     {\td{M}{\psi'}}}
{\td{\dsubst{\td{A}{\psi}}{r}{x}}{\psi'}}$; by the elimination rule, so is the
right-hand side.
\qedhere
\end{enumerate}
\end{proof}

Now, for each $\msubsts{\Psi''}{\psi'}{\Psi'}$, we show that
$\ceqtm[\Psi'']
{X^{\Psi''}_{\psi'}}
{\td{(X^{\Psi'}_{\id[\Psi']})}{\psi'}}
{\ia{\td{w}{\psi'}}{\dsubst{\td{I}{\psi\psi'}}{\td{r'}{\psi'}}{x}}}$, from which
it follows that
$\ceqtm[\Psi']{\coe{x.\ia{w}{\td{I}{\psi}}}{r}{r'}{M}}
{X^{\Psi'}_{\id[\Psi']}}
{\ia{w}{\dsubst{\td{I}{\psi}}{r'}{x}}}$. This completes the proof of the fourth
Kan condition (including the binary case, by transitivity).
\begin{enumerate}
\item $\td{w}{\psi'}=0$.

By the introduction rule, \cref{lem:ia-coe}, and \cref{thm:com},
$\ceqtm[\Psi'']{\td{(X^{\Psi'}_{\id[\Psi']})}{\psi'}}
{\coe{x.\td{A}{\psi\psi'}}{\td{r}{\psi'}}{\td{r'}{\psi'}}{\td{M}{\psi'}}}
{\dsubst{\td{A}{\psi\psi'}}{\td{r'}{\psi'}}{x}}$.
The result follows by
$\ceqpretype[\Psi'']
{\dsubst{\td{A}{\psi\psi'}}{\td{r'}{\psi'}}{x}}
{\ia{0}{\dsubst{\td{I}{\psi\psi'}}{\td{r'}{\psi'}}{x}}}$.

\item $\td{w}{\psi'}=1$.

By the introduction rule, \cref{lem:ia-coe}, and \cref{thm:com},
\[
\ceqtm[\Psi'']{\td{(X^{\Psi'}_{\id[\Psi']})}{\psi'}}
{\app{\dsubst{\td{F}{\psi\psi'}}{\td{r'}{\psi'}}{x}}
     {\app{\dsubst{\td{G}{\psi\psi'}}{\td{r'}{\psi'}}{x}}
          {\coe{x.\td{B}{\psi\psi'}}{\td{r}{\psi'}}{\td{r'}{\psi'}}{\td{M}{\psi'}}}}}
{\dsubst{\td{B}{\psi\psi'}}{\td{r'}{\psi'}}{x}}.
\]
By $\Iso[\Psi',x]{w}{\td{I}{\psi}}$, we have
$\Iso[\Psi'']{1}{\td{\dsubst{\td{I}{\psi}}{r'}{x}}{\psi'}}$, so
$\cwftype[\Psi'',x]{\td{\dsubst{\td{B}{\psi}}{r'}{x}}{\psi'}}$ and
\[
\eqtm[\Psi'']{\oft{b}{\td{\dsubst{\td{B}{\psi}}{r'}{x}}{\psi'}}}
{\app{\td{\dsubst{\td{F}{\psi}}{r'}{x}}{\psi'}}
     {\app{\td{\dsubst{\td{G}{\psi}}{r'}{x}}{\psi'}}{b}}}{b}
{\td{\dsubst{\td{B}{\psi}}{r'}{x}}{\psi'}}.
\]
By $\coftype[\Psi']{M}{\ia{w}{\dsubst{\td{I}{\psi}}{r}{x}}}$ and the fourth Kan
condition of $\cwftype[\Psi'',x]{\td{\dsubst{\td{B}{\psi}}{r'}{x}}{\psi'}}$, we
have $\coftype[\Psi'']
{\coe{x.\td{B}{\psi\psi'}}{\td{r}{\psi'}}{\td{r'}{\psi'}}{\td{M}{\psi'}}}
{\dsubst{\td{B}{\psi\psi'}}{\td{r'}{\psi'}}{x}}$. The result follows by
instantiating the open equation.

\item $\td{w}{\psi'}=w'$.

Immediate by the introduction rule, \cref{lem:ia-coe}, and \cref{thm:com}.
\end{enumerate}

The fifth Kan condition requires that for any
$\msubsts{(\Psi',x)}{\psi}{\Psi}$, if
$\coftype[\Psi']{M}{\ia{\dsubst{\td{r''}{\psi}}{r}{x}}{\dsubst{\td{I}{\psi}}{r}{x}}}$, then
$\ceqtm[\Psi']{\coe{x.\ia{\td{r''}{\psi}}{\td{I}{\psi}}}{r}{r}{M}}{M}
{\ia{\dsubst{\td{r''}{\psi}}{r}{x}}{\dsubst{\td{I}{\psi}}{r}{x}}}$.

There are three cases.
If $\td{r''}{\psi}=\e$ then this follows from the fact that
$\cpretype[\Psi']{\ia{\e}{\td{I}{\psi}}}$ and $\td{A}{\psi}$ or $\td{B}{\psi}$
(depending on $\e$) are equally Kan.
If $\td{r''}{\psi}=x$ then we know
$\coftype[\Psi']{M}{\ia{r}{\dsubst{\td{I}{\psi}}{r}{x}}}$ and must show
$\ceqtm[\Psi']{\coe{x.\ia{x}{\td{I}{\psi}}}{r}{r}{M}}{M}
{\ia{r}{\dsubst{\td{I}{\psi}}{r}{x}}}$.
By \cref{lem:expansion} it suffices to show
\[
\ceqtm[\Psi']
{\iain{r}{
\coe{x.\td{A}{\psi}}{r}{r}{\iaout{r}{M,\dsubst{\td{G}{\psi}}{r}{x}}},
\dsubst{\td{F}{\psi}}{r}{x}}}{M}
{\ia{r}{\dsubst{\td{I}{\psi}}{r}{x}}}.
\]
In this case we know $\Iso[\Psi',x]{x}{\td{I}{\psi}}$, so
$\Iso[\Psi']{r}{\dsubst{\td{I}{\psi}}{r}{x}}$, and by the elimination rule and
$\coftype[\Psi']{M}{\ia{r}{\dsubst{\td{I}{\psi}}{r}{x}}}$, we have
$\coftype[\Psi']
{\iaout{r}{M,\dsubst{\td{G}{\psi}}{r}{x}}}
{\dsubst{\td{A}{\psi}}{r}{x}}$.
By the fifth Kan condition of $\cwftype[\Psi',x]{\td{A}{\psi}}$,
$\ceqtm[\Psi']
{\coe{x.\td{A}{\psi}}{r}{r}{\iaout{r}{M,\dsubst{\td{G}{\psi}}{r}{x}}}}
{\iaout{r}{M,\dsubst{\td{G}{\psi}}{r}{x}}}
{\dsubst{\td{A}{\psi}}{r}{x}}$. The result follows from the introduction and eta
rules.

The final case is $\td{r''}{\psi} = w \neq x$: if
$\coftype[\Psi']{M}{\ia{w}{\dsubst{\td{I}{\psi}}{r}{x}}}$, then
$\ceqtm[\Psi']{\coe{x.\ia{w}{\td{I}{\psi}}}{r}{r}{M}}{M}
{\ia{w}{\dsubst{\td{I}{\psi}}{r}{x}}}$. We have already shown that the left-hand
side
\[
\ceqtm[\Psi']{-}
{\iain{w}{\com{w}{x.\td{A}{\psi}}{r}{r}
{\iaout{w}{M,\dsubst{\td{G}{\psi}}{r}{x}}}{y.\dots,y.\dots}
,\dsubst{\td{F}{\psi}}{r}{x}}}
{\ia{w}{\dsubst{\td{I}{\psi}}{r}{x}}}.
\]
By the introduction rule, \cref{lem:ia-coe}, and \cref{thm:com}, this in turn
\[
\ceqtm[\Psi']{-}
{\iain{w}{\iaout{w}{M,\dsubst{\td{G}{\psi}}{r}{x}},\dsubst{\td{F}{\psi}}{r}{x}}}
{\ia{w}{\dsubst{\td{I}{\psi}}{r}{x}}}
\]
and the result follows by the eta rule.

\paragraph{Fixed point construction}
We obtain a cubical type system with all isomorphism-univalence types (in
addition to the previous connectives) by extending the construction in
\cref{ssec:fixed-point}. Define an operator on cubical type systems $F(E,\Phi) =
(E',\Phi')$ as before with the additional clauses:

\[\begin{aligned}
E' &= \dots\\&\hspace{0.2em}\cup
\{((\Psi,x),\ia{x}{A,B,F,G},\ia{x}{A',B',F',G'}) \mid \\&\hspace{1.7em}
\relcts{(E,\Phi)}{\ceqtype[\Psi,x]{A}{A'}} \land
\relcts{(E,\Phi)}{\ceqtyperes[\Psi,x]{x=1}{B}{B'}} \land \\&\hspace{1.7em}
\relcts{(E,\Phi)}{\ceqtyperes[\Psi,x]{x=1}{\arr{A}{B}}{\arr{A'}{B'}}} \land \\&\hspace{1.7em}
\relcts{(E,\Phi)}{\ceqtyperes[\Psi,x]{x=1}{\arr{B}{A}}{\arr{B'}{A'}}} \land \\&\hspace{1.7em}
\relcts{(E,\Phi)}{\ceqtmres[\Psi,x]{x=1}{F}{F'}{\arr{A}{B}}} \land \\&\hspace{1.7em}
\relcts{(E,\Phi)}{\ceqtmres[\Psi,x]{x=1}{G}{G'}{\arr{B}{A}}} \land \\&\hspace{1.7em}
\relcts{(E,\Phi)}{\eqtmres[\Psi,x]{x=1}{\oft{a}{A}}{\app{G}{\app{F}{a}}}{a}{A}} \land \\&\hspace{1.7em}
\relcts{(E,\Phi)}{\eqtmres[\Psi,x]{x=1}{\oft{b}{B}}{\app{F}{\app{G}{b}}}{b}{B}} \}
\\
\Phi' &= \dots\\&\hspace{0.2em}\cup
\{((\Psi,x),\ia{x}{A,B,F,G},\iain{x}{M,F'},\iain{x}{M',F''}) \mid \\&\hspace{1.7em}
\relcts{(E,\Phi)}{\ceqtm[\Psi,x]{M}{M'}{A}} \land \\&\hspace{1.7em}
\relcts{(E,\Phi)}{\ceqtmres[\Psi,x]{x=1}{F}{F'}{\arr{A}{B}}} \land \\&\hspace{1.7em}
\relcts{(E,\Phi)}{\ceqtmres[\Psi,x]{x=1}{F}{F''}{\arr{A}{B}}} \}
\end{aligned}\]

The resulting $F$ is monotone, so it has a least fixed point. Let $(E,\Phi)$ be
a fixed point of $F$; by our previous argument, $(E,\Phi)$ has all dependent
function types. To see that $(E,\Phi)$ has all isomorphism-univalence types,
suppose that
$\relcts{(E,\Phi)}{\Isoeq{r}{I}{I'}}$. Then for any
$\msubsts{(\Psi',x)}{\psi}{\Psi}$ where $\td{r}{\psi} = x$,
$\relcts{(E,\Phi)}{\Isoeq[\Psi',x]{x}{\td{I}{\psi}}{\td{I'}{\psi}}}$. By the
definition of $F$,
$E((\Psi',x),\ia{x}{\td{I}{\psi}},\ia{x}{\td{I'}{\psi}})$ and
$\Phi((\Psi',x),\ia{x}{\td{I}{\psi}},-,-)$ is the least relation relating
$\iain{x}{M,F'}$ and $\iain{x}{M',F''}$ when:
\begin{gather*}
\relcts{(E,\Phi)}{\ceqtm[\Psi',x]{M}{M'}{\td{A}{\psi}}} \\
\relcts{(E,\Phi)}{\ceqtmres[\Psi',x]{x=1}{\td{F}{\psi}}{F'}
{\arr{\td{A}{\psi}}{\td{B}{\psi}}}} \\
\relcts{(E,\Phi)}{\ceqtmres[\Psi',x]{x=1}{\td{F}{\psi}}{F''}
{\arr{\td{A}{\psi}}{\td{B}{\psi}}}}
\end{gather*}

\paragraph{Proof theory}
The above construction justifies adding the following rules to our proof theory:

\[
\Iso{r}{A,B,F,G} :=
\left\{\begin{array}{l}
  \cwftype{A} \\
  \cwftyperes{r=1}{B} \\
  \coftyperes{r=1}{F}{\arr{A}{B}} \\
  \coftyperes{r=1}{G}{\arr{B}{A}} \\
  \eqtmres{r=1}{\oft{a}{A}}{\app{G}{\app{F}{a}}}{a}{A} \\
  \eqtmres{r=1}{\oft{b}{B}}{\app{F}{\app{G}{b}}}{b}{B}
\end{array}\right.
\]

\[
\infer
  {\ceqtype{\ia{r}{A,B,F,G}}{\ia{r}{A',B',F',G'}}}
  {\begin{array}{l}
   \ceqtype{A}{A'} \\
   \ceqtyperes{r=1}{B}{B'} \\
   \ceqtmres{r=1}{F}{F'}{\arr{A}{B}} \\
   \ceqtmres{r=1}{G}{G'}{\arr{B}{A}} \\
   \eqtmres{r=1}{\oft{a}{A}}{\app{G}{\app{F}{a}}}{a}{A} \\
   \eqtmres{r=1}{\oft{b}{B}}{\app{F}{\app{G}{b}}}{b}{B}
   \end{array}}
\]

\[
\infer
  {\ceqtype{\ia{0}{A,B,F,G}}{A}}
  {\cwftype{A}}
\qquad
\infer
  {\ceqtype{\ia{1}{A,B,F,G}}{B}}
  {\cwftype{B}}
\]

\[
\infer
  {\ceqtm{\iain{r}{M,F}}{\iain{r}{M',F'}}{\ia{r}{A,B,F,G}}}
  {\ceqtm{M}{M'}{A}
  &\Iso{r}{A,B,F,G}
  &\ceqtmres{r=1}{F}{F'}{\arr{A}{B}}}
\]

\[
\infer
  {\ceqtm{\iain{0}{M,F}}{M}{A}}
  {\coftype{M}{A}}
\qquad
\infer
  {\ceqtm{\iain{1}{M,F}}{\app{F}{M}}{B}}
  {\coftype{M}{A}
  &\coftype{F}{\arr{A}{B}}}
\]

\[
\infer
  {\ceqtm{\iaout{r}{M,G}}{\iaout{r}{M',G'}}{A}}
  {\ceqtm{M}{M'}{\ia{r}{A,B,F,G}}
  &\Iso{r}{A,B,F,G}
  &\ceqtmres{r=1}{G}{G'}{\arr{B}{A}}}
\]

\[
\infer
  {\ceqtm{\iaout{0}{M,G}}{M}{A}}
  {\coftype{M}{A}}
\qquad
\infer
  {\ceqtm{\iaout{1}{M,G}}{\app{G}{M}}{A}}
  {\coftype{M}{B}
  &\coftype{G}{\arr{B}{A}}}
\]

\[
\infer
  {\ceqtm{\iaout{r}{\iain{r}{M,F},G}}{M}{A}}
  {\coftype{M}{A}
  &\Iso{r}{A,B,F,G}}
\qquad
\infer
  {\ceqtm{\iain{r}{\iaout{r}{M,G},F}}{M}{\ia{r}{A,B,F,G}}}
  {\coftype{M}{\ia{r}{A,B,F,G}}
  &\Iso{r}{A,B,F,G}}
\]

\newpage
\bibliographystyle{plainnat}
\bibliography{dep}

\end{document}

%%% Local Variables:
%%% mode: latex
%%% TeX-master: t
%%% End: